\documentclass[11pt]{article}

\topmargin -0.10in
\oddsidemargin 0.3in
\textheight 8.50in
\textwidth 6.00in

\usepackage{array}
\usepackage[T1]{fontenc}
\usepackage{graphicx}
\usepackage{amsfonts}
\usepackage{amsmath}
\usepackage{mathabx}   %%%%%%%%%%%  PACKAGE pour entiers de 1 à N
\usepackage{amsfonts}
\usepackage{amsthm}
\usepackage{amssymb}
\usepackage{amscd}
\usepackage{epsfig}
\usepackage{verbatim}
\usepackage{fancybox}
\usepackage{moreverb}
\usepackage{psfrag}
\usepackage{cellspace}
\usepackage{hyperref}
\usepackage{latexsym}
\usepackage{psfrag}
\usepackage{color}

\usepackage{tikz}
\usetikzlibrary{arrows,positioning} 
\tikzset{
    %Define standard arrow tip
    >=stealth',
    %Define style for boxes
    punkt/.style={
           rectangle,
           rounded corners,
           draw=black, very thick,
           text width=4.9cm,
           minimum height=1cm,
           text centered},
    % Define arrow style
    pil/.style={
           ->,
           thick,
           shorten <=2pt,
           shorten >=2pt,}
}
\tikzset{
    %Define standard arrow tip
    >=stealth',
    %Define style for boxes
    punktt/.style={
           rectangle,
           rounded corners,
           draw=black, very thick,
           text width=1cm,
           minimum height=1cm,
           text centered},
    % Define arrow style
    pill/.style={
           ->,
           thick,
           shorten <=2pt,
           shorten >=2pt,}
}

\numberwithin{equation}{section}
\definecolor{blanc}{rgb}{1.,1.,1.}
\definecolor{vert}{rgb}{0.,0.5,0.}
\definecolor{rouge}{rgb}{0.8,0.,0.}
\definecolor{violet}{rgb}{0.5,0.,0.4}
\definecolor{bleu}{rgb}{0.,0.,0.5}
\definecolor{orange}{rgb}{0.8,0.4,0.}
\definecolor{light-blue}{rgb}{0.5,0.5,0.7}
%\definecolor{light-blue}{rgb}{0.9,0.9,1}
\definecolor{light-red}{rgb}{0.8,0.4,0.4}
\definecolor{noir}{rgb}{0.,0.,0.}
\definecolor{gris}{rgb}{0.8,0.8,0.7}

\newcommand{\eps}{\varepsilon}
\newcommand{\avec}{\mathbf a}

\newcommand{\pvec}{\mathbf p} 
\newcommand{\Pvec}{\mathbf P} 
 
\newcommand{\qvec}{\mathbf q} 
\newcommand{\Qvec}{\mathbf Q} 
\newcommand{\rvec}{\mathbf r} 
\newcommand{\Rvec}{\mathbf R}

\newcommand{\xvec}{\mathbf x} 
\newcommand{\Xvec}{\mathbf X}

\newcommand{\vvec}{\mathbf v} 
\newcommand{\Vvec}{\mathbf V} 
\newcommand{\yvec}{\mathbf y} 
 
\newcommand{\ytrf}{\text{\small$\boldsymbol\Upsilon$}} 
 
\newcommand{\zvec}{\mathbf z}

\newcommand{\Ccal}{{\cal C}}

\newcommand{\Pcal}{{\cal P}}

\newcommand{\Scal}{{\cal S}}

\newcommand{\PerOneD}{{\text{\rm per}}}
\newcommand{\PerND}{\#}

\newcommand{\Rest}{{\text{\Large$\boldsymbol{\rho}$}}}
\newcommand{\rest}{{\text{\Large$\rho$}}}
\newcommand{\ball}{{\text{\large$\mathfrak b$}}}

\newcommand{\UsCoo}{\grave}
\newcommand{\FctCoordPuiss}[2]{\text{\Large $\mathbf{r}$}^{\mathbf{#1}}_{#2}}

\newcommand{\rit}{\mathbb R} 
 
\newcommand{\Nabla}{\nabla\hspace{-2pt}}

\newcommand{\ds}{\displaystyle}
\newcommand{\fracp}[2]{\frac{\partial #1}{\partial #2}}
\newcommand{\poibrack}[2]{{\left\{{#1},{#2}\right\}}}

\newtheorem{theorem}{Theorem}[section]
\newtheorem{lemma}[theorem]{Lemma}

\newtheorem{property}[theorem]{Property}
\newtheorem{corollary}[theorem]{Corollary}
\newtheorem{definition}[theorem]{Definition}

\newtheorem{remark}[theorem]{Remark}

\newtheorem{algorithm}[theorem]{Algorithm}

%%% Faire les docs en eps dans word avec exporter...

\begin{document}

\title{On the Geometrical Gyro-Kinetic Theory}
\author{
Emmanuel Fr\'enod\thanks{Universite de Bretagne-Sud,  UMR 6205, LMBA, F-56000 Vannes, France \& Inria Nancy-Grand Est, Tonus Project.}
\and 
Mathieu Lutz\thanks{Universit\'{e} de Strasbourg, IRMA,  7 rue Ren\'e Descartes, F-67084 Strasbourg Cedex, France \& Projet INRIA Calvi.}
}
\date{}
\maketitle
{\small {\bf Abstract - }
Considering a Hamiltonian Dynamical System describing the motion of charged particle in a Tokamak or a Stellarator, we build a change of coordinates
to reduce its dimension. This change of coordinates is in fact an intricate succession of mappings that are built using Hyperbolic Partial Differential Equations,
Differential Geometry, Hamiltonian Dynamical System Theory and Symplectic Geometry, Lie Transforms and a new tool which is here introduced : Partial Lie Sums. 
}\\

{\small {\bf Keywords - }
Tokamak; Stellarator;  Gyro-Kinetic Approximation; Hyperbolic Partial Differential Equations; Differential Geometry; Hamiltonian Dynamical System Theory; 
Symplectic Geometry; Lie Transforms; Partial Lie Sums.
}

%%%%%%%%%%%%%%%%%%%%%%%%%%%%%%%%%%%%%%%%%%%%%%%%%%%%%%%%%%%%%%%
%%%%%%%%%%%%%%%%%%%%%%%%%%%%%%%%%%%%%%%%%%%%%%%%%%%%%%%%%%%%%%%
%%%%%%%%%%%%%%%%%%%%%%%%%%%%%%%%%%%%%%%%%%%%%%%%%%%%%%%%%%%%%%%
\section*{Notations}
%%%%%%%%%%%%%%%%%%%%%%%%%%%%%%%%%%%%%%%%%%%%%%%%%%%%%%%%%%%%%%%
%%%%%%%%%%%%%%%%%%%%%%%%%%%%%%%%%%%%%%%%%%%%%%%%%%%%%%%%%%%%%%%
%%%%%%%%%%%%%%%%%%%%%%%%%%%%%%%%%%%%%%%%%%%%%%%%%%%%%%%%%%%%%%%

\begin{enumerate}
%\item $\mathcal{A}(\rit^3\times(0,+\infty))$ stands for the the space of 
%functions analytic on $\rit^3\times(0,+\infty)$.

%\item $\ball^{2}(\mathbf{m}_0,R_0)$ stands for the open euclidian ball of radius $R_0$ and of center $\mathbf{m}_0$ in $\mathbb{R}^2$.

\item 
\label{201402262154}  %  \ref{201402262154}
For a $2\pi$-periodic set $I^{\#}$ included in $\mathbb{R},$
$\mathcal{C}^\infty_\PerOneD\!\left(I^{\#}\right)$ stands of the space of functions being in $\mathcal{C}^\infty(I^{\#})$ and $2\pi$-periodic. 

\item 
\label{201402252149}  % \ref{201402252149}
For a set $\boldsymbol{\mathfrak{M}}^{\#}$ included in $\mathbb{R}^{m}$ (where $m\in\mathbb{N}$ and $m
\geq2$) which is $2\pi$-periodic with respect to the $l$-th variable ($l\leq m$)
we denote by $\mathcal{C}_{\#,l}^{\infty}\left(\!\boldsymbol{\mathfrak{M}}^{\#}\!\right)$ the space of 
functions being in $\mathcal{C}^\infty(\boldsymbol{\mathfrak{M}}^{\#})$ and $2\pi$-periodic with respect to the l-th variable. 

\item  $\mathcal{C}_{\PerND}^{\infty}\left(\!\boldsymbol{\mathfrak{M}}^{\#}\!\right) =  \mathcal{C}_{\#,(m-1)}^{\infty}\left(\!\boldsymbol{\mathfrak{M}}^{\#}\!\right)$.
\label{201402161039} % \ref{201402161039}

\item For $m\in\mathbb{N}^\star$, $\mathcal{C}_{b}^{\infty}\!\left(\mathbb{R}^{m}\right)$ stands of the space of 
functions being in $\mathcal{C}^\infty(\rit^m)$ and with their derivatives at any order which are bounded.

 \item 
 \label{201402272144} % \ref{201402272144}
 $\mathcal{O}_{T,b}^{\infty}$ stands for the algebra of functions 
spanned by the functions of the form
\begin{align*}
%%%%%%%%\label{1305170739}  % \eqref{1305170739}
      &\left(\mathbf{y},\theta\right)\mapsto f_{1}\left(\mathbf{y}\right)\cos\left(\theta\right)+f_{2}\left(\mathbf{y}\right)\sin\left(\theta\right),
\end{align*}
where $f_{1},f_{2}\in\mathcal{C}_{b}^{\infty}\!\left(\mathbb{R}^2\right)$.

\item  
\label{201402161042} % \ref{201402161042}
$\mathcal{Q}_{T,b}^{\infty}$ stands for the space of functions

\begin{gather*}
%\label{GoodSpaceForTruncHamOrderN}     %    \eqref{GoodSpaceForTruncHamOrderN}
\begin{aligned}
      \mathcal{Q}_{T,b}^{\infty}=\bigg\{&f\in\mathcal{C}^{\infty}\left(\mathbb{R}^{3}\times\left(0,+\infty\right)\right),\ f\left(\mathbf{y},\theta,k\right)=\underset{n\in\mathbb{I}_{f}}{\sum}c_{n}\left(\mathbf{y},\theta\right)\sqrt{k}^{n} 
      \\  
      &\text{ \hspace{3cm} where }\mathbb{I}_{f}\subset\mathbb{Z}\text{ is finite and }\forall n\in\mathbb{I}_{f},\ c_{n}\in\mathcal{O}_{T,b}^{\infty}\bigg\}.
\end{aligned}
\end{gather*}

\item 
\label{201402161044}   %   \ref{201402161044}  
For an open subset $U\subset\rit^p$, we denote by $\mathcal{A}\left(U\right)$ the space of real analytic functions on $U$. 

\item For a formal power series $S$, we denote by $\boldsymbol{\Sigma}_S$ its set of convergence.

\item $\ball^{n}(\boldsymbol{m}_0,R_0)$ stands for the open euclidian ball of radius $R_0$ and of center $\boldsymbol{m}_0$ in $\rit^n$.

\item $\ball^{\#}\!\big(\boldsymbol{m}_0,R_{\boldsymbol{m}_{0}}\big)$ stands for 
\begin{align*}
      &\left\{ \boldsymbol{m}\in\rit^{4}\ s.t.\ (m_{1},m_{2},m_{4})\in\ball^{3}\!\big(\boldsymbol{m}_0,R_{\boldsymbol{m}_{0}}\big)\right\} 
\end{align*}

\item $\boldsymbol{\mathfrak{C}}(a,b)$ stands for the open crown
\label{201402230841}   %   \ref{201402230841}  
\begin{align*}
     &\boldsymbol{\mathfrak{C}}(a,b)=\big\{ \mathbf{v}\in\rit^{2}\ s.t.\ \left|\mathbf{v}\right|\in\left(a,b\right)\big\} 
\end{align*}

%\item $\mathcal{O}(\mathbf{m}_0,R_0;a,b)$ stands for the subset of $\mathbb{R}^4$ defined by 
%\begin{align*}
%       &\mathcal{O}(\mathbf{m}_0,R_0;a,b)=\ball^{2}(\mathbf{m}_0,R_0)\times\left\{ \mathbf{v}\in\mathbb{R}^{2}\ s.t.\ \left|\mathbf{v}\right|\in\left(a,b\right)\right\}.
%\end{align*}
 
\item $\mathcal{CO}(\mathbf{m}_0,R_0;a,b)$ stands for the subset of $\mathbb{R}^4$ defined by 
\label{201402160959} % \ref{201402160959}
\begin{align*}
       &\mathcal{CO}(\mathbf{m}_0,R_0;a,b)=\ball^{2}(\mathbf{m}_0,R_0)\times\rit\times\left(a,b\right).
\end{align*}

 % \item $\overline{\mathcal{CO}}(\mathbf{m}_0,R_0;a,b)$ stands for the subset of $\mathbb{R}^4$ defined by 
 %\begin{align*}
 %      &\overline{\mathcal{CO}}(\mathbf{m}_0,R_0;a,b)=\ball^{2}(\mathbf{m}_0,R_0)\times\rit\times\left[a,b\right]
 %\end{align*}
 
%\item $\mathcal{CO}(\mathbf{m}_{0},R_{0};a,b)\subsetneq\mathcal{CO}(\mathbf{m}'_{0},R'_{0};a',b')$ means that  
%\begin{gather*}
%\left\{
%\begin{aligned}
  %    &\overline{\ball^{2}(\mathbf{m}_{0},R_{0})}\subset\ball^{2}(\mathbf{m}'_{0},R'_{0}),
 %     \\
  %    &\left[a,b\right]\subset\left(a',b'\right).
%\end{aligned}
%\right.
%\end{gather*}

\end{enumerate}

%%%%%%%%%%%%%%%%%%%%%%%%%%%%%%%%%%%%%%%%%%%%%%%%%%%%%%%%%%%%%%%
%%%%%%%%%%%%%%%%%%%%%%%%%%%%%%%%%%%%%%%%%%%%%%%%%%%%%%%%%%%%%%%
%%%%%%%%%%%%%%%%%%%%%%%%%%%%%%%%%%%%%%%%%%%%%%%%%%%%%%%%%%%%%%%

\tableofcontents

%%%%%%%%%%%%%%%%%%%%%%%%%%%%%%%%%%%%%%%%%%%%%%%%%%%%%%%%%%%%%%%%%%%%%%
%%%%%%%%%%%%%%%%%%%%%%%%%%%%%%%%%%%%%%%%%%%%%%%%%%%%%%%%%%%%%%%%%%%%%%
%%%%%%%%%%%%%%%%%%%%%%%%%%%%%%%%%%%%%%%%%%%%%%%%%%%%%%%%%%%%%%%%%%%%%%
\section{Introduction}
%%%%%%%%%%%%%%%%%%%%%%%%%%%%%%%%%%%%%%%%%%%%%%%%%%%%%%%%%%%%%%%%%%%%%%
%%%%%%%%%%%%%%%%%%%%%%%%%%%%%%%%%%%%%%%%%%%%%%%%%%%%%%%%%%%%%%%%%%%%%%
%%%%%%%%%%%%%%%%%%%%%%%%%%%%%%%%%%%%%%%%%%%%%%%%%%%%%%%%%%%%%%%%%%%%%%

  At the end of the 70',
{Littlejohn \cite{littlejohn:1979,littlejohn:1981,littlejohn:1982}}
shed new light on what is called \emph{the Guiding Center Approximation.}
His approach incorporated high level mathematical concepts from
Hamiltonian Mechanics, Differential Geometry and Symplectic Geometry into a physical affordable theory in order 
to clarify what has been done for years in the domain 
(see {Kruskal \cite{Kruskal1965}},
{Gardner \cite{Gardner1959}},
{Northrop  \cite{northrop:1961}},
{Northrop \& Rome \cite{10.1063/1.862226}}).
This theory is a nice success.
It has been beeing widely
used by physicists to deduce related models 
(\emph{Finite Larmor Radius Approximation, Drift-Kinetic Model, Quasi-Neutral Gyro-Kinetic Model, etc.}, see for instance 
{Brizard \cite{10.1063/1.871465}},
{Dubin \emph{et al.} \cite{dubin/etal:1983}},
{Frieman  \& Chen \cite{10.1063/1.863762}},
{Hahm \cite{10.1063/1.866544}},
{Hahm, Lee \& Brizard \cite{10.1063/1.866641}},
{Parra \& Catto \cite{0741-3335-50-6-065014,0741-3335-51-6-065002,0741-3335-52-4-045004}})
 making up the \emph{Gyro-Kinetic Approximation Theory}, which is the basis of
all kinetic codes used to simulate Plasma Turbulence emergence and evolution in Tokamaks and Stellarators 
(see for instance {Brizard \cite{10.1063/1.871465}}, {Quin \emph{et al} \cite{CTPP:CTPP200610034,qin:056110}},
{Kawamura \& Fukuyama \cite{kawamura:042304}}, {Hahm \cite{hahm:4658}}, {Hahm, Wang \& Madsen \cite{hahm:022305}},
{Grandgirard \emph{et al.} \cite{Grandgirard2006395,0741-3335-49-12B-S16}},
and the review of {Garbet \emph{et al.} \cite{0029-5515-50-4-043002}}).
\\
Yet, the resulting Geometrical Gyro-Kinetic Approximation Theory remains a physical theory which is formal from the mathematical point of view and not directly accessible for 
mathematicians.
The present paper is a first step towards providing a mathematical affordable theory, particularly for the analysis, the applied mathematics and computer sciences communities.
\\

  Notice that, beside this Geometrical Gyro-Kinetic Approximation Theory, an alternative approach, based on 
 Asymptotic Analysis and Homogenization Methods was developed by
 Fr\'enod \& Sonnendr\"ucker \cite{frenod/sonnendrucker:1997,frenod/sonnendrucker:1998, frenod/sonnendrucker:1999},
Fr\'enod, Raviart \& Sonnendr\"ucker \cite{FRS:1999},
Golse \& Saint-Raymond \cite{GSR1}
and {Ghendrih}, {Hauray} \& {Nouri} \cite{2010arXiv1004.5226G}.
 \\

The purpose of this paper is to provide a mathematical framework for the formal 
Guiding-Center reduction introduced in {Littlejohn \cite{littlejohn:1979}}.
 The domain of application of this theory is that of a charged particle under the action of a strong magnetic field.
Hence we will consider the following dynamical system :
\begin{align} 
     &\fracp{\Xvec}{t}=\Vvec, & & \Xvec(0) =\xvec_0, 
     \label{systeqs1} % \eqref{systeqs1}
     \\
     &\fracp{\Vvec}{t}=\frac{1}{\eps}\; B\left(\mathbf{X}\right)^{\;\perp\!}\Vvec, & &\Vvec(0) =\vvec_0,
     \label{systeqs2} % \eqref{systeqs2}
\end{align}
where $\mathbf{X}=(X_1,X_2)$ stands for the position, $\mathbf{V}=(V_1,V_2)$ stands for the velocity, $\mathbf{V}^\perp=(V_2,-V_1)$,
$\xvec_0$ and $\vvec_0$ stand for the initial position and velocity, 
and $\eps$ is a small parameter.
We notice that equations \eqref{systeqs1}-\eqref{systeqs2} can be obtained from the six dimensional system by taking a magnetic field in the $x_3$-direction
that only depends on $x_1$ and $x_2$.\\
When the magnetic field is constant, the trajectory associated with \eqref{systeqs1}-\eqref{systeqs2} is a circle of center $\mathbf{c}_0=\mathbf{x}_0+\eps\mathbf{v}_0$
and of radius $\eps\left|\mathbf{v}_{0}\right|$.
Otherwise, the dynamical system \eqref{systeqs1}-\eqref{systeqs2} can be viewed as a perturbation of the system obtained when the magnetic field is constant.
Hence, in the general case of a magnetic field depending on position, the evolution of a given particle's position is a combination of two disparate in time motions: a slow evolution of what is the center of the circle in the case when $B$ is constant, usually called the Guiding Center, and a fast rotation with a small radius about it.
The Guiding-Center reduction consists in replacing the trajectory of the particle by the trajectory of a quantity close to the guiding-center and free of fast oscillations.
%and we make the following assumptions on $B$:
%\begin{align}
%B\left(\mathbf{x}\right)=\frac{\partial A_{2}}{\partial x_{1}}-\frac{\partial A_{1}}{\partial x_{2}}.
%\label{DsDim11102} % \eqref{DsDim11102}
%\end{align}
%with $\mathbf{A}=\left(A_{1},A_{2}\right)$ an analytic function on $\mathbb{R}^2$ and 
%\begin{gather}
%  \inf_{\xvec \in \rit^{2}} B(\xvec)>1.
%  \label{Infc} % \eqref{Infc}
%\end{gather}  \\
%

%Now, we give a detailed summarize of what the present paper contains.
This purpose can easily be translated within a geometric formalism. In any system of coordinates on a manifold $\mathcal{M}$, a Hamiltonian dynamical system whose solution
is $\Rvec=\Rvec(t;\rvec_0)$ can be written in the following form
\begin{gather}
\label{HamSystSh0} % \eqref{HamSystSh0}
 \ds\fracp{\Rvec}{t} = \Pcal(\Rvec)\Nabla_{\rvec} H(\Rvec), ~~ \Rvec(0,\rvec_0) = \rvec_0,
\end{gather}
where  $\Pcal(\rvec)$ is a matrix called the matrix of the Poisson Bracket (or Poisson Matrix in short), and $H(\rvec)$ is called the 
Hamiltonian function. The Poisson Matrix is a skew-symmetric matrix satisfying the Jacobi identity and the Hamiltonian function is a smooth function (see Appendix \ref{AppendixChangeOfCoordRulesForHamAndPM}).
It is obvious to show that dynamical system \eqref{systeqs1}-\eqref{systeqs2} is Hamiltonian and to find its related Poisson Matrix $\grave{\mathcal{P}}_{\eps}\!\left(\mathbf{x},\mathbf{v}\right)$ and Hamiltonian function $\grave{H}_{\eps}\!\left(\mathbf{x},\mathbf{v}\right)$ (see Section \ref{PanoramaSubSection}).
Within this geometrical framework, the goal of the Guiding-Center reduction is to make a succession of changes of coordinates in order to
satisfy the assumptions of the following theorem.
\begin{theorem}
\label{KrTh} % \ref{KrTh}
If, in  a given coordinate system $\rvec =(r_1, r_2, r_3, r_4),$ the Poisson Matrix 
has the following form:
\begin{gather} 
\label{HamSystSh1} % \eqref{HamSystSh1}
                   \Pcal(\mathbf{r}) = \left(\begin{array}{c|cc}
                    \text{\huge${\mathtt{M}}$}(\mathbf{r}) & \begin{array}{c}0 \vspace{-3pt} \\0 \end{array} & \begin{array}{c}0 \vspace{-3pt}\\0 \end{array} \\
                    \hline
                     0\ 0&0&\mathcal{P}_{3,4} \\
                     0\ 0&-\mathcal{P}_{3,4} & 0 \\
                                   \end{array}\right),
\end{gather}
where $\mathcal{P}_{3,4}$ is a non-zero constant, and if  the  Hamiltonian function does not depend on the penultimate variable, i.e.
\begin{gather}
\label{HamSystSh2} % \eqref{HamSystSh2}
\fracp{H}{r_3} =0,
\end{gather}
then, submatrix $\boldsymbol{\mathtt{M}}$ does not depend on the two last variables, i.e. 
\begin{gather}
\label{HamSystSh3} % \eqref{HamSystSh3}
   \ds \fracp{\boldsymbol{\mathtt{M}}}{r_3}=0 \text{ and } \fracp{\boldsymbol{\mathtt{M}}}{r_4}=0.
\end{gather}
Consequently, the time-evolution of the two first components $\Rvec_1, \Rvec_2$ is independent
of the penultimate component $\Rvec_3$;
and, the last component $\Rvec_4$ of the trajectory is not time-evolving, i.e.
\begin{gather}
\label{HamSystSh4} % \eqref{HamSystSh4}
  \ds \ds \fracp{\Rvec_4}{t}=0.
\end{gather}
\end{theorem}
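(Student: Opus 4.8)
The plan is to derive \eqref{HamSystSh3} from the algebraic fact that $\Pcal$ obeys the Jacobi identity, and then to obtain \eqref{HamSystSh4} together with the announced decoupling by simply expanding the equations of motion \eqref{HamSystSh0}. First I would record the structure forced by \eqref{HamSystSh1}: writing $\mathcal{P}_{i,j}$ for the $(i,j)$ entry of $\Pcal$, skew-symmetry of $\Pcal$ makes the block $\boldsymbol{\mathtt{M}}$ a $2\times2$ skew-symmetric matrix, so it is encoded by the single scalar $m(\rvec):=\mathcal{P}_{1,2}(\rvec)=-\mathcal{P}_{2,1}(\rvec)$; moreover $\mathcal{P}_{1,3}=\mathcal{P}_{1,4}=\mathcal{P}_{2,3}=\mathcal{P}_{2,4}=0$ together with their transposes, while $\mathcal{P}_{3,4}$ is the prescribed non-zero constant. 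In these terms \eqref{HamSystSh3} is equivalent to $\fracp{m}{r_3}=\fracp{m}{r_4}=0$.

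The core step is the Jacobi identity, which in coordinates reads (see Appendix \ref{AppendixChangeOfCoordRulesForHamAndPM})
\[
  \sum_{l=1}^{4}\left(\mathcal{P}_{i,l}\,\fracp{\mathcal{P}_{j,k}}{r_l}+\mathcal{P}_{j,l}\,\fracp{\mathcal{P}_{k,i}}{r_l}+\mathcal{P}_{k,l}\,\fracp{\mathcal{P}_{i,j}}{r_l}\right)=0 .
\]
I would specialize the triple $(i,j,k)$ so that two of the three groups vanish identically. For $(i,j,k)=(1,2,4)$ the entries $\mathcal{P}_{2,4}$ and $\mathcal{P}_{4,1}$ are identically zero, so only the last group survives and collapses to $\mathcal{P}_{4,3}\,\fracp{m}{r_3}=-\mathcal{P}_{3,4}\,\fracp{m}{r_3}$; since $\mathcal{P}_{3,4}\neq0$ this gives $\fracp{m}{r_3}=0$. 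Likewise $(i,j,k)=(1,2,3)$ annihilates $\mathcal{P}_{2,3}$ and $\mathcal{P}_{3,1}$ and leaves $\mathcal{P}_{3,4}\,\fracp{m}{r_4}=0$, hence $\fracp{m}{r_4}=0$. This is exactly \eqref{HamSystSh3}; note that only the non-vanishing of $\mathcal{P}_{3,4}$ is used here, not its constancy.

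It remains to expand $\fracp{\Rvec}{t}=\Pcal(\Rvec)\Nabla_{\rvec}H(\Rvec)$ row by row. The fourth row reads $\fracp{\Rvec_4}{t}=\mathcal{P}_{4,3}\,\fracp{H}{r_3}(\Rvec)=-\mathcal{P}_{3,4}\,\fracp{H}{r_3}(\Rvec)$, which vanishes by hypothesis \eqref{HamSystSh2}, yielding \eqref{HamSystSh4}. The first two rows read $\fracp{\Rvec_1}{t}=m(\Rvec)\,\fracp{H}{r_2}(\Rvec)$ and $\fracp{\Rvec_2}{t}=-m(\Rvec)\,\fracp{H}{r_1}(\Rvec)$. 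By \eqref{HamSystSh3} the coefficient $m$ depends on neither $r_3$ nor $r_4$, and by \eqref{HamSystSh2} the function $H$---and therefore, by commutation of mixed partials, each of $\fracp{H}{r_1}$ and $\fracp{H}{r_2}$---does not depend on $r_3$; hence both right-hand sides are independent of the penultimate slot, i.e. of $\Rvec_3$, which is the asserted independence.

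There is no analytic difficulty here: the whole argument is finite-dimensional linear algebra plus differentiation, and the only genuine content is the bookkeeping of the Jacobi identity. The single delicate point is the choice of index triples: one must select $(1,2,4)$ and $(1,2,3)$ precisely because the vanishing off-block entries then annihilate two of the three Jacobi groups and leave a lone term proportional to $\mathcal{P}_{3,4}$, after which division by this non-zero scalar is legitimate. I would double-check that no sign is lost in the skew-symmetric pairs $\mathcal{P}_{2,1}=-m$ and $\mathcal{P}_{4,3}=-\mathcal{P}_{3,4}$, since these signs are exactly what make the three displayed conclusions mutually consistent.
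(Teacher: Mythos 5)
Your proof is correct and takes essentially the same route as the paper: the paper computes $\left\{ \left\{ \mathbf{r}_{1},\mathbf{r}_{2}\right\},\mathbf{r}_{3}\right\}=-\mathcal{P}_{3,4}\,\partial\mathcal{P}_{1,2}/\partial r_{4}$ and $\left\{ \left\{ \mathbf{r}_{1},\mathbf{r}_{2}\right\},\mathbf{r}_{4}\right\}=\mathcal{P}_{3,4}\,\partial\mathcal{P}_{1,2}/\partial r_{3}$ and then kills both via the Jacobi identity applied to the triples $(r_1,r_2,r_3)$ and $(r_1,r_2,r_4)$, using the vanishing of $\mathcal{P}_{3,1},\mathcal{P}_{2,3},\mathcal{P}_{4,1},\mathcal{P}_{2,4}$ — which is exactly your index-form computation written as nested Poisson brackets. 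The only differences are presentational: you state the Jacobi identity in tensor-coordinate form rather than bracket form, and you spell out the final decoupling of the $(\mathbf{R}_1,\mathbf{R}_2)$ equations (including the mixed-partials remark for $\partial H/\partial r_1,\partial H/\partial r_2$), which the paper leaves implicit.
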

Theorem  \ref{KrTh} is the Key Result that brings the understanding of the Guiding-Center reduction: the Guiding-Center reduction consists in writing dynamical system \eqref{systeqs1}-\eqref{systeqs2} within a system of coordinates, called the Guiding-Center Coordinate System, that satisfies the assumptions of Theorem \ref{KrTh} and which is close to the 
Historic Guiding-Center Coordinate System, usually defined by:
  \begin{align} 
  \label{UsualGC1NC} % \eqref{UsualGC1NC}
     &y_{1}^{hgc}=x_{1}-\eps\frac{v}{B(\mathbf{x})}\cos\left(\theta\right),
     \\
     \label{UsualGC2NC} % \eqref{UsualGC2NC}
     &y_{2}^{hgc}=x_{2}+\eps\frac{v}{B(\mathbf{x})}\sin\left(\theta\right),
     \\
     \label{UsualGC3NC} % \eqref{UsualGC3NC}
     &\theta^{hgc}=\theta,
     \\
     \label{UsualGC4NC} % \eqref{UsualGC4NC}
     &k^{hgc}=\frac{v^{2}}{2B(\mathbf{x})},
\end{align}
where $v =|\vvec|$ and where $\theta$ is 
the angle between the $x_1$-axis and the gyro-radius vector 
$\boldsymbol{\rho}_\eps(\mathbf{x},\mathbf{v})=-\frac{\eps}{B(\mathbf{x})}\mathbf{v}^\perp$
measured in a clockwise sense.

%such that $(v,\theta)$ is a Polar Coordinate System for the velocity variable. Notice that $\theta$
%is measured in the clockwise sense.

Once this done, if we are just interested in the motion of the particle in the physical space,
 i.e.  just in the evolution of the two first components, solving the dynamical system in the new system of coordinates,  reduces to find a trajectory in $\mathbb{R}^2$, in place of a trajectory in $\mathbb{R}^4$ when it is  solved in the original system of coordinates.

In {\cite{littlejohn:1979}}, Littlejohn proposed a construction of the Guiding-Center Coordinates based on formal series expansion in power of $\eps$. 
This approach cannot be made mathematically rigorous because no argument can insure the validity of the series expansion.
\\
In the present paper we adopt a different strategy.
We will derive for each positive integer $N$ a coordinate system, the so-called Guiding-Center Coordinates of order $N$, 
whose expansion in power of $\eps$, up to  any order $N$,
coincides with the Guiding-Center coordinates given in {\cite{littlejohn:1979}}. Moreover, for each 
integer $N$ we will construct a
Hamiltonian dynamical system satisfying Theorem \ref{KrTh} and approximating uniformly in time,
with accuracy in proportion to $\eps^{N-1}$, 
the Hamiltonian dynamical system \eqref{systeqs1}-\eqref{systeqs2} written within the Guiding-Center Coordinates of order $N$.
\\

The Guiding-Center reduction consists essentially in a succession of three change of coordinates:
a polar in velocity change of coordinates $(\mathbf{x},\mathbf{v})\mapsto(\mathbf{x},\theta,v)$ with $\theta$ and $v$ defined above, a second change of coordinates called the Darboux change of coordinates, and a last change of coordinates 
called the Lie change of coordinates.
The objective of the first change of coordinates is to concentrate the fast oscillations on the $\theta$ variable.
The second one, 
consists in finding a coordinate system in which the Poisson Matrix has the required form to apply Theorem \ref{KrTh}, and 
eventually the last change of coordinates (which is in fact the succession of $N$ changes of coordinates) consists in removing the oscillations from the Hamiltonian function while keeping the same expression of the Poisson Matrix.
\\

All along this paper we will assume that the magnetic field $B$ is analytic, that all its derivatives are bounded, and that $B$ is nowhere close to $0$, \textit{i.e.} that $\inf B>1$. 
\\

The three main results of this paper are the following.
The first one concerns the Darboux change of coordinates.

\begin{theorem}
\label{MainThm1}   %   \ref{MainThm1}
There exists a $\mathcal{C}^\infty\!$-diffeomorphism $\boldsymbol{\Upsilon}\!_\eps : \left(\mathbf{x},\theta,v\right)\mapsto\left(\mathbf{y},\theta,k\right)$ one to one from $\mathbb{R}^{2}\times\mathbb{R}\times\left(0,+\infty\right)$
onto itself,  smooth with respect to $\eps$, such that the Poisson Matrix expressed in the $\left(\mathbf{y},\theta,k\right)$ coordinate system reads:
\begin{align} 
  \label{ExpressionDarbouxMatrix111} % \eqref{ExpressionDarbouxMatrix111}
\bar{\mathcal{P}}_{\eps}\left(\mathbf{y},\theta,k\right)=\left(\begin{array}{cccc}
0 & -\frac{\eps}{B\left(\mathbf{y}\right)} & 0 & 0\\
\frac{\eps}{B\left(\mathbf{y}\right)} & 0 & 0 & 0\\
0 & 0 & 0 & \frac{1}{\eps}\\
0 & 0 & -\frac{1}{\eps} & 0\end{array}\right).
 \end{align}
Moreover, the reciprocal map $\boldsymbol{\kappa}\!_\eps=\boldsymbol{\Upsilon}\!_\eps^{\,-1}$ is smooth with respect to $\eps\in\mathbb{R}_+$, 
and for any positive real numbers $a_{\mathcal{D}}$ and $b_{\mathcal{D}}$ (with $a_{\mathcal{D}}<b_{\mathcal{D}}$) and for any 
$t\in[0,+\infty)$, the trajectory associated with \eqref{systeqs1}-\eqref{systeqs2}, with initial condition 
$(\xvec_0,\vvec_0)\in\rit^2\times \boldsymbol{\mathfrak{C}}(a_{\mathcal{D}},b_{\mathcal{D}})$ (see Notation  \ref{201402230841})
and expressed in the Darboux coordinates, belongs to $\rit^3\times\left[\frac{a_{\mathcal{D}}^2}{2\left\Vert B\right\Vert _{\infty}},\frac{b_{\mathcal{D}}^2}{2}\right]$.
% Théorème sur le caractère bien posé d'un changement de coordonnées de polaire à Darboux
% tq la matrice de Poisson ai une bonne forme pour pouvoir appliquer le key results et la donner (l'expression).
\end{theorem}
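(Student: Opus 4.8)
The plan is to reduce the target normal form \eqref{ExpressionDarbouxMatrix111} to a system of Poisson-bracket identities, to build the map by solving these as transport equations (initialised by the historic Guiding-Center map), and then to read off the confinement estimate from conservation of kinetic energy. First I would recall from Section \ref{PanoramaSubSection} the Poisson Matrix $\grave{\mathcal{P}}_\eps(\mathbf{x},\theta,v)$ in polar-in-velocity coordinates and the transformation rule $\bar{\mathcal{P}}_\eps = J\,\grave{\mathcal{P}}_\eps\,J^{T}$ (Appendix \ref{AppendixChangeOfCoordRulesForHamAndPM}), $J$ being the Jacobian of $\boldsymbol{\Upsilon}_\eps$. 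Since $\boldsymbol{\Upsilon}_\eps$ fixes $\theta$ and the bracket is coordinate-free, demanding that $\bar{\mathcal{P}}_\eps$ be the matrix of \eqref{ExpressionDarbouxMatrix111} is equivalent to the identities $\{y_1,y_2\}=-\eps/B(\mathbf{y})$, $\{y_i,\theta\}=\{y_i,k\}=0$ ($i=1,2$) and $\{\theta,k\}=1/\eps$, all computed with $\grave{\mathcal{P}}_\eps$. A convenient preliminary remark is that any solution is automatically a local diffeomorphism: taking determinants in $\bar{\mathcal{P}}_\eps=J\grave{\mathcal{P}}_\eps J^{T}$ gives $\det\bar{\mathcal{P}}_\eps=(\det J)^2\det\grave{\mathcal{P}}_\eps$, and both Poisson Matrices are non-degenerate, so $\det J\neq0$ everywhere. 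The map \eqref{UsualGC1NC}--\eqref{UsualGC4NC} satisfies the bracket system only to leading order in $\eps$, so it initialises the construction.

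The heart of the proof, and the step I expect to be the main obstacle, is the exact construction of $\mathbf{y}$ and $k$. Writing $\grave X_g:=\grave{\mathcal{P}}_\eps\nabla g$, the conditions involving $\theta$ read as the linear first-order (hyperbolic) transport equations $\grave X_\theta(y_i)=0$ and $\grave X_\theta(k)=-1/\eps$, to be solved along the characteristics of $\grave X_\theta$; the remaining conditions $\grave X_k(y_i)=0$ and $\{y_1,y_2\}=-\eps/B(\mathbf{y})$ couple them. Geometrically, $\grave X_\theta$ and $\grave X_k$ span a symplectic $2$-plane and $(y_1,y_2)$ parametrise its symplectic-orthogonal, which is exactly the Weinstein splitting underlying Darboux's theorem; moreover $\grave X_k$ generates the $2\pi$-periodic fast gyration, so the equation $\grave X_k(y_i)=0$ is a transport along gyro-orbits whose solvability is obtained by a gyro-averaging argument, the $2\pi$-periodicity in $\theta$ absorbing the secular part. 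I would close this coupled hierarchy order by order in $\eps$ starting from \eqref{UsualGC1NC}--\eqref{UsualGC2NC}, then pass to the exact all-order coordinates by a fixed-point/solvability argument (alternatively, via Moser's path method applied to the interpolating two-forms). The genuine difficulties here are: (i) solving globally on $\rit^{2}\times\rit\times(0,+\infty)$ rather than merely locally as in the textbook Darboux theorem; (ii) the self-referential right-hand side $-\eps/B(\mathbf{y})$, where $B$ is evaluated at the unknown $\mathbf{y}$; and (iii) keeping every estimate smooth and uniform in $\eps$, including the singular limit where the $1/\eps$ entries blow up.

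Granting the construction, I would promote the local diffeomorphism to a global one by a Hadamard--Caccioppoli argument: the target $\rit^{2}\times\rit\times(0,+\infty)$ is simply connected, so a proper local diffeomorphism onto it is a global diffeomorphism. Properness follows from the explicit asymptotics of the coordinates (the $\theta$-component is preserved, $\mathbf{y}$ is a bounded perturbation of $\mathbf{x}$, and $k$ stays comparable to $v^{2}/B$, so $k>0$ is preserved and preimages of compact sets are compact). Smoothness of $\boldsymbol{\Upsilon}_\eps$ in $\eps$ is inherited from the smooth $\eps$-dependence of the solutions of the transport equations, and smoothness of $\boldsymbol{\kappa}_\eps=\boldsymbol{\Upsilon}_\eps^{-1}$ in $\eps\in\mathbb{R}_{+}$ then follows from the inverse function theorem with parameter, using $\det J\neq0$.

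Finally, for the confinement estimate I would use that kinetic energy is an exact invariant of \eqref{systeqs1}--\eqref{systeqs2}: the magnetic force $\tfrac1\eps B\,\mathbf{V}^{\perp}$ is orthogonal to $\mathbf{V}$, so $\tfrac{d}{dt}\tfrac12|\mathbf{V}|^{2}=0$ and hence $|\mathbf{V}(t)|=|\vvec_0|$ for all $t\ge0$. Thus, for $(\xvec_0,\vvec_0)\in\rit^{2}\times\boldsymbol{\mathfrak{C}}(a_{\mathcal{D}},b_{\mathcal{D}})$ (Notation \ref{201402230841}), the trajectory remains for all time in the region $\{v=|\mathbf{V}|\in(a_{\mathcal{D}},b_{\mathcal{D}})\}$; this is the crucial time-uniform input. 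It then suffices to show that on this region the constructed coordinate $k$ lies in $[a_{\mathcal{D}}^{2}/(2\|B\|_\infty),\,b_{\mathcal{D}}^{2}/2]$, which follows from its explicit expression: $k$ is comparable to the magnetic moment $v^{2}/(2B)$, so that $v^{2}/(2\|B\|_\infty)\le k\le v^{2}/(2\inf B)$. The lower bound gives $k\ge a_{\mathcal{D}}^{2}/(2\|B\|_\infty)$, and the standing hypothesis $\inf B>1$ yields $k\le b_{\mathcal{D}}^{2}/(2\inf B)\le b_{\mathcal{D}}^{2}/2$, which is precisely the asserted interval $\rit^{3}\times[a_{\mathcal{D}}^{2}/(2\|B\|_\infty),\,b_{\mathcal{D}}^{2}/2]$.
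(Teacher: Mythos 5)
Your opening reduction of \eqref{ExpressionDarbouxMatrix111} to a system of bracket identities matches the paper's starting point, and your confinement argument (conservation of $\left|\mathbf{V}\right|$ along \eqref{systeqs1}-\eqref{systeqs2}, plus the two-sided bound on $k$ using $\left\Vert B\right\Vert_\infty$ and $\inf B>1$) is essentially the paper's Lemma \ref{ThmainresultCharDarbKDarb}. But there is a genuine gap at precisely the step you yourself flag as ``the main obstacle'': the exact, global, uniform-in-$\eps$ construction of $\mathbf{y}$ and $k$ is never carried out. You propose to close the coupled hierarchy ``order by order in $\eps$'' and then pass to exact coordinates by ``a fixed-point/solvability argument (alternatively, via Moser's path method)''. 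The order-by-order scheme is exactly the formal series device of Littlejohn that the paper explicitly sets out to avoid, because no argument ensures convergence of the expansion; no function space, contraction estimate, or uniformity in $\eps$ is set up for the fixed point; and Moser's path method yields local normal forms, whereas here one needs a diffeomorphism of all of $\rit^{2}\times\rit\times\left(0,+\infty\right)$, smooth down to $\eps=0$ despite the $1/\eps$ entries. As written, the existence of $\boldsymbol{\Upsilon}_{\!\eps}$ is asserted rather than proved, and your Hadamard--Caccioppoli globalization and your ``explicit expression'' of $k$ in the last paragraph are both conditional on that missing construction.

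The paper's proof rests on two structural ideas that your plan misses. First, one does not solve the full coupled system: only the subsystem of brackets involving $\theta$, namely \eqref{13ET14WS}-\eqref{EqWithoutSing} with data posed on the hypersurface $v=0$ (boundary conditions \eqref{BoundCond}), is solved; the ansatz $\boldsymbol{\Upsilon}_{\!3}=\theta$ makes this subsystem linear, and it reduces to the single scalar transport equation \eqref{IntermediaryEq}, solved exactly by characteristics as $\varphi=1/B\circ\mathcal{G}_{-\eps v}$ with $\mathcal{G}$ the flow of $\boldsymbol{\Lambda}$; then $\boldsymbol{\Upsilon}_{\!1},\boldsymbol{\Upsilon}_{\!2},\boldsymbol{\Upsilon}_{\!4}$ are explicit integrals of $\varphi$ (formulas \eqref{DefUps1}-\eqref{UpskvsPsi}). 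Second, the remaining conditions --- $\{\ytrf_{\!1},\ytrf_{\!4}\}=\{\ytrf_{\!2},\ytrf_{\!4}\}=0$ and the value $\{\ytrf_{\!1},\ytrf_{\!2}\}=-\eps/B\left(\mathbf{y}\right)$ --- are not imposed at all (this removes your ``self-referential'' difficulty (ii)); they are deduced a posteriori from the Jacobi identity combined with uniqueness for the same transport equation (Theorems \ref{AdditionnalEqCheckTheorem} and \ref{ExpPoissMat}). With the explicit solution in hand, global invertibility needs no degree-theoretic argument: $\left(\boldsymbol{\Upsilon}_{\!1},\boldsymbol{\Upsilon}_{\!2},\boldsymbol{\Upsilon}_{\!3}\right)=\mathcal{G}_{-\eps v}$ is inverted by the flow itself, and $v\mapsto\tilde{\boldsymbol{\Upsilon}}_{\!4}$ is strictly increasing with range $\left(0,+\infty\right)$ (Lemma \ref{propeta}), which also furnishes the explicit bounds on $k$ that your last paragraph invokes. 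Finally, the smoothness of $\boldsymbol{\kappa}_\eps$ with respect to $\eps$ down to $\eps=0$ is not a consequence of the parametric inverse function theorem alone; it requires the dedicated analysis of Subsection \ref{RegPropertiesOfDarbCCSection}.
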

Subsections \ref{TheDarCoordSection}, \ref{RegPropertiesOfDarbCCSection}, \ref{ExpOfHamAndPoissMatInDarbCCSection} 
and  \ref{ProofOfFirstPartOfMainThm2} constitute the proof of Theorem \ref{MainThm1}.

 \hspace{0.2cm}

\begin{theorem}
\label{MainThm2}   %  \ref{MainThm2}
%Let $\mathcal{O}(\mathbf{m}_0,R_0;a,b)$ be defined by \eqref{CrownTimesBall}. Then, 
For each positive integer $N$, for each compact set $\boldsymbol{K}_{\!\!\mathcal{L}}$, and for each positive real numbers $c_{\!\mathcal{L}}$ and $d_{\!\mathcal{L}}$ (with $c_{\!\mathcal{L}}<d_{\!\mathcal{L}}$), there exists a diffeomorphism $\boldsymbol{\chi}^{N}_\eps:\left(\mathbf{y},\theta,k\right)\mapsto\left(\mathbf{z},\gamma,j\right)$  defined on $\boldsymbol{K}_{\!\!\mathcal{L}}\times\mathbb{R}\times(c_{\!\mathcal{L}},d_{\!\mathcal{L}})$
and a positive real number $\eta_{K_{\!\!\mathcal{L}}}$
 such that, for any $\eps\in[0,\eta_{K_{\!\!\mathcal{L}}}]$, the expansion in power of $\eps$ of  the Hamiltonian function $\hat{H}_{\eps}$
of  system  \eqref{systeqs1}-\eqref{systeqs2} in the $(\mathbf{z},\gamma,j)$ coordinates
does not depend to the oscillation variable $\gamma$ up to order $N$, {\it i.e.}
\begin{align}
   \label{201401230906} % \eqref{201401230906}
   &\hat{H}_{\eps}\!\left(\mathbf{z},\gamma,j\right)
   =\hat{H}_{0}\left(\mathbf{z},j\right)+\eps\hat{H}_{1}\left(\mathbf{z},j\right)+\ldots+\eps^{N}\hat{H}_{N}\left(\mathbf{z},j\right)
   +\eps^{N+1}\Rest_{\bar{H}}^{N}\!\left(\eps;\mathbf{z},\gamma,j\right),
\end{align} 
and such that the Poisson Matrix expressed in the $\left(\mathbf{z},\gamma,j\right)$ coordinate system reads:
\begin{align}
     &\eps\hat{\mathcal{P}}_{\eps}\negmedspace\left(\mathbf{z},\gamma,j\right)=\eps\overline{\mathcal{P}}_{\eps}\negmedspace\left(\mathbf{z},\gamma,j\right)+\eps^{N+2}\Rest_{\bar{\mathcal{P}}}^{N}\!\left(\eps;\mathbf{z},\gamma,j\right),
     \label{Formula6Februar2014}  %  \eqref{Formula6Februar2014}
\end{align}
where $\Rest_{\bar{H}}^{N}$ and $\Rest_{\bar{\mathcal{P}}}^{N}$ are in $\mathcal{C}_{\PerND}^{\infty}\!\left(\left[0,\eta_{K_{\!\mathcal{L}}}\right]\times\boldsymbol{K}_{\!\!\mathcal{L}}\times\rit\times\left[c_{\!\mathcal{L}},d_{\!\mathcal{L}}\right]\right)$.
\end{theorem}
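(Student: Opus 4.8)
The plan is to build $\boldsymbol{\chi}^{N}_{\eps}$ as a finite composition of near-identity canonical maps produced by the Lie-transform (Partial Lie Sum) technique, constructed inductively so as to strip the angle-dependence from the Hamiltonian one power of $\eps$ at a time while leaving the Poisson matrix invariant to high order. I start from the Darboux normal form of Theorem \ref{MainThm1}: in $(\mathbf{y},\theta,k)$ the Poisson matrix is $\bar{\mathcal{P}}_\eps$ of \eqref{ExpressionDarbouxMatrix111}, so the bracket reads $\{f,g\} = -\frac{\eps}{B(\mathbf{y})}(\partial_{y_1}f\,\partial_{y_2}g - \partial_{y_2}f\,\partial_{y_1}g) + \frac{1}{\eps}(\partial_\theta f\,\partial_k g - \partial_k f\,\partial_\theta g)$, and the Hamiltonian is the function $\bar{H}_\eps$ coming out of the Darboux step. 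A Lie transform generated by $w_\eps$ acts on the Hamiltonian through the Lie series $\hat{H}_\eps = \sum_{n\geq 0}\tfrac{1}{n!}\{w_\eps,\cdot\}^{n}\bar{H}_\eps$ and, being the time-one flow of the Hamiltonian vector field $X_{w_\eps} = \bar{\mathcal{P}}_\eps\nabla w_\eps$, preserves $\bar{\mathcal{P}}_\eps$ exactly. The feature I exploit is the $\frac{1}{\eps}$ weight of the $(\theta,k)$-block: since $\bar{H}_0$ is independent of $\theta$ with $\partial_k\bar{H}_0$ (the gyrofrequency) bounded below thanks to $\inf B>1$, a generating coefficient of size $\eps^{i+1}$ feeds, via one bracket, an $O(\eps^{i})$ term whose leading part is $\partial_\theta w_{i+1}\,\partial_k\bar{H}_0$ --- essentially a $\theta$-derivative. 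This is what renders the homological equations solvable by quadrature in the angle.

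Concretely, at the $i$-th step I would write the partially normalized Hamiltonian as $\sum_{j<i}\eps^{j}\hat{H}_j + \eps^{i}F_i + O(\eps^{i+1})$, decompose $F_i = \langle F_i\rangle + \tilde F_i$ into its $\theta$-average and its mean-zero oscillating part, set $\hat{H}_i := \langle F_i\rangle$, and choose the next generating coefficient $w_{i+1}$ to solve the homological equation $\partial_\theta w_{i+1}\,\partial_k\bar{H}_0 = -\tilde F_i$. Solvability is immediate: $\partial_k\bar{H}_0$ does not depend on $\theta$ and is bounded away from $0$, and $\tilde F_i$ is $2\pi$-periodic with zero $\theta$-average, so $-\tilde F_i/\partial_k\bar{H}_0$ has an antiderivative in $\theta$ that is again smooth and $2\pi$-periodic --- this is exactly where the compactness of $\boldsymbol{K}_{\!\!\mathcal{L}}$ and $[c_{\!\mathcal{L}},d_{\!\mathcal{L}}]$ together with $\inf B>1$ keep all quotients and their derivatives uniformly bounded. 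Applying the Partial Lie Sum generated by $w_{i+1}$ kills the oscillation at order $\eps^{i}$ while perturbing the Hamiltonian only at orders $\geq \eps^{i+1}$; iterating for $i=1,\dots,N$ (order $0$ is already normalized since $\bar{H}_0$ is $\theta$-independent) and composing the elementary maps, then renaming the resulting variables $(\mathbf{z},\gamma,j)$, yields \eqref{201401230906} with $\Rest_{\bar H}^{N}$ of the claimed regularity.

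For the Poisson matrix \eqref{Formula6Februar2014} the decisive observation is that each elementary map is the time-one flow of $X_{w_{i+1}}$, hence a symplectomorphism for $\bar{\mathcal{P}}_\eps$, and an exact composition of such flows would leave $\bar{\mathcal{P}}_\eps$ unchanged. The sole source of discrepancy is that the full Lie exponential cannot be used --- its convergence is not controllable --- so I must truncate it to a Partial Lie Sum in order that $\boldsymbol{\chi}^{N}_{\eps}$ be a genuine $\mathcal{C}^\infty$ diffeomorphism on $\boldsymbol{K}_{\!\!\mathcal{L}}\times\mathbb{R}\times(c_{\!\mathcal{L}},d_{\!\mathcal{L}})$ for $\eps\in[0,\eta_{K_{\!\!\mathcal{L}}}]$. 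Since the discarded terms of the generating series are of order $\eps^{N+1}$ and higher, the pushforward $(D\boldsymbol{\chi}^{N}_{\eps})\,\bar{\mathcal{P}}_\eps\,(D\boldsymbol{\chi}^{N}_{\eps})^{T}$ differs from $\bar{\mathcal{P}}_\eps$ by $O(\eps^{N+1})$, which after the global multiplication by $\eps$ produces the remainder $\eps^{N+2}\Rest_{\bar{\mathcal{P}}}^{N}$.

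The main obstacle --- and the reason the paper introduces Partial Lie Sums instead of quoting the classical formal Lie-transform machinery --- is the control of the $\frac{1}{\eps}$ singularity of $\bar{\mathcal{P}}_\eps$ propagating through the iterated brackets, combined with showing that the truncated map is an honest diffeomorphism whose derivatives are bounded uniformly in $\eps$ down to $\eps=0$. Each bracket injects a factor $\frac{1}{\eps}$ from the $(\theta,k)$-block, so one must verify that the chosen $\eps$-grading of the $w_i$ compensates these factors exactly at every order, and that the finitely many neglected terms are genuinely $O(\eps^{N+1})$ in the relevant $\mathcal{C}^\infty_\#$ topology on the compact set --- in particular that the existence threshold $\eta_{K_{\!\!\mathcal{L}}}$, the invertibility of the near-identity map, and the periodicity and smoothness of $\Rest_{\bar H}^{N}$ and $\Rest_{\bar{\mathcal{P}}}^{N}$ all hold. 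The algebraic elimination of the oscillations by averaging is by comparison the routine part; the analytic bookkeeping needed to keep everything bounded as $\eps\to 0$ is where the real work lies.
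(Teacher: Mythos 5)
Your proposal is correct and follows essentially the same route as the paper: a composition of truncated (partial) Lie transforms whose generators are determined inductively by homological equations $-B\,\partial_\gamma \bar g_i = \mathcal{V}_i - \langle \mathcal{V}_i\rangle$ solved by angular quadrature (the paper's Algorithm \ref{AlgorithmLieRevisedVersion} and Theorems \ref{BoldChiNWellDef}, \ref{MainPropOfLieCCOrderNThm}, \ref{ExpForTheAlgoWellPosed}, \ref{AlgoLeadsToFuncWellDef}), with the Poisson-matrix estimate \eqref{Formula6Februar2014} coming from the fact that the maps are $O(\eps^{N+1})$-truncations of exact symplectomorphisms. The technical obstacles you flag — the diffeomorphism property uniformly down to $\eps=0$, the compensation of the $1/\eps$ bracket weights by the $\eps$-grading of the generators, and the smooth periodic remainders — are exactly the content of the paper's detailed proofs (via the global inversion theorem, the Brouwer theorem, and analyticity of the $\bar g_i$).
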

 The proof of Theorem \ref{MainThm2} 
is given in Subsection \ref{SectionProofOfTheoremMainThm2}.

 \vspace{0.3cm}

\begin{remark}
\label{DarbouxRangeInCompact}  %  \ref{DarbouxRangeInCompact} 
Theorem  \ref{MainThm2} is consistent with  Theorem  \ref{MainThm1}. Indeed,
in Subsection \ref{Subsection4Februar2014} we will show that for any $T\in[0,+\infty)$, for any compact set $\boldsymbol{K}_{\!\mathcal{C}}$, and for any positive real numbers $c_{\mathcal{C}}$ and $d_{\mathcal{C}}$ (with $c_{\mathcal{C}}<d_{\mathcal{C}}$),
 there exists a positive real number $\eta$,  a compact set $\boldsymbol{K}_{\!\mathcal{L}}$, and positive
 real numbers $c_{\mathcal{L}}$ and $d_{\mathcal{L}}$ (with $c_{\mathcal{L}}<d_{\mathcal{L}}$)
  such that  for any $t\in[0,T]$ and for any $\eps\in[0,\eta]$
  the trajectory associated with \eqref{systeqs1}-\eqref{systeqs2}, with initial condition 
$(\xvec_0,\vvec_0)\in\boldsymbol{K}_{\!\mathcal{C}}\times \boldsymbol{\mathfrak{C}}(a_{\mathcal{C}},b_{\mathcal{C}})$ (see Notation  \ref{201402230841})
and expressed in the Darboux coordinates, belongs to $\boldsymbol{K}_{\!\mathcal{L}}\times\rit\times[a_{\mathcal{L}},b_{\mathcal{L}}]$.
\end{remark}  

 \hspace{0.2cm}

\begin{theorem}
\label{MainThm3}   %  \ref{MainThm3}
With the same notations as in Theorem \ref{MainThm2}, 
we consider the function $\hat{H}_{\eps}^{N}$ defined by:
\begin{align}
   &\hat{H}_{\eps}^{N}\left(\mathbf{z},j\right)=\hat{H}_{0}\left(\mathbf{z},j\right)+\eps\hat{H}_{1}\left(\mathbf{z},j\right)+\ldots+\eps^{N}\hat{H}_{N}\left(\mathbf{z},j\right),
\end{align} 
 where  
 $\hat{H}_{0},\ldots,\hat{H}_{N}$ are the $N$ first terms in expansion \eqref{201401230906}  of $\hat{H}_{\eps}$,
 and we denote by $(\mathbf{Z},\boldsymbol{\Gamma},\mathcal{J})=(\mathbf{Z},\boldsymbol{\Gamma},\mathcal{J})(t;\mathbf{z}_{0},\gamma_{0},j_{0})$ the trajectory of Hamiltonian system  \eqref{systeqs1}-\eqref{systeqs2}, expressed in the 
 $(\mathbf{z},\gamma,j)$ coordinate system,  associated with initial condition $\mathbf{z}_{0},\gamma_{0},j_{0}$.
Let
\begin{align}
    &\frac{\partial}{\partial t}\left(\begin{array}{c}
\mathbf{Z}^{T}\\
\boldsymbol{\Gamma}^{T}\\
\mathcal{J}^{T}\end{array}\right)=\overline{\mathcal{P}}_{\eps}\negmedspace\left(\mathbf{Z}^{T}\right)\nabla\hat{H}_{\eps}^{N}\left(\mathbf{Z}^{T},\mathcal{J}^{T}\right),
\end{align}
be the Hamiltonian dynamical system associated with the Hamiltonian function $\hat{H}_{\eps}^{N}$
and with the Poisson Matrix $\overline{\mathcal{P}}_{\eps}$ defined by \eqref{ExpressionDarbouxMatrix111}.
Then, this system satisfies the assumptions of Theorem \ref{KrTh}.
Moreover, for any $T\in[0,+\infty)$, for any compact set $\boldsymbol{K}_{\!\mathcal{C}}$, and for any positive
 real numbers $c_{\mathcal{C}}$ and $d_{\mathcal{C}}$ (with $c_{\mathcal{C}}<d_{\mathcal{C}}$),
 there exists a real number $\eta_{K_{\!\mathcal{C}}}$ and a
 constant $C_{\mathcal{C}}$, independant of $\eps$, such that  for any $t\in[0,T]$ and for any $\eps\in[0,\eta_{K_{\!\mathcal{C}}}]$ 
\begin{align}
    &\sup\left\{ \left|(\mathbf{Z},\mathcal{J})\!\left(t;\mathbf{z}_{0},\gamma_{0},j_{0}\right)-(\mathbf{Z}^{T},\mathcal{J}^{T})\!\left(t;\mathbf{z}_{0},j_{0}\right)\right|, t\in[0,T],\left(\mathbf{z}_{0},\gamma_{0},j_{0}\right)\in\mathcal{U}_{\mathcal{C}}\right\} \leq C\eps^{N-1},
    \label{MainEstimationOfOurPaper}   %   \eqref{MainEstimationOfOurPaper}
\end{align}
where $\mathcal{U}_{\mathcal{C}}$ is the range of $\boldsymbol{K}_{\!\mathcal{C}}\times \boldsymbol{\mathfrak{C}}(c_{\mathcal{C}},d_{\mathcal{C}})$ in the Guiding-Center coordinates of order $N$ {\it i.e.} by diffeomorphism $\boldsymbol{\chi}^{N}_\eps$.
\end{theorem}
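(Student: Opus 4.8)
I would split the statement into two independent parts. The first, that the truncated system satisfies the hypotheses of Theorem \ref{KrTh}, is purely algebraic: the Darboux matrix \eqref{ExpressionDarbouxMatrix111} already has the block structure \eqref{HamSystSh1}, since its upper-left block $\mathtt{M}(\mathbf{z})=\left(\begin{smallmatrix}0 & -\eps/B\\ \eps/B & 0\end{smallmatrix}\right)$ depends only on $\mathbf{z}$ and its lower-right block is $\left(\begin{smallmatrix}0 & 1/\eps\\ -1/\eps & 0\end{smallmatrix}\right)$ with $\mathcal{P}_{3,4}=1/\eps$ a non-zero constant. As $\hat{H}^{N}_{\eps}$ is by construction the finite sum of the $\gamma$-free terms $\hat{H}_{0},\ldots,\hat{H}_{N}$ of \eqref{201401230906}, one has $\partial\hat{H}^{N}_{\eps}/\partial\gamma=0$, which is exactly \eqref{HamSystSh2}. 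Theorem \ref{KrTh} then applies verbatim and yields, besides \eqref{HamSystSh3}, the decoupling of $(\mathbf{Z}^{T},\mathcal{J}^{T})$ from $\boldsymbol{\Gamma}^{T}$ together with the conservation $\partial\mathcal{J}^{T}/\partial t=0$.

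For the estimate \eqref{MainEstimationOfOurPaper}, I would compare the exact flow $\partial_{t}R=\hat{\mathcal{P}}_{\eps}(R)\nabla\hat{H}_{\eps}(R)$ with the truncated flow $\partial_{t}R^{T}=\overline{\mathcal{P}}_{\eps}(R^{T})\nabla\hat{H}^{N}_{\eps}(R^{T})$ issued from the same datum, writing $R=(\mathbf{Z},\boldsymbol{\Gamma},\mathcal{J})$. Using \eqref{201401230906} and \eqref{Formula6Februar2014}, i.e. $\hat{\mathcal{P}}_{\eps}=\overline{\mathcal{P}}_{\eps}+\eps^{N+1}\Rest^{N}_{\bar{\mathcal{P}}}$ and $\hat{H}_{\eps}=\hat{H}^{N}_{\eps}+\eps^{N+1}\Rest^{N}_{\bar{H}}$, the two vector fields differ by $\eps^{N+1}\big(\overline{\mathcal{P}}_{\eps}\nabla\Rest^{N}_{\bar{H}}+\Rest^{N}_{\bar{\mathcal{P}}}\nabla\hat{H}_{\eps}\big)$. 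The only place where the $1/\eps$ entry of $\overline{\mathcal{P}}_{\eps}$ amplifies this discrepancy is the $j$-row, where it produces the source $-\tfrac{1}{\eps}\,\eps^{N+1}\partial_{\gamma}\Rest^{N}_{\bar{H}}=O(\eps^{N})$; on the $\mathbf{z}$-rows the $O(\eps)$ entries keep the discrepancy at $O(\eps^{N+1})$. Hence $\partial_{t}\mathcal{J}=O(\eps^{N})$ while $\partial_{t}\mathcal{J}^{T}=0$, and the two $\mathbf{z}$-dynamics differ only by an $O(\eps^{N+1})$ forcing.

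The crux, and the main obstacle, is that the fast angle obeys $\dot{\boldsymbol{\Gamma}}=O(1/\eps)$, so the exact vector field has Lipschitz constant $O(1/\eps)$ and a naive Gronwall estimate would blow up like $e^{t/\eps}$. This is circumvented precisely by the decoupling from the first part: the truncated slow field $F_{\mathrm{slow}}(\mathbf{z},j)=\mathtt{M}(\mathbf{z})\nabla_{\mathbf{z}}\hat{H}^{N}_{\eps}(\mathbf{z},j)$ does \emph{not} depend on $\gamma$, so when I write the slow error $e=(\mathbf{Z}-\mathbf{Z}^{T},\mathcal{J}-\mathcal{J}^{T})$ the angle variable never enters and the relevant Lipschitz constant is $\sup\big|\partial_{(\mathbf{z},j)}F_{\mathrm{slow}}\big|=O(\eps)$. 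Integrating the $j$-component gives $|\mathcal{J}-\mathcal{J}^{T}|=O(\eps^{N})$ directly, feeding this and the $O(\eps^{N+1})$ forcing into Gronwall's inequality for $e_{\mathbf{z}}$ over the fixed interval $[0,T]$ yields $|e_{\mathbf{z}}|=O(\eps^{N+1})$; altogether $|e(t)|\le C\eps^{N}$, which is stronger than the claimed $\eps^{N-1}$ bound.

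The remaining ingredient is a uniform confinement argument, since $\Rest^{N}_{\bar{H}}$, $\Rest^{N}_{\bar{\mathcal{P}}}$ and their derivatives are bounded only on a fixed compact of the $(\mathbf{z},\gamma,j)$ chart. I must therefore first guarantee that, for $t\in[0,T]$ and $\eps$ small, both trajectories issued from $\mathcal{U}_{\mathcal{C}}$ remain in such a compact. For the truncated flow this is immediate, as $\dot{\mathbf{Z}}^{T}=O(\eps)$ and $\mathcal{J}^{T}$ is conserved; for the exact flow it follows from the confinement of the Darboux $\mathbf{z}$- and $j$-variables provided by Remark \ref{DarbouxRangeInCompact} together with Theorem \ref{MainThm2}, which simultaneously fixes the admissible $\boldsymbol{K}_{\!\mathcal{L}}$, $c_{\mathcal{L}},d_{\mathcal{L}}$ and the threshold $\eta_{K_{\!\mathcal{C}}}$. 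With these bounds secured on $[0,T]$, the Gronwall step is self-contained and closes the proof.
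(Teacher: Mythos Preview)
Your proof is correct and in fact yields the sharper bound $O(\eps^{N})$ rather than the $O(\eps^{N-1})$ stated in the theorem, but the route you take is genuinely different from the paper's.

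The paper does not run a Gronwall argument on the slow error $e=(\mathbf{Z}-\mathbf{Z}^{T},\mathcal{J}-\mathcal{J}^{T})$. Instead it introduces $\mathbf{L}^{\eps}=\eps^{-(N-1)}e$, derives an ODE of the form
\[
\partial_{t}(\mathbf{L}^{\eps}_{1},\mathbf{L}^{\eps}_{2})=\beta_{1}(\eps,\mathbf{L}^{\eps},\mathbf{Z}^{T},\mathcal{J}^{T})+\eps\,\beta_{2}\bigl(\eps,\mathbf{L}^{\eps},\mathbf{Z}^{T},\mathcal{J}^{T},\widetilde{\Gamma}(t/\eps)\bigr),\qquad
\partial_{t}\mathbf{L}^{\eps}_{3}=\eps\,\beta_{3}\bigl(\ldots,\widetilde{\Gamma}(t/\eps)\bigr),
\]
with smooth $\beta_{i}$ periodic in the last argument, and then argues that the solution of this system extends continuously to $\eps=0$. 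The exponent $N-1$ is forced by this strategy: with the rescaling $\eps^{-N}$ the $\mathbf{L}_{3}$-equation would read $\partial_{t}\mathbf{L}_{3}^{\eps}=\beta_{3}(\ldots,\widetilde{\Gamma}(t/\eps))$, whose right-hand side oscillates at frequency $1/\eps$ and has no limit as $\eps\to0$, so the continuity argument would fail even though the solution stays bounded. The paper also spends effort showing $\eps\mapsto(\mathbf{Z},\mathcal{J})$ is $\mathcal{C}^{N-1}$ at $\eps=0$ via a time-rescaling $t\mapsto t/\eps$, which is not needed in your approach.

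Your argument is more elementary and more efficient: you exploit directly that $\partial_{t}\mathcal{J}=-\eps^{N}\partial_{\gamma}\Rest^{N}_{\bar H}+O(\eps^{N+1})$ is merely \emph{bounded} (periodicity in $\gamma$ plus confinement), integrate to get $|\mathcal{J}-\mathcal{J}^{T}|\le CT\eps^{N}$, and then feed this into a Gronwall for $\mathbf{Z}-\mathbf{Z}^{T}$ whose Lipschitz constant is $O(\eps)$ because $\mathtt{M}_{\eps}=O(\eps)$ and the truncated slow field is $\gamma$-free. One point worth stating a bit more carefully in your write-up: the angle $\boldsymbol{\Gamma}$ \emph{does} enter the error equation for $e$, but only through the bounded forcing terms $\Rest^{N}_{\bar H}$ and $\Rest^{N}_{\bar{\mathcal{P}}}$, never through a term that gets multiplied by a Lipschitz constant; this is the precise mechanism that avoids the $e^{t/\eps}$ blow-up. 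Your confinement step is the same as the paper's (Darboux localisation plus the near-identity nature of $\boldsymbol{\chi}^{N}_{\eps}$), so no issue there.
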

The proof of Theorem \ref{MainThm3} 
is led in Subsection \ref{LastSection}.
\\

The paper is organized as follows. In Section \ref{SchemDesGCRedSection} we briefly recall the main steps of the Guiding-Center reduction and we give a proof of Theorem  \ref{KrTh}.
Then, Section \ref{SectDaAl} is devoted to the construction of the Darboux change of coordinates. Especially, we will introduce an intermediary PDE from which the 
Darboux coordinates can be deduced.
We will also perform  a detailed analysis of the regularity of the change of coordinates and its inverse, including the regularity with respect to the small parameter $\eps$,
and we will give the expansions with respect to $\eps$ of the change of coordinates, of its inverse, and of the Hamiltonian function.
In Section \ref{ThePartialLieTransformMethod}, we introduce a partial Lie transform method leading to the Guiding-Center coordinate system of order $N$.
Eventually, in Sections \ref{SectionProofOfTheoremMainThm2} and \ref{LastSection} we will prove Theorems  \ref{MainThm2} and \ref{MainThm3}.

%%% Le plan de Littlejohn est essentiellement une succession de 3 changements de coordonnées
% un changement de coordonnées polaire en vitesse, Darboux, puis Lie. Ce papier contient une étude mathématique rigoureuse des deux derniers changement de
% coordonnées incluant notamment le caractère bien posé du changement de coordonnées, notamment des domaines de définition, ainsi que la régularité par raport à eps
% de ces changement de coordonnées

%
%%
%%%
%%%% 
 \section{Schematic description of the Guiding-Center reduction}
 \label{SchemDesGCRedSection}   %   \ref{SchemDesGCRedSection}  
%%%%
%%%
%%
%

%
%%
%%%
\subsection{Panorama}
\label{PanoramaSubSection}   %  \ref{PanoramaSubSection}  
%%%
%%
%

\begin{figure}[htbp]
\begin{center}
\begin{tikzpicture}[node distance=1cm, auto,]
  \node[punkt] (coordusuelles) {{\color{light-red}Usual Coordinates}\vspace{-1mm} \\
   $(\xvec,\vvec)$ \\ ~ \vspace{-2mm} \\
        \scriptsize $~$~~~~~~~~~~~~~~~~~~~\vspace{-3mm}
          \scriptsize \begin{align*} 
             &\fracp{\Xvec}{t}=\Vvec\\
             &\frac{\partial\mathbf{V}}{\partial t}=\frac{1}{\eps}B\left(\mathbf{X}\right)^{\perp}\mathbf{V}
          \end{align*}       
  };
   \node[punkt, inner sep=5pt,below=1.2cm of coordusuelles] (coordcanon) {{\color{light-red}Canonical Coordinates} \\ 
        $(\qvec,\pvec)$ \\ ~ \\
                \scriptsize $\breve H_{\!\eps}\!(\qvec,\pvec),\breve\Pcal_{\!\eps}(\qvec,\pvec)\text{$=$}\Scal$
                ~s.t:~~~~~~~~~~~~~~~~~~~\vspace{-3mm}
         \scriptsize\begin{gather*} 
             \begin{pmatrix} \ds\fracp{\Qvec}{t} \\ ~\vspace{-1.5mm}\\ \ds \fracp{\Pvec}{t} \end{pmatrix} =
             \Scal \Nabla_{\qvec,\pvec} \breve H_{\!\eps}
         \end{gather*}
    }
    edge[pil,<-,] node[auto] {{\color{vert}1}:  Hamiltonian?}(coordusuelles.south);
     \node[punkt,right=1.5cm of coordusuelles] (coordcyl) {{\color{light-red} Usual Coordinates}\\ 
             $(\xvec,\vvec)$ \\ ~ \vspace{-2mm} \\
         \scriptsize $\UsCoo{H}_{\!\eps}\!(\xvec,\vvec),\UsCoo{\Pcal}_{\!\eps}(\xvec,\vvec)$~s.t:~~~~~~~~~~~~~~~~~~~\vspace{-3mm}
         \scriptsize\begin{gather*} 
             \begin{pmatrix} \ds\fracp{\Xvec}{t} \\ ~\vspace{-1.5mm}\\ \ds \fracp{\Vvec}{t} \end{pmatrix} =
             \UsCoo{\Pcal}_{\!\eps}\Nabla_{\xvec,\vvec} \UsCoo{H}_{\!\eps}
         \end{gather*}
     }
     edge[pil,<-,] node[auto] {{\color{vert}2}}(coordcanon);
      \node[punkt,inner sep=5pt,below=0.7cm of coordcyl] (coorddarboux) {{\color{light-red} Polar Coordinates} \\
       $(\xvec,\theta,v)$\\
            \scriptsize$\widetilde H_{\!\eps}\!(v),
            \widetilde\Pcal_{\!\eps}(\xvec,\theta,v)$\vspace{-1mm}     
     }
     edge[pil,<-,] node[auto] {{\color{vert}3}}(coordcyl.south);
     \node[punkt,inner sep=5pt,below=0.7cm of coorddarboux] (coorddarlie) {{\color{light-red} Darboux Almost Canonical Coordinates} \\
      $(\yvec,\theta,k)$\\
            \scriptsize$\overline H_{\!\eps}\!(\yvec,\theta,k),\overline \Pcal_{\!\eps}(\yvec)$ \vspace{-1mm}       
     }
     edge[pil,<-,] node[auto] {{\color{vert}4}: {\color{violet}Darboux Algorithm}}(coorddarboux.south);
     \node[punkt,inner sep=5pt,below=0.7cm of coorddarlie] (coorddablabla) {{\color{light-red} Lie Coordinates} \\
            $(\zvec,\gamma,j)$\\
            \scriptsize$\widehat H_{\!\eps}\!(\zvec,j),\widehat \Pcal_{\!\eps}(\zvec)$\\
            ~~~~~~~~~~~~~~~~~~$=\overline \Pcal_{\!\eps}(\zvec)$    \vspace{-0.7mm}
     }
     edge[pil,<-,] node[auto] {{\color{vert}5}: {\color{violet}Lie Transform}}(coorddarlie.south);
\end{tikzpicture}
\label{figPanorama} %  \ref{figPanorama}
\caption{A schematic description of the method leading the Gyro-Kinetic Approximation.}
\end{center}
\end{figure}
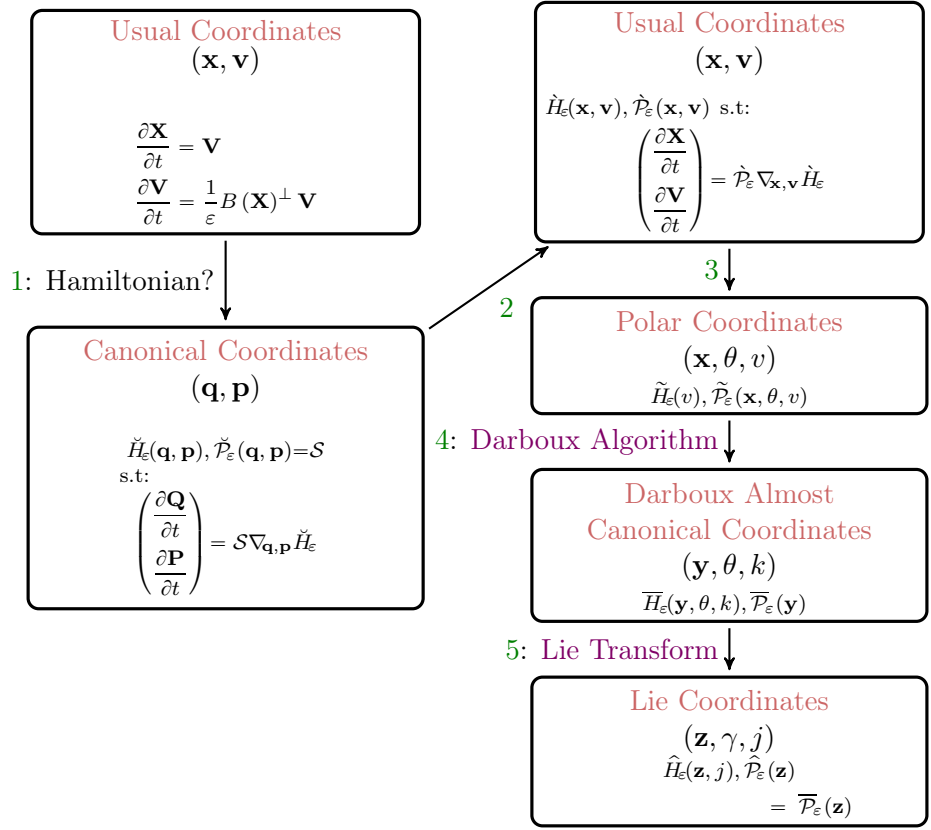
 
A schematic description of the Guiding-Center change of coordinates is summarized in Figure \ref{figPanorama}.
The three main steps of the reduction was already discussed in the introduction. They are symbolized by arrows 3, 4, and 5. 
The first step consists in finding an adequate symplectic structure from which the expressions of the Poisson Matrix and the Hamiltonian function 
are deduced. To achieve this goal we will introduce the canonical coordinates defined by:
\begin{align}
    &\mathbf{q}=\mathbf{x}\text{ and }\mathbf{p}=\frac{\partial L_\eps}{\partial \vvec}\left(\mathbf{x},\mathbf{v}\right)=\mathbf{v}+\frac{1}{\eps}\mathbf{A}\left(\mathbf{x}\right),
\label{DsDim11104} % \eqref{DsDim11104}
\end{align}
where %$L_\eps$ is the Lagrangian function defined by
\begin{align}
L_{\eps}\left(\mathbf{x},\mathbf{v}\right)=\frac{\left|\mathbf{v}\right|^{2}}{2}+\frac{1}{\eps}\mathbf{v}\cdot\mathbf{A}\left(\mathbf{x}\right),
\label{DsDim11103} % \eqref{DsDim11103}
\end{align}
is the dimensionless electromagnetic Lagrangian and $\mathbf{A}$ is the potential vector.  
Then, the Symplectic Two-Form $\boldsymbol{\Omega}_\eps$ that is considered is the unique Two-Form whose expression in the Canonical Coordinate chart is given by
\begin{align}
\breve{\omega}_{\eps}=d\mathbf{q}\wedge d\mathbf{p}.
\label{DsDim11106} % \eqref{DsDim11106}
\end{align}
Consequently, the Poisson matrix is given by:
\begin{align}
\breve{\mathcal{P}}_{\!\eps}\left(\mathbf{q},\mathbf{p}\right)=\left(\breve{\mathcal{K}}_{\eps}\left(\mathbf{q},\mathbf{p}\right)\right)^{-T}=\mathcal{S}=\left[\begin{array}{cc}
\ 0 & id\\
-id & 0\end{array}\right],
\label{DsDim11108} % \eqref{DsDim11108}
\end{align}
where $\breve{\mathcal{K}}_{\eps}$ is the matrix associated with $\breve{\omega}_{\eps}$.
Eventually it is obvious to show that dynamical system  \eqref{systeqs1}-\eqref{systeqs2} is Hamiltonian with Hamiltonian function 
$\breve{H}_\eps\left(\mathbf{q},\mathbf{p}\right)=\frac{1}{2}\left|\mathbf{p}-\frac{1}{\eps}\mathbf{A}\left(\mathbf{q}\right)\right|^{2}$
and Poisson Matrix $\breve{\mathcal{P}}_{\!\eps}$.
\\

Using the usual change of coordinates rules for the Poisson Matrix and the Hamiltonian function
(see Appendix  \ref{AppendixChangeOfCoordRulesForHamAndPM}) we obtain the following expressions
of the Hamiltonian function and of the Poisson Matrix, in the Cartesian Coordinates:
\begin{align}
\UsCoo{H}_{\eps}\left(\mathbf{x},\mathbf{v}\right)=\frac{\left|\mathbf{v}\right|^{2}}{2},
~~
\UsCoo{\mathcal{P}}_{\eps}\left(\mathbf{x},\mathbf{v}\right)=\left(\begin{array}{cccc}
0 & \ 0 & \ 1 & \ 0\\
0 & \ 0 & \ 0 & \ 1\\
-1 & \ 0 & \ 0 & \frac{B\left(\mathbf{x}\right)}{\eps}\\
0 & -1 & -\frac{B\left(\mathbf{x}\right)}{\eps} & \ 0\end{array}\right),
\label{DsDim111017} % \eqref{DsDim111017}
\end{align}
and, more interesting in the perspective of the next steps, in the Polar in velocity Coordinates: 
 \begin{align} 
       &\tilde{H}_{\eps}\left(\mathbf{x},\theta,v\right)=\frac{v^{2}}{2},
         \label{polarcoordvelham} % \eqref{polarcoordvelham}
       \\
        &\tilde{\mathcal{P}}_{\eps}\left(\mathbf{x},\theta,v\right)=\left(\begin{array}{cccc}
0 & 0 & -\frac{\cos\left(\theta\right)}{v} & -\sin\left(\theta\right)\\
0 & 0 & \frac{\sin\left(\theta\right)}{v} & -\cos\left(\theta\right)\\
\frac{\cos\left(\theta\right)}{v} & -\frac{\sin\left(\theta\right)}{v} & 0 & \frac{B\left(\mathbf{x}\right)}{\eps v}\\
\sin\left(\theta\right) & \cos\left(\theta\right) & -\frac{B\left(\mathbf{x}\right)}{\eps v} & 0\end{array}\right).
         \label{polarcoordvelPoiss} % \eqref{polarcoordvelPoiss}
 \end{align}
Before turning to the fourth step we give the proof of Theorem \ref{KrTh}.

%
%%
%%%
\subsection{Proof of Theorem \ref{KrTh}}
%%%
%%
%

When the Poisson Matrix has the form given by \eqref{HamSystSh1}, the last line of \eqref{HamSystSh0} reads 
\begin{align*}
\frac{\partial \Rvec_{4}}{\partial t}=-\mathcal{P}_{3,4}\frac{\partial H}{\partial r_{3}}\left(\mathbf{R}\right).
\end{align*}
Hence, if the Hamiltonian function does not depend on the penultimate variable, then, the last component $\Rvec_4$ of the trajectory is not time-evolving.
Now, introducing the Poisson Bracket of two functions $f\equiv f\left(\mathbf{r}\right)$ and $g\equiv g\left(\mathbf{r}\right)$ defined by
\begin{align}
\left\{ f,g\right\} _{\!\mathbf{r}}\left(\mathbf{r}\right)=\left[\nabla_{\mathbf{\!r}}f\left(\mathbf{r}\right)\right]^{T}\mathcal{P}\left(\mathbf{r}\right)\nabla_{\!\mathbf{r}}g\left(\mathbf{r}\right),
\label{DsDim111011} %\eqref{DsDim111011},
\end{align}
where $\mathcal{P}\!\left(\mathbf{r}\right)$ is the Poisson Matrix, we have 
\begin{align}
      &\mathcal{P}_{i,j}=\left\{ \mathbf{r}_{i},\mathbf{r}_{j}\right\}_{\mathbf{\!r}} \text{ for }i,j=1,2,3,4,
      \label{PMexpCC}   %  \eqref{PMexpCC}
\end{align}
where $\mathbf{r}_i$ is the $i$-th coordinate function $\mathbf{r}\mapsto r_i$ and
 a direct computation leads to 
\begin{align}
      &\left\{ \left\{ \mathbf{r}_{1},\mathbf{r}_{2}\right\} _{\mathbf{\! r}},\mathbf{r}_{3}\right\} _{\mathbf{\! r}}\left(\mathbf{r}\right)=-\mathcal{P}_{3,4}\frac{\partial\mathcal{P}_{1,2}}{\partial r_{4}}\left(\mathbf{r}\right)\text{ and }\left\{ \left\{ \mathbf{r}_{1},\mathbf{r}_{2}\right\} _{\!\mathbf{r}},\mathbf{r}_{4}\right\} _{\mathbf{\! r}}\left(\mathbf{r}\right)=\mathcal{P}_{3,4}\frac{\partial\mathcal{P}_{1,2}}{\partial r_{3}}\left(\mathbf{r}\right).
      \label{CPdrP}   %  \eqref{CPdrP}
\end{align}
Using the Jacobi identity saying that for any regular function $f,g,h,$
\begin{align}
     &\left\{ \left\{ f,g\right\}_{\mathbf{\!r}} ,h\right\}_{\mathbf{\!r}} +\left\{ \left\{ h,f\right\}_{\mathbf{\!r}} ,g\right\}_{\mathbf{\!r}} +\left\{ \left\{ g,h\right\}_{\mathbf{\!r}} ,f\right\}_{\mathbf{\!r}} =0,
\end{align}
 and the facts that $\mathcal{P}_{3,1}=\mathcal{P}_{2,3}=\mathcal{P}_{4,1}=\mathcal{P}_{2,4}=0$, we obtain
\begin{align}
     &\left\{ \left\{ \mathbf{r}_{1},\mathbf{r}_{2}\right\} _{\!\mathbf{r}},\mathbf{r}_{3}\right\} _{\mathbf{\! r}}=-\left\{ \left\{ \mathbf{r}_{3},\mathbf{r}_{1}\right\} _{\!\mathbf{r}},\mathbf{r}_{2}\right\} _{\mathbf{\! r}}-\left\{ \left\{ \mathbf{r}_{2},\mathbf{r}_{3}\right\} _{\!\mathbf{r}},\mathbf{r}_{1}\right\} _{\mathbf{\! r}}=0,
     \\
     &\left\{ \left\{ \mathbf{r}_{1},\mathbf{r}_{2}\right\} _{\!\mathbf{r}},\mathbf{r}_{4}\right\} _{\mathbf{\! r}}=-\left\{ \left\{ \mathbf{r}_{4},\mathbf{r}_{1}\right\} _{\!\mathbf{r}},\mathbf{r}_{2}\right\} _{\mathbf{\! r}}-\left\{ \left\{ \mathbf{r}_{2},\mathbf{r}_{4}\right\} _{\!\mathbf{r}},\mathbf{r}_{1}\right\} _{\mathbf{\! r}}=0.
\end{align}
and consequently, since  $\mathcal{P}_{1,1}=\mathcal{P}_{2,2}=0$, \eqref{CPdrP} brings \eqref{HamSystSh3}, ending the proof of the theorem.

%
%
%
%
%
% SECTION
%
\section{The Darboux algorithm}
\label{SectDaAl}    %   \ref{SectDaAl} 
%%%%
%%
%

\subsection{Objectives}
\label{SubSectDaAlObj}
At this stage, the three first steps of the reduction are already done.
The fourth step (see Figure \ref{figPanorama}) on the way to build the Guiding-Center Approximation is the application of the mathematical algorithm, so called the Darboux Algorithm, to build a global Coordinate System $\left(y_{1},y_{2},\theta,k\right)$ close to the Historic Guiding-Center Coordinate System \eqref{UsualGC1NC}--\eqref{UsualGC4NC}, and in which the Poisson Matrix has the required form \eqref{HamSystSh1} to apply the Key Result (Theorem \ref{KrTh}). In order to manage the small parameter $\eps$, we will build the Coordinate System $\left(y_{1},y_{2},\theta,k\right)$ in order to have $\bar{\mathcal{P}}_{\!\eps}\left(\mathbf{y},\theta,k\right)$ with the following form:
 \begin{gather} 
  \label{DarbouxAlgorithMatrix} % \eqref{DarbouxAlgorithMatrix}
                   \bar{\mathcal{P}}_{\!\eps}\left(\mathbf{y},\theta,k\right)= \left(\begin{array}{c|cc}
                    \text{\huge${\mathtt M}$}_\eps\left(\mathbf{y}\right) & \begin{array}{c}0 \vspace{-3pt} \\0 \end{array} & \begin{array}{c}0 \vspace{-3pt}\\0 \end{array} \\
                    \hline
                     0\ 0&0&\frac{1}{\eps} \\
                     0\ 0&-\frac{1}{\eps} & 0 \\
                                   \end{array}\right).
\end{gather}
Using the usual change of coordinates rule for the Poisson Matrix, finding this coordinate system 
remains to find a diffeomorphism 
\begin{gather}
\label{DefUpsss}
   \boldsymbol{\Upsilon}\left(\mathbf{x},\theta,v\right)=\left(\boldsymbol{\Upsilon}_{\!1}\left(\mathbf{x},\theta,v\right),\boldsymbol{\Upsilon}_{\!2}\left(\mathbf{x},\theta,v\right),\boldsymbol{\Upsilon}_{\!3}\left(\mathbf{x},\theta,v\right),\boldsymbol{\Upsilon}_{\!4}\left(\mathbf{x},\theta,v\right)\right),
\end{gather}
whose components satisfy the following non-linear hyperbolic system of PDE:
\begin{align}
       &\poibrack{\ytrf_{\! 1}}{\ytrf_{\! 3}}_{\xvec,\theta,v}=0, \hspace{1cm}\poibrack{\ytrf_{\! 1}}{\ytrf_{\! 4}}_{\xvec,\theta,v}=0,
       \label{13ET14}      %    \eqref{13ET14} 
        \\
        &\poibrack{\ytrf_{\! 2}}{\ytrf_{\! 3}}_{\xvec,\theta,v}=0,  \hspace{1cm}\poibrack{\ytrf_{\! 2}}{\ytrf_{\! 4}}_{\xvec,\theta,v}=0,
       \label{23ET24}      %    \eqref{23ET24} 
        \\
        &\poibrack{\ytrf_{\! 3}}{\ytrf_{\! 4}}_{\xvec,\theta,v}=\frac{1}{\eps}.    
         \label{34}      %    \eqref{34} 
\end{align}
The resolution of this set of PDE constitutes the Darboux method.

\vspace{0.5cm}

The first stage of the method consists in setting
\begin{align}
     &\ytrf_{\! 3}=\theta.
\end{align}
Consequently, the non-linear nature of \eqref{13ET14}-\eqref{34} is balanced by the fact that $\theta$ is left unchanged.
With the aim of being
close to the Historical Guiding-Center coordinates (see \eqref{UsualGC1NC}-\eqref{UsualGC4NC}), the boundary conditions are fixed at $v=0$ as follows:
\begin{gather}
\label{BoundCond}  %  \eqref{BoundCond} 
\begin{aligned}
     &\ytrf_{\!1}\left(\xvec,\theta,0\right)=x_{1},
     \\
     &\ytrf_{\!2}\left(\xvec,\theta,0\right)=x_{2},
     \\
     &\ytrf_{\!4}\left(\xvec,\theta,0\right)=0.
\end{aligned}
\end{gather}
Since the Poisson Matrix $\tilde{\mathcal{P}}_{\eps}$ given by \eqref{polarcoordvelPoiss} has a singularity
at $v=0$, this choice leads to a small difficulty. Nevertheless, it is not a difficult task to fix it. Let $\omega_{\eps}$ be the function defined by:
\begin{gather}
\label{CyCooIVel797} % \eqref{CyCooIVel797}
 \omega_{\eps}(\xvec,v)=\frac{B(\xvec)}{\eps v},
\end{gather}
and let $\tilde{\mathcal{Q}}_{\eps}$ be the matrix related with the Poisson Matrix by:
\begin{gather}
\label{201305121421} %\eqref{201305121421}
\tilde{\mathcal{P}}_{\eps}\left(\mathbf{x},\theta,v\right)=\omega_{\eps}\left(\mathbf{x},v\right)\tilde{\mathcal{Q}}_{\eps}\left(\mathbf{x},\theta,v\right).
\end{gather}
Then, the system of PDE \eqref{13ET14}-\eqref{34} is equivalent, for $v\neq0$, to equations involving $\tilde{\mathcal{Q}}_{\eps}$:
\begin{align}
      &\ds  (\nabla{\ytrf_{1}}) \cdot (\tilde{\mathcal{Q}}_{\eps} (\nabla{\ytrf_{3}} ))=0, \hspace{1cm}\ds (\nabla{\ytrf_{1}}) \cdot (\tilde{\mathcal{Q}}_{\eps} (\nabla{\ytrf_{4}}))=0,
       \label{13ET14WS}      %    \eqref{13ET14WS} 
      \\
      &\ds  (\nabla{\ytrf_{2}}) \cdot (\tilde{\mathcal{Q}}_{\eps}(\nabla{\ytrf_{3}}))=0, \hspace{1cm}\ds  (\nabla{\ytrf_{2}}) \cdot (\tilde{\mathcal{Q}}_{\eps} (\nabla{\ytrf_{4}}))=0,
       \label{23ET24WS}      %    \eqref{23ET24WS} 
      \\
      &\ds  (\nabla{\ytrf_{4}}) \cdot (\tilde{\mathcal{Q}}_{\eps}(\nabla{\ytrf_{3}}))=-\frac{v}{B\left(\mathbf{x}\right)},
      \label{EqWithoutSing} % \eqref{EqWithoutSing}
\end{align}
that have no singularity in $v=0$. Consequently in place of solving \eqref{13ET14}-\eqref{34}
we will solve 
\eqref{13ET14WS}-\eqref{EqWithoutSing} provided with the set of boundary conditions \eqref{BoundCond}.

\vspace{0.5cm}

In this Section, we will not follow the method given in {\cite{littlejohn:1979}}. We will base the resolution of \eqref{13ET14WS}-\eqref{EqWithoutSing}
on an intermediary PDE from which the solutions of \eqref{13ET14WS}-\eqref{EqWithoutSing} will be deduced. 
Afterwards, we will construct for any fixed $\eps$ map $\boldsymbol{\Upsilon}$.
Then, we will show that  $\boldsymbol{\Upsilon}$ is well a change of coordinates and study its regularity with respect to $\eps$.
Finally, we will prove Theorem \ref{MainThm1} and 
in view of the last Section we will give estimates related to the expression of the characteristics 
expressed in the Darboux Coordinate System.

%%%%%%%%%%%%%%%%%%%%%%%%%%%%%%%%%%%%%%%%%%%%%%%%%%%%%%%%%%%%%%%%%%%%%%%%%%%
%%%%%%%%%%%%%%%%%%%%%%%%%%%%%%%%%%%%%%%%%%%%%%%%%%%%%%%%%%%%%%%%%%%%%%%%%%%
%%%%%%%%%%%%%%%%%%%%%%%%%%%%%%%%%%%%%%%%%%%%%%%%%%%%%%%%%%%%%%%%%%%%%%%%%%%
\subsection{An intermediary equation}
%%%%%%%%%%%%%%%%%%%%%%%%%%%%%%%%%%%%%%%%%%%%%%%%%%%%%%%%%%%%%%%%%%%%%%%%%%%
%%%%%%%%%%%%%%%%%%%%%%%%%%%%%%%%%%%%%%%%%%%%%%%%%%%%%%%%%%%%%%%%%%%%%%%%%%%
%%%%%%%%%%%%%%%%%%%%%%%%%%%%%%%%%%%%%%%%%%%%%%%%%%%%%%%%%%%%%%%%%%%%%%%%%%%
The intermediary equation that we consider in this Section is the following:
\begin{gather}
\label{IntermediaryEq}    %   \eqref{IntermediaryEq} 
 \left \{
 \begin{aligned}
        &\frac{\partial\varphi}{\partial v}+\eps\boldsymbol{\Lambda}\cdot\varphi  = 0, 
        \\
        &\varphi\left(\mathbf{x},\theta,v=0\right)    =  \varphi_{0}\left(\mathbf{x},\theta\right),
 \end{aligned}
  \right .
  \end{gather}
where 
\begin{align}
     &\varphi_{0}\left(\mathbf{x},\theta\right)=\frac{1}{B\left(\mathbf{x}\right)},
     \label{IntermediaryEqInit}    %   \eqref{IntermediaryEqInit} 
\end{align}
and where $\boldsymbol{\Lambda}$ is the vector field defined by:
\begin{align}
    &\boldsymbol{\Lambda}\left(\mathbf{x},\theta\right)=\frac{\cos\left(\theta\right)}{B\left(\mathbf{x}\right)}\frac{\partial}{\partial x_{1}}-\frac{\sin\left(\theta\right)}{B\left(\mathbf{x}\right)}\frac{\partial}{\partial x_{2}}.
    \label{defLambda}    %    \eqref{defLambda}
\end{align}
We denote by $\mathcal{G}_{\lambda}$ its flow and by 
$\boldsymbol{\Lambda}^{n}\cdot$ its iterated application acting on regular functions $f$ as 
\begin{align}
  &\boldsymbol{\Lambda}^{0}\cdot f=f, ~~~~~~~~~~ \boldsymbol{\Lambda}^{1}\cdot f=\frac{\cos\left(\theta\right)}{B\left(\mathbf{x}\right)}\frac{\partial f}{\partial x_{1}}-\frac{\sin\left(\theta\right)}{B\left(\mathbf{x}\right)}\frac{\partial f}{\partial x_{2}},
  \label{defLam1} % \eqref{defLam1}
  \\
  & \boldsymbol{\Lambda}^{n}\cdot f=\boldsymbol{\Lambda}\cdot\left(\boldsymbol{\Lambda}^{n-1}\cdot f\right), ~~~\forall n\geq 2.
\label{defLamn} % \eqref{defLamn}
\end{align}

In a first place, we give the regularity property of $\mathcal{G}_\lambda$.
\begin{lemma}
\label{regFlowG} % \ref{regFlowG}
Flow $\mathcal{G}_\lambda= \mathcal{G}_\lambda(\xvec,\theta)$ of vector field $\boldsymbol{\Lambda}$ is complete, 
in $\mathcal{C}^{\infty}(\rit^3)$,
 $\left(\mathcal{G}_{\lambda}^{1},\mathcal{G}_{\lambda}^{2}\right)$ is in $\mathcal{C}^\infty_{\#,3}(\rit^3)$ (see Notation \ref{201402252149}),
and $\mathcal{G}_{\lambda}^{3}(\xvec,\theta)=\theta.$
\end{lemma}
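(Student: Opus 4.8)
The plan is to read all four assertions off the structure of $\boldsymbol{\Lambda}$ together with the classical theory of ordinary differential equations, the only substantial input being the standing assumption $\inf B>1$. Denoting by $s$ the flow parameter, I would first write out the characteristic system governing $\mathcal{G}_\lambda=\mathcal{G}_\lambda(s;\xvec,\theta)=(\mathcal{G}_\lambda^1,\mathcal{G}_\lambda^2,\mathcal{G}_\lambda^3)$ explicitly:
\begin{align*}
&\frac{d\mathcal{G}_\lambda^1}{ds}=\frac{\cos(\mathcal{G}_\lambda^3)}{B(\mathcal{G}_\lambda^1,\mathcal{G}_\lambda^2)},\qquad \frac{d\mathcal{G}_\lambda^2}{ds}=-\frac{\sin(\mathcal{G}_\lambda^3)}{B(\mathcal{G}_\lambda^1,\mathcal{G}_\lambda^2)},\qquad \frac{d\mathcal{G}_\lambda^3}{ds}=0,
\end{align*}
with initial datum $(\xvec,\theta)$ at $s=0$. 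Since $\boldsymbol{\Lambda}$ carries no $\partial/\partial\theta$ component, the last equation gives at once $\mathcal{G}_\lambda^3(s;\xvec,\theta)=\theta$, which is the final assertion, and it moreover shows that $\theta$ acts as a frozen parameter in the planar subsystem for $(\mathcal{G}_\lambda^1,\mathcal{G}_\lambda^2)$.

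Next I would establish completeness. Because $\inf B>1$, the right-hand sides of the first two equations are bounded in norm by $1/\inf B<1$, uniformly on $\rit^3$. Hence any maximal solution obeys the a priori estimate $\big|(\mathcal{G}_\lambda^1,\mathcal{G}_\lambda^2)(s)-\xvec\big|\leq |s|/\inf B$, so on every finite parameter interval the trajectory stays in a compact set. The escape lemma then rules out finite-parameter blow-up, forcing the maximal interval of existence to be all of $\rit$; thus $\mathcal{G}_\lambda$ is complete.

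It then remains to treat smoothness and $\theta$-periodicity. For smoothness, I would note that $B\in\mathcal{C}^\infty$ together with $\inf B>1>0$ makes $1/B$, and hence $\boldsymbol{\Lambda}$, a $\mathcal{C}^\infty$ vector field with bounded coefficients; the smooth-dependence theorem for ODEs gives that the flow is $\mathcal{C}^\infty$ jointly in $(s,\xvec,\theta)$, so for each fixed $s$ one has $\mathcal{G}_\lambda\in\mathcal{C}^\infty(\rit^3)$. For periodicity, I would observe that $\theta$ enters the $(\mathcal{G}_\lambda^1,\mathcal{G}_\lambda^2)$-subsystem only through $\cos\theta$ and $\sin\theta$ and is itself frozen along the flow; replacing $\theta$ by $\theta+2\pi$ therefore leaves both the equations and the initial condition unchanged, and uniqueness of solutions forces $(\mathcal{G}_\lambda^1,\mathcal{G}_\lambda^2)(\xvec,\theta+2\pi)=(\mathcal{G}_\lambda^1,\mathcal{G}_\lambda^2)(\xvec,\theta)$, i.e. $(\mathcal{G}_\lambda^1,\mathcal{G}_\lambda^2)\in\mathcal{C}^\infty_{\#,3}(\rit^3)$ in the sense of Notation \ref{201402252149}.

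I do not expect a deep obstacle here: the argument is essentially bookkeeping on a well-behaved ODE. The one point demanding care is the completeness step, where the uniform bound coming from $\inf B>1$ must be combined correctly with the escape lemma to exclude finite-parameter blow-up; once that a priori estimate is in place, smoothness, $\theta$-invariance of the third component, and $2\pi$-periodicity all follow from standard ODE facts and the explicit dependence of $\boldsymbol{\Lambda}$ on $\theta$.
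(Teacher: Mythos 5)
Your proof is correct and is precisely the routine argument the paper has in mind: the paper omits the proof of Lemma \ref{regFlowG} as ``straightforward,'' and its own use of the flow (the characteristic system \eqref{201402262136} in the proof of Theorem \ref{LemThmOub}) relies on exactly the facts you establish — constancy of the third component, global existence from the uniform bound $1/B<1$, smooth dependence on data, and $2\pi$-periodicity in $\theta$ via uniqueness.
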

Then, using this lemma, which proof is straightforward, we obtain the following Theorem.
\begin{theorem}
\label{LemThmOub}    % \eqref{LemThmOub} 
The unique solution $\varphi$ to \eqref{IntermediaryEq} is given by 
\begin{align}
\varphi\!\left(\mathbf{x},\theta,v\right)=\frac{1}{B\left(\mathcal{G}_{-\eps v}^{1}\left(\mathbf{x},\theta\right),\mathcal{G}_{-\eps v}^{2}\left(\mathbf{x},\theta\right)\right)}.
 \label{SolPDEPhi} %     \eqref{SolPDEPhi}  
\end{align}
Moreover, $\varphi$ is in $\Ccal_{\PerND}^\infty(\rit^4)$ and is bounded.
\end{theorem}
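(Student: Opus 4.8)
The plan is to solve \eqref{IntermediaryEq} by the method of characteristics, exploiting that $\boldsymbol{\Lambda}$ differentiates only in the $\mathbf{x}$-variables, so that $\theta$ is a mere parameter and \eqref{IntermediaryEq} is a genuine linear transport equation along the flow $\mathcal{G}$ of $\boldsymbol{\Lambda}$. The one structural fact I would isolate first is that a vector field is invariant under its own flow: from the group law $\mathcal{G}_{s}\circ\mathcal{G}_{t}=\mathcal{G}_{s+t}$ (valid for all $s,t$ by the completeness asserted in Lemma \ref{regFlowG}) one gets, for every regular $\psi$ and every $t$, both $\frac{d}{ds}(\psi\circ\mathcal{G}_{s})=(\boldsymbol{\Lambda}\cdot\psi)\circ\mathcal{G}_{s}$ and the commutation identity $\boldsymbol{\Lambda}\cdot(\psi\circ\mathcal{G}_{t})=(\boldsymbol{\Lambda}\cdot\psi)\circ\mathcal{G}_{t}$. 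This identity is the single substantive ingredient; everything else is verification.

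Next I would check directly that the candidate $\varphi=\varphi_{0}\circ\mathcal{G}_{-\eps v}$ is a solution; this is exactly \eqref{SolPDEPhi}, since $\varphi_{0}=1/B$ and, by Lemma \ref{regFlowG}, $\mathcal{G}^{3}_{-\eps v}(\mathbf{x},\theta)=\theta$ does not enter $\varphi_{0}$. Writing $F(s)=\varphi_{0}\circ\mathcal{G}_{s}$ one has $\partial_{v}\varphi=-\eps F'(-\eps v)=-\eps(\boldsymbol{\Lambda}\cdot\varphi_{0})\circ\mathcal{G}_{-\eps v}$, while the commutation identity gives $\boldsymbol{\Lambda}\cdot\varphi=(\boldsymbol{\Lambda}\cdot\varphi_{0})\circ\mathcal{G}_{-\eps v}$; the two terms cancel in $\partial_{v}\varphi+\eps\boldsymbol{\Lambda}\cdot\varphi$. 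Since $\mathcal{G}_{0}=\mathrm{id}$, the initial condition $\varphi(\mathbf{x},\theta,0)=\varphi_{0}(\mathbf{x},\theta)=1/B(\mathbf{x})$ holds as well.

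For uniqueness I would argue along characteristics. If $w$ denotes the difference of two solutions, then for fixed $(\mathbf{x}_{0},\theta_{0})$ the map $v\mapsto w(\mathcal{G}_{\eps v}(\mathbf{x}_{0},\theta_{0}),v)$ has derivative $(\partial_{v}w+\eps\boldsymbol{\Lambda}\cdot w)\circ\mathcal{G}_{\eps v}=0$, because $\frac{d}{dv}\mathcal{G}_{\eps v}=\eps\boldsymbol{\Lambda}\circ\mathcal{G}_{\eps v}$ turns the total $v$-derivative into the PDE operator. Hence it is constant, equal to $w(\mathbf{x}_{0},\theta_{0},0)=0$; since every point of $\rit^{4}$ is of the form $(\mathcal{G}_{\eps v}(\mathbf{x}_{0},\theta_{0}),v)$ with $(\mathbf{x}_{0},\theta_{0})=\mathcal{G}_{-\eps v}(\mathbf{x},\theta)$ (again using completeness), we conclude $w\equiv0$.

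Finally, the regularity and boundedness are inherited from the ingredients with no extra work. The solution $\varphi$ is the composition of $\varphi_{0}=1/B$, which is smooth because $B$ is analytic with $\inf B>1$, with the smooth map $(\mathbf{x},\theta,v)\mapsto\mathcal{G}_{-\eps v}(\mathbf{x},\theta)$ (smoothness of $\mathcal{G}$ from Lemma \ref{regFlowG}), so $\varphi\in\mathcal{C}^{\infty}(\rit^{4})$; the $2\pi$-periodicity in $\theta$ passes from $(\mathcal{G}^{1}_{-\eps v},\mathcal{G}^{2}_{-\eps v})$, periodic in $\theta$ by Lemma \ref{regFlowG}, to $\varphi$, giving $\varphi\in\Ccal_{\PerND}^{\infty}(\rit^{4})$; and $\inf B>1$ forces $0<\varphi<1$, hence boundedness. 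I expect the only genuinely delicate point to be the self-invariance/commutation identity; once Lemma \ref{regFlowG} is available, the remaining steps are routine.
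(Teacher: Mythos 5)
Your proof is correct and is essentially the paper's own argument: both rest on solving the linear transport equation \eqref{IntermediaryEq} along the flow $\mathcal{G}_\lambda$ of $\boldsymbol{\Lambda}$, with regularity, periodicity and boundedness inherited from Lemma \ref{regFlowG} and $\inf B>1$. The only cosmetic difference is that you verify the candidate $\varphi_0\circ\mathcal{G}_{-\eps v}$ directly via the flow-commutation identity and then prove uniqueness by showing any solution is constant along characteristics, whereas the paper obtains the same formula in one stroke from the characteristics/Duhamel representation $\varphi(\mathbf{x},\theta,v)=\varphi_0(\mathcal{F}(0,v,\mathbf{x},\theta))$ with $\mathcal{F}(v,s,\mathbf{x},\theta)=\mathcal{G}_{\eps(v-s)}(\mathbf{x},\theta)$.
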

\begin{proof}
The proof of Theorem \ref{LemThmOub} is performed with the usual characteristics' method.
Let $\mathcal{F}\left(v,s,\mathbf{x},\theta\right)$ be the characteristic associated with \eqref{IntermediaryEq}, \textit{i.e.}
the solution of 
\begin{gather}
 \left \{
\begin{aligned}
    &\frac{\partial\mathcal{F}_{1}}{\partial v}=\eps\frac{\cos\left(\mathcal{F}_{3}\right)}{B\left(\mathcal{F}_{1},\mathcal{F}_{2}\right)},
    \\
    &\frac{\partial\mathcal{F}_{2}}{\partial v}=-\eps\frac{\sin\left(\mathcal{F}_{3}\right)}{B\left(\mathcal{F}_{1},\mathcal{F}_{2}\right)},
    & ~~~~~~~~~\mathcal{F}\left(s,s,\mathbf{x},\theta\right)=\left(\mathbf{x},\theta\right).
    \\
    &\frac{\partial\mathcal{F}_{3}}{\partial v}=0,
 \end{aligned}
  \right .
  \end{gather}
By definition the flow $\mathcal{G}_{\lambda}$ of $\boldsymbol{\Lambda}$ satisfies:
\begin{gather}
\label{201402262136} % \eqref{201402262136}
 \left \{
\begin{aligned}
    &\frac{\partial\mathcal{G}_{\lambda}^{1}}{\partial\lambda}=\frac{\cos\left(\mathcal{G}_{\lambda}^{3}\right)}{B\left(\mathcal{G}_{\lambda}^{1},\mathcal{G}_{\lambda}^{3}\right)},
    \\
    &\frac{\partial\mathcal{G}_{\lambda}^{2}}{\partial\lambda}
    =-\frac{\sin\left(\mathcal{G}_{\lambda}^{3}\right)}{B\left(\mathcal{G}_{\lambda}^{1},\mathcal{G}_{\lambda}^{3}\right)},
    &~~~~~~~~~\mathcal{G}_{0}\left(\mathbf{x},\theta\right)=\left(\mathbf{x},\theta\right).
    \\
    &\frac{\partial\mathcal{G}_{\lambda}^{3}}{\partial\lambda}=0,
 \end{aligned}
  \right .
  \end{gather}
Then, we deduce that $\mathcal{F}\left(v,s,\mathbf{x},\theta\right)=\mathcal{G}_{\eps\left(v-s\right)}\left(\mathbf{x},\theta\right)$.
Eventually Duhamel's formula yields:
\begin{align}
   &\varphi\left(\mathbf{x},\theta,v\right)=\varphi_{0}\left(\mathcal{F}\left(0,v,\mathbf{x},\theta\right)\right)=\frac{1}{B\left(\mathcal{G}_{-\eps v}^{1}\left(\mathbf{x},\theta\right),\mathcal{G}_{-\eps v}^{2}\left(\mathbf{x},\theta\right)\right)}.
\end{align}
This ends the proof of Theorem.
\end{proof}
We will end this Section by giving a Taylor expansion, with respect to $\eps$, of the solution $\varphi$ to \eqref{IntermediaryEq}. 
Such kind of Taylor expansions are usually referred in the literature (see {Olver \cite{Olver}}) as 
\textit{Lie expansions}.

\begin{definition}
\label{LieSums}
If $\boldsymbol{\Lambda}$ is a vector field of $\mathbb{R}^3$ with coefficients which are in $\mathcal{C}^\infty_b\!\!\left(\rit^3\right)$, then we define the Lie Series $S_{L}^{\infty}\left(\boldsymbol{\Lambda}\right)\cdot$ associated with $\boldsymbol{\Lambda}$ by 
\begin{align}
S_{L}^{\infty}\left(\boldsymbol{\Lambda}\right)\cdot=\underset{l\geq0}{\sum}\frac{\left(\boldsymbol{\Lambda}\right)^{l}\cdot}{l!},
\end{align}
where $(\boldsymbol{\Lambda})^{l}$ is defined by \eqref{defLam1} and \eqref{defLamn},
and the partial Lie Sum of order $n$:
\begin{align}
S_{L}^{n}\left(\boldsymbol{\Lambda}\right)\cdot=\underset{l=0}{\overset{n}{\sum}}\frac{\left(\boldsymbol{\Lambda}\right)^{l}\cdot}{l!}.
\end{align}
\end{definition}

It is known that, formally, the flow $\mathcal{G}_{\lambda}$ associated with $\boldsymbol{\Lambda}$ may be expressed in terms of the Lie Series of $\boldsymbol{\Lambda}$:
\begin{align}
\mathcal{G}_{\lambda}=S_{L}^{\infty}\left(\lambda\boldsymbol{\Lambda}\right)\cdot=\underset{l\geq0}{\sum}\frac{\left(\lambda\boldsymbol{\Lambda}\right)^{l}\cdot}{l!}.
\end{align}
More rigorously, as the flow is complete, using its partial Lie Sum we have 
\begin{align}
f\circ\mathcal{G}_{\lambda}=\underset{l=0}{\overset{n}{\sum}}\frac{\lambda^{l}\left(\boldsymbol{\Lambda}\right)^{l}\cdot}{l!}f+\int_{0}^{\lambda}\frac{\left(\lambda-u\right)^{n}}{n!}\left(\boldsymbol{\Lambda}^{n+1}\cdot f\right)\circ\mathcal{G}_{u}du,
\label{LieExp} % \ref{LieExp}
\end{align} 
for any function $f:\ \mathbb{R}^{3}\rightarrow\mathbb{R}$ being $\mathcal{C}^\infty(\mathbb{R}^{3})$.
\\

Taking now $\frac{1}{B}$ as function $f$ and $-\eps v$ as parameter $\lambda$ in (\ref{LieExp}), we obtain
\begin{align}
  &\varphi\left(\mathbf{x},\theta,v\right)=\underset{l=0}{\overset{n}{\sum}}\frac{\left(-\eps v\right)^{l}}{l!}\left(\boldsymbol{\Lambda}^{l}\cdot\frac{1}{B}\right)\left(\mathbf{x},\theta\right)+\int_{0}^{-\eps v}\frac{\left(-\eps v-u\right)^{n}}{n!}\left(\boldsymbol{\Lambda}^{n+1}\cdot\frac{1}{B}\right)\circ\mathcal{G}_{u}\left(\mathbf{x},\theta\right)du.
\end{align}
Hence we have proven the following lemma.

\begin{lemma}
\label{lemMocGen}  % \ref{lemMocGen} 
Function $\varphi$, solution to PDE \eqref{IntermediaryEq}, admits for any $n\in\mathbb{N},$ for any
$\eps\in\mathbb{R}$ and for any $\left(\mathbf{x},\theta,v\right)\in\mathbb{R}^4$ the following expansion in power of $\eps$
\begin{gather}
\label{VarphiExpansion}   %   \eqref{VarphiExpansion}   
\begin{aligned}
    \varphi\left(\mathbf{x},\theta,v\right)
    &=\underset{l=0}{\overset{n}{\sum}}\frac{\left(-\eps v\right)^{l}}{l!}\left(\boldsymbol{\Lambda}^{l}\cdot\frac{1}{B}\right)\left(\mathbf{x},\theta\right)
    \\
    &+\frac{\left(-\eps\right)^{n+1}}{n!}\int_{0}^{v}\frac{\left(v-u\right)^{n}}{n!}\left(\boldsymbol{\Lambda}^{n+1}\cdot\frac{1}{B}\right)\circ\mathcal{G}_{-\eps u}\left(\mathbf{x},\theta\right)du.
\end{aligned}    
\end{gather}
Moreover, for any $l\in\mathbb{N},$ $\left(\boldsymbol{\Lambda}^{l}\cdot\frac{1}{B}\right)$ is in 
$\mathcal{C}_{\PerND,3}^{\infty}(\rit^{3})\cap\mathcal{C}_{b}^{\infty}\!\left(\mathbb{R}^{3}\right)$;
for any $n\in\mathbb{N},$ 
$\left(\eps,\mathbf{x},\theta,v\right)\mapsto\int_{0}^{v}\frac{\left(v-u\right)^{n}}{n!}\left(\boldsymbol{\Lambda}^{n+1}\cdot\frac{1}{B}\right)\circ\mathcal{G}_{-\eps u}\left
(\mathbf{x},\theta\right)du$ is in $\mathcal{C}_{\PerND}^\infty(\rit^5)$; 
and for any $v\in\mathbb{R}$ and any $n\in\mathbb{N}$,
$\left(\eps,\mathbf{x},\theta\right)\mapsto\int_{0}^{v}\frac{\left(v-u\right)^{n}}{n!}\left(\boldsymbol{\Lambda}^{n+1}\cdot\frac{1}{B}\right)\circ\mathcal{G}_{-\eps u}\left(\mathbf{x},\theta\right)du$ 
is bounded by $C_{n}^{\varphi}\!\left(v\right)=\frac{\left|v\right|^{n+1}}{\left(n+1\right)!}\left\Vert \boldsymbol{\Lambda}^{n+1}\cdot\frac{1}{B}\right\Vert _{\infty}.$
\end{lemma}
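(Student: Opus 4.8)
The plan is to deduce all four assertions from the closed form \eqref{SolPDEPhi} of $\varphi$ together with the finite Lie expansion \eqref{LieExp}. For the expansion \eqref{VarphiExpansion} itself nothing substantial remains: substituting $f=\tfrac1B$ and $\lambda=-\eps v$ into \eqref{LieExp} and invoking \eqref{SolPDEPhi} gives the finite sum plus an integral over $[0,-\eps v]$, and the change of variable $u\mapsto-\eps u$ rescales the integration interval to $[0,v]$ and factors out $(-\eps)^{n+1}$, producing \eqref{VarphiExpansion}. The substance therefore lies in the three regularity and boundedness statements, which I would treat in turn.

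First I would establish the claim on the iterates $\boldsymbol{\Lambda}^{l}\cdot\tfrac1B$ by induction on $l$. The base case rests on the standing hypothesis that $B$ is analytic with all derivatives bounded and $\inf B>1$: differentiating $\tfrac1B$ yields rational expressions in the derivatives of $B$ with powers of $B$ in the denominator, so the bound $\inf B>1$ keeps every derivative bounded and places $\tfrac1B$ in $\mathcal{C}_b^{\infty}(\rit^2)\subset\mathcal{C}_b^{\infty}(\rit^3)$; being independent of $\theta$, it is trivially $2\pi$-periodic in the third variable and hence lies in $\mathcal{C}_{\#,3}^{\infty}(\rit^3)\cap\mathcal{C}_b^{\infty}(\rit^3)$. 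For the inductive step, formula \eqref{defLam1} shows that $\boldsymbol{\Lambda}$ differentiates in $x_1,x_2$ and multiplies by the coefficients $\tfrac{\cos\theta}{B}$ and $-\tfrac{\sin\theta}{B}$, which themselves belong to $\mathcal{C}_{\#,3}^{\infty}\cap\mathcal{C}_b^{\infty}$. Since this class is an algebra and is stable under the partial derivatives $\partial_{x_1},\partial_{x_2}$ (both preserve uniform boundedness of all derivatives and $2\pi$-periodicity in $\theta$), each application of $\boldsymbol{\Lambda}$ keeps the iterate in the class, which is the first regularity claim.

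Next I would address the integral remainder. The integrand $(\eps,\mathbf{x},\theta,u,v)\mapsto\tfrac{(v-u)^{n}}{n!}\left(\boldsymbol{\Lambda}^{n+1}\cdot\tfrac1B\right)\circ\mathcal{G}_{-\eps u}(\mathbf{x},\theta)$ is jointly $\mathcal{C}^{\infty}$: the polynomial factor is smooth, the flow $\mathcal{G}_{-\eps u}(\mathbf{x},\theta)$ is $\mathcal{C}^{\infty}$ in $(\eps,u,\mathbf{x},\theta)$ by Lemma \ref{regFlowG} (the field is complete and its flow is smooth, with smooth dependence on the time $-\eps u$), and $\boldsymbol{\Lambda}^{n+1}\cdot\tfrac1B$ is smooth by the previous step. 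Joint smoothness of the integral in $(\eps,\mathbf{x},\theta)$ then follows by differentiation under the integral sign, which is legitimate because $[0,v]$ is compact and the integrand together with all its partial derivatives is continuous; smoothness in the endpoint $v$ is obtained from Leibniz's rule, the boundary contribution at $u=v$ carrying the factor $(v-u)^{n}\big|_{u=v}=0$ for $n\ge1$ (the case $n=0$ being immediate). Periodicity in $\theta$, the fourth of the five variables, is inherited from the $2\pi$-periodicity of both $\mathcal{G}$ (Lemma \ref{regFlowG}) and $\boldsymbol{\Lambda}^{n+1}\cdot\tfrac1B$, so the map lies in $\mathcal{C}_{\PerND}^{\infty}(\rit^5)$.

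Finally the bound is elementary. Since $\boldsymbol{\Lambda}^{n+1}\cdot\tfrac1B\in\mathcal{C}_b^{\infty}$ has finite sup norm, unchanged by precomposition with the flow, I would estimate
\begin{align*}
\left|\int_{0}^{v}\frac{(v-u)^{n}}{n!}\left(\boldsymbol{\Lambda}^{n+1}\cdot\tfrac1B\right)\circ\mathcal{G}_{-\eps u}\,du\right|
&\le\left\Vert\boldsymbol{\Lambda}^{n+1}\cdot\tfrac1B\right\Vert_{\infty}\int_{0}^{|v|}\frac{(|v|-u)^{n}}{n!}\,du\\
&=\frac{|v|^{n+1}}{(n+1)!}\left\Vert\boldsymbol{\Lambda}^{n+1}\cdot\tfrac1B\right\Vert_{\infty}=C_{n}^{\varphi}(v).
\end{align*}
I expect the only genuine obstacle to be the joint-smoothness and differentiation-under-the-integral step, which hinges entirely on the completeness and smooth parameter dependence of the flow $\mathcal{G}$ supplied by Lemma \ref{regFlowG}; the induction and the final estimate are routine.
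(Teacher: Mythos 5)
Your proposal is correct and follows essentially the same route as the paper: the expansion is obtained by substituting $f=\tfrac{1}{B}$ and $\lambda=-\eps v$ into \eqref{LieExp} and rescaling the integration variable, while the regularity and boundedness assertions (which the paper leaves implicit after stating \eqref{LieExp}) follow from the hypotheses on $B$ and from Lemma \ref{regFlowG} exactly as you argue. Note that your change of variable produces the remainder $(-\eps)^{n+1}\int_{0}^{v}\frac{(v-u)^{n}}{n!}\left(\boldsymbol{\Lambda}^{n+1}\cdot\frac{1}{B}\right)\circ\mathcal{G}_{-\eps u}\,du$, which is the correct form; the duplicated factor $\frac{1}{n!}$ in front of the integral in \eqref{VarphiExpansion} is a typo in the paper, as confirmed by the bound $C_{n}^{\varphi}(v)$ stated at the end of the lemma.
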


%%%%%%%%%%%%%%%%%%%%%%%%%%%%%%%%%%%%%%%%%%%%%%%%%%%%%%%%%%%%%%%%%%%%%%
%%%%%%%%%%%%%%%%%%%%%%%%%%%%%%%%%%%%%%%%%%%%%%%%%%%%%%%%%%%%%%%%%%%%
%%%%%%%%%%%%%%%%%%%%%%%%%%%%%%%%%%%%%%%%%%%%%%%%%%%%%%%%%%%%%%%%%%%%
\subsection{The other equations}
\label{TheOtherEquations}   %   \ref{TheOtherEquations} 
%%%%%%%%%%%%%%%%%%%%%%%%%%%%%%%%%%%%%%%%%%%%%%%%%%%%%%%%%%%%%%%%%%%%
%%%%%%%%%%%%%%%%%%%%%%%%%%%%%%%%%%%%%%%%%%%%%%%%%%%%%%%%%%%%%%%%%%%%
%%%%%%%%%%%%%%%%%%%%%%%%%%%%%%%%%%%%%%%%%%%%%%%%%%%%%%%%%%%%%%%%%%%%%%

In the following Theorem, we will deduce from Theorem \ref{LemThmOub} the solutions $\boldsymbol{\Upsilon}_{\!1}$, $\boldsymbol{\Upsilon}_{\!2}$,
and $\boldsymbol{\Upsilon}_{\!4}$ of the PDEs that are in the left in equalities \eqref{13ET14WS}-\eqref{EqWithoutSing}.
\begin{theorem}
\label{ExpressionOfUpsilonVsG}   %   \ref{ExpressionOfUpsilonVsG}  
The unique solutions $\boldsymbol{\Upsilon}_{1}$, $\boldsymbol{\Upsilon}_{2}$,
and $\boldsymbol{\Upsilon}_{4}$ of
\begin{align}
      &\ds  (\nabla{\ytrf_{\!1}}) \cdot (\tilde{\mathcal{Q}}_{\eps} (\nabla{\ytrf_{\!3}}))=0, \hspace{2.1cm}\ds \boldsymbol{\Upsilon}_{\negmedspace1}\left(\mathbf{x},\theta,0\right)=x_{1},
       \label{13WSWithInit}      %    \eqref{13WSWithInit}  
      \\
      &\ds  (\nabla{\ytrf_{\!2}}) \cdot (\tilde{\mathcal{Q}}_{\eps}(\nabla{\ytrf_{\!3}}))=0, \hspace{2.1cm}\ds  \boldsymbol{\Upsilon}_{\negmedspace2}\left(\mathbf{x},\theta,0\right)=x_{2},
       \label{23WSWithInit}      %    \eqref{23WSWithInit} 
      \\
      &\ds  (\nabla{\ytrf_{\!4}}) \cdot (\tilde{\mathcal{Q}}_{\eps}(\nabla{\ytrf_{\!3}}))=-\frac{v}{B\left(\mathbf{x}\right)}, \hspace{1cm}\ds\boldsymbol{\Upsilon}_{\negmedspace4}\left(\mathbf{x},\theta,0\right)=0,
      \label{EqWithoutSingWithInit} % \eqref{EqWithoutSingWithInit} 
\end{align}
are given by 
\begin{align}
   &\boldsymbol{\Upsilon}_{\!1}\left(\mathbf{x},\theta,v\right)=x_{1}-\eps\cos\left(\theta\right)\psi\left(\mathbf{x},\theta,v\right),
   \label{DefUps1}  %  \eqref{DefUps1}
   \\
   &\boldsymbol{\Upsilon}_{\!2}\left(\mathbf{x},\theta,v\right)=x_{2}+\eps\sin\left(\theta\right)\psi\left(\mathbf{x},\theta,v\right),
   \label{DefUps2}  %   \eqref{DefUps2}
   \\
   &\boldsymbol{\Upsilon}_{\!4}\left(\mathbf{x},\theta,v\right)=\int_{0}^{v}\psi\left(\mathbf{x},\theta,s\right)ds,
     \label{UpskvsPsi}   %  \eqref{UpskvsPsi}
\end{align}
where $\psi$ is defined by:
\begin{align}
\psi\left(\mathbf{x},\theta,v\right)=\int_{0}^{v}\varphi\left(\mathbf{x},\theta,s\right)ds,
\label{DefPSY}   %  \eqref{DefPSY}
\end{align}
with $\varphi$ given by  \eqref{SolPDEPhi}.
\end{theorem}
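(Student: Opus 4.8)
The plan is to exploit the fact that, with $\boldsymbol{\Upsilon}_{3}=\theta$ already fixed, each of the three equations \eqref{13WSWithInit}--\eqref{EqWithoutSingWithInit} collapses to a scalar transport problem driven by the operator $\partial_{v}+\eps\boldsymbol{\Lambda}\cdot$, which is exactly the operator governing the intermediary equation \eqref{IntermediaryEq}. First I would compute $\tilde{\mathcal{Q}}_{\eps}\nabla\boldsymbol{\Upsilon}_{3}$. Since $\boldsymbol{\Upsilon}_{3}=\theta$ gives $\nabla\boldsymbol{\Upsilon}_{3}=(0,0,1,0)^{T}$ in the $(\mathbf{x},\theta,v)$ chart, this product is simply the third column of $\tilde{\mathcal{Q}}_{\eps}$; reading off \eqref{polarcoordvelPoiss} together with the relation $\tilde{\mathcal{P}}_{\eps}=\omega_{\eps}\tilde{\mathcal{Q}}_{\eps}$ from \eqref{201305121421}--\eqref{CyCooIVel797}, one finds $\tilde{\mathcal{Q}}_{\eps}\nabla\boldsymbol{\Upsilon}_{3}=(-\frac{\eps\cos\theta}{B},\frac{\eps\sin\theta}{B},0,-1)^{T}$. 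Dotting this with $\nabla\boldsymbol{\Upsilon}_{i}$ and recalling the definition \eqref{defLambda} of $\boldsymbol{\Lambda}$, every left-hand side of \eqref{13WSWithInit}--\eqref{EqWithoutSingWithInit} equals $-(\partial_{v}\boldsymbol{\Upsilon}_{i}+\eps\boldsymbol{\Lambda}\cdot\boldsymbol{\Upsilon}_{i})$. Hence $\boldsymbol{\Upsilon}_{1}$ and $\boldsymbol{\Upsilon}_{2}$ must solve the homogeneous problem $\partial_{v}u+\eps\boldsymbol{\Lambda}\cdot u=0$ with data $x_{1}$, $x_{2}$ at $v=0$, while $\boldsymbol{\Upsilon}_{4}$ must solve the inhomogeneous problem $\partial_{v}u+\eps\boldsymbol{\Lambda}\cdot u=\frac{v}{B}$ with zero data.

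Next I would solve the two homogeneous problems by the characteristics argument already used for Theorem \ref{LemThmOub}: for arbitrary initial data $u_{0}$, the function $u_{0}\circ\mathcal{G}_{-\eps v}$ solves $\partial_{v}u+\eps\boldsymbol{\Lambda}\cdot u=0$ with $u(\cdot,0)=u_{0}$. Taking $u_{0}$ to be the coordinate maps $x_{1}$ and $x_{2}$ yields $\boldsymbol{\Upsilon}_{1}=\mathcal{G}_{-\eps v}^{1}$ and $\boldsymbol{\Upsilon}_{2}=\mathcal{G}_{-\eps v}^{2}$. To recover the closed forms \eqref{DefUps1}--\eqref{DefUps2} I would differentiate $\mathcal{G}_{-\eps v}^{1}$ in $v$: using the flow ODE \eqref{201402262136}, the invariance $\mathcal{G}_{\lambda}^{3}=\theta$, and the expression \eqref{SolPDEPhi} for $\varphi$, one gets $\partial_{v}\mathcal{G}_{-\eps v}^{1}=-\eps\cos\theta\,\varphi$, and integrating from $0$ to $v$ against $\psi=\int_{0}^{v}\varphi\,ds$ gives $\boldsymbol{\Upsilon}_{1}=x_{1}-\eps\cos\theta\,\psi$; the computation for $\boldsymbol{\Upsilon}_{2}$ is identical up to the sign carried by $\sin\theta$.

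For $\boldsymbol{\Upsilon}_{4}$ I would proceed by two integrations in $v$. Because $\boldsymbol{\Lambda}$ carries no $\theta$- or $v$-derivative, it commutes with $\int_{0}^{v}\!\cdot\,ds$; combining this with the intermediary equation $\partial_{v}\varphi=-\eps\boldsymbol{\Lambda}\cdot\varphi$ and the initial value $\varphi(\cdot,0)=\frac{1}{B}$ shows first that $\psi$ from \eqref{DefPSY} solves $\partial_{v}\psi+\eps\boldsymbol{\Lambda}\cdot\psi=\frac{1}{B}$ with $\psi(\cdot,0)=0$. Integrating this identity once more in $v$ and telescoping the resulting $\partial_{s}\psi$ term then gives, for $\boldsymbol{\Upsilon}_{4}=\int_{0}^{v}\psi\,ds$, the equation $\partial_{v}\boldsymbol{\Upsilon}_{4}+\eps\boldsymbol{\Lambda}\cdot\boldsymbol{\Upsilon}_{4}=\frac{v}{B}$ with $\boldsymbol{\Upsilon}_{4}(\cdot,0)=0$, which is \eqref{UpskvsPsi}. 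This double-integration bookkeeping, and in particular the intermediate identity for $\psi$, is where I expect the only genuine care to be needed.

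Finally, uniqueness follows because $\partial_{v}+\eps\boldsymbol{\Lambda}\cdot$ is transport along the vector field whose flow is $\mathcal{G}_{\lambda}$, which is complete and smooth by Lemma \ref{regFlowG}. Each of the three problems is therefore a linear Cauchy problem with data prescribed on the non-characteristic hypersurface $\{v=0\}$: the difference of two solutions solves the homogeneous equation with zero data, is thus constant along the characteristics issuing from $v=0$, and hence vanishes identically. This pins down $\boldsymbol{\Upsilon}_{1}$, $\boldsymbol{\Upsilon}_{2}$ and $\boldsymbol{\Upsilon}_{4}$ as the unique solutions and completes the proof.
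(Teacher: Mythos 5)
Your proposal is correct, and its skeleton coincides with the paper's: both reduce \eqref{13WSWithInit}--\eqref{EqWithoutSingWithInit} to Cauchy problems for the transport operator $\frac{\partial}{\partial v}+\eps\boldsymbol{\Lambda}\cdot$ (the paper does this implicitly when rewriting \eqref{13WSWithInit} as \eqref{IntermEqInProofOfShapeUpsilon}; you make it explicit by computing the third column of $\tilde{\mathcal{Q}}_{\eps}$), both obtain the inhomogeneous equation $\frac{\partial\psi}{\partial v}+\eps\boldsymbol{\Lambda}\cdot\psi=\frac{1}{B}$ by integrating \eqref{IntermediaryEq} in $v$, and both settle uniqueness by the non-characteristic transport structure. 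Where you genuinely diverge is in how the closed forms \eqref{DefUps1}--\eqref{DefUps2} are produced: the paper never invokes the flow at this stage, but simply checks by linearity that $x_{1}-\eps\cos(\theta)\psi$ solves \eqref{IntermEqInProofOfShapeUpsilon}, since $\boldsymbol{\Lambda}\cdot x_{1}=\frac{\cos\theta}{B}$ cancels against the source in $\psi$'s equation; you instead solve the homogeneous problems by characteristics, identify $\boldsymbol{\Upsilon}_{\!1}=\mathcal{G}^{1}_{-\eps v}$ and $\boldsymbol{\Upsilon}_{\!2}=\mathcal{G}^{2}_{-\eps v}$, and then integrate the flow ODE \eqref{201402262136} to recover the stated formulas. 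The paper's verification is shorter and purely algebraic, whereas your detour costs a little more computation but buys, as a byproduct, exactly the identities \eqref{UpsY1VSFirstCompFlow}--\eqref{LamFlowVSUps}, which the paper has to derive separately in Subsection \ref{TheDarCoordSection} in order to invert $\boldsymbol{\Upsilon}$; so your route effectively merges Theorem \ref{ExpressionOfUpsilonVsG} with that later step. Your treatment of $\boldsymbol{\Upsilon}_{\!4}$ (integrate $\psi$'s equation once more, commuting $\boldsymbol{\Lambda}$ with $\int_{0}^{v}\cdot\,ds$ and telescoping $\partial_{s}\psi$) is precisely the "similar argument" the paper leaves to the reader.
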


\begin{proof}
We will only prove Formula  \eqref{DefUps1}. The others (\eqref{DefUps2} and \eqref{UpskvsPsi}) are easily obtained with similar arguments.
Firstly, we notice that \eqref{13WSWithInit} can be rewritten as 
\begin{gather}
\label{IntermEqInProofOfShapeUpsilon}   %   \eqref{IntermEqInProofOfShapeUpsilon}  
 \left \{
\begin{aligned}
    &\frac{\partial \boldsymbol{\Upsilon}_{\!1} }{\partial v}+\eps\frac{\cos\left(\theta\right)}{B\left(\mathbf{x}\right)}\frac{\partial\boldsymbol{\Upsilon}_{\!1}}{\partial x_{1}}-\eps\frac{\sin\left(\theta\right)}{B\left(\mathbf{x}\right)}\frac{\partial\boldsymbol{\Upsilon}_{\!1}}{\partial x_{2}}=0,
    \\
    &\boldsymbol{\Upsilon}_{\negmedspace1}\left(\mathbf{x},\theta,0\right)=x_{1}.
\end{aligned}
  \right .
\end{gather}
Secondly, integrating \eqref{IntermediaryEq} between $0$ and $v$ we obtain
\begin{gather}
 \left \{
\begin{aligned}
    &\frac{\partial\psi}{\partial v}+\eps\frac{\cos\left(\theta\right)}{B\left(\mathbf{x}\right)}\frac{\partial\psi}{\partial x_{1}}-\eps\frac{\sin\left(\theta\right)}{B\left(\mathbf{x}\right)}\frac{\partial\psi}{\partial x_{2}}=\frac{1}{B\left(\mathbf{x}\right)},
    \\
    &\psi\left(\mathbf{x},\theta,0\right)=0.
\end{aligned}
  \right .
\end{gather}
Hence by linearity, $\boldsymbol{\Upsilon}_{\!1}$ given by \eqref{DefUps1} is solution
of \eqref{IntermEqInProofOfShapeUpsilon}. The unicity is obvious.
\end{proof}
To end the resolution of  \eqref{13ET14WS}-\eqref{EqWithoutSing} we only have to check that 
$\boldsymbol{\Upsilon}_{\!1}$ and $\boldsymbol{\Upsilon}_{\!2}$ given by \eqref{DefUps1} and \eqref{DefUps2}
are also solutions to the additional equations that are in the right in \eqref{13ET14WS}-\eqref{EqWithoutSing}.
\begin{theorem}
\label{AdditionnalEqCheckTheorem}   %   \ref{AdditionnalEqCheckTheorem} 
Functions $\boldsymbol{\Upsilon}_{\!1}$ and $\boldsymbol{\Upsilon}_{\!2}$, defined by \eqref{DefUps1} and \eqref{DefUps2},
and solutions of  \eqref{13WSWithInit} and  \eqref{23WSWithInit}, are also solutions to
\begin{align}
      &\ds  (\nabla{\ytrf_{\!1}}) \cdot (\tilde{\mathcal{Q}}_{\eps} (\nabla{\ytrf_{\!4}}))=0,
       \label{13WSWithInitAd}      %    \eqref{13WSWithInitAd}  
      \\
      &\ds  (\nabla{\ytrf_{\!2}}) \cdot (\tilde{\mathcal{Q}}_{\eps}(\nabla{\ytrf_{\!4}}))=0,
       \label{23WSWithInitAd}      %    \eqref{23WSWithInitAd} 
\end{align}
where $\ytrf_{\!4}$ is defined by \eqref{UpskvsPsi}
and is solution of \eqref{EqWithoutSingWithInit}.
\end{theorem}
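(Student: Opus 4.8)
The plan is to avoid computing the left-hand sides of \eqref{13WSWithInitAd}--\eqref{23WSWithInitAd} head-on and to exploit instead the Jacobi identity satisfied by the Poisson bracket $\poibrack{\cdot}{\cdot}_{\xvec,\theta,v}$ attached to $\tilde{\mathcal{P}}_{\eps}$. First I would note that, on $\{v\neq0\}$, relation \eqref{201305121421} gives $\tilde{\mathcal{P}}_{\eps}=\omega_{\eps}\tilde{\mathcal{Q}}_{\eps}$ with $\omega_{\eps}\neq0$, so that \eqref{13WSWithInitAd} and \eqref{23WSWithInitAd} are there equivalent to $\poibrack{\ytrf_{\!1}}{\ytrf_{\!4}}_{\xvec,\theta,v}=0$ and $\poibrack{\ytrf_{\!2}}{\ytrf_{\!4}}_{\xvec,\theta,v}=0$. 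Since $\tilde{\mathcal{Q}}_{\eps}$ is smooth up to $v=0$ while, by \eqref{UpskvsPsi} and \eqref{DefPSY}, both $\ytrf_{\!4}$ and $\partial_v\ytrf_{\!4}=\psi$ vanish at $v=0$ (hence $\nabla\ytrf_{\!4}=0$ there), the functions $(\nabla\ytrf_{\!1})\cdot(\tilde{\mathcal{Q}}_{\eps}\nabla\ytrf_{\!4})$ and $(\nabla\ytrf_{\!2})\cdot(\tilde{\mathcal{Q}}_{\eps}\nabla\ytrf_{\!4})$ extend continuously by $0$ at $v=0$; it therefore suffices to establish the two bracket identities on $\{v\neq0\}$.

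Set $g:=\poibrack{\ytrf_{\!1}}{\ytrf_{\!4}}_{\xvec,\theta,v}$. Because $\tilde{\mathcal{P}}_{\eps}$ is a genuine Poisson matrix away from $v=0$, the Jacobi identity for $\ytrf_{\!1}$, $\ytrf_{\!4}$ and $\ytrf_{\!3}=\theta$ reads $\poibrack{g}{\theta}+\poibrack{\poibrack{\ytrf_{\!4}}{\theta}}{\ytrf_{\!1}}+\poibrack{\poibrack{\theta}{\ytrf_{\!1}}}{\ytrf_{\!4}}=0$. By \eqref{13WSWithInit} one has $\poibrack{\ytrf_{\!1}}{\theta}=0$, which kills the last term; by \eqref{EqWithoutSingWithInit} one has $\poibrack{\ytrf_{\!3}}{\ytrf_{\!4}}=\tfrac1\eps$, a constant, so $\poibrack{\ytrf_{\!4}}{\theta}=-\tfrac1\eps$ and the middle term vanishes too. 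Hence $\poibrack{g}{\theta}=0$. A direct reading of \eqref{polarcoordvelPoiss} and \eqref{201305121421}, exactly as in \eqref{IntermEqInProofOfShapeUpsilon}, gives $\poibrack{f}{\theta}_{\xvec,\theta,v}=-\omega_{\eps}\big(\partial_v f+\eps\boldsymbol{\Lambda}\cdot f\big)$ for every smooth $f$. Since $\omega_{\eps}\neq0$ for $v\neq0$, the identity $\poibrack{g}{\theta}=0$ becomes the homogeneous transport equation $\partial_v g+\eps\boldsymbol{\Lambda}\cdot g=0$, governed by the same operator as the intermediary equation \eqref{IntermediaryEq}.

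The crux is then to identify the Cauchy datum $g(\xvec,\theta,0)$, which is delicate because $\tilde{\mathcal{P}}_{\eps}$ is singular at $v=0$; this is where I expect the only real work to be. I would pass through the regular matrix by writing $g=\omega_{\eps}h$ with $h:=(\nabla\ytrf_{\!1})\cdot(\tilde{\mathcal{Q}}_{\eps}\nabla\ytrf_{\!4})$ smooth and $h(\xvec,\theta,0)=0$, so that $g(\xvec,\theta,0)=\tfrac{B}{\eps}\,\partial_v h(\xvec,\theta,0)$. Differentiating $h$ in $v$ at $v=0$ and using $\nabla\ytrf_{\!4}|_{v=0}=0$, only the term $(\nabla\ytrf_{\!1})\cdot\big(\tilde{\mathcal{Q}}_{\eps}\,\partial_v\nabla\ytrf_{\!4}\big)$ survives; moreover $\partial_v\nabla\ytrf_{\!4}|_{v=0}=\nabla\psi|_{v=0}=\big(0,0,0,\tfrac1B\big)$ is carried only by the $v$-direction, which the fourth column of $\tilde{\mathcal{Q}}_{\eps}|_{v=0}$ (whose sole nonzero entry is $\tilde{\mathcal{Q}}_{34}=1$) sends onto the $\theta$-direction, leaving $\tfrac1B\,\partial_\theta\ytrf_{\!1}|_{v=0}$. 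As $\ytrf_{\!1}-x_1=-\eps\cos\theta\,\psi=O(v)$ by \eqref{DefUps1}, one has $\partial_\theta\ytrf_{\!1}|_{v=0}=0$, whence $\partial_v h(\xvec,\theta,0)=0$ and $g(\xvec,\theta,0)=0$; in particular $g=\tfrac{B}{\eps}\,h/v$ extends smoothly across $v=0$. Completeness of the flow $\mathcal{G}_{\lambda}$ (Lemma \ref{regFlowG}) then yields, exactly as in Theorem \ref{LemThmOub}, uniqueness for $\partial_v g+\eps\boldsymbol{\Lambda}\cdot g=0$ with datum $g(\cdot,0)=0$, so $g\equiv0$ and \eqref{13WSWithInitAd} holds. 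Identity \eqref{23WSWithInitAd} is obtained verbatim from \eqref{23WSWithInit} and $\partial_\theta\ytrf_{\!2}|_{v=0}=0$, since $\ytrf_{\!2}-x_2=\eps\sin\theta\,\psi=O(v)$ by \eqref{DefUps2}.
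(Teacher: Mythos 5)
Your proposal is correct, and its skeleton coincides with the paper's proof: the Jacobi identity applied to $(\ytrf_{\!1},\ytrf_{\!4},\ytrf_{\!3})$, with $\poibrack{\ytrf_{\!3}}{\ytrf_{\!1}}_{\xvec,\theta,v}=0$ and $\poibrack{\ytrf_{\!3}}{\ytrf_{\!4}}_{\xvec,\theta,v}=\tfrac1\eps$ constant, yields $\poibrack{\poibrack{\ytrf_{\!1}}{\ytrf_{\!4}}}{\ytrf_{\!3}}=0$ for $v\neq0$; dividing by $\omega_\eps$ turns this into the same transport equation as \eqref{EqCPUYUK} (equivalently, the operator of \eqref{IntermediaryEq}); and uniqueness for the Cauchy problem with zero datum, via the complete flow of Lemma \ref{regFlowG} as in Theorem \ref{LemThmOub}, forces the bracket to vanish identically. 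Where you genuinely diverge is in the crux step, namely showing that $g=\poibrack{\ytrf_{\!1}}{\ytrf_{\!4}}_{\xvec,\theta,v}$ extends smoothly by $0$ across $\{v=0\}$ so that the zero Cauchy datum is legitimate. The paper obtains this from the expansions \eqref{IntermExpy10}--\eqref{IntermExpk0}: differentiating them gives the term-by-term vanishing rates \eqref{EstimateDiffExp} (e.g.\ $\partial_\theta\boldsymbol{\Upsilon}_{\!4}=v^{3}\epsilon_k^{\theta}$), which are then injected into the bracket written with the singular matrix $\tilde{\mathcal{P}}_{\eps}$ to compensate the $1/v$ factors. You instead never leave the regular matrix: $h=(\nabla\ytrf_{\!1})\cdot(\tilde{\mathcal{Q}}_{\eps}\nabla\ytrf_{\!4})$ is manifestly smooth, and the whole singularity analysis reduces to the two pointwise evaluations $h|_{v=0}=0$ (from $\nabla\ytrf_{\!4}|_{v=0}=0$) and $\partial_v h|_{v=0}=\tfrac1B\,\partial_\theta\ytrf_{\!1}|_{v=0}=0$, after which Taylor's theorem with integral remainder gives $g=\tfrac{B}{\eps}\,(h/v)$ smooth and vanishing at $v=0$ — your identification of the datum as $\tfrac{B}{\eps}\partial_vh|_{v=0}$ is exactly right, and I verified the fourth column of $\tilde{\mathcal{Q}}_{\eps}|_{v=0}$ computation. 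Your route is more economical and isolates precisely what must be checked at $v=0$; the paper's heavier machinery is not wasted, however, since the estimates \eqref{EstimateDiffExp} are reused in the proof of Theorem \ref{ExpPoissMat} to identify the boundary value of $\poibrack{\ytrf_{\!1}}{\ytrf_{\!2}}_{\xvec,\theta,v}$, a step where your two-evaluation shortcut would have to be adapted (there the limit at $v=0$ is the nonzero value $-\eps/B$).
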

\begin{proof}
Firstly, we show that $\left\{ \boldsymbol{\Upsilon}_{\! 1},\boldsymbol{\Upsilon}_{\! 4}\right\}$, which is defined for $v\neq0$ because of the singularity
of $\tilde{\mathcal{P}}_{\eps}$, can be extended smoothly by $0$ in $v=0$. 
Integrating expansion \eqref{VarphiExpansion} (with $n=1$) between $0$ and $v$,
we obtain
 \begin{align}
      &\boldsymbol{\Upsilon}_{\!1}\left(\mathbf{x},\theta,v\right)=x_{1}-\frac{\eps\cos\left(\theta\right)v}{B\left(\mathbf{x}\right)}+\eps^{2}\cos\left(\theta\right)\int_{0}^{v}\left(v-u\right)\left(\boldsymbol{\Lambda}\cdot\frac{1}{B}\right)\left(\mathcal{G}_{-\eps u}\left(\mathbf{x},\theta\right)\right)du.
      \label{IntermExpy10} % \eqref{IntermExpy10}
 \end{align}
In the same way, integrating twice \eqref{VarphiExpansion} (with $n=0$) we obtain:
 \begin{align}
      &\boldsymbol{\Upsilon}_{\!4}\left(\mathbf{x},\theta,v\right)=\frac{v^{2}}{2B\left(\mathbf{x}\right)}-\frac{\eps}{2}\int_{0}^{v}\left(v-u\right)\left(\boldsymbol{\Lambda}\cdot\frac{1}{B}\right)\left(\mathcal{G}_{-\eps u}\left(\mathbf{x},\theta\right)\right)du.
      \label{IntermExpk0} % \eqref{IntermExpk0}
 \end{align}
Differentiating \eqref{IntermExpy10} with respect to $x_1$ yields 
\begin{multline}
     \frac{\partial\boldsymbol{\Upsilon}_{\!1}}{\partial x_{1}}\left(\mathbf{x},\theta,v\right)=1-\eps\cos\left(\theta\right)v\left(\frac{\partial}{\partial x_{1}}\left(\frac{1}{B}\right)\right)\left(\mathbf{x}\right)
     \\
     +\eps^{2}\cos\left(\theta\right)
     \int_{0}^{v}\left(v-u\right)\left[\left(\frac{\partial}{\partial x_{1}}\left(\boldsymbol{\Lambda}\cdot\frac{1}{B}\right)\right)\left(\mathcal{G}_{-\eps u}\left(\mathbf{x},\theta\right)\right)\frac{\partial\mathcal{G}_{-\eps u}^{1}}{\partial x_{1}}\left(\mathbf{x},\theta\right)\right.
     \\
     + \left.\left(\frac{\partial}{\partial x_{2}}\left(\boldsymbol{\Lambda}\cdot\frac{1}{B}\right)\right)\left(\mathcal{G}_{-\eps u}\left(\mathbf{x},\theta\right)\right)\frac{\partial\mathcal{G}_{-\eps u}^{2}}{\partial x_{1}}\left(\mathbf{x},\theta\right)\right]du.
\end{multline}
As $\frac{1}{B}$ and all its derivatives are bounded and as $\frac{\partial\mathcal{G}_{\lambda}^{1}}{\partial x_{1}}\text{ and }\frac{\partial\mathcal{G}_{\lambda}^{2}}{\partial x_{1}}$ are continuous with respect to $\lambda$ we obtain the following estimate:
\begin{align*}
    &\left|\frac{\partial\boldsymbol{\Upsilon}_{\!1}}{\partial x_{1}}\left(\mathbf{x},\theta,v\right)\right|\leq1+\eps\left|v\right|\left\Vert \frac{\partial}{\partial x_{1}}\frac{1}{B}\right\Vert _{\infty}
    \\
    &\begingroup\scriptsize+\frac{\eps^{2}\left|v\right|^{2}}{2}\times\left[\left\Vert \frac{\partial}{\partial x_{1}}\left(\boldsymbol{\Lambda}\cdot\frac{1}{B}\right)\right\Vert _{\infty}\underset{u\in\left[-\left|v\right|,\left|v\right|\right]}{\sup}\left|\frac{\partial\mathcal{G}_{-\eps u}^{1}}{\partial x_{1}}\right|\left(\mathbf{x},\theta\right)+\left\Vert \frac{\partial}{\partial x_{2}}\left(\boldsymbol{\Lambda}\cdot\frac{1}{B}\right)\right\Vert _{\infty}\underset{u\in\left[-\left|v\right|,\left|v\right|\right]}{\sup}\left|\frac{\partial\mathcal{G}_{-\eps u}^{2}}{\partial x_{1}}\right|\left(\mathbf{x},\theta\right)\right].\endgroup
\end{align*}
 Hence $\frac{\partial\boldsymbol{\Upsilon}_{\!1}}{\partial x_{1}}\left(\mathbf{x},\theta,v\right)=\epsilon_{y_{1}}^{x_{1}}\left(\mathbf{x},\theta,v\right)$ with $\epsilon_{y_{1}}^{x_{1}}\left(\mathbf{x},\theta,v\right)$ such that for any $\left(\mathbf{x},\theta\right),\ v\mapsto\epsilon_{y_{1}}^{x_{1}}\left(\mathbf{x},\theta,v\right)$ is smooth, and is bounded in the neighborhood of $v=0$.
 In the same way, we can show that 
 \begin{gather}
\label{EstimateDiffExp} % \eqre{EstimateDiffExp}
\begin{aligned}
&\ds \frac{\partial\boldsymbol{\Upsilon}_{\!1}}{\partial x_{2}}\left(\mathbf{x},\theta,v\right)=v\epsilon_{y_{1}}^{x_{2}}\left(\mathbf{x},\theta,v\right), ~~~~&
&\ds \frac{\partial\boldsymbol{\Upsilon}_{\!1}}{\partial\theta}\left(\mathbf{x},\theta,v\right)=v\epsilon_{y_{1}}^{\theta}\left(\mathbf{x},\theta,v\right),\\
&\ds \frac{\partial\boldsymbol{\Upsilon}_{\!1}}{\partial v}\left(\mathbf{x},\theta,v\right)=\epsilon_{y_{1}}^{v}\left(\mathbf{x},\theta,v\right), ~~~~&
&\ds \frac{\partial\boldsymbol{\Upsilon}_{\! 4}}{\partial x_{1}}\left(\mathbf{x},\theta,v\right)=v^{2}\epsilon_{k}^{x_{1}}\left(\mathbf{x},\theta,v\right),\\
&\ds \frac{\partial\boldsymbol{\Upsilon}_{\! 4}}{\partial x_{2}}\left(\mathbf{x},\theta,v\right)=v^{2}\epsilon_{k}^{x_{2}}\left(\mathbf{x},\theta,v\right),~~~~&
&\ds \frac{\partial\boldsymbol{\Upsilon}_{\! 4}}{\partial\theta}\left(\mathbf{x},\theta,v\right)=v^{3}\epsilon_{k}^{\theta}\left(\mathbf{x},\theta,v\right),\\
&\ds \frac{\partial\boldsymbol{\Upsilon}_{\! 4}}{\partial v}\left(\mathbf{x},\theta,v\right)=v\epsilon_{k}^{v}\left(\mathbf{x},\theta,v\right).
\end{aligned}
\end{gather}
 with $\epsilon_{y_{1}}^{x_{2}}\left(\mathbf{x},\theta,v\right),\ \epsilon_{y_{1}}^{\theta}\left(\mathbf{x},\theta,v\right),\ \epsilon_{y_{1}}^{v}\left(\mathbf{x},\theta,v\right),\ \epsilon_{k}^{x_{1}}\left(\mathbf{x},\theta,v\right),\  \epsilon_{k}^{x_{2}}\left(\mathbf{x},\theta,v\right), \ \epsilon_{k}^{\theta}\left(\mathbf{x},\theta,v\right),\ \epsilon_{k}^{v}\left(\mathbf{x},\theta,v\right)$ such that for any
  $\left(\mathbf{x},\theta\right)$, the functions $v\mapsto\epsilon_{\bullet}^{\bullet}\left(\mathbf{x},\theta,v\right)$ are smooth, and are bounded in the neighborhood of $v=0$.
 Injecting these expressions in $\left\{ \boldsymbol{\Upsilon}_{\!1},\boldsymbol{\Upsilon}_{\!4}\right\} \left(\mathbf{x},\theta,v\right)=\left(\nabla\boldsymbol{\Upsilon}_{\!1}\right)\cdot\left(\tilde{\mathcal{P}}_{\eps}\nabla\boldsymbol{\Upsilon}_{\!4}\right)$ we obtain $\left\{ \boldsymbol{\Upsilon}_{\!1},\boldsymbol{\Upsilon}_{\!4}\right\} \left(\mathbf{x},\theta,v\right)=v\epsilon_{y_{1},k}\left(\mathbf{x},\theta,v\right)$ with $ \epsilon_{y_{1},k}\left(\mathbf{x},\theta,v\right)$ such that $v\mapsto \epsilon_{y_{1},k}\left(\mathbf{x},\theta,v\right)$ is smooth, and is bounded in the neighborhood of $v=0$ leading that $\left\{ \boldsymbol{\Upsilon}_{\! 1},\boldsymbol{\Upsilon}_{\! 4}\right\} $ can be smoothly extended by $0$ in $v=0$.

As the last step of this proof, because of the Jacobi identity we have
\begin{gather}
\label{LongBar}   %  \eqref{LongBar} 
\begin{aligned}
       \forall v\neq0,\ \ \left\{ \left\{ \boldsymbol{\Upsilon}_{\! 1},\boldsymbol{\Upsilon}_{\! 4}\right\} ,\boldsymbol{\Upsilon}_{\!3}\right\} 
       +\left\{ \left\{ \boldsymbol{\Upsilon}_{\!3},\boldsymbol{\Upsilon}_{\negmedspace 1}\right\} ,\boldsymbol{\Upsilon}_{\negmedspace 4}\right\} 
       +\left\{ \left\{ \boldsymbol{\Upsilon}_{\negmedspace 4},\boldsymbol{\Upsilon}_{\!3}\right\} ,\boldsymbol{\Upsilon}_{\negmedspace 1}\right\} =0,
       \end{aligned}
\end{gather}
which reads, because the gradient of a constant is zero, because, according to \eqref{UpskvsPsi}, $\left\{ \boldsymbol{\Upsilon}_{\negmedspace 4},\boldsymbol{\Upsilon}_{\!3}\right\} =\frac{1}{\eps}$ and, as we just saw, because $\boldsymbol{\Upsilon}_{\! 1}$ given by \eqref{DefUps1} satisfies $\left\{ \boldsymbol{\Upsilon}_{\!3},\boldsymbol{\Upsilon}_{\negmedspace 1}\right\}=0$, 
\begin{align}
   &\left\{ \left\{ \boldsymbol{\Upsilon}_{\! 1},\boldsymbol{\Upsilon}_{\! 4}\right\} ,\boldsymbol{\Upsilon}_{\!3}\right\}=0.
   \label{FORM346}  %  \eqref{FORM346} 
\end{align}
Dividing  \eqref{FORM346} by $\omega_{\eps}\left(\mathbf{x},\theta\right)$ defined by \eqref{CyCooIVel797}, we obtain that for $v\neq 0$, 
$\left\{ \boldsymbol{\Upsilon}_{\! 1},\boldsymbol{\Upsilon}_{\! 4}\right\}$ is solution to
\begin{align}
      &\left(\nabla\left\{ \boldsymbol{\Upsilon}_{\! 1},\boldsymbol{\Upsilon}_{\! 4}\right\} \right)\cdot\left(\tilde{\mathcal{Q}}_{\eps}\left(\nabla\boldsymbol{\Upsilon}_{\!3}\right)\right)=0.
      \label{EqCPUYUK}   % \eqref{EqCPUYUK}
\end{align}
% Il faut montrer que cette extension satisfait aussi l'EDP. C'est simple mais il faut affiner l'ecriture des eps.
% Ou juste montrer que le terme de gauche de l'EDP est quelque chose de continue et donc 
% que par continuité cela vérifie l'EDP....
By continuity of the left hand side of \eqref{EqCPUYUK} on $\mathbb{R}^4,$
we deduce that equality \eqref{EqCPUYUK} is valid on $\mathbb{R}^4.$
As $\left\{ \boldsymbol{\Upsilon}_{\! 1},\boldsymbol{\Upsilon}_{\! 4}\right\}$ may be smoothly extended by $0$ in $v=0$,
and as the unique solution of \eqref{EqCPUYUK} satisfying the boundary condition $\left\{ \boldsymbol{\Upsilon}_{\! 1},\boldsymbol{\Upsilon}_{\! 4}\right\} \left(\mathbf{x},\theta,0\right)=0$ is zero,
we deduce 
that $\boldsymbol{\Upsilon}_{\! 1}$ given by \eqref{DefUps1} satisfies $\left\{ \boldsymbol{\Upsilon}_{\! 1},\boldsymbol{\Upsilon}_{\! 4}\right\}=0$ for all
$\left(\mathbf{x},\theta,v\right)$. Hence \eqref{13WSWithInitAd} follows.

The proof that $\boldsymbol{\Upsilon}_{\!2}$, defined by \eqref{DefUps2}
and solution of \eqref{23WSWithInit}, is solutions of  \eqref{23WSWithInitAd} 
is very similar.
This ends the proof of Theorem  \ref{AdditionnalEqCheckTheorem}.
\end{proof}
%

%%%%%%%%%%%%%%%%%%%%%%%%%%%%%%%%%%%%%%%%%%%%%%%%%%%%%%%%%%%%%%%%%%%%%%
%%%%%%%%%%%%%%%%%%%%%%%%%%%%%%%%%%%%%%%%%%%%%%%%%%%%%%%%%%%%%%%%%%%%
%%%%%%%%%%%%%%%%%%%%%%%%%%%%%%%%%%%%%%%%%%%%%%%%%%%%%%%%%%%%%%%%%%%%
\subsection{The Darboux coordinate system}
\label{TheDarCoordSection}   %   \ref{TheDarCoordSection}  
%%%%%%%%%%%%%%%%%%%%%%%%%%%%%%%%%%%%%%%%%%%%%%%%%%%%%%%%%%%%%%%%%%%%
%%%%%%%%%%%%%%%%%%%%%%%%%%%%%%%%%%%%%%%%%%%%%%%%%%%%%%%%%%%%%%%%%%%%
%%%%%%%%%%%%%%%%%%%%%%%%%%%%%%%%%%%%%%%%%%%%%%%%%%%%%%%%%%%%%%%%%%%%%%

In subsection \ref{TheOtherEquations}  we solved equations \eqref{13ET14WS}-\eqref{EqWithoutSing}, with initial conditions \eqref{BoundCond},
on $\mathbb{R}^4$. Now, we need to check that the restriction of 
$\boldsymbol{\Upsilon}$ to $\mathbb{R}^2\times\rit\times\left(0,+\infty\right)$, also denoted by $\boldsymbol{\Upsilon}$,
 is a diffeomorphism (onto $\mathbb{R}^2\times\rit\times\left(0,+\infty\right)$)  and hence that $\left(\mathbf{y},\theta,k\right)$ makes a true coordinate system on $\mathbb{R}^2\times\rit\times\left(0,+\infty\right)$. 
 
 \vspace{0.3cm}

Firstly, using expressions \eqref{DefUps1} and \eqref{DefUps2} of $\boldsymbol{\Upsilon}_{\!1}$ and $\boldsymbol{\Upsilon}_{\!2}$, formula \eqref{SolPDEPhi}  that gives
the expression of $\varphi=\frac{\partial\psi}{\partial v}$, expression \eqref{defLambda} of $\boldsymbol{\Lambda}$, and by definition of its flow $\mathcal{G}_\lambda$ (see  \eqref{201402262136}), we deduce that 
\begin{align}
       &\frac{\partial\boldsymbol{\Upsilon}_{\!1}}{\partial v}\left(\mathbf{x},\theta,v\right)=\frac{\partial}{\partial v}\mathcal{G}_{-\eps v}^{1}\left(\mathbf{x},\theta\right),
       \label{UpsY1VSFirstCompFlow}   %   \eqref{UpsY1VSFirstCompFlow}
       \\
       &\frac{\partial\boldsymbol{\Upsilon}_{\!2}}{\partial v}\left(\mathbf{x},\theta,v\right)=\frac{\partial}{\partial v}\mathcal{G}_{-\eps v}^{2}\left(\mathbf{x},\theta\right),
       \label{UpsY2VSSecCompFlow}   %   \eqref{UpsY2VSSecCompFlow}
       \\
       &\frac{\partial\boldsymbol{\Upsilon}_{\!3}}{\partial v}\left(\mathbf{x},\theta,v\right)=\frac{\partial}{\partial v}\mathcal{G}_{-\eps v}^{3}\left(\mathbf{x},\theta\right).
       \label{UpsThetaVSThirdCompFlow}   %   \eqref{UpsThetaVSThirdCompFlow}
\end{align}
Hence, since $\boldsymbol{\Upsilon}_{\!1}\left(\mathbf{x},\theta,0\right)=x_{1}$, $\boldsymbol{\Upsilon}_{\!2}\left(\mathbf{x},\theta,0\right)=x_{2}$ and 
$\boldsymbol{\Upsilon}_{\!3}\left(\mathbf{x},\theta,0\right)=\theta$ we obtain that 
\begin{align}
\left(\boldsymbol{\Upsilon}_{\!1}\left(\mathbf{x},\theta,v\right),\boldsymbol{\Upsilon}_{\!2}\left(\mathbf{x},\theta,v\right),\boldsymbol{\Upsilon}_{\!3}\left(\mathbf{x},\theta,v\right)\right)=\mathcal{G}_{-\eps v}\left(\mathbf{x},\theta\right).
\label{LamFlowVSUps}    %  \eqref{LamFlowVSUps}
\end{align}
From this, it is clear that $\left(\mathbf{y},\theta,v\right)$ makes a coordinate system and that the reciprocal change of coordinates is given by $\left(\mathbf{x},\theta,v\right)=\left(\mathcal{G}_{\eps v}\left(\mathbf{y},\theta\right),v\right)$.\\

In order to show that $\left(\mathbf{y},\theta,k\right)$ makes also a coordinate system we will proceed as follows: we will express $\boldsymbol{\Upsilon}_{\!4}$ in the 
$\left(\mathbf{y},\theta,v\right)$-coordinate system and using this expression, we will express $v$ in terms of $\mathbf{y}$ and $\theta$ and the yielding expression of $\boldsymbol{\Upsilon}_4$ in the 
$\left(\mathbf{y},\theta,v\right)$-coordinate system.
\begin{lemma}
The representative of $\boldsymbol{\Upsilon}_{\!4}$ in the $\left(\mathbf{y},\theta,v\right)$-coordinate system is given by 
\begin{align}
\label{EXPPRRUpsKa}    %   \eqref{EXPPRRUpsKa}
\tilde{\boldsymbol{\Upsilon}}_{\!4}\left(\mathbf{y},\theta,v\right)=\int_{0}^{v}\frac{u}{B\left(\mathcal{G}^1_{\eps u}\left(\mathbf{y},\theta\right),\mathcal{G}^2_{\eps u}\left(\mathbf{y},\theta\right)\right)}du.
\end{align}
%Moreover, $\theta\mapsto\tilde{\boldsymbol{\Upsilon}}_{\!k}\left(\mathbf{y},\theta,v\right)$ is $\mathcal{C}^\infty_\PerOneD(\rit)$ for any $(\xvec,v)\in\rit^3,$ and  $\yvec\mapsto\tilde{\boldsymbol{\Upsilon}}_{\!k}\left(\mathbf{y},\theta,v\right)$
%is $\mathcal{C}_{b}^{\infty}\left(\mathbb{R}^{2}\right)$ for any $\left(\theta,v\right)\in\mathbb{R}^{2}.$
\end{lemma}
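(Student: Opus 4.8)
The plan is to start from the explicit description of $\boldsymbol{\Upsilon}_{\!4}$ as an iterated integral of $\varphi$ and then push it through the reciprocal of the $\left(\mathbf{x},\theta,v\right)\mapsto\left(\mathbf{y},\theta,v\right)$ change of variables recorded just after \eqref{LamFlowVSUps}. Combining \eqref{UpskvsPsi} with \eqref{DefPSY} gives
\[
\boldsymbol{\Upsilon}_{\!4}\left(\mathbf{x},\theta,v\right)=\int_{0}^{v}\!\!\int_{0}^{s}\varphi\left(\mathbf{x},\theta,\sigma\right)d\sigma\,ds,
\]
while \eqref{SolPDEPhi} gives $\varphi\left(\mathbf{x},\theta,\sigma\right)=1/B\!\left(\mathcal{G}^1_{-\eps\sigma}\left(\mathbf{x},\theta\right),\mathcal{G}^2_{-\eps\sigma}\left(\mathbf{x},\theta\right)\right)$. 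Since the representative $\tilde{\boldsymbol{\Upsilon}}_{\!4}$ is by definition $\boldsymbol{\Upsilon}_{\!4}$ composed with the reciprocal map $\left(\mathbf{y},\theta,v\right)\mapsto\left(\mathcal{G}_{\eps v}\left(\mathbf{y},\theta\right),v\right)$, I would substitute $\mathbf{x}=\mathcal{G}_{\eps v}\left(\mathbf{y},\theta\right)$, with the \emph{same} $v$ that bounds the outer integral, into the double integral above.

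The key algebraic simplification is the group law for the flow of the autonomous field $\boldsymbol{\Lambda}$: because $\mathcal{G}_\lambda$ is complete (Lemma \ref{regFlowG}) one has $\mathcal{G}_{-\eps\sigma}\circ\mathcal{G}_{\eps v}=\mathcal{G}_{\eps\left(v-\sigma\right)}$, and since $\mathcal{G}^3_\lambda\left(\mathbf{x},\theta\right)=\theta$ the angle is carried unchanged through every composition, so that the $\theta$ slot of $\varphi$ and the third component of $\mathcal{G}_{\eps v}\left(\mathbf{y},\theta\right)$ stay consistent. Hence $B\!\left(\mathcal{G}^1_{-\eps\sigma}\left(\mathcal{G}_{\eps v}\left(\mathbf{y},\theta\right)\right),\mathcal{G}^2_{-\eps\sigma}\left(\mathcal{G}_{\eps v}\left(\mathbf{y},\theta\right)\right)\right)=B\!\left(\mathcal{G}^1_{\eps\left(v-\sigma\right)}\left(\mathbf{y},\theta\right),\mathcal{G}^2_{\eps\left(v-\sigma\right)}\left(\mathbf{y},\theta\right)\right)$, which turns the representative into
\[
\tilde{\boldsymbol{\Upsilon}}_{\!4}\left(\mathbf{y},\theta,v\right)=\int_{0}^{v}\!\!\int_{0}^{s}\frac{d\sigma\,ds}{B\!\left(\mathcal{G}^1_{\eps\left(v-\sigma\right)}\left(\mathbf{y},\theta\right),\mathcal{G}^2_{\eps\left(v-\sigma\right)}\left(\mathbf{y},\theta\right)\right)}.
\]

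It then remains to reduce the iterated integral. Applying Fubini on the triangle $\left\{0\le\sigma\le s\le v\right\}$ replaces $\int_{0}^{v}\!\int_{0}^{s}d\sigma\,ds$ by $\int_{0}^{v}\left(v-\sigma\right)d\sigma$, and the change of variable $u=v-\sigma$ converts the weight $\left(v-\sigma\right)$ into $u$ and $\mathcal{G}_{\eps\left(v-\sigma\right)}$ into $\mathcal{G}_{\eps u}$, yielding exactly \eqref{EXPPRRUpsKa}. The only point that requires genuine care is the bookkeeping in the substitution step: the variable $v$ plays two roles, the outer integration bound and the flow parameter in $\mathbf{x}=\mathcal{G}_{\eps v}\left(\mathbf{y},\theta\right)$, so one must fix $v$ before manipulating the integrals and must not treat $\tilde{\boldsymbol{\Upsilon}}_{\!4}$ as the integral of a $v$-independent integrand. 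Once this substitution is carried out correctly, the group law does the essential work and the Fubini swap together with the affine change of variable are entirely routine.
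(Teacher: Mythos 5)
Your proof is correct and follows essentially the same route as the paper's: both rest on the iterated-integral form of $\boldsymbol{\Upsilon}_{\!4}$, the explicit formula \eqref{SolPDEPhi} for $\varphi$, the group law of the flow $\mathcal{G}_\lambda$, a Fubini swap over the triangle, and the affine change of variable $u=v-\sigma$. The only difference is the order of operations — you substitute the reciprocal map $\left(\mathbf{x},\theta,v\right)=\left(\mathcal{G}_{\eps v}\left(\mathbf{y},\theta\right),v\right)$ first and then integrate, whereas the paper first reduces to $\int_{0}^{v}\left(v-u\right)\varphi\left(\mathbf{x},\theta,u\right)du$ and then applies the group law — which is immaterial.
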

\begin{proof}
Using function $\varphi$ involved in the expression  of $\boldsymbol{\Upsilon}_{\!4}$ (see \eqref{UpskvsPsi}  and \eqref{DefPSY}),
we obtain:
\begin{gather}
\label{NumTrans}    %    \eqref{NumTrans}
\begin{aligned}
\boldsymbol{\Upsilon}_{\!4}\left(\mathbf{x},\theta,v\right)
             &=\int_{0}^{v}\left(\int_{0}^{s}\varphi\left(\mathbf{x},\theta,u\right)du\right)ds
             \\
             &=\int_{0}^{v}\left(\int_{u}^{v}\varphi\left(\mathbf{x},\theta,u\right)ds\right)du
             \\
             &=\int_{0}^{v}\left(v-u\right)\varphi\left(\mathbf{x},\theta,u\right)du.
\end{aligned}
\end{gather}
Now, using expressions \eqref{SolPDEPhi}  of $\varphi$ and \eqref{LamFlowVSUps} of $\left(\boldsymbol{\Upsilon}_{\!1},\boldsymbol{\Upsilon}_{\!2},\boldsymbol{\Upsilon}_{\!3}\right)$, 
we obtain
\begin{gather}
\label{NumTrans2}    %    \eqref{NumTrans2}
\begin{aligned}
\boldsymbol{\Upsilon}_{\!4}\left(\mathbf{x},\theta,v\right)
             &=\int_{0}^{v}\frac{\left(v-u\right)}{B\left(\mathcal{G}^1_{-\eps u}\left(\mathbf{x},\theta\right),\mathcal{G}^2_{-\eps u}\left(\mathbf{x},\theta\right)\right)}du
             \\
             &=\int_{0}^{v}\frac{\left(v-u\right)}{B\left(\mathcal{G}^1_{\eps\left(v-u\right)}\left(\mathcal{G}_{-\eps v}\left(\mathbf{x},\theta\right)\right),
                                                                            \mathcal{G}^2_{\eps\left(v-u\right)}\left(\mathcal{G}_{-\eps v}\left(\mathbf{x},\theta\right)\right)\right)}du
             \\
             &=\int_{0}^{v}(v-u)\;\varphi\!\left(\mathcal{G}_{-\eps v}\left(\mathbf{x},\theta\right),u-v\right)du,
             \\
             &=\int_{0}^{v}u\varphi\left(\boldsymbol{\Upsilon}_{\!1}\left(\mathbf{x},\theta,v\right),\boldsymbol{\Upsilon}_{\!2}\left(\mathbf{x},\theta,v\right),\boldsymbol{\Upsilon}_{\!3}\left(\mathbf{x},\theta,v\right),-u\right)du,
\end{aligned}
\end{gather}
implying, using again \eqref{SolPDEPhi} and that $\boldsymbol{\Upsilon}_{\!1}\left(\mathbf{x},\theta,v\right)$, $\boldsymbol{\Upsilon}_{\!2}\left(\mathbf{x},\theta,v\right)$
and $\boldsymbol{\Upsilon}_{\!3}\left(\mathbf{x},\theta,v\right)$ are the expression of $y_1$, $y_2$ and $\theta$, \eqref{EXPPRRUpsKa} and consequently proving the lemma.
\end{proof}

Having expression \eqref{EXPPRRUpsKa} of $\tilde{\boldsymbol{\Upsilon}}_{\!4}$ on hand, for all $\left(\mathbf{y},\theta\right)\in\mathbb{R}^3$ we can define the parametrized smooth function $\eta=\left[\eta\left(\mathbf{y},\theta\right)\right]$ of $v$  by
\begin{align}
\left[\eta\left(\mathbf{y},\theta\right)\right]\left(v\right)=\tilde{\boldsymbol{\Upsilon}}_{\!4}\left(\mathbf{y},\theta,v\right).
\label{DefEtaFun}   % \eqref{DefEtaFun}
\end{align}
\begin{lemma}
\label{propeta}
For any $\left(\mathbf{y},\theta\right)\in \rit^2\times\rit$, function $\left[\eta\left(\mathbf{y},\theta\right)\right]$ is a $\mathcal{C}^\infty$\!-diffeomorphism from $(0,+\infty)$ onto 
itself and function $\tilde{\eta}=\tilde{\eta}\left(\mathbf{y},\theta,k\right)$ defined by: 
\begin{align}
      &\tilde{\eta}\left(\mathbf{y},\theta,k\right)=\left[\eta\left(\mathbf{y},\theta\right)\right]^{-1}\left(k\right)
      \label{DefEtaInvTildeV2}     %   \eqref{DefEtaInvTildeV2}
\end{align}
which gives the expression of $v$, is in $\mathcal{C}_{\PerND}^\infty(\rit^2\times \rit\times(0,+\infty)).$ 
\end{lemma}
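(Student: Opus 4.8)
The plan is to handle the two assertions in turn: first fix $(\mathbf{y},\theta)$ and establish the one–dimensional diffeomorphism property of $[\eta(\mathbf{y},\theta)]$, and then deduce the joint regularity and the $\theta$-periodicity of its inverse $\tilde{\eta}$ from the implicit function theorem applied to $\tilde{\boldsymbol{\Upsilon}}_{\!4}$.

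First I would fix $(\mathbf{y},\theta)\in\rit^2\times\rit$ and study $g(v):=[\eta(\mathbf{y},\theta)](v)=\tilde{\boldsymbol{\Upsilon}}_{\!4}(\mathbf{y},\theta,v)$. The integrand in \eqref{EXPPRRUpsKa} is smooth in all its arguments, since $\mathcal{G}$ is in $\mathcal{C}^\infty(\rit^3)$ by Lemma \ref{regFlowG} and $1/B$ is smooth and bounded (because $\inf B>1$); hence $g$ is $\mathcal{C}^\infty$ and, by the fundamental theorem of calculus,
\begin{align*}
   g'(v)=\frac{v}{B\!\left(\mathcal{G}^1_{\eps v}(\mathbf{y},\theta),\mathcal{G}^2_{\eps v}(\mathbf{y},\theta)\right)}.
\end{align*}
Since $B>0$, this is strictly positive for $v>0$, so $g$ is strictly increasing on $(0,+\infty)$ with $g(0)=0$; and using $B\leq\|B\|_{\infty}<+\infty$ one gets $g(v)\geq v^2/(2\|B\|_{\infty})\to+\infty$. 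Thus $g$ is a continuous, strictly increasing bijection of $(0,+\infty)$ onto $(0,+\infty)$ whose derivative never vanishes there, so the one–dimensional inverse function theorem makes $g^{-1}$ a $\mathcal{C}^\infty$-map. This is the first assertion.

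For the joint regularity I would apply the implicit function theorem to $F(\mathbf{y},\theta,v,k)=\tilde{\boldsymbol{\Upsilon}}_{\!4}(\mathbf{y},\theta,v)-k$, which is $\mathcal{C}^\infty$ on $\rit^2\times\rit\times(0,+\infty)\times(0,+\infty)$ by the same smoothness argument. By definition $\tilde{\eta}$ solves $F(\mathbf{y},\theta,\tilde{\eta}(\mathbf{y},\theta,k),k)=0$, and the previous step gives $\partial F/\partial v=g'(v)>0$ at every such point; the implicit function theorem then furnishes, near each point, a unique $\mathcal{C}^\infty$ solution in $v$, and by the global strict monotonicity of $g$ these local solutions must all coincide with $\tilde{\eta}$, which is therefore $\mathcal{C}^\infty$ on $\rit^2\times\rit\times(0,+\infty)$. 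Finally, Lemma \ref{regFlowG} states that $(\mathcal{G}^1_\lambda,\mathcal{G}^2_\lambda)\in\mathcal{C}^\infty_{\#,3}(\rit^3)$, i.e. it is $2\pi$-periodic in $\theta$; hence the integrand of \eqref{EXPPRRUpsKa}, and so $\tilde{\boldsymbol{\Upsilon}}_{\!4}$, is $2\pi$-periodic in $\theta$. Consequently $[\eta(\mathbf{y},\theta)]=[\eta(\mathbf{y},\theta+2\pi)]$ as maps of $v$, their inverses agree, and $\tilde{\eta}(\mathbf{y},\theta,k)=\tilde{\eta}(\mathbf{y},\theta+2\pi,k)$, which together with the joint smoothness yields $\tilde{\eta}\in\mathcal{C}^\infty_{\PerND}(\rit^2\times\rit\times(0,+\infty))$.

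I expect the only delicate points to be the surjectivity of $g$ onto all of $(0,+\infty)$ — which rests on the boundedness $\|B\|_{\infty}<+\infty$ rather than merely on $\inf B>1$ — and the non-degeneracy $\partial_v F>0$ holding throughout the open domain. Once these are secured the implicit function theorem does the remaining work, and no separate analysis near $v=0$ is required, since both $v=0$ and $k=0$ are excluded from the respective domains.
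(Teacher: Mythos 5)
Your proof is correct, and its skeleton is the same as the paper's: strict positivity of the $v$-derivative of $[\eta(\mathbf{y},\theta)]$ computed from \eqref{EXPPRRUpsKa}, the quadratic lower bound $[\eta(\mathbf{y},\theta)](v)\geq v^{2}/(2\Vert B\Vert_{\infty})$ to get surjectivity onto $(0,+\infty)$, and the $2\pi$-periodicity of $(\mathcal{G}_{\lambda}^{1},\mathcal{G}_{\lambda}^{2})$ in $\theta$ from Lemma \ref{regFlowG} for the periodicity of $\tilde{\eta}$. The one step where you take a different route is the joint smoothness of $\tilde{\eta}$ in $(\mathbf{y},\theta,k)$: the paper obtains smoothness in $\mathbf{y}$ and $\theta$ by computing the successive derivatives of the identity $[\eta(\mathbf{y},\theta)]\circ[\eta(\mathbf{y},\theta)]^{-1}=id$, i.e.\ by implicit differentiation carried out by hand, while you apply the implicit function theorem to $F(\mathbf{y},\theta,v,k)=\tilde{\boldsymbol{\Upsilon}}_{\!4}(\mathbf{y},\theta,v)-k$, using $\partial F/\partial v>0$ and identifying the local solutions with $\tilde{\eta}$ through strict monotonicity of $[\eta(\mathbf{y},\theta)]$. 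Your variant is tighter on one point of rigor: differentiating the composition identity with respect to $\mathbf{y}$ or $\theta$ presupposes that the inverse is differentiable in those variables, which is precisely what is to be established, whereas the implicit function theorem delivers existence, uniqueness and $\mathcal{C}^\infty$ regularity of the solution in one stroke. What the paper's computation buys in exchange is explicit formulas for the derivatives of the inverse in terms of those of $\eta$, the same device it exploits again for $\beta$ and $\gamma$ in Lemmas \ref{TechnicalLemma} and \ref{SuccessiveDerivativeOfBeta}; with your argument those formulas would be recovered afterwards by differentiating the relation $F(\mathbf{y},\theta,\tilde{\eta}(\mathbf{y},\theta,k),k)=0$.
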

\begin{proof}
As 
\begin{align*}
\left[\frac{d[\eta\left(\mathbf{y},\theta\right)]}{dv}\right]\left(v\right)=\frac{v}{B\left(\mathcal{G}^1_{\eps v}\left(\mathbf{y},\theta\right),\mathcal{G}^2_{\eps v}\left(\mathbf{y},\theta\right)\right)}>0,
\end{align*}
$\left[\eta\left(\mathbf{y},\theta\right)\right]$ is a $\mathcal{C}^\infty$\!-diffeomorphism from $\left(0,+\infty\right)$ onto 
\begin{align}
    &\left(\underset{v\rightarrow0}{\lim}\left[\eta\left(\mathbf{y},\theta\right)\right]\left(v\right),\underset{v\rightarrow+\infty}{\lim}\left[\eta\left(\mathbf{y},\theta\right)\right]\left(v\right)\right)%=\left(0,+\infty\right)
\end{align}
for all $\left(\mathbf{y},\theta\right)$. 
Moreover, according to formula \eqref{EXPPRRUpsKa} we have for any $v>0$ the following estimates:
\begin{align}
      &\frac{v^{2}}{2\left\Vert B\right\Vert _{\infty}}\leq\left[\eta\left(\mathbf{y},\theta\right)\right]\left(v\right)\leq\frac{v^{2}}{2},
\end{align}
and consequently for any $\left(\mathbf{y},\theta\right)\in\mathbb{R}^3$
\begin{align}
     &\left[\eta\left(\mathbf{y},\theta\right)\right]\left(\left(0,+\infty\right)\right)=\left(0,+\infty\right).
\end{align}
Particularly, for any $v\in\left(0,+\infty\right)$ there exists $k\in\left(0,+\infty\right)$ such that
\begin{align}
v=\left[\eta\left(\mathbf{y},\theta\right)\right]^{-1}\left(k\right).
\label{DefEtaInv}    %  \eqref{DefEtaInv}
\end{align}
The regularity of $\tilde\eta$ with respect to $k$ is easily obtained from the fact that  $\left[\eta\left(\mathbf{y},\theta\right)\right]$ 
is a $\mathcal{C}^\infty$\!-diffeomorphism.
The $\mathcal{C}^\infty$\!-nature of $\tilde\eta$ with respect to $\yvec$ and $\theta$ is obtained by computing the successive derivatives of $\left[\eta\left(\mathbf{y},\theta\right)\right]\circ\left[\eta\left(\mathbf{y},\theta\right)\right]^{-1}=id$ and using the regularity of $\eta$ that comes from the regularity of  $\tilde{\boldsymbol{\Upsilon}}_{\!4}$, itself coming from the regularity of $B$ and flow $\mathcal{G}_\lambda$. 
Moreover, the periodicity of $\tilde\eta$ with respect to $\theta$ comes from the fact that 
$\theta\mapsto \left(\mathcal{G}^1_\lambda(\xvec,\theta),\mathcal{G}^2_\lambda(\xvec,\theta)\right)$ is in 
$\mathcal{C}^\infty_\PerOneD(\rit)$ (see Notation \ref{201402262154}) for
any $\xvec\in\rit^2$ as set out in Lemma \ref{regFlowG}.
\end{proof}

 Hence we have proven the following theorem.
\begin{theorem}
\label{ThmInvUpsIsKappa}   %   \ref{ThmInvUpsIsKappa}  
 $\left(\mathbf{y},\theta,k\right)$ makes a coordinate system on $\mathbb{R}^2\times\rit\times\left(0,+\infty\right)$ and  function $\boldsymbol{\kappa}=\boldsymbol{\Upsilon}^{-1}$ is given by 
\begin{align}
\label{1305210347} % \eqref{1305210347}
\boldsymbol{\kappa}\left(\mathbf{y},\theta,k\right)=\left(\mathcal{G}_{\eps\tilde{\eta}\left(\mathbf{y},\theta,k\right)}\left(\mathbf{y},\theta\right),\tilde{\eta}\left(\mathbf{y},\theta,k\right)\right),
\end{align}
where $\tilde{\eta}$ is defined by \eqref{DefEtaInvTildeV2}.
\end{theorem}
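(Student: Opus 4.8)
The plan is to prove the statement by a factorization argument: I would exhibit $\boldsymbol{\Upsilon}$, restricted to $\mathbb{R}^2\times\mathbb{R}\times(0,+\infty)$, as the composition of two elementary diffeomorphisms, each a \emph{skew product} that leaves all but one block of variables fixed, and then read off $\boldsymbol{\kappa}$ by composing the two inverses in reverse order. The content of the statement is really already carried by Lemma \ref{regFlowG} and Lemma \ref{propeta}; what remains is to assemble them correctly.

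First I would introduce the map $\Phi_1:(\mathbf{x},\theta,v)\mapsto(\mathcal{G}_{-\eps v}(\mathbf{x},\theta),v)$. By \eqref{LamFlowVSUps} its first three components are exactly $(\boldsymbol{\Upsilon}_1,\boldsymbol{\Upsilon}_2,\boldsymbol{\Upsilon}_3)$, so $\Phi_1$ reproduces $\boldsymbol{\Upsilon}$ in the $(\mathbf{y},\theta)$ block while leaving $v$ untouched. Since $\mathcal{G}$ is complete and smooth (Lemma \ref{regFlowG}) and $\mathcal{G}^3_\lambda(\mathbf{x},\theta)=\theta$, the partial map $(\mathbf{x},\theta)\mapsto\mathcal{G}_{-\eps v}(\mathbf{x},\theta)$ is, for each fixed $v$, a diffeomorphism of $\mathbb{R}^2\times\mathbb{R}$ whose inverse is $\mathcal{G}_{\eps v}$ by the group property of the flow; hence $\Phi_1$ is a diffeomorphism of $\mathbb{R}^2\times\mathbb{R}\times(0,+\infty)$ onto itself with inverse $\Phi_1^{-1}:(\mathbf{y},\theta,v)\mapsto(\mathcal{G}_{\eps v}(\mathbf{y},\theta),v)$.

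Next I would introduce $\Phi_2:(\mathbf{y},\theta,v)\mapsto(\mathbf{y},\theta,[\eta(\mathbf{y},\theta)](v))$, which changes only the last variable through $v\mapsto k=\tilde{\boldsymbol{\Upsilon}}_4(\mathbf{y},\theta,v)$ as given by \eqref{EXPPRRUpsKa}. By Lemma \ref{propeta}, for each $(\mathbf{y},\theta)$ the map $[\eta(\mathbf{y},\theta)]$ is a $\mathcal{C}^\infty$-diffeomorphism of $(0,+\infty)$ onto itself, with globally smooth inverse $k\mapsto\tilde{\eta}(\mathbf{y},\theta,k)$; so $\Phi_2$ is a diffeomorphism of $\mathbb{R}^2\times\mathbb{R}\times(0,+\infty)$ onto itself with $\Phi_2^{-1}:(\mathbf{y},\theta,k)\mapsto(\mathbf{y},\theta,\tilde{\eta}(\mathbf{y},\theta,k))$. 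The one piece of bookkeeping I would verify explicitly is that $\tilde{\boldsymbol{\Upsilon}}_4$ is precisely the representative of $\boldsymbol{\Upsilon}_4$ in the $(\mathbf{y},\theta,v)$-chart, so that $\boldsymbol{\Upsilon}=\Phi_2\circ\Phi_1$ holds component by component: the fourth component reads $k=\tilde{\boldsymbol{\Upsilon}}_4(\mathbf{y},\theta,v)=\boldsymbol{\Upsilon}_4(\mathbf{x},\theta,v)$ and the first three are handled by \eqref{LamFlowVSUps}.

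Finally, a composition of diffeomorphisms is a diffeomorphism, so $\boldsymbol{\Upsilon}$ is a diffeomorphism of $\mathbb{R}^2\times\mathbb{R}\times(0,+\infty)$ onto itself, which is exactly the assertion that $(\mathbf{y},\theta,k)$ makes a coordinate system; and $\boldsymbol{\kappa}=\boldsymbol{\Upsilon}^{-1}=\Phi_1^{-1}\circ\Phi_2^{-1}$ evaluates to $(\mathbf{y},\theta,k)\mapsto\Phi_1^{-1}(\mathbf{y},\theta,\tilde{\eta}(\mathbf{y},\theta,k))=(\mathcal{G}_{\eps\tilde{\eta}(\mathbf{y},\theta,k)}(\mathbf{y},\theta),\tilde{\eta}(\mathbf{y},\theta,k))$, which is \eqref{1305210347}. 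Smoothness of $\boldsymbol{\kappa}$, including its periodicity in $\theta$, then follows at once from that of $\mathcal{G}$ and of $\tilde{\eta}$. The only genuinely delicate point is already absorbed into Lemma \ref{propeta}: that $[\eta(\mathbf{y},\theta)]$ maps $(0,+\infty)$ \emph{onto} the whole half-line (ensured by the two-sided bounds $v^2/(2\Vert B\Vert_\infty)\le[\eta(\mathbf{y},\theta)](v)\le v^2/2$, both tending to $0$ and to $+\infty$) and that its inverse is smooth and periodic in the parameters. Granting that lemma, the present statement is a pure composition argument with no further obstacle.
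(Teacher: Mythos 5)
Your proposal is correct and follows essentially the same route as the paper: the paper likewise first identifies $(\boldsymbol{\Upsilon}_1,\boldsymbol{\Upsilon}_2,\boldsymbol{\Upsilon}_3)$ with $\mathcal{G}_{-\eps v}$ via \eqref{LamFlowVSUps} (so that $(\mathbf{y},\theta,v)$ is a coordinate system with reciprocal $(\mathcal{G}_{\eps v}(\mathbf{y},\theta),v)$), then inverts the last variable through $\tilde{\boldsymbol{\Upsilon}}_4$ and Lemma \ref{propeta}, and composes the two inverses to get \eqref{1305210347}. Your explicit factorization $\boldsymbol{\Upsilon}=\Phi_2\circ\Phi_1$ merely makes the paper's implicit two-step structure formal, with no difference in substance.
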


%%%%%%%%%%%%%%%%%%%%%%%%%%%%%%%%%%%%%%%%%%%%%%%%%%%%%%%%%%%%%%%%%%%%%%
%%%%%%%%%%%%%%%%%%%%%%%%%%%%%%%%%%%%%%%%%%%%%%%%%%%%%%%%%%%%%%%%%%%%
%%%%%%%%%%%%%%%%%%%%%%%%%%%%%%%%%%%%%%%%%%%%%%%%%%%%%%%%%%%%%%%%%%%%
\subsection{Regularity with respect to $\eps$ of the change of coordinates}
\label{RegPropertiesOfDarbCCSection}   %   \ref{RegPropertiesOfDarbCCSection}  
%%%%%%%%%%%%%%%%%%%%%%%%%%%%%%%%%%%%%%%%%%%%%%%%%%%%%%%%%%%%%%%%%%%%
%%%%%%%%%%%%%%%%%%%%%%%%%%%%%%%%%%%%%%%%%%%%%%%%%%%%%%%%%%%%%%%%%%%%
%%%%%%%%%%%%%%%%%%%%%%%%%%%%%%%%%%%%%%%%%%%%%%%%%%%%%%%%%%%%%%%%%%%%%%

In this Subection, we will focus on the $\eps$-dependency of $\boldsymbol{\kappa}$. 
According to Formula \eqref{1305210347} and
since $\lambda\mapsto\mathcal{G}_{\lambda}$ is smooth (see Lemma \ref{regFlowG}) we only have to study
the regularity with respect to $\eps$ of
%is clear in what concerns the three first component.
the fourth component $\boldsymbol{\kappa}_v$ of $\boldsymbol{\kappa}$.
To this aim, we will introduce for any $(\mathbf{y},\theta,k)\in\rit^3\times(0,+\infty)$ the parametrized functions 
$\alpha=\left[\alpha\left(\mathbf{y},\theta\right)\right]\!\left(v\right)$, which is defined for $v\in\rit_+$, $\beta=\left[\beta\left(\mathbf{y},\theta,k\right)\right]\!\left(\eps\right)$, which is defined for $\eps\in\left(0,+\infty\right)$, and $\gamma=\left[\gamma\left(\mathbf{y},\theta,k\right)\right]\!\left(\eps\right)$, which is defined for $\eps\in\rit_+$, by 
\begin{align}
          &\left[\alpha\left(\mathbf{y},\theta\right)\right]\left(v\right)=\int_{0}^{v}\frac{s}{B\left(\mathcal{G}_{s}^{1}\left(\mathbf{y},\theta\right),\mathcal{G}_{s}^{2}\left(\mathbf{y},\theta\right)\right)}ds,
          \label{DefAlpha}     % \eqref{DefAlpha}
          \\
          &\left[\beta\left(\mathbf{y},\theta,k\right)\right]\left(\eps\right)=\left[\alpha\left(\mathbf{y},\theta\right)\right]^{-1}\left(\eps^{2}k\right),
          \label{DefBeta}     % \eqref{DefBeta} 
          \\
          &\left[\gamma\left(\mathbf{y},\theta,k\right)\right]\left(\eps\right)=\sqrt{\frac{\left[\alpha\left(\mathbf{y},\theta\right)\right]\left(\eps\right)}{k}}.
          \label{DefGamma}   % \eqref{DefGamma}
\end{align}
Thus, by construction (see Formula \eqref{EXPPRRUpsKa}) we have
\begin{gather}
\tilde{\boldsymbol{\Upsilon}}_{\!4}\left(\mathbf{y},\theta,v\right)= \frac{1}{\eps^2}\left[\alpha\left(\mathbf{y},\theta\right)\right]\left(\eps v\right)
\text{ or }
\eps v  = \left[\alpha\left(\mathbf{y},\theta\right)\right]^{-1} \!(\eps^2 \tilde{\boldsymbol{\Upsilon}}_{\!4}\!\left(\mathbf{y},\theta,v\right)),
\end{gather}
and in view of \eqref{DefBeta}
\begin{align}
       &\forall\eps>0,\ \boldsymbol{\kappa}_{v}\left(\mathbf{y},\theta,k\right)=\frac{1}{\eps}\left[\beta\left(\mathbf{y},\theta,k\right)\right]\left(\eps\right).
       \label{ExpOfKappavInFuncOfBeta}  %  \eqref{ExpOfKappavInFuncOfBeta} 
\end{align}
With their help, we can state the following lemma.
\begin{lemma}
\label{TechnicalLemma}   %   \ref{TechnicalLemma}  
For any $(\mathbf{y},\theta,k)\in\rit^3\times(0,+\infty)$, function $\beta$ defined by formula \eqref{DefBeta} admits a smooth continuation to $\mathbb{R}_+$ such that 
\begin{align}
    &\left[\beta\left(\mathbf{y},\theta,k\right)\right]\left(0\right)=0,
    \label{DefBeta0}       %  \eqref{DefBeta0}
\end{align}
Moreover, for any $\eps>0$ we have
\begin{align}
     &\left[\beta\left(\mathbf{y},\theta,k\right)\right]'\!\left(\eps\right)=\frac{1}{\left[\gamma\left(\mathbf{y},\theta,k\right)\right]'\left(\left[\beta\left(\mathbf{y},\theta,k\right)\right]\!\left(\eps\right)\right)},
     \label{DerBeta}    %  \eqref{DerBeta}
\end{align}
where $\gamma$ is defined by \eqref{DefGamma}.
\end{lemma}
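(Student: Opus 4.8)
The plan is to recognise that, for each fixed $(\mathbf{y},\theta,k)$, the functions $[\beta(\mathbf{y},\theta,k)]$ and $[\gamma(\mathbf{y},\theta,k)]$ — regarded as functions of one real variable on $\rit_+$ — are mutually inverse. Once this is established, both assertions of the lemma are immediate: the smooth continuation of $\beta$ and the vanishing \eqref{DefBeta0} come from the fact that $\gamma$ is a genuine $\mathcal{C}^\infty$-diffeomorphism of $\rit_+$, and the derivative formula \eqref{DerBeta} is just the inverse-function-theorem rule. The one genuinely delicate point is the smoothness of $\beta$ at $\eps=0$: the map $[\alpha(\mathbf{y},\theta)]^{-1}$ has a square-root singularity at the origin, and the whole issue is that the factor $\eps^2$ in \eqref{DefBeta} exactly compensates it. Routing the argument through $\gamma$ rather than through $[\alpha(\mathbf{y},\theta)]^{-1}$ directly is precisely what makes this cancellation manifest.

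First I would factor $\alpha$. Writing $h(s)=1/B\!\left(\mathcal{G}_{s}^{1}(\mathbf{y},\theta),\mathcal{G}_{s}^{2}(\mathbf{y},\theta)\right)$, which is smooth and strictly positive by the regularity of $B$ and of the flow $\mathcal{G}_\lambda$ (Lemma \ref{regFlowG}), the change of variable $s=vt$ in \eqref{DefAlpha} gives
\begin{align*}
    [\alpha(\mathbf{y},\theta)](v)=\int_{0}^{v}s\,h(s)\,ds=v^{2}\int_{0}^{1}t\,h(vt)\,dt=:v^{2}g(v),
\end{align*}
where $g$ is smooth on $\rit_+$ (differentiation under the integral sign is legitimate since $h\in\mathcal{C}^\infty$ and $[0,1]$ is compact) and $g(0)=\tfrac12 h(0)=\tfrac{1}{2B(\mathbf{y})}>0$.

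Next I would use this factorisation to show that $\gamma$ is a $\mathcal{C}^\infty$-diffeomorphism of $\rit_+$. From \eqref{DefGamma},
\begin{align*}
    [\gamma(\mathbf{y},\theta,k)](\eps)=\sqrt{\frac{[\alpha(\mathbf{y},\theta)](\eps)}{k}}=\eps\sqrt{\frac{g(\eps)}{k}},
\end{align*}
and since $g(\eps)/k>0$ near $\eps=0$, the map $\eps\mapsto\sqrt{g(\eps)/k}$ is smooth; hence $\gamma$ is smooth on $\rit_+$ with $[\gamma(\mathbf{y},\theta,k)](0)=0$ and $[\gamma(\mathbf{y},\theta,k)]'(0)=\sqrt{g(0)/k}>0$. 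Differentiating $\gamma^{2}=\alpha/k$ yields $\gamma'(\eps)=\alpha'(\eps)/\bigl(2k\gamma(\eps)\bigr)>0$ for $\eps>0$, so $\gamma$ is strictly increasing; combined with the lower bound $[\alpha(\mathbf{y},\theta)](\eps)\ge\eps^{2}/(2\|B\|_{\infty})$, which forces $\gamma(\eps)\to+\infty$, this shows $\gamma$ maps $\rit_+$ onto $\rit_+$ with nowhere-vanishing derivative.

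Finally I would identify $\beta$ with $\gamma^{-1}$. Since $[\beta(\mathbf{y},\theta,k)](\eps)=[\alpha(\mathbf{y},\theta)]^{-1}(\eps^{2}k)$ satisfies $[\alpha(\mathbf{y},\theta)]\bigl([\beta(\mathbf{y},\theta,k)](\eps)\bigr)=\eps^{2}k$, substituting into \eqref{DefGamma} gives, for $\eps\ge0$,
\begin{align*}
    [\gamma(\mathbf{y},\theta,k)]\bigl([\beta(\mathbf{y},\theta,k)](\eps)\bigr)=\sqrt{\frac{\eps^{2}k}{k}}=\eps,
\end{align*}
that is $\gamma\circ\beta=\mathrm{id}$, so $\beta=\gamma^{-1}$. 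Consequently $\beta$ inherits from $\gamma$ a smooth continuation to $\rit_+$ with $[\beta(\mathbf{y},\theta,k)](0)=\gamma^{-1}(0)=0$, proving \eqref{DefBeta0}, and differentiating $\gamma(\beta(\eps))=\eps$ gives $[\beta(\mathbf{y},\theta,k)]'(\eps)=1/[\gamma(\mathbf{y},\theta,k)]'\bigl([\beta(\mathbf{y},\theta,k)](\eps)\bigr)$, which is exactly \eqref{DerBeta}. The main obstacle remains the smoothness at the origin, and the factorisation $\alpha(v)=v^{2}g(v)$ with $g(0)>0$ is the technical heart that turns the square-root inverse into a bona fide smooth diffeomorphism $\gamma$.
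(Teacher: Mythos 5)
Your proof is correct and takes essentially the same route as the paper: both arguments rewrite $\gamma$ via the substitution $s=\eps u$ so that $[\gamma(\mathbf{y},\theta,k)](\eps)=\eps\sqrt{g(\eps)/k}$ is visibly smooth on $\rit_+$ with nonvanishing derivative at $0$, identify $\beta$ as the inverse of $\gamma$, and conclude by the inverse function theorem. The only differences are cosmetic: you verify the identity $\gamma\circ\beta=\mathrm{id}$ explicitly (the paper asserts the inverse relation \eqref{LinkGamBet} without computation) and you additionally establish global monotonicity and surjectivity of $\gamma$, which the paper does not need since it only inverts $\gamma$ in a neighborhood of $0$, $\beta$ being already smooth on $(0,+\infty)$ by definition.
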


\begin{proof}
By definition,  function $\eps\mapsto\!\left[\beta(\yvec,\theta,k)\right]\!(\eps)$ is in $\mathcal{C}^{\infty}(\mathbb{R}_{+}^{\star})$ for every $(\yvec,\theta,k)\in\rit^2\times\rit\times(0,+\infty)$.
Moreover,  function $\gamma$ is such that 
\begin{align}
       &\forall\eps>0,\ \left[\gamma\left(\mathbf{y},\theta,k\right)\right]\!\left(\eps\right)=\left[\beta\left(\mathbf{y},\theta,k\right)\right]^{-1}\!\left(\eps\right).
       \label{LinkGamBet}  %    \eqref{LinkGamBet} 
\end{align}
Hence, in order to show that $\beta$ admits a smooth continuation on $\mathbb{R}_{+}$ we just have to show that 
$\gamma$ admits a smooth inverse function in the neighborhood of $0$ in $\mathbb{R}_{+}$.
And yet, for all $ \eps \geq 0$, we have 
\begin{align}
     &\left[\gamma\left(\mathbf{y},\theta,k\right)\right]\!\left(\eps\right)=
       \eps\sqrt{\frac{1}{k}\int_{0}^{1}\frac{u}{B\left(\mathcal{G}^1_{\eps u}\left(\mathbf{y},\theta\right),\mathcal{G}^2_{\eps u}\left(\mathbf{y},\theta\right)\right)}du}.
     \label{ExpGam}  % \eqref{ExpGam}
\end{align}
This function is in $\mathcal{C}^{\infty}\!\left(\mathbb{R}_+\right)\!$ and 
\begin{align}
      &\left[\frac{d\gamma\left(\mathbf{y},\theta,k\right)}{d\eps}\right]\left(0\right)=\frac{1}{\sqrt{2kB\left(\mathbf{y}\right)}}\neq0.
\end{align}
Hence, there exists a neighborhood $I$ of $0$ and a smooth function $\delta=\left[\delta\left(\mathbf{y},\theta,k\right)\right]\left(\eps\right)$ defined on $J=\left[\gamma\left(\mathbf{y},\theta,k\right)\right]\left(I\cap\mathbb{R}_{+}\right)$ such that $\left[\gamma\left(\mathbf{y},\theta,k\right)\right]\circ\left[\delta\left(\mathbf{y},\theta,k\right)\right]=id$. Hence we have shown that the smooth function $\beta$ defined on $\mathbb{R}_{+}^{\star}$ admits a smooth continuation to $\mathbb{R}_{+}$.
Then, since \eqref{DerBeta} follows directly \eqref{LinkGamBet}, Lemma \ref{TechnicalLemma} is proven.
\end{proof}

\begin{lemma}
\label{RegBeta}  %   \ref{RegBeta} 
Function
\begin{gather}
\left(\mathbf{y},\theta,k,\eps\right) \mapsto \left[\beta\left(\mathbf{y},\theta,k\right)\right]\!\left(\eps\right),
\end{gather}
is in $\mathcal{C}_{\#,3}^{\infty}\!\left(\rit^2\times\rit\times(0,+\infty)\times\rit_{+}\right)$.  
\end{lemma}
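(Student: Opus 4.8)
The plan is to promote the one-variable-at-a-time smoothness obtained in Lemma \ref{TechnicalLemma} to genuine joint smoothness, by viewing $\beta$ as the inverse (in the $\eps$-slot) of the explicitly given function $\gamma$ and inverting via an inverse-function-theorem argument applied to a suspended map. First I would observe that formula \eqref{ExpGam} already displays $\gamma$ as jointly smooth: the integrand $u/B(\mathcal{G}^1_{\eps u}(\yvec,\theta),\mathcal{G}^2_{\eps u}(\yvec,\theta))$ is $\mathcal{C}^\infty$ in $(\yvec,\theta,k,\eps,u)$, since the flow $\mathcal{G}_\lambda$ is smooth (Lemma \ref{regFlowG}) and $B$ is smooth with $\inf B>1$; integrating over the compact interval $[0,1]$ and differentiating under the integral sign preserve smoothness, the integral is bounded below by $1/(2k\|B\|_\infty)>0$ so its square root is smooth, and multiplying by $\eps$ keeps smoothness. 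Because the flow is complete, this same expression makes sense for every real $\eps$, so it provides a smooth extension of $\gamma$ to $\rit^2\times\rit\times(0,+\infty)\times\rit$. The periodicity of $(\mathcal{G}^1_\lambda,\mathcal{G}^2_\lambda)$ in $\theta$ recorded in Lemma \ref{regFlowG} shows at once that this extended $\gamma$ is $2\pi$-periodic in $\theta$.

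Next I would check that, for each fixed $(\yvec,\theta,k)$, the map $\eps\mapsto[\gamma(\yvec,\theta,k)](\eps)$ is a strictly increasing smooth bijection of $\rit$ onto $\rit$. Squaring \eqref{ExpGam} gives $\gamma^2=[\alpha(\yvec,\theta)](\eps)/k$, and differentiating yields $2\gamma\gamma'=\eps/(k\,B(\mathcal{G}^1_\eps,\mathcal{G}^2_\eps))$; since the extended $\gamma$ carries the sign of $\eps$, we get $\gamma'>0$ for $\eps\neq0$, while $\gamma'(0)=1/\sqrt{2kB(\yvec)}>0$ as already computed. The elementary bounds $v^2/(2\|B\|_\infty)\le[\alpha(\yvec,\theta)](v)\le v^2/2$ following from \eqref{DefAlpha} together with $\inf B>1$ force $\gamma(\eps)\to\pm\infty$ as $\eps\to\pm\infty$, so $\gamma$ is onto.

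With these facts in hand I would apply the inverse function theorem to the suspended map $\Phi(\yvec,\theta,k,\eps)=(\yvec,\theta,k,[\gamma(\yvec,\theta,k)](\eps))$ on the open set $\rit^2\times\rit\times(0,+\infty)\times\rit$. It is smooth, it is a bijection by the previous paragraph, and its differential is block-triangular with determinant $\partial_\eps\gamma>0$, hence everywhere invertible; a smooth bijection that is a local diffeomorphism everywhere is a diffeomorphism, so $\Phi^{-1}$ is smooth. By the identity \eqref{LinkGamBet} the last component of $\Phi^{-1}$ is precisely $(\yvec,\theta,k,\eps)\mapsto[\beta(\yvec,\theta,k)](\eps)$, which is therefore jointly $\mathcal{C}^\infty$; restricting to $\eps\in\rit_+$ delivers smoothness up to and including the boundary $\eps=0$, in accordance with \eqref{DefBeta0}. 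Periodicity transfers immediately: since $[\gamma(\yvec,\theta+2\pi,k)]=[\gamma(\yvec,\theta,k)]$ as functions of $\eps$, their inverses agree, so $\beta$ is $2\pi$-periodic in $\theta$ and thus lies in $\mathcal{C}_{\PerND,3}^{\infty}$ as claimed.

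The step I expect to be the main obstacle is the clean treatment of the boundary point $\eps=0$: the function $\gamma$ of \eqref{DefGamma} is a priori only defined for $\eps\ge0$, and a naive inverse function theorem does not apply at a boundary point. Extending $\gamma$ to negative $\eps$ through formula \eqref{ExpGam} (legitimate because the flow is complete) turns $\eps=0$ into an interior point of the domain, after which the ordinary inverse function theorem and the everywhere-positivity of $\partial_\eps\gamma$ do all the work; one must only verify that the sign of the extension is correct, so that $\gamma$ remains strictly monotone across $0$.
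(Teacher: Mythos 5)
Your proposal is correct, and it reaches the conclusion by a genuinely different mechanism than the paper. The paper's own treatment is minimal: it relies on Lemma \ref{TechnicalLemma} for the one-variable smooth continuation of $\beta$ at $\eps=0$ (via a local inverse of $\gamma$ near $0$), and for the joint regularity and periodicity it simply invokes the technique of Lemma \ref{propeta}, namely differentiating the identity $\left[\gamma\right]\circ\left[\gamma\right]^{-1}=id$ successively to propagate smoothness in $\left(\yvec,\theta,k\right)$ to the inverse, one derivative at a time. You instead (i) use completeness of the flow (Lemma \ref{regFlowG}) to read \eqref{ExpGam} as a smooth extension of $\gamma$ to all real $\eps$ — noting correctly that one must take \eqref{ExpGam}, not \eqref{DefGamma}, as the extension so that $\gamma$ keeps the sign of $\eps$ and stays strictly monotone across $0$; (ii) verify global strict monotonicity and surjectivity in $\eps$; and (iii) apply one global inverse function theorem to the suspended map $\left(\yvec,\theta,k,\eps\right)\mapsto\left(\yvec,\theta,k,\left[\gamma\left(\yvec,\theta,k\right)\right]\!\left(\eps\right)\right)$, identifying the last component of the inverse with $\beta$ through \eqref{LinkGamBet}. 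What your route buys is joint $\mathcal{C}^\infty$ regularity in all variables in a single stroke, including a genuine two-sided smooth extension across $\eps=0$, so the boundary point requires no separate one-sided argument and Lemma \ref{TechnicalLemma} is subsumed; what the paper's route buys is economy of means — it reuses local machinery already set up for Lemmas \ref{propeta} and \ref{TechnicalLemma} and never needs to discuss negative $\eps$ or global bijectivity of $\gamma$. Both arguments ultimately rest on the same nondegeneracy, $\left[\gamma\left(\yvec,\theta,k\right)\right]'\!\left(\eps\right)>0$, and your sign verification for the extension closes the only point where your approach could have failed.
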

The proof of the periodicity with respect to the third variable is similar to the one of Lemma \ref{propeta}.\\

We will now use Formula \eqref{ExpOfKappavInFuncOfBeta}, Lemmas \ref{TechnicalLemma}  and \ref{RegBeta} to deduce an expression of the expansion with respect to 
$\eps$ of the $v$-component of $\boldsymbol{\kappa}=\boldsymbol{\Upsilon}^{-1}$.

\begin{lemma}
\label{SuccessiveDerivativeOfBeta}     %   \ref{SuccessiveDerivativeOfBeta}  
For any $n\in\mathbb{N}^\star$, there exists $P_{n}\in\mathbb{R}_{n-1}\!\left[X_{1},\ldots,X_{n}\right]$ 
(where $\mathbb{R}_{n-1}\!\left[X_{1},\ldots,X_{n}\right]$  stands for the space of the homogeneous polynomial 
of degree $n-1$ in $n$ variables) such that 
\begin{align}
       &\left[\beta\left(\mathbf{y},\theta,k\right)\right]^{\left(n\right)}\left(\eps\right)=\frac{P_{n}\left(\left[\gamma\left(\mathbf{y},\theta,k\right)\right]^{\left(1\right)}\left(\beta\left(\eps\right)\right),\ldots,\left[\gamma\left(\mathbf{y},\theta,k\right)\right]^{\left(n\right)}\left(\beta\left(\eps\right)\right)\right)}{\left(\left[\gamma\left(\mathbf{y},\theta,k\right)\right]^{\left(1\right)}\left(\beta\left(\eps\right)\right)\right)^{2n-1}}.
       \label{ExpSuccessiveDerivativeOfBeta}     %   \eqref{ExpSuccessiveDerivativeOfBeta}  
\end{align}
Moreover, thanks to formula \eqref{RecFormForPn}, the $P_n$ can easily be computed by induction.
\\

\end{lemma}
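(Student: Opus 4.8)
The plan is to argue by induction on $n$, using that $\beta$ and $\gamma$ are mutually inverse functions (see \eqref{LinkGamBet}), so that every derivative of $\beta$ can be written through the derivatives of $\gamma$ evaluated at $\beta(\eps)$. To lighten the notation I would set, for a fixed $(\mathbf{y},\theta,k)$ and $i\geq 1$,
\[
   g_{i}(\eps)=\left[\gamma\left(\mathbf{y},\theta,k\right)\right]^{(i)}\!\left(\beta(\eps)\right),
\]
so that the claimed identity \eqref{ExpSuccessiveDerivativeOfBeta} reads $[\beta]^{(n)}(\eps)=P_{n}(g_{1},\ldots,g_{n})/g_{1}^{2n-1}$, and the assertion to be established is that such a $P_{n}$ exists and is homogeneous of degree $n-1$ in its $n$ variables.

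For the base case $n=1$, Formula \eqref{DerBeta} gives exactly $[\beta]'(\eps)=1/g_{1}$, so the statement holds with the constant polynomial $P_{1}=1$, which is homogeneous of degree $0$ in one variable. The inductive step rests on a single observation issuing from the chain rule: since $g_{i}$ is $[\gamma]^{(i)}$ composed with $\beta$ and $[\beta]'=1/g_{1}$, one has $\tfrac{d}{d\eps}g_{i}=g_{i+1}\,[\beta]'(\eps)=g_{i+1}/g_{1}$.

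Assuming $[\beta]^{(n)}(\eps)=P_{n}(g_{1},\ldots,g_{n})/g_{1}^{2n-1}$, I would differentiate once more, applying this rule termwise together with the quotient rule. The numerator so obtained is $g_{1}^{2n-2}\sum_{i=1}^{n}\frac{\partial P_{n}}{\partial X_{i}}(g_{1},\ldots,g_{n})\,g_{i+1}-(2n-1)\,P_{n}(g_{1},\ldots,g_{n})\,g_{2}\,g_{1}^{2n-3}$, over the denominator $g_{1}^{4n-2}$. Factoring $g_{1}^{2n-3}$ out of the numerator and cancelling leaves precisely the denominator $g_{1}^{2n+1}=g_{1}^{2(n+1)-1}$, the required power, together with the new numerator polynomial
\begin{align}
   &P_{n+1}(X_{1},\ldots,X_{n+1})=X_{1}\sum_{i=1}^{n}\frac{\partial P_{n}}{\partial X_{i}}(X_{1},\ldots,X_{n})\,X_{i+1}-(2n-1)\,P_{n}(X_{1},\ldots,X_{n})\,X_{2},
   \label{RecFormForPn}
\end{align}
which is the recursion that lets each $P_{n+1}$ be computed from $P_{n}$.

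It remains to check the two structural assertions about $P_{n+1}$. That it involves at most the variables $X_{1},\ldots,X_{n+1}$ is immediate, the highest-indexed $g$ occurring being $g_{n+1}$. For the homogeneity, $P_{n}$ is homogeneous of degree $n-1$ by hypothesis, hence each $\frac{\partial P_{n}}{\partial X_{i}}$ is homogeneous of degree $n-2$; then $X_{1}\,(\frac{\partial P_{n}}{\partial X_{i}})\,X_{i+1}$ has degree $1+(n-2)+1=n$ and $P_{n}X_{2}$ has degree $(n-1)+1=n$, so both summands of \eqref{RecFormForPn}, and therefore $P_{n+1}$, are homogeneous of degree $n=(n+1)-1$. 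This closes the induction. I anticipate no genuine obstacle here: the content is a Fa\`a-di-Bruno-type bookkeeping, and the only point demanding care is tracking the exact power of $g_{1}$ in the denominator and confirming that the cancellation yields $2(n+1)-1$ rather than a nearby exponent.
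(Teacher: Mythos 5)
Your proof is correct and follows exactly the route the paper takes: induction on $n$ using $[\beta]'=1/\bigl([\gamma]'\circ\beta\bigr)$ and the chain rule, and the recursion you derive for $P_{n+1}$ coincides term by term with the paper's formula \eqref{RecFormForPn}. You have simply written out the bookkeeping (quotient rule, cancellation of the power of $g_1$, homogeneity count) that the paper's proof leaves implicit behind the phrase ``easily done by induction.''
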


\begin{proof}
Proof of Lemma \ref{SuccessiveDerivativeOfBeta} is easily done by induction. Notice that the inductive formula for $P_n$ is given by:
\begin{gather}
 \label{RecFormForPn}   %   \eqref{RecFormForPn}   
\begin{aligned}
     &P_{n+1}\left(X_{1},\ldots,X_{n+1}\right)=-\left(2n-1\right)X_{2}P_{n}\left(X_{1},\ldots,X_{n}\right)+
     \\
     &\hspace{4cm}\overset{n}{\underset{k=1}{\sum}}X_{1}X_{k+1}\frac{\partial P_{n}}{\partial X_{k}}\left(X_{1},\ldots,X_{n}\right).
\end{aligned}
\end{gather}
\end{proof}

Hence finding an expansion of $\boldsymbol{\kappa}_v$ remains to find the successive derivatives of $\big[\!\gamma\left(\mathbf{y},\theta,k\right)\!\big]$
evaluated at $\eps=0$. The following lemma and its proof constitute a constructive way to compute them. 

\begin{lemma}
\label{SuccessiveDerivativesAt0OfBetaSpace}    % \ref{SuccessiveDerivativesAt0OfBetaSpace}  
For any $l\in\mathbb{N}^\star,$ there exists $a_l\in\mathcal{O}_{T,b}^{\infty}$ (see Notation \ref{201402272144}) such that 
\begin{align}
      &\left[\beta\left(\mathbf{y},\theta,k\right)\right]^{\left(l\right)}\left(0\right)=\sqrt{k}^{l}a_{l}\!\left(\mathbf{y},\theta\right).
      \label{SuccessiveDerivativeAt0OfBetaSpace}    % \eqref{SuccessiveDerivativeAt0OfBetaSpace}  
\end{align}
\end{lemma}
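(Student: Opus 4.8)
The plan is to read off $[\beta]^{(l)}(0)$ directly from the explicit formula \eqref{ExpSuccessiveDerivativeOfBeta} of Lemma~\ref{SuccessiveDerivativeOfBeta}, specialised at $\eps=0$. Since $[\beta(\mathbf{y},\theta,k)](0)=0$ by \eqref{DefBeta0}, every derivative $[\gamma]^{(j)}$ occurring in \eqref{ExpSuccessiveDerivativeOfBeta} is evaluated at $\beta(0)=0$, so I only need the values $[\gamma(\mathbf{y},\theta,k)]^{(j)}(0)$. Using expression \eqref{ExpGam}, I would write $[\gamma(\mathbf{y},\theta,k)](\eps)=\frac{\eps}{\sqrt{k}}\sqrt{h(\mathbf{y},\theta,\eps)}$, where $h(\mathbf{y},\theta,\eps)=\int_0^1\frac{u}{B(\mathcal{G}^1_{\eps u}(\mathbf{y},\theta),\mathcal{G}^2_{\eps u}(\mathbf{y},\theta))}\,du$ does not depend on $k$. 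Hence $[\gamma]^{(j)}(0)=\frac{1}{\sqrt{k}}\,c_j(\mathbf{y},\theta)$ with $c_j(\mathbf{y},\theta):=j\,[\sqrt{h}]^{(j-1)}(0)$; in particular $c_1=\sqrt{h(0)}=1/\sqrt{2B(\mathbf{y})}\neq0$ since $\mathcal{G}_0=\mathrm{id}$.

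The exact power of $\sqrt{k}$ then falls out of the homogeneity of $P_n$. As $P_n\in\mathbb{R}_{n-1}[X_1,\dots,X_n]$ is homogeneous of degree $n-1$, substituting $X_j=c_j/\sqrt{k}$ gives $P_n(\dots)=(\sqrt{k})^{-(n-1)}P_n(c_1,\dots,c_n)$, while the denominator of \eqref{ExpSuccessiveDerivativeOfBeta} equals $(c_1/\sqrt{k})^{2n-1}$. Dividing, the net power of $\sqrt{k}$ is $(2n-1)-(n-1)=n$, so that
\[
[\beta(\mathbf{y},\theta,k)]^{(n)}(0)=(\sqrt{k})^{\,n}\,a_n(\mathbf{y},\theta),\qquad a_n:=\frac{P_n(c_1,\dots,c_n)}{c_1^{\,2n-1}} .
\]
It then remains to verify that $a_n\in\mathcal{O}_{T,b}^{\infty}$.

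For the membership the decisive point is that every $c_j$ lies in $\mathcal{O}_{T,b}^{\infty}$. I would first record that $\mathcal{O}_{T,b}^{\infty}$, being an algebra, contains $\mathcal{C}_b^{\infty}(\mathbb{R}^2)$ (write $f=f\cos^2\theta+f\sin^2\theta$, each summand a product of two generators), is stable under $\partial_{x_1}$ and $\partial_{x_2}$, and is stable under multiplication by $\frac{\cos\theta}{B}$ and $\frac{\sin\theta}{B}$; consequently the vector field $\boldsymbol{\Lambda}$ of \eqref{defLambda} maps $\mathcal{O}_{T,b}^{\infty}$ into itself. Since $\frac1B\in\mathcal{O}_{T,b}^{\infty}$, an induction gives $\boldsymbol{\Lambda}^l\cdot\frac1B\in\mathcal{O}_{T,b}^{\infty}$ for every $l$, which strengthens the mere membership in $\mathcal{C}_{\PerND,3}^{\infty}\cap\mathcal{C}_b^{\infty}$ already furnished by Lemma~\ref{lemMocGen}. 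Integrating the Lie expansion \eqref{VarphiExpansion} against $u\,du$ on $[0,1]$ identifies the Taylor coefficients of $\eps\mapsto h(\mathbf{y},\theta,\eps)$, up to explicit scalar factors, with the $\boldsymbol{\Lambda}^l\cdot\frac1B$, so these coefficients lie in $\mathcal{O}_{T,b}^{\infty}$, and $h(0)=1/(2B)>0$.

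Finally, because $t\mapsto\sqrt t$ is smooth near $h(0)>0$, each coefficient $[\sqrt h]^{(j-1)}(0)$ is a universal polynomial in the Taylor coefficients of $h$ divided by a power of $h(0)=1/(2B)$; since $\mathcal{O}_{T,b}^{\infty}$ is an algebra containing $\mathcal{C}_b^{\infty}(\mathbb{R}^2)$ and $(2B)^s\in\mathcal{C}_b^{\infty}(\mathbb{R}^2)$ for every real $s$ (here $B$ is bounded with bounded derivatives and $\inf B>1$), we obtain $c_j\in\mathcal{O}_{T,b}^{\infty}$. Then $P_n(c_1,\dots,c_n)\in\mathcal{O}_{T,b}^{\infty}$ and $c_1^{-(2n-1)}=(2B)^{(2n-1)/2}\in\mathcal{C}_b^{\infty}(\mathbb{R}^2)$, whence $a_n\in\mathcal{O}_{T,b}^{\infty}$, which proves the lemma. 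The main obstacle is precisely the algebra-stability argument of the previous paragraph: one must upgrade the coarse regularity of the Lie iterates $\boldsymbol{\Lambda}^l\cdot\frac1B$ to the finer trigonometric-polynomial structure encoded by $\mathcal{O}_{T,b}^{\infty}$ and then propagate it through the square root, whereas the extraction of the exact factor $\sqrt{k}^{\,l}$ is only a bookkeeping consequence of the homogeneity of $P_n$.
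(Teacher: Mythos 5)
Your proof is correct and follows essentially the same route as the paper's: it rests on formula \eqref{ExpSuccessiveDerivativeOfBeta} evaluated at $\eps=0$, on the Lie-expansion representation \eqref{SecondExpForExpOfGamma} of $\gamma$ (your $h$ is exactly the bracketed series in that formula, which is independent of $k$), on the expansion of the square root, and on the algebra structure of $\mathcal{O}_{T,b}^{\infty}$. The only difference is one of explicitness: you spell out the homogeneity bookkeeping for $P_n$ that yields the exact factor $\sqrt{k}^{\,l}$ and the stability of $\mathcal{O}_{T,b}^{\infty}$ under $\boldsymbol{\Lambda}$ and under composition with $\sqrt{\cdot}$, steps the paper compresses into ``identifying with formula \eqref{MacLaurenExpOfGamma}'' and ``using formula \eqref{ExpSuccessiveDerivativeOfBeta}''.
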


%Here and hereafter, $\mathcal{O}_{T,b}^{\infty}$ stands of the algebra of functions 
%spanned by the functions of the form
%\begin{align}
%\label{1305170739}  % \eqref{1305170739}
%      &\left(\mathbf{y},\theta\right)\mapsto f_{1}\left(\mathbf{y}\right)\cos\left(\theta\right)+f_{2}\left(\mathbf{y}\right)\sin\left(\theta\right),
%\end{align}
%where $f_{1},f_{2}\in\mathcal{C}_{b}^{\infty}\!\left(\mathbb{R}^2\right).$

\begin{proof}
On the one hand, for any $\left(\mathbf{y},\theta,k\right)$
and for any $n\in\mathbb{N},$  
$\left[\gamma\left(\mathbf{y},\theta,k\right)\right]$ admits a Taylor-MacLaurin expansion of order $n.$
\begin{gather}
\label{MacLaurenExpOfGamma}   %   \eqref{MacLaurenExpOfGamma}
\begin{aligned}
     \left[\gamma\left(\mathbf{y},\theta,k\right)\right]\left(\eps\right)&=\left[\gamma\left(\mathbf{y},\theta,k\right)\right]\left(0\right)+\eps\left[\gamma\left(\mathbf{y},\theta,k\right)\right]^{\left(1\right)}\left(0\right)+\ldots+\frac{\eps^{n}}{n!}\left[\gamma\left(\mathbf{y},\theta,k\right)\right]^{\left(n\right)}\left(0\right)
     \\
     &+\int_{0}^{\eps}\frac{\left[\gamma\left(\mathbf{y},\theta,k\right)\right]^{\left(n+1\right)}\left(t\right)}{n!}\left(t-\eps\right)^{n}dt.
\end{aligned}
\end{gather}

On the other hand,
applying formula \eqref{LieExp} with $\frac{1}{B},$ multiplying by $\lambda$
and integrating between $0$ and $\eps$ yields:
\begin{gather}
\label{ExpDlOfAlpha}    %   \eqref{ExpDlOfAlpha}  
\begin{aligned}
        \left[\alpha\left(\mathbf{y},\theta\right)\right]\left(\eps\right)&=\eps^{2}\Big(\overset{n}{\underset{l=0}{\sum}}\frac{\eps^{l}}{\left(l+2\right)l!}\left(\boldsymbol{\Lambda}^{l}\cdot\frac{1}{B}\right)\left(\mathbf{y},\theta\right)
        \\
        &+\frac{\eps^{n+1}}{\left(n+1\right)!}\int_{0}^{1}\left(1-u\right)^{n+1}\left(n+1+u\right)\left(\boldsymbol{\Lambda}^{n+1}\cdot\frac{1}{B}\right)\circ\mathcal{G}_{\eps u}du\Big).
\end{aligned}
\end{gather}
Injecting formula \eqref{ExpDlOfAlpha} in \eqref{DefGamma} yields:
\begin{gather}
\label{SecondExpForExpOfGamma}     %   \eqref{SecondExpForExpOfGamma} 
{\small
\begin{aligned}
    &\left[\gamma\left(\mathbf{y},\theta,k\right)\right]\left(\eps\right)=
    \\
    &\frac{\eps}{\sqrt{k}}\sqrt{\left(\overset{n}{\underset{l=0}{\sum}}\frac{\eps^{l}}{\left(l+2\right)l!}\left(\boldsymbol{\Lambda}^{l}\cdot\frac{1}{B}\right)\left(\mathbf{y},\theta\right)+\frac{\eps^{n+1}}{\left(n+1\right)!}\int_{0}^{1}\left(1-u\right)^{n+1}\left(n+1+u\right)\left(\boldsymbol{\Lambda}^{n+1}\cdot\frac{1}{B}\right)\circ\mathcal{G}_{\eps u}du\right)}.
\end{aligned}
}
\end{gather}
Expanding formula  \eqref{SecondExpForExpOfGamma} with respect to $\eps,$ up to order $n,$
by using the usual expansion of $s\mapsto\sqrt{1+s},$
and identifying with formula \eqref{MacLaurenExpOfGamma} yields that for any $l\in\left\{ 0,\ldots,n\right\},$
$\sqrt{k}\left[\gamma\left(\mathbf{y},\theta,k\right)\right]^{\left(l\right)}\left(0\right)\in\mathcal{O}_{T,b}^{\infty}.$

Finally, using formula \eqref{ExpSuccessiveDerivativeOfBeta} we obtain formula \eqref{SuccessiveDerivativeAt0OfBetaSpace}.
This ends the proof of Lemma \ref{SuccessiveDerivativesAt0OfBetaSpace}.
\end{proof}

The two previous Lemmas and Formula \eqref{ExpOfKappavInFuncOfBeta} lead to the following Theorem.

\begin{theorem}
\label{ThExpKv}   % \ref{ThExpKv}
For any $(\mathbf{y},\theta,k)\in\rit\times\rit^2\times(0,+\infty)$, the $v$-component $\boldsymbol{\kappa}_v$ of $\boldsymbol{\kappa}=\boldsymbol{\Upsilon}^{-1}$ admits the following expansion in power of $\eps$:
\begin{gather}
\label{ExpKv}   % \eqref{ExpKv}
\begin{aligned}
       &\boldsymbol{\kappa}_{v}\left(\mathbf{y},\theta,k\right)=\overset{n}{\underset{i=0}{\sum}}\sqrt{k}^{i+1}a_{i+1}\left(\mathbf{y},\theta\right)\frac{\eps^{i}}{\left(i+1\right)!}
       \\
       &\hspace{4cm}+\frac{\eps^{n+1}}{\left(n+1\right)!}\int_{0}^{1}\left(1-u\right)^{n+1}\left[\beta\left(\mathbf{y},\theta,k\right)\right]^{\left(n+2\right)}\!\left(\eps u\right)du,
\end{aligned}
\end{gather}
where the terms $a_i$ of the expansion are defined in Lemma \ref{SuccessiveDerivativesAt0OfBetaSpace}.
%and are easily obtained by using formulas \eqref{DefBeta0} and  \eqref{DerBeta}.
Moreover, 
\begin{gather}
\label{Born0}  % \ref{Born0}
\begingroup\scriptsize(\yvec, \theta, k, \eps)\mapsto
\int_{0}^{1}\left(1-u\right)^{n+1}\left[\beta\left(\mathbf{y},\theta,k\right)\right]^{\left(n+2\right)}\!\left(\eps u\right)du
\in \mathcal{C}_{\#,3}^\infty(\rit^2\times\rit\times(0,+\infty)\times\rit_{+}).\endgroup
\end{gather}
\end{theorem}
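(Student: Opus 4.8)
The plan is to read $\boldsymbol{\kappa}_v$ directly off its representation \eqref{ExpOfKappavInFuncOfBeta} as $\frac{1}{\eps}[\beta(\mathbf{y},\theta,k)](\eps)$ and to expand the numerator $\beta$ by Taylor's formula with integral remainder about $\eps=0$. This is legitimate precisely because Lemma \ref{TechnicalLemma} furnishes the smooth continuation of $\eps\mapsto[\beta(\mathbf{y},\theta,k)](\eps)$ to all of $\rit_+$ (with value $0$ at the origin), while Lemma \ref{RegBeta} supplies the joint regularity needed to handle the remainder. First I would write, for fixed $(\mathbf{y},\theta,k)$,
\[
[\beta](\eps)=\sum_{l=0}^{n+1}\frac{\eps^{l}}{l!}[\beta]^{(l)}(0)+\frac{1}{(n+1)!}\int_{0}^{\eps}(\eps-t)^{n+1}[\beta]^{(n+2)}(t)\,dt,
\]
which is valid since $\beta$ is $\mathcal{C}^\infty$ on $\rit_+$.

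Next I would simplify using the two structural facts already at our disposal. The constant term vanishes because $[\beta](0)=0$ by \eqref{DefBeta0}, so the sum effectively starts at $l=1$; and each Taylor coefficient is given explicitly by Lemma \ref{SuccessiveDerivativesAt0OfBetaSpace}, namely $[\beta]^{(l)}(0)=\sqrt{k}^{l}a_{l}(\mathbf{y},\theta)$. Dividing by $\eps$ (legitimate for $\eps>0$) and performing the index shift $i=l-1$ turns the polynomial part into $\sum_{i=0}^{n}\sqrt{k}^{i+1}a_{i+1}(\mathbf{y},\theta)\frac{\eps^{i}}{(i+1)!}$, which is exactly the leading sum in \eqref{ExpKv}. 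For the remainder, the change of variable $t=\eps u$ gives
\[
\frac{1}{\eps}\cdot\frac{1}{(n+1)!}\int_{0}^{\eps}(\eps-t)^{n+1}[\beta]^{(n+2)}(t)\,dt=\frac{\eps^{n+1}}{(n+1)!}\int_{0}^{1}(1-u)^{n+1}[\beta]^{(n+2)}(\eps u)\,du,
\]
matching the remainder in \eqref{ExpKv} term for term.

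It then remains to establish the regularity statement \eqref{Born0}. Here I would invoke Lemma \ref{RegBeta}, which places $(\mathbf{y},\theta,k,\eps)\mapsto[\beta](\eps)$ in $\mathcal{C}_{\#,3}^\infty$; since joint smoothness is stable under differentiation in the last slot, the map $(\mathbf{y},\theta,k,\eps)\mapsto[\beta]^{(n+2)}(\eps)$ is again jointly smooth and $2\pi$-periodic in $\theta$. Consequently the integrand $(\mathbf{y},\theta,k,\eps,u)\mapsto(1-u)^{n+1}[\beta]^{(n+2)}(\eps u)$ is smooth, and integrating over the compact interval $u\in[0,1]$ preserves both smoothness and the $\theta$-periodicity by differentiation under the integral sign, which yields \eqref{Born0}.

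I do not anticipate a genuine obstacle here: all the analytic content---the smooth continuation of $\beta$ past $\eps=0$, the $\sqrt{k}$-homogeneity of its Taylor coefficients, and the joint regularity in $\mathcal{C}_{\#,3}^\infty$---has been front-loaded into Lemmas \ref{TechnicalLemma}, \ref{RegBeta} and \ref{SuccessiveDerivativesAt0OfBetaSpace}. The only points demanding care are purely bookkeeping: truncating the Taylor expansion at order $n+1$ (so that after division by $\eps$ the polynomial runs to $\eps^{n}$ and the remainder carries $\eps^{n+1}$), and correctly normalizing the remainder integral through the substitution $t=\eps u$.
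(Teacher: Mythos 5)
Your proof is correct and is exactly the argument the paper intends: the paper states Theorem \ref{ThExpKv} as a direct consequence of Lemmas \ref{TechnicalLemma}, \ref{RegBeta}, \ref{SuccessiveDerivativesAt0OfBetaSpace} and Formula \eqref{ExpOfKappavInFuncOfBeta}, and your write-up simply makes explicit the Taylor--MacLaurin expansion with integral remainder, the use of $[\beta](0)=0$ and $[\beta]^{(l)}(0)=\sqrt{k}^{l}a_{l}(\mathbf{y},\theta)$, the division by $\eps$ with index shift, and the substitution $t=\eps u$. The regularity claim \eqref{Born0} is likewise handled as the paper does, by propagating Lemma \ref{RegBeta} through differentiation in $\eps$ and integration over the compact interval $[0,1]$.
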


\begin{remark}
\label{AlgoToComputeTheExpTermOfBeta}   %     \ref{AlgoToComputeTheExpTermOfBeta}  
The terms $a_i$ of expansion \eqref{ExpKv} can be obtained by an inductive process. More precisely,
expanding formula \eqref{SecondExpForExpOfGamma} with respect to $\eps,$ up to order $n,$
by using the usual expansion of $s\mapsto\sqrt{1+s},$ leads to the $n$-th first derivatives of $\left[\gamma\left(\mathbf{y},\theta,k\right)\right]$
evaluated at $\eps=0$. Applying inductively formula \eqref{RecFormForPn} yields easily the expression of $P_n$
involved in Formula \eqref{ExpSuccessiveDerivativeOfBeta}. Thus, evaluating $P_n$
at the $n$ first derivatives of $\left[\gamma\left(\mathbf{y},\theta,k\right)\right]$ evaluated at $\eps=0$, we obtain the expression of 
the coefficient $a_n$ involved in Formula \eqref{ExpKv}.
\end{remark}

Applying Theorem  \ref{ThExpKv}, up to order 2, we obtain 
\begin{gather}
\label{DefExpKvOrder2V12}     %   \eqref{DefExpKvOrder2V12}  
\begin{aligned}
       \boldsymbol{\kappa}_{v}\left(\mathbf{y},\theta,k\right)
        &=\sqrt{2kB\left(\mathbf{y}\right)}+\eps\frac{2kB\left(\mathbf{y}\right)}{3}\hat{a}\left(\theta\right)\cdot\nabla_{\mathbf{x}}B\left(\mathbf{y}\right)
       \\
       &-\eps^{2}k\sqrt{\frac{kB\left(\mathbf{y}\right)}{2}}\left[\frac{7}{18B\left(\mathbf{y}\right)^{3}}\left(\hat{a}\left(\theta\right)\cdot\nabla_{\mathbf{x}}B\left(\mathbf{y}\right)\right)^{2}-\frac{\hat{a}\left(\theta\right)^{T}\mathcal{H}_{B}\left(\mathbf{y}\right)\hat{a}\left(\theta\right)}{2B\left(\mathbf{y}\right)^{2}}\right]
       \\
       &+\frac{\eps^{3}}{3!}\int_{0}^{1}\left(1-u\right)^{3}\left[\beta\left(\mathbf{y},\theta,k\right)\right]^{\left(4\right)}\left(\eps u\right)du
\end{aligned}
\end{gather}
where $\hat{\avec}=\hat{\avec}\left(\theta\right)$ is defined by
\begin{align}
\label{1305200909}  % \ref{1305200909}
     &\hat{\avec}\left(\theta\right)=\left(\begin{array}{c}
\cos\left(\theta\right)\\
-\sin\left(\theta\right)\end{array}\right)
\end{align}
and where $\mathcal{H}_{B}$ is the Hessian Matrix of $B$.

\begin{remark}
\label{RemForRev1}   %   \ref{RemForRev1} 
Formula \eqref{DefExpKvOrder2V12}  can already be found in {Littlejohn \cite{littlejohn:1979}} but without estimation of the rest. 
In the present paper formula 
\eqref{DefExpKvOrder2V12}  gives an expansion in power of $\eps$ of a well defined diffeomorphism even though it is obtained in {Littlejohn \cite{littlejohn:1979}} by truncating 
a formal Hilbert expansion.
\end{remark}

%%%%%%%%%%%%%%%%%%%%%%%%%%%%%%%%%%%%%%%%%%%%%%%%%%%%%%%%%%%%%%%%%%%%%%
%%%%%%%%%%%%%%%%%%%%%%%%%%%%%%%%%%%%%%%%%%%%%%%%%%%%%%%%%%%%%%%%%%%%
%%%%%%%%%%%%%%%%%%%%%%%%%%%%%%%%%%%%%%%%%%%%%%%%%%%%%%%%%%%%%%%%%%%%
\subsection{Expression of the Hamiltonian function and the Darboux Matrix}
\label{ExpOfHamAndPoissMatInDarbCCSection}   %   \ref{ExpOfHamAndPoissMatInDarbCCSection} 
%%%%%%%%%%%%%%%%%%%%%%%%%%%%%%%%%%%%%%%%%%%%%%%%%%%%%%%%%%%%%%%%%%%%
%%%%%%%%%%%%%%%%%%%%%%%%%%%%%%%%%%%%%%%%%%%%%%%%%%%%%%%%%%%%%%%%%%%%
%%%%%%%%%%%%%%%%%%%%%%%%%%%%%%%%%%%%%%%%%%%%%%%%%%%%%%%%%%%%%%%%%%%%%%
\begin{theorem}
\label{ExpPoissMat} %\ref{ExpPoissMat}
The Poisson Matrix in the Darboux Coordinate System is given by
\begin{align} 
  \label{ExpressionDarbouxMatrix} % \eqref{ExpressionDarbouxMatrix}
\bar{\mathcal{P}}_{\eps}\left(\mathbf{y},\theta,k\right)=\left(\begin{array}{cccc}
0 & -\frac{\eps}{B\left(\mathbf{y}\right)} & 0 & 0\\
\frac{\eps}{B\left(\mathbf{y}\right)} & 0 & 0 & 0\\
0 & 0 & 0 & \frac{1}{\eps}\\
0 & 0 & -\frac{1}{\eps} & 0\end{array}\right).
 \end{align}
\end{theorem}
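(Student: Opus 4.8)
The plan is to invoke the change-of-coordinates rule for the Poisson Matrix (Appendix \ref{AppendixChangeOfCoordRulesForHamAndPM}) which, exactly as in \eqref{PMexpCC}, identifies each entry of $\bar{\mathcal{P}}_\eps$ in the Darboux chart with a Poisson bracket of coordinate functions taken in the polar-in-velocity chart:
\[
   (\bar{\mathcal{P}}_\eps)_{i,j} = \left\{ \boldsymbol{\Upsilon}_i, \boldsymbol{\Upsilon}_j \right\}_{\mathbf{x},\theta,v}.
\]
Five of the six independent entries are already available: multiplying \eqref{13WSWithInit}, \eqref{23WSWithInit}, \eqref{13WSWithInitAd}, \eqref{23WSWithInitAd} by $\omega_\eps$ gives $\left\{\boldsymbol{\Upsilon}_1,\boldsymbol{\Upsilon}_3\right\} = \left\{\boldsymbol{\Upsilon}_2,\boldsymbol{\Upsilon}_3\right\} = \left\{\boldsymbol{\Upsilon}_1,\boldsymbol{\Upsilon}_4\right\} = \left\{\boldsymbol{\Upsilon}_2,\boldsymbol{\Upsilon}_4\right\} = 0$, and \eqref{EqWithoutSingWithInit} likewise gives $\left\{\boldsymbol{\Upsilon}_3,\boldsymbol{\Upsilon}_4\right\} = \frac{1}{\eps}$. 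Thus every entry of \eqref{ExpressionDarbouxMatrix} is accounted for, up to skew-symmetry, except $(\bar{\mathcal{P}}_\eps)_{1,2} = \left\{\boldsymbol{\Upsilon}_1,\boldsymbol{\Upsilon}_2\right\}$, which I must identify with $-\frac{\eps}{B(\mathbf{y})}$.

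First I would establish that $\left\{\boldsymbol{\Upsilon}_1,\boldsymbol{\Upsilon}_2\right\}$, read in the Darboux coordinates, is a function of $\mathbf{y}$ alone. This is precisely the mechanism of Theorem \ref{KrTh}: the five brackets above put $\bar{\mathcal{P}}_\eps$ into the block form \eqref{HamSystSh1} with the constant corner $\mathcal{P}_{3,4} = 1/\eps$, so the identity \eqref{CPdrP} together with the Jacobi identity (applied to the triples $(\boldsymbol{\Upsilon}_1,\boldsymbol{\Upsilon}_2,\boldsymbol{\Upsilon}_3)$ and $(\boldsymbol{\Upsilon}_1,\boldsymbol{\Upsilon}_2,\boldsymbol{\Upsilon}_4)$) yields the conclusion \eqref{HamSystSh3}, that is $\partial_\theta (\bar{\mathcal{P}}_\eps)_{1,2} = \partial_k (\bar{\mathcal{P}}_\eps)_{1,2} = 0$. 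It therefore suffices to evaluate $(\bar{\mathcal{P}}_\eps)_{1,2}$ on the slice $v=0$, where the boundary conditions \eqref{BoundCond} force $\mathbf{y} = \mathbf{x}$.

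Second, I would take the limit $v \to 0$ in $\left\{\boldsymbol{\Upsilon}_1,\boldsymbol{\Upsilon}_2\right\} = (\nabla\boldsymbol{\Upsilon}_1)^T \tilde{\mathcal{P}}_\eps (\nabla\boldsymbol{\Upsilon}_2)$, using the low-order expansions from \eqref{IntermExpy10} and its $\boldsymbol{\Upsilon}_2$-analogue. These give $\partial_{x_1}\boldsymbol{\Upsilon}_1, \partial_{x_2}\boldsymbol{\Upsilon}_2 \to 1$, $\partial_v\boldsymbol{\Upsilon}_1 \to -\tfrac{\eps\cos(\theta)}{B(\mathbf{x})}$ and $\partial_v\boldsymbol{\Upsilon}_2 \to \tfrac{\eps\sin(\theta)}{B(\mathbf{x})}$, while the orders recorded in \eqref{EstimateDiffExp} show that $\partial_{x_2}\boldsymbol{\Upsilon}_1$, $\partial_\theta\boldsymbol{\Upsilon}_1$, $\partial_{x_1}\boldsymbol{\Upsilon}_2$ and $\partial_\theta\boldsymbol{\Upsilon}_2$ are all $O(v)$. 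Substituting the explicit $\tilde{\mathcal{P}}_\eps$ of \eqref{polarcoordvelPoiss}, the $O(v)$ derivatives exactly offset the $1/v$ growth of the singular entries, every surviving contribution is a multiple of $\frac{\eps\cos^2(\theta)}{B}$ or $\frac{\eps\sin^2(\theta)}{B}$, and they recombine to $-\frac{\eps}{B(\mathbf{x})}$. Since $\mathbf{y}=\mathbf{x}$ on this slice and $(\bar{\mathcal{P}}_\eps)_{1,2}$ depends only on $\mathbf{y}$, this yields $(\bar{\mathcal{P}}_\eps)_{1,2}(\mathbf{y}) = -\frac{\eps}{B(\mathbf{y})}$ and completes \eqref{ExpressionDarbouxMatrix}.

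The one genuinely delicate step is this last limit: because $\tilde{\mathcal{P}}_\eps$ is singular at $v=0$, one must track the exact vanishing orders of the derivatives of $\boldsymbol{\Upsilon}_1$ and $\boldsymbol{\Upsilon}_2$ (supplied by \eqref{EstimateDiffExp}) to ensure each singular product has a finite limit, and then check that the trigonometric terms recombine through $\cos^2(\theta) + \sin^2(\theta) = 1$. The vanishing of all $\theta$-dependence is forced in advance by the first step, so it doubles as a consistency check on the bracket computation.
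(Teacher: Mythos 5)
Your proposal is correct, and it reaches \eqref{ExpressionDarbouxMatrix} with the same ingredients as the paper but a genuinely different propagation step. The paper's proof of Theorem \ref{ExpPoissMat} uses only one Jacobi identity, the one involving $\boldsymbol{\Upsilon}_{\!3}$: dividing $\{\{\boldsymbol{\Upsilon}_{\!1},\boldsymbol{\Upsilon}_{\!2}\},\boldsymbol{\Upsilon}_{\!3}\}=0$ by $\omega_{\eps}$ shows that, for $v\neq0$, the bracket $\{\boldsymbol{\Upsilon}_{\!1},\boldsymbol{\Upsilon}_{\!2}\}$ solves the transport equation of \eqref{PDEPoissBrack12}; the same near-$v=0$ analysis you perform supplies the boundary value $-\eps/B(\mathbf{x})$; and uniqueness for that transport equation (the theory already developed for Theorem \ref{LemThmOub}) identifies the bracket globally with $-\eps\varphi$, which by \eqref{SolPDEPhi} equals $-\eps/B(\boldsymbol{\Upsilon}_{\!1},\boldsymbol{\Upsilon}_{\!2})$, i.e.\ $-\eps/B(\mathbf{y})$ in the Darboux chart. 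You replace the PDE-uniqueness step by the second Jacobi identity (the one with $\boldsymbol{\Upsilon}_{\!4}$), i.e.\ by the full mechanism of Theorem \ref{KrTh}, to conclude that the $(1,2)$ entry read in the Darboux chart depends on $\mathbf{y}$ alone, and you then pin it down by a single limit $v\to0$, where $\mathbf{y}\to\mathbf{x}$. Both routes are sound: the paper's buys an explicit global formula in the original variables and reuses its characteristics machinery, while yours is more symmetric with the Key Result and avoids solving any PDE, at the price of a continuity argument at the boundary of the Darboux domain (the slice $v=0$, equivalently $k=0$, is not in the chart, so ``evaluating on the slice'' must be read as the limit you indeed take, using that $(\bar{\mathcal{P}}_{\eps})_{1,2}$ is continuous in $\mathbf{y}$). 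One precision to make explicit: \eqref{EstimateDiffExp} records only that $\partial_{x_2}\boldsymbol{\Upsilon}_{\!1}$, $\partial_{\theta}\boldsymbol{\Upsilon}_{\!1}$, $\partial_{x_1}\boldsymbol{\Upsilon}_{\!2}$, $\partial_{\theta}\boldsymbol{\Upsilon}_{\!2}$ are $v$ times bounded functions, which only gives boundedness of the products $\frac{1}{v}\times O(v)$; to compute their limits (e.g.\ $\frac{1}{v}\partial_{\theta}\boldsymbol{\Upsilon}_{\!2}\to\eps\cos(\theta)/B(\mathbf{x})$) you need the leading coefficients, which come from the expansions \eqref{IntermExpy10} and \eqref{VarphiExpansion} that you also cite — so this is a matter of wording, not a gap; with it, the surviving terms are exactly $-\eps\cos^{2}(\theta)/B(\mathbf{x})$ and $-\eps\sin^{2}(\theta)/B(\mathbf{x})$, as you say, and the computation closes.
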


\begin{proof}

By construction, from formula \eqref{PMexpCC}, we know all  the Poisson Matrix entries, except its entry number $\left(1,2\right)$: $\left\{ \boldsymbol{\Upsilon}_{\!1},\boldsymbol{\Upsilon}_{\!2}\right\} _{\mathbf{x},\theta,v}\left(\boldsymbol{\kappa}\left(\mathbf{y},\theta,k\right)\right)$. Hence, the proof of Theorem 
\ref{ExpPoissMat} reduces to show that: 
\begin{align}
\left\{ \boldsymbol{\Upsilon}_{\!1},\boldsymbol{\Upsilon}_{\!2}\right\} _{\mathbf{x},\theta,v}\left(\mathbf{x},\theta,v\right)=-\frac{\eps}{B\left(\boldsymbol{\Upsilon}_{\!1}\left(\mathbf{x},\theta,v\right),\boldsymbol{\Upsilon}_{\!2}\left(\mathbf{x},\theta,v\right)\right)}.
\label{PoissEntry} %\eqref{PoissEntry}
\end{align}
For that purpose, we will identify the Poisson Bracket between $\boldsymbol{\Upsilon}_{\!1}$ and $\boldsymbol{\Upsilon}_{\!2}$ as the unique solution of the PDE of unknown $u$
\begin{gather}
 \label{PDEPoissBrack12} % \eqref{PDEPoissBrack12}
 \left\{
\begin{aligned}
    &-\eps\boldsymbol{\Lambda}^{1}\cdot u-\frac{\partial u}{\partial v}=0,
%    \label{PDEPoissBrack12} % \eqref{PDEPoissBrack12}
    \\
    &u\left(\mathbf{x},\theta,0\right)=\frac{-\eps}{B\left(\mathbf{x}\right)}.
%    \label{InitPoissBrack12} %\eqref{InitPoissBrack12} 
\end{aligned}
\right.
\end{gather}
In a first place, as function $\varphi$ defined by \eqref{SolPDEPhi} is the unique solution of \eqref{IntermediaryEq}, the unique solution of %\eqref{PDEPoissBrack12}--\eqref{InitPoissBrack12} 
\eqref{PDEPoissBrack12} is given by
\begin{align}
   &u\left(\mathbf{x},\theta,v\right)=-\eps\varphi\left(\mathbf{x},\theta,v\right);
   \label{SolPDEPoissBrack12} % \eqref{SolPDEPoissBrack12}
\end{align}
\textit{i.e.} by \eqref{PoissEntry}.

On another hand as for any $v\neq0,\ \left\{ \boldsymbol{\Upsilon}_{\!3},\boldsymbol{\Upsilon}_{\!1}\right\} _{\mathbf{x},\theta,v}=0$ and $\left\{ \boldsymbol{\Upsilon}_{\!2},\boldsymbol{\Upsilon}_{\!3}\right\} _{\mathbf{x},\theta,v}=0$, the Jacobi identity ensures that 
  \begin{align}
 \label{DsDim394} % \eqref{DsDim394}
\forall v\neq0,\ \left\{ \left\{ \boldsymbol{\Upsilon}_{\!1},\boldsymbol{\Upsilon}_{\!2}\right\} ,\boldsymbol{\Upsilon}_{\!3}\right\} _{\mathbf{x},\theta,v}=0.
 \end{align}   
 Hence, dividing \eqref{DsDim394} by $ \omega_{\eps}(\xvec,v)$, we obtain that for $v\neq0$, $\left\{ \boldsymbol{\Upsilon}_{\!1},\boldsymbol{\Upsilon}_{\!2}\right\}$ is solution of \eqref{PDEPoissBrack12}. 
 Using now the same method as when proving Theorem  \ref{AdditionnalEqCheckTheorem}, we obtain
 \begin{align}
 \left\{ \boldsymbol{\Upsilon}_{\!1},\boldsymbol{\Upsilon}_{\!2}\right\} _{\mathbf{x},\theta,v}\left(\mathbf{x},\theta,v\right)=-\frac{\eps}{B\left(\mathbf{x}\right)}+v\epsilon_{y_{1},y_{2}}\left(\mathbf{x},\theta,v\right),
 \end{align}
 with $\epsilon_{y_{1},y_{2}}\left(\mathbf{x},\theta,v\right)$ such that 
 for any  $\left(\mathbf{x},\theta\right),\ v\mapsto\epsilon_{y_{1},y_{2}}\left(\mathbf{x},\theta,v\right)$ is bounded in the neighborhood of $v=0$ and consequently that 
$\left\{ \boldsymbol{\Upsilon}_{\!1},\boldsymbol{\Upsilon}_{\!2}\right\} _{\mathbf{x},\theta,v}\left(\mathbf{x},\theta,0\right)=\frac{-\eps}{B\left(\mathbf{x}\right)}$.\\

As a conclusion, $\left\{ \boldsymbol{\Upsilon}_{\negmedspace 1},\boldsymbol{\Upsilon}_{\negmedspace 2}\right\} _{\mathbf{x},\theta,v}=u$, and $u$ is given by \eqref{PoissEntry}. Hence the Theorem is proven.
\end{proof}
In the sequel, we will denote by $\left\{ f,g\right\} _{\negmedspace\mathcal{D}}$ the Poisson bracket expressed in the Darboux coordinate system, \textit{i.e.}
for any smooth functions $f=f\!\left(\mathbf{y},\theta,k\right)$ and $g=g\!\left(\mathbf{y},\theta,k\right)$ :
\begin{align}
      &\left\{ f,g\right\} _{\negmedspace\mathcal{D}}=\left(\nabla_{\left(\mathbf{y},\theta,k\right)}f\right)\cdot\left(\bar{\mathcal{P}}_{\eps}\nabla_{\left(\mathbf{y},\theta,k\right)}g\right).
\end{align}

In the Darboux Coordinate System, the Hamiltonian function is given by $\bar{H}_{\eps}\left(\mathbf{y},\theta,k\right)=\tilde{H}_{\eps}\left(\boldsymbol{\kappa}\left(\mathbf{y},\theta,k\right)\right)$. Since $\tilde{H}_{\eps}\left(\mathbf{x},\theta,v\right)=\frac{v^{2}}{2}$, we have 
\begin{align}
\bar{H}_{\eps}\left(\mathbf{y},\theta,k\right)=\frac{\boldsymbol{\kappa}_{v}^{2}\left(\mathbf{y},\theta,k\right)}{2}. 
\label{HamDarb}    % \eqref{HamDarb}
\end{align}
Hence, according to Theorem \ref{ThExpKv}, Hamiltonian function $\bar{H}_{\eps}$ is regular with
respect to $\eps$ on $\mathbb{R}_+$ and it admits an expansion in power of $\eps.$
More precisely, using expansion \eqref{ExpKv}, we obtain the following corollaries.

\begin{corollary}
\label{HamDarbExp}    %   \eqref{HamDarbExp} 
The Hamiltonian function in the Darboux Coordinate System admits the following expansion in power of $\eps$:
\begin{align}
 \label{DsDim423} %  \eqref{DsDim423}
\bar{H}_{\eps}\left(\mathbf{y},\theta,k\right)=\bar{H}_{0}\left(\mathbf{y},k\right)+\overset{N}{\underset{n=1}{\sum}}\eps^{n}\bar{H}_{n}\left(\mathbf{y},\theta,k\right)+\eps^{N+1}\iota_{N+1}\left(\eps,\mathbf{y},\theta,k\right),
\end{align}
where function $\iota_{N+1}$ is in $\mathcal{C}_{\PerND}^{\infty}(\rit_+\times\rit^2\times\rit\times(0,+\infty))$.
Moreover, for any $n\in\left\{ 1,\ldots,N\right\} $ there exists a function $b_{n}\in\mathcal{O}_{T,b}^{\infty}$ such that 
\begin{align}
      &\bar{H}_{n}\left(\mathbf{y},\theta,k\right)=\sqrt{k}^{n+2}b_{n}\left(\mathbf{y},\theta\right).
\end{align}
\end{corollary}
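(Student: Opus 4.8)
The plan is to substitute the expansion \eqref{ExpKv} of $\boldsymbol{\kappa}_v$ into the identity \eqref{HamDarb}, $\bar{H}_{\eps}=\frac{1}{2}\boldsymbol{\kappa}_v^{2}$, and to read off the coefficients of the powers of $\eps$ by squaring and regrouping. First I would apply Theorem \ref{ThExpKv} with $n=N$, writing
\begin{align*}
\boldsymbol{\kappa}_v(\yvec,\theta,k)=\underset{i=0}{\overset{N}{\sum}}\sqrt{k}^{\,i+1}\,a_{i+1}(\yvec,\theta)\,\frac{\eps^{i}}{(i+1)!}+\eps^{N+1}\mathcal{R}_N(\eps,\yvec,\theta,k),
\end{align*}
where, by \eqref{Born0}, the remainder $\mathcal{R}_N$ is smooth on $\rit_{+}$ (in particular across $\eps=0$) and $2\pi$-periodic in $\theta$. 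Squaring this splits $\boldsymbol{\kappa}_v^{2}$ into the square of the polynomial part, a cross term carrying a factor $\eps^{N+1}$, and a term with prefactor $\eps^{2N+2}$.

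Next I would expand the square of the polynomial part as $\frac{1}{2}\sum_{m=0}^{2N}\eps^{m}\sum_{i+j=m}c_i c_j$ with $c_i=\sqrt{k}^{\,i+1}a_{i+1}/(i+1)!$, and cut the outer sum at $m=N$. For $0\le m\le N$ the coefficient of $\eps^{m}$ is
\begin{align*}
\bar{H}_m(\yvec,\theta,k)=\tfrac{1}{2}\sqrt{k}^{\,m+2}\underset{i+j=m}{\sum}\frac{a_{i+1}(\yvec,\theta)\,a_{j+1}(\yvec,\theta)}{(i+1)!\,(j+1)!}=:\sqrt{k}^{\,m+2}\,b_m(\yvec,\theta).
\end{align*}
Because each $a_l$ belongs to $\mathcal{O}_{T,b}^{\infty}$ by Lemma \ref{SuccessiveDerivativesAt0OfBetaSpace} and $\mathcal{O}_{T,b}^{\infty}$ is an algebra, $b_m\in\mathcal{O}_{T,b}^{\infty}$, which is exactly the stated form of $\bar{H}_n$ for $n\in\{1,\dots,N\}$. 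For $\bar{H}_0$ I would establish $\theta$-independence separately: comparing the $i=0$ term of the expansion with the zeroth-order term of \eqref{DefExpKvOrder2V12} gives $a_1(\yvec,\theta)=\sqrt{2B(\yvec)}$, which does not depend on $\theta$, so that $\bar{H}_0=\frac{1}{2}k\,a_1^{2}=k\,B(\yvec)$ depends only on $(\yvec,k)$.

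Finally I would collect all the leftover contributions, namely the monomials with $N+1\le m\le 2N$ from the polynomial square, the cross term, and the $\eps^{2N+2}$ term, each of which carries a common factor $\eps^{N+1}$, into a single remainder $\eps^{N+1}\iota_{N+1}$. The only point requiring genuine care, and hence the main (if mild) obstacle, is to verify that $\iota_{N+1}\in\mathcal{C}_{\PerND}^{\infty}(\rit_{+}\times\rit^{2}\times\rit\times(0,+\infty))$, i.e. smoothness up to $\eps=0$ together with $2\pi$-periodicity in $\theta$. This follows because $\iota_{N+1}$ is a finite combination of products of the functions $a_l(\yvec,\theta)$ (smooth and $2\pi$-periodic in $\theta$), of integer powers of $\sqrt{k}$ (smooth on $(0,+\infty)$), and of $\mathcal{R}_N$ (smooth on $\rit_{+}$ and $2\pi$-periodic in $\theta$ by \eqref{Born0}), and both smoothness and $\theta$-periodicity are stable under such finite sums and products. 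The bulk of the remaining work is then purely the bookkeeping of which monomials in the square exceed order $N$.
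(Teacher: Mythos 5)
Your proof is correct and takes essentially the same approach as the paper: the paper obtains Corollary \ref{HamDarbExp} precisely by inserting expansion \eqref{ExpKv} of $\boldsymbol{\kappa}_v$ into $\bar{H}_{\eps}=\frac{1}{2}\boldsymbol{\kappa}_{v}^{2}$ (formula \eqref{HamDarb}) and regrouping powers of $\eps$, with the regularity and periodicity of the remainder coming from \eqref{Born0}. Your squaring and bookkeeping, including the algebra property of $\mathcal{O}_{T,b}^{\infty}$ for the coefficients $b_{n}$ and the identification $a_{1}=\sqrt{2B}$ giving the $\theta$-independence of $\bar{H}_{0}$, simply makes explicit the details the paper leaves to the reader.
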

\begin{remark}
Notice that expansion \eqref{ExpKv}, where the coefficients $a_i$ are computed by using 
the algorithm given in Remark \ref{AlgoToComputeTheExpTermOfBeta}, 
 and Formula \eqref{HamDarb} give a constructive way to compute the $b_n$ and consequently the $\bar{H}_{n}$.
\end{remark}
For instance up to order $2$ we obtain:

\begin{corollary}
\label{HamDarbOrderTwo}    %   \eqref{HamDarbOrderTwo}  
The Hamiltonian function in the Darboux Coordinate System admits, up to order 2, the following expansion in power of $\eps$:
\begin{gather}
 \label{DsDim422} % \eqref{DsDim422}
\begin{aligned}
\bar{H}_{\eps}\left(\mathbf{y},\theta,k\right)
&=B\left(\mathbf{y}\right)k+\eps\frac{\hat{\avec}\left(\theta\right)\cdot\nabla_{\mathbf{x}}B\left(\mathbf{y}\right)}{3B\left(\mathbf{y}\right)^{2}}\left(2B\left(\mathbf{y}\right)k\right)^{\frac{3}{2}}
\\
&+\eps^{2}\frac{\left(2B\left(\mathbf{y}\right)k\right)^{2}}{24B\left(\mathbf{y}\right){}^{2}}\left[-\hat{\avec}\left(\theta\right)\cdot\nabla_{\mathbf{x}}B\left(\mathbf{y}\right)+3B\left(\mathbf{y}\right)\hat{\avec}\left(\theta\right)^{T}\mathcal{H}_{B}\left(\mathbf{y}\right)\hat{\avec}\left(\theta\right)\right]
\\
&+\eps^{3}\iota_{3}\left(\mathbf{y},\theta,k,\eps\right),
\end{aligned}
\end{gather}
where  $\hat{\avec}$ is defined by \eqref{1305200909}, function $\iota_3$ is in $\mathcal{C}_{\#,3}^{\infty}(\rit^2\times\rit\times(0,+\infty)\times\rit_+)$,
and where $\mathcal{H}_{B}$ stands for the Hessian matrix associated with $B$.
\end{corollary}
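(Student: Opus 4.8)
The plan is to feed the order-two expansion of the $v$-component $\boldsymbol{\kappa}_v$ of $\boldsymbol{\kappa}=\boldsymbol{\Upsilon}^{-1}$ into the relation \eqref{HamDarb}, namely $\bar{H}_\eps=\tfrac12\boldsymbol{\kappa}_v^{2}$, and to collect powers of $\eps$. Writing \eqref{DefExpKvOrder2V12} compactly as
\[
\boldsymbol{\kappa}_{v}\left(\mathbf{y},\theta,k\right)=c_{0}+\eps c_{1}+\eps^{2}c_{2}+\eps^{3}\varrho\left(\mathbf{y},\theta,k,\eps\right),
\]
with $c_{0}=\sqrt{2kB\left(\mathbf{y}\right)}$, with $c_{1},c_{2}$ the explicit coefficients read off from \eqref{DefExpKvOrder2V12}, and with $\varrho$ the integral remainder appearing there, squaring yields
\[
\bar{H}_{\eps}=\tfrac12 c_{0}^{2}+\eps\,c_{0}c_{1}+\eps^{2}\left(\tfrac12 c_{1}^{2}+c_{0}c_{2}\right)+\eps^{3}\iota_{3}.
\]
Hence the coefficients are identified as $\bar{H}_{0}=\tfrac12 c_{0}^{2}$, $\bar{H}_{1}=c_{0}c_{1}$, and $\bar{H}_{2}=\tfrac12 c_{1}^{2}+c_{0}c_{2}$, and it remains to evaluate these three quantities.

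Then I would evaluate each coefficient. The zeroth one is immediate: $\bar{H}_{0}=\tfrac12\left(2kB\right)=B\left(\mathbf{y}\right)k$, which in particular does not depend on $\theta$, in agreement with Corollary \ref{HamDarbExp}. For $\bar{H}_{1}=c_{0}c_{1}$ and for the cross term $c_{0}c_{2}$ entering $\bar{H}_{2}$, the one manipulation that needs care is the product of the two square roots carried by $c_{0}$ and by $c_{2}$: since $c_{0}=\sqrt{2kB}$ and the $\eps^{2}$-coefficient in \eqref{DefExpKvOrder2V12} carries a factor $\sqrt{kB/2}$, one uses $\sqrt{2kB}\,\sqrt{kB/2}=kB$ to clear the radicals, turning $c_{0}c_{2}$ into a genuine polynomial in $k$ of degree two. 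Combining the $\big(\hat{\avec}\left(\theta\right)\cdot\nabla_{\mathbf{x}}B\left(\mathbf{y}\right)\big)^{2}$ contribution of $c_{0}c_{2}$ with the one produced by $\tfrac12 c_{1}^{2}$ — these two partially cancel — and keeping the $\hat{\avec}\left(\theta\right)^{T}\mathcal{H}_{B}\left(\mathbf{y}\right)\hat{\avec}\left(\theta\right)$ contribution, one recovers exactly the bracket in \eqref{DsDim422}, with $\hat{\avec}$ as in \eqref{1305200909}. This is pure algebra once \eqref{DefExpKvOrder2V12} is on hand, and the assembling of the two $\big(\hat{\avec}\cdot\nabla_{\mathbf{x}}B\big)^{2}$ terms is the only place where a numerical coefficient must be tracked attentively.

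Finally, I would treat the remainder. Everything of order $\eps^{3}$ and higher generated by the squaring — the tail $\eps^{3}\big(c_{1}c_{2}+\cdots\big)$ coming from the polynomial part, the cross terms $c_{0}\varrho$, $\eps c_{1}\varrho$, $\eps^{2}c_{2}\varrho$, and the square $\tfrac12\eps^{3}\varrho^{2}$ — is gathered into the single term $\eps^{3}\iota_{3}$. Its smoothness and its $2\pi$-periodicity in $\theta$, i.e. its membership in $\mathcal{C}_{\#,3}^{\infty}\!\left(\rit^{2}\times\rit\times(0,+\infty)\times\rit_{+}\right)$, follow from the corresponding regularity \eqref{Born0} of the integral remainder $\varrho$ provided by Theorem \ref{ThExpKv}, together with the smoothness and boundedness of $B$ and its derivatives. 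Alternatively, one may simply invoke Corollary \ref{HamDarbExp} with $N=2$, which already supplies both the expansion structure \eqref{DsDim423} and the regularity of $\iota_{3}$; along that route the statement reduces precisely to the explicit computation of $\bar{H}_{0},\bar{H}_{1},\bar{H}_{2}$ above. I expect no genuine obstacle: the whole proof is the substitution of \eqref{DefExpKvOrder2V12} into \eqref{HamDarb} followed by a finite, if slightly delicate, collection of terms, the only subtle point being the coefficient bookkeeping in $\bar{H}_{2}$.
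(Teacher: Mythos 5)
Your proposal is correct and is essentially the paper's own route: the paper presents Corollary \ref{HamDarbOrderTwo} as the order-$2$ instance of Corollary \ref{HamDarbExp}, obtained constructively by substituting the expansion \eqref{DefExpKvOrder2V12} of $\boldsymbol{\kappa}_{v}$ into $\bar{H}_{\eps}=\boldsymbol{\kappa}_{v}^{2}/2$ (formula \eqref{HamDarb}) and collecting powers of $\eps$, the regularity of $\iota_{3}$ being inherited from Theorem \ref{ThExpKv}. Your handling of the remainder terms, the simplification $\sqrt{2kB}\,\sqrt{kB/2}=kB$, and the partial cancellation of the $\bigl(\hat{\avec}\cdot\nabla_{\mathbf{x}}B\bigr)^{2}$ contributions in $\bar{H}_{2}$ are exactly the coefficient bookkeeping the paper leaves implicit.
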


\begin{remark}
In expression \eqref{DsDim423}, there is an important fact for the setting out of the to come Lie Transform based Method: the first term is independent of $\theta$.
\end{remark}

\begin{remark}
\label{RemForRev2}   %   \ref{RemForRev2} 
Formula \eqref{DsDim422} can also be found in a formal way in {Littlejohn \cite{littlejohn:1979}}. 
\end{remark}

%%%%%%%%%%%%%%%%%%%%%%%%%%%%%%%%%%%%%%%%%%%%%%%%%%%%%%%%%%%%%%%%%%%%
%%%%%%%%%%%%%%%%%%%%%%%%%%%%%%%%%%%%%%%%%%%%%%%%%%%%%%%%%%%%%%%%%%%%
%%%%%%%%%%%%%%%%%%%%%%%%%%%%%%%%%%%%%%%%%%%%%%%%%%%%%%%%%%%%%%%%%%%%
%%%%%%%%%%%%%%%%%%%%%%%%%%%%%%%%%%%%%%%%%%%%%%%%%%%%%%%%%%%%%%%%%%%%
\subsection{Trajectory localization in the Darboux Coordinate System}
\label{ProofOfFirstPartOfMainThm2}    %   \ref{ProofOfFirstPartOfMainThm2} 
%%%%%%%%%%%%%%%%%%%%%%%%%%%%%%%%%%%%%%%%%%%%%%%%%%%%%%%%%%%%%%%%%%%%
%%%%%%%%%%%%%%%%%%%%%%%%%%%%%%%%%%%%%%%%%%%%%%%%%%%%%%%%%%%%%%%%%%%%
%%%%%%%%%%%%%%%%%%%%%%%%%%%%%%%%%%%%%%%%%%%%%%%%%%%%%%%%%%%%%%%%%%%%
%%%%%%%%%%%%%%%%%%%%%%%%%%%%%%%%%%%%%%%%%%%%%%%%%%%%%%%%%%%%%%%%%%%%

Subsequently,  we will denote by 
$(Y_{1}^{\eps}, Y_{2}^{\eps}, \Theta_{\boldsymbol{\mathfrak{Dar}}}^{\eps},  \boldsymbol{\mathcal{K}}_{\boldsymbol{\mathfrak{Dar}}}^{\eps})(t;\mathbf{y},\theta,k)$
the trajectories of the dynamical system expressed in the Darboux Coordinates
and by $(\mathbf{X}_{\boldsymbol{\mathfrak{Pol}}}^{\eps},\Theta^{\eps},\mathcal{V}^{\eps})(t;\mathbf{x},\theta,v)$ their expressions in  
the Polar in velocity Coordinate System. 
 \begin{lemma}
 \label{ThmainresultCharDarbKDarb} % \ref{ThmainresultCharDarbKDarb}
Let $\left[a,b\right]$ be an interval such that $\left[a,b\right]\subset\left(0,+\infty\right)$.
Then, for any $\left(\mathbf{y},\theta\right)\in\mathbb{R}^{3}$ and for any $v\in\left[a,b\right]$,
$\boldsymbol{\Upsilon}_{\! 4}\!\left(\mathbf{x},\theta,v\right)\in\left[\frac{a^{2}}{2\left\Vert B\right\Vert _{\infty}},\frac{b^{2}}{2}\right]$.
Moreover,
for any initial condition $\left(\mathbf{y},\theta,k\right)\in\boldsymbol{\Upsilon}\!\!\left(\mathbb{R}^{3}\times\left[a,b\right]\right)$,
for any $\eps\in\left(0,+\infty\right),$ and for any 
$t\in\mathbb{R},$ $\boldsymbol{\mathcal{K}}_{\boldsymbol{\mathfrak{Dar}}}^{\eps}\left(t;\mathbf{y},\theta,k\right)\in\left[\frac{a^{2}}{2\left\Vert B\right\Vert _{\infty}},\frac{b^{2}}{2}\right]$. 
\\
\end{lemma}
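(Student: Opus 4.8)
The plan is to split the statement into its two assertions and to derive the second from the first by means of a conservation law, so that the whole proof rests on one uniform estimate plus the naturality of Hamiltonian trajectories under $\boldsymbol{\Upsilon}$.

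For the first assertion I would start from the integral representation of $\boldsymbol{\Upsilon}_{\!4}$ obtained in \eqref{NumTrans2}, namely
\[
\boldsymbol{\Upsilon}_{\!4}\left(\mathbf{x},\theta,v\right)=\int_{0}^{v}\frac{v-u}{B\left(\mathcal{G}^1_{-\eps u}\left(\mathbf{x},\theta\right),\mathcal{G}^2_{-\eps u}\left(\mathbf{x},\theta\right)\right)}\,du,
\]
and use the standing assumption $\inf B>1$ together with the definition of $\left\Vert B\right\Vert_{\infty}$ to bound the denominator pointwise by $1\leq B(\cdot)\leq\left\Vert B\right\Vert_{\infty}$, uniformly in $\left(\mathbf{x},\theta\right)$ and in $\eps$. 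Since $v-u\geq0$ on $\left[0,v\right]$ and $\int_{0}^{v}(v-u)\,du=\frac{v^{2}}{2}$, this gives at once $\frac{v^{2}}{2\left\Vert B\right\Vert_{\infty}}\leq\boldsymbol{\Upsilon}_{\!4}\left(\mathbf{x},\theta,v\right)\leq\frac{v^{2}}{2}$. Restricting $v$ to $\left[a,b\right]$ and using the monotonicity of $v\mapsto v^{2}$ on $\left(0,+\infty\right)$ yields the claimed membership in $\left[\frac{a^{2}}{2\left\Vert B\right\Vert_{\infty}},\frac{b^{2}}{2}\right]$. The essential point I would stress is that this two-sided bound is \emph{uniform} in both $\left(\mathbf{x},\theta\right)$ and $\eps$, since that uniformity is precisely what the trajectory estimate will consume.

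For the second assertion the crucial observation is that the polar velocity coordinate is conserved along the flow. Indeed, the Hamiltonian in the polar coordinates is $\tilde{H}_{\eps}\left(\mathbf{x},\theta,v\right)=\frac{v^{2}}{2}$ by \eqref{polarcoordvelham}; being autonomous it is constant along trajectories, and since $v\mapsto\frac{v^{2}}{2}$ is strictly increasing on $\left(0,+\infty\right)$ this forces $\mathcal{V}^{\eps}\left(t;\mathbf{x},\theta,v\right)=v$ for all $t$. (Equivalently, $\left|\mathbf{V}\right|^{2}$ is conserved by \eqref{systeqs1}--\eqref{systeqs2} because $\mathbf{V}\cdot\mathbf{V}^{\perp}=0$.) Because the Hamiltonian system is intrinsic and $\boldsymbol{\Upsilon}$ is a diffeomorphism, the Darboux trajectory is the image under $\boldsymbol{\Upsilon}$ of the polar trajectory sharing the corresponding initial datum (the change-of-coordinates rules of Appendix \ref{AppendixChangeOfCoordRulesForHamAndPM}); reading off the fourth component and using $\mathcal{V}^{\eps}(t)=v$ I obtain
\[
\boldsymbol{\mathcal{K}}_{\boldsymbol{\mathfrak{Dar}}}^{\eps}\left(t;\mathbf{y},\theta,k\right)=\boldsymbol{\Upsilon}_{\!4}\big(\mathbf{X}_{\boldsymbol{\mathfrak{Pol}}}^{\eps}(t),\Theta^{\eps}(t),\mathcal{V}^{\eps}(t)\big)=\boldsymbol{\Upsilon}_{\!4}\big(\mathbf{X}_{\boldsymbol{\mathfrak{Pol}}}^{\eps}(t),\Theta^{\eps}(t),v\big),
\]
where the initial datum is written as $\left(\mathbf{y},\theta,k\right)=\boldsymbol{\Upsilon}\left(\mathbf{x},\theta,v\right)$ with $v\in\left[a,b\right]$.

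To finish I would invoke the first assertion, whose bound is valid for \emph{every} spatial argument, to conclude that $\boldsymbol{\mathcal{K}}_{\boldsymbol{\mathfrak{Dar}}}^{\eps}\left(t;\mathbf{y},\theta,k\right)\in\left[\frac{v^{2}}{2\left\Vert B\right\Vert_{\infty}},\frac{v^{2}}{2}\right]\subseteq\left[\frac{a^{2}}{2\left\Vert B\right\Vert_{\infty}},\frac{b^{2}}{2}\right]$, uniformly in $t\in\mathbb{R}$ and $\eps\in\left(0,+\infty\right)$. The only genuinely substantive ingredient is the conservation of $v$; the rest is the uniform bound of the first part combined with the fact that $\boldsymbol{\Upsilon}$ carries trajectories to trajectories. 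Accordingly, I expect the main obstacle to be mere bookkeeping, namely correctly tracing the Darboux initial datum $\left(\mathbf{y},\theta,k\right)$ back through $\boldsymbol{\kappa}=\boldsymbol{\Upsilon}^{-1}$ to a polar datum whose velocity component lies in $\left[a,b\right]$, so that the first assertion is applied with the matching value of $v$.
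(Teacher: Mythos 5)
Your proposal is correct and follows essentially the same route as the paper: the paper likewise derives the two-sided bound $\frac{v^{2}}{2\left\Vert B\right\Vert_{\infty}}\leq\boldsymbol{\Upsilon}_{\!4}\leq\frac{v^{2}}{2}$ from the integral representation of $\boldsymbol{\Upsilon}_{\!4}$ together with $1\leq B\leq\left\Vert B\right\Vert_{\infty}$ (working with $\psi$ rather than the equivalent single-integral form \eqref{NumTrans2}), then uses $\mathcal{V}^{\eps}(t)=v$ and the identity $\boldsymbol{\mathcal{K}}_{\boldsymbol{\mathfrak{Dar}}}^{\eps}(t;\mathbf{y},\theta,k)=\boldsymbol{\Upsilon}_{\!4}\big(\mathbf{X}_{\boldsymbol{\mathfrak{Pol}}}^{\eps}(t),\Theta^{\eps}(t),\boldsymbol{\kappa}_{v}(\mathbf{y},\theta,k)\big)$ with $\boldsymbol{\kappa}_{v}(\mathbf{y},\theta,k)\in[a,b]$ to conclude. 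Your justification of the conservation of $v$ via the Hamiltonian (or $\mathbf{V}\cdot\mathbf{V}^{\perp}=0$) is a harmless elaboration of the paper's direct observation that $\partial\mathcal{V}^{\eps}/\partial t=0$.
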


\begin{lemma}
 \label{ThmainresultCharDarbYDarb} % \ref{ThmainresultCharDarbYDarb} 
Let $\left[a,b\right]$ be an interval such that $\left[a,b\right]\subset\left(0,+\infty\right)$ and $T$ be a positive real number. Then, for any initial condition 
$(\mathbf{x},\theta,v)\in\rit^3\times[a,b]$
and for any $t\in[0,T]$, we have
\begin{gather}
\left\{
\begin{aligned}
     &\boldsymbol{\Upsilon}_{\!1}\big(\mathbf{X}_{\boldsymbol{\mathfrak{Pol}}}^{\eps}\left(t,\mathbf{x},\theta,v\right),\Theta^{\eps}\left(t,\mathbf{x},\theta,v\right),v\big)=\boldsymbol{\Upsilon}_{\!1}\left(\mathbf{x},\theta,v\right)+\boldsymbol{\rho}_{1}\left(t;\eps;\mathbf{x},\theta,v\right),
     \\
     &\boldsymbol{\Upsilon}_{\!2}\big(\mathbf{X}_{\boldsymbol{\mathfrak{Pol}}}^{\eps}\left(t,\mathbf{x},\theta,v\right),\Theta^{\eps}\left(t,\mathbf{x},\theta,v\right),v\big)=\boldsymbol{\Upsilon}_{\!2}\left(\mathbf{x},\theta,v\right)+\boldsymbol{\rho}_{2}\left(t;\eps;\mathbf{x},\theta,v\right),
\end{aligned}
\right.
\end{gather}
where $\boldsymbol{\rho}_{1}$ and $\boldsymbol{\rho}_{2}$ satisfy
\begin{align}
   &\left|\boldsymbol{\rho}_{i}\left(t;\eps;\mathbf{x},\theta,v\right)\right|\leq T\left|\eps\right|b^{2}\underset{(\mathbf{x},\theta)\in\rit^{3}}{\sup}\left|\frac{\hat{c}\left(\theta\right)\cdot\nabla_{\mathbf{x}}B\left(\mathbf{x}\right)}{B\left(\mathbf{x}\right)}\right|+\eps^{2}b^{2}\left\Vert \boldsymbol{\Lambda}\cdot\frac{1}{B}\right\Vert _{\infty}.
\end{align}
 \end{lemma}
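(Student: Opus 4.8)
The plan is to reduce the whole estimate to a single differentiation along the trajectory, after splitting $\boldsymbol{\Upsilon}_{\!1}$ into a part whose time derivative is $O(\eps)$ and a remainder that is already $O(\eps^2)$ pointwise. First I record that, in the Polar in velocity coordinates, the Hamiltonian is $\tilde{H}_{\eps}=\tfrac{v^{2}}{2}$ (see \eqref{polarcoordvelham}), so the equations of motion are read off \eqref{polarcoordvelPoiss} as $\partial_{t}\mathbf{X}_{\boldsymbol{\mathfrak{Pol}}}^{\eps}=(-\mathcal{V}^{\eps}\sin\Theta^{\eps},-\mathcal{V}^{\eps}\cos\Theta^{\eps})$, $\partial_{t}\Theta^{\eps}=B(\mathbf{X}_{\boldsymbol{\mathfrak{Pol}}}^{\eps})/\eps$ and $\partial_{t}\mathcal{V}^{\eps}=0$. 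In particular $\mathcal{V}^{\eps}(t;\mathbf{x},\theta,v)=v$ stays constant, so the left-hand sides of the lemma are exactly $\boldsymbol{\Upsilon}_{\!1}$ and $\boldsymbol{\Upsilon}_{\!2}$ evaluated along the genuine flow, $\boldsymbol{\rho}_{i}$ being by definition the increment of $\boldsymbol{\Upsilon}_{\!i}$ between times $0$ and $t$. The flow exists on all of $[0,T]$ because the right-hand side above is bounded in $\mathbf{X}$.

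Next I use the order-one Lie expansion \eqref{IntermExpy10}, which writes $\boldsymbol{\Upsilon}_{\!1}=\mathbf{y}_{1}^{(0)}+\eps^{2}R_{1}$ with $\mathbf{y}_{1}^{(0)}(\mathbf{x},\theta,v)=x_{1}-\tfrac{\eps v\cos\theta}{B(\mathbf{x})}$ and $R_{1}(\mathbf{x},\theta,v)=\cos\theta\int_{0}^{v}(v-u)\big(\boldsymbol{\Lambda}\cdot\tfrac{1}{B}\big)(\mathcal{G}_{-\eps u}(\mathbf{x},\theta))\,du$. Since $|R_{1}|\le\tfrac{v^{2}}{2}\|\boldsymbol{\Lambda}\cdot\tfrac{1}{B}\|_{\infty}$ uniformly in $(\mathbf{x},\theta)$, and since $v\le b$ along the trajectory, the contribution $\eps^{2}\big[R_{1}(\mathbf{X}_{\boldsymbol{\mathfrak{Pol}}}^{\eps}(t),\Theta^{\eps}(t),v)-R_{1}(\mathbf{x},\theta,v)\big]$ is bounded by $\eps^{2}b^{2}\|\boldsymbol{\Lambda}\cdot\tfrac{1}{B}\|_{\infty}$. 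This is precisely the second term of the claimed estimate, and the point is that it is obtained by evaluating $R_{1}$ at the two endpoints, not by integrating a derivative, so it carries no factor $T$.

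It then remains to bound the increment of $\mathbf{y}_{1}^{(0)}$, which I control by differentiating it along the flow with the chain rule and substituting the equations of motion above. The two $O(1)$ contributions, namely $-v\sin\Theta^{\eps}$ coming from $\partial_{x_{1}}\mathbf{y}_{1}^{(0)}$ and $+v\sin\Theta^{\eps}$ coming from $\partial_{\theta}\mathbf{y}_{1}^{(0)}\cdot(B/\eps)$, cancel exactly; what survives is $\tfrac{d}{dt}\mathbf{y}_{1}^{(0)}=-\tfrac{\eps v^{2}\cos\theta}{B^{2}}\big(\sin\theta\,\partial_{x_{1}}B+\cos\theta\,\partial_{x_{2}}B\big)$, all evaluated along the trajectory. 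Bounding $|\cos\theta|\le 1$, using $B\ge 1$ to replace $1/B^{2}$ by $1/B$, and $v\le b$, this is at most $\eps b^{2}\sup_{(\mathbf{x},\theta)}\big|\hat{c}(\theta)\cdot\nabla_{\mathbf{x}}B(\mathbf{x})/B(\mathbf{x})\big|$ with the trigonometric vector $\hat{c}(\theta)=(\sin\theta,\cos\theta)$. Integrating over $[0,t]\subset[0,T]$ gives the first term of the estimate, and adding the two contributions yields the bound for $\boldsymbol{\rho}_{1}$. The estimate for $\boldsymbol{\rho}_{2}$ follows in the same way from \eqref{DefUps2}: the leading part is $\mathbf{y}_{2}^{(0)}=x_{2}+\tfrac{\eps v\sin\theta}{B}$, the analogous cancellation occurs between the $\pm v\cos\Theta^{\eps}$ terms, and the surviving derivative and the $\eps^{2}$ remainder admit the identical majorant.

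The main obstacle—and really the only non-mechanical point—is the exact cancellation of the two $O(1)$ terms in $\tfrac{d}{dt}\mathbf{y}_{1}^{(0)}$; without it the derivative would be $O(1/\eps)$ and no uniform-in-time estimate could hold. This cancellation reflects that $\mathbf{y}_{1}^{(0)}$ is, to leading order, the conserved guiding-centre abscissa. The second point requiring care is the deliberate choice to handle the $\eps^{2}R_{1}$ piece by endpoint evaluation rather than by integrating its time derivative, since the latter would introduce a spurious factor $T$ in front of the $\eps^{2}$ term and spoil the stated form of the bound.
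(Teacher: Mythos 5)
Your proof is correct and follows essentially the same route as the paper: the same splitting of $\boldsymbol{\Upsilon}_{\!1}$ via \eqref{IntermExpy10} into the slow part $x_{1}-\eps v\cos(\theta)/B(\mathbf{x})$ and the $\eps^{2}$ remainder, the remainder handled by endpoint evaluation (giving the $T$-free $\eps^{2}b^{2}$ term), and the slow part handled by differentiating along the trajectory and integrating in time. The only difference is presentational — you make explicit the equations of motion and the cancellation of the $O(1)$ terms, which the paper leaves implicit in its stated formula for $\partial_{t}\big(\boldsymbol{\Upsilon}_{\!1}^{s}(\mathbf{X}_{\boldsymbol{\mathfrak{Pol}}}^{\eps},\Theta^{\eps},v)\big)$.
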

 
 \begin{lemma}
 \label{ThmainresultCharDarbYDarb2} % \ref{ThmainresultCharDarbYDarb2} 
Let $\left[a,b\right]$ be an interval such that $\left[a,b\right]\subset\left(0,+\infty\right)$. Then, for any $(\mathbf{x},\theta,v)\in\rit^3\times[a,b]$
we have
\begin{gather}
\left\{
\begin{aligned}
     &\boldsymbol{\Upsilon}_{\!1}(\mathbf{x},\theta,v)=x_{1}+\boldsymbol{\rho}_{3}\left(\eps;\mathbf{x},\theta,v\right),
     \\
     &\boldsymbol{\Upsilon}_{\!2}(\mathbf{x},\theta,v)=x_{2}+\boldsymbol{\rho}_{4}\left(\eps;\mathbf{x},\theta,v\right),
\end{aligned}
\right.
\end{gather}
where $\boldsymbol{\rho}_{3}$ and $\boldsymbol{\rho}_{4}$ satisfy
\begin{align}
   &\left|\boldsymbol{\rho}_{i}\left(\eps;\mathbf{x},\theta,v\right)\right|\leq\eps b.
\end{align}
 \end{lemma}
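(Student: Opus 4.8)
The plan is to read the conclusion off directly from the explicit expressions of $\boldsymbol{\Upsilon}_{\!1}$ and $\boldsymbol{\Upsilon}_{\!2}$ established in Theorem \ref{ExpressionOfUpsilonVsG}. Recall that \eqref{DefUps1} and \eqref{DefUps2} give
\[
\boldsymbol{\Upsilon}_{\!1}\left(\mathbf{x},\theta,v\right)=x_{1}-\eps\cos\left(\theta\right)\psi\left(\mathbf{x},\theta,v\right),\qquad
\boldsymbol{\Upsilon}_{\!2}\left(\mathbf{x},\theta,v\right)=x_{2}+\eps\sin\left(\theta\right)\psi\left(\mathbf{x},\theta,v\right),
\]
so the natural choice is to set $\boldsymbol{\rho}_{3}=-\eps\cos(\theta)\,\psi$ and $\boldsymbol{\rho}_{4}=\eps\sin(\theta)\,\psi$. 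With this definition the two identities $\boldsymbol{\Upsilon}_{\!i}=x_{i}+\boldsymbol{\rho}_{i+2}$ hold by construction, and the entire task reduces to bounding $\psi$ uniformly on $\mathbb{R}^{3}\times[a,b]$.

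Next I would estimate $\psi$ through its integral representation \eqref{DefPSY}, namely $\psi\left(\mathbf{x},\theta,v\right)=\int_{0}^{v}\varphi\left(\mathbf{x},\theta,s\right)ds$, together with the closed form \eqref{SolPDEPhi} of $\varphi$, that is
\[
\varphi\left(\mathbf{x},\theta,s\right)=\frac{1}{B\left(\mathcal{G}_{-\eps s}^{1}\left(\mathbf{x},\theta\right),\mathcal{G}_{-\eps s}^{2}\left(\mathbf{x},\theta\right)\right)}.
\]
Since the standing hypothesis on the magnetic field is $\inf B>1$, we have $\left\Vert \frac{1}{B}\right\Vert _{\infty}<1$, hence $\left|\varphi\left(\mathbf{x},\theta,s\right)\right|\leq 1$ for every argument. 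Because $v\in[a,b]\subset(0,+\infty)$ is positive, the integration range has length $v\leq b$, so
\[
\left|\psi\left(\mathbf{x},\theta,v\right)\right|\leq\int_{0}^{v}\left|\varphi\left(\mathbf{x},\theta,s\right)\right|ds\leq\int_{0}^{v}1\,ds=v\leq b.
\]

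Combining the two steps, and using $\left|\cos(\theta)\right|\leq 1$ and $\left|\sin(\theta)\right|\leq 1$, yields
\[
\left|\boldsymbol{\rho}_{3}\left(\eps;\mathbf{x},\theta,v\right)\right|=\eps\left|\cos(\theta)\right|\left|\psi\left(\mathbf{x},\theta,v\right)\right|\leq\eps b,\qquad
\left|\boldsymbol{\rho}_{4}\left(\eps;\mathbf{x},\theta,v\right)\right|=\eps\left|\sin(\theta)\right|\left|\psi\left(\mathbf{x},\theta,v\right)\right|\leq\eps b,
\]
which is precisely the claimed estimate. I do not expect any genuine obstacle: the lemma is an immediate consequence of the explicit Darboux formulas and the global lower bound on $B$. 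The only point requiring a moment's care is that the bound on $\psi$ rests on the positivity of $v$ (so that the integration interval $[0,v]$ has length $v\leq b$) and on the hypothesis $\inf B>1$; both are available by assumption, and this is presumably why the statement restricts $v$ to a compact subinterval of $(0,+\infty)$.
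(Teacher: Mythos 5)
Your proof is correct and is essentially the paper's own argument: the paper declares this lemma ``obvious'' after having already recorded, in the proof of Lemma \ref{ThmainresultCharDarbYDarb}, the bound $\left|\psi\left(\mathbf{x},\theta,v\right)\right|\leq b$ on $\rit^3\times[a,b]$ coming from $\inf B\geq 1$ and the representation \eqref{DefPSY}--\eqref{SolPDEPhi}. Reading $\boldsymbol{\rho}_{3}=-\eps\cos(\theta)\psi$ and $\boldsymbol{\rho}_{4}=\eps\sin(\theta)\psi$ off \eqref{DefUps1}--\eqref{DefUps2} and applying that bound is exactly the intended proof.
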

 
We will prove Lemmas \ref{ThmainresultCharDarbKDarb}, \ref{ThmainresultCharDarbYDarb} and \ref{ThmainresultCharDarbYDarb2} in Subsection \ref{SubProofTheoremsThmainresultCharDarbKDarbandY}.

%DIRE MAINTENANT QU'UN COROLLAIRE IMMEDIAT c'est que ca reste dans un compact et faire la fin de la dem du thm initial dans la dernière section.
%REFLECHIR A COMMENT ECRIRE LE PREMIER THEOREME 1.3 ET 1.4 ET TERMINER.
%Il faut que le fait qu'on tronque Poisson apparaisse.

%%%%%%%%%%%%%%%%%%%%%%%%%%%%%%%%%%%%%%%%%%%%%%%%%%%%%%%%%%%%%%%%%%
%%%%%%%%%%%%%%%%%%%%%%%%%%%%%%%%%%%%%%%%%%%%%%%%%%%%%%%%%%%%%%%%%%
%%%%%%%%%%%%%%%%%%%%%%%%%%%%%%%%%%%%%%%%%%%%%%%%%%%%%%%%%%%%%%%%%%
%%%%%%%%%%%%%%%%%%%%%%%%%%%%%%%%%%%%%%%%%%%%%%%%%%%%%%%%%%%%%%%%%%
\subsection{Proof of Lemmas \ref{ThmainresultCharDarbKDarb}, \ref{ThmainresultCharDarbYDarb} and \ref{ThmainresultCharDarbYDarb2}}
\label{SubProofTheoremsThmainresultCharDarbKDarbandY}  %  \ref{SubProofTheoremsThmainresultCharDarbKDarbandY}
%%%%%%%%%%%%%%%%%%%%%%%%%%%%%%%%%%%%%%%%%%%%%%%%%%%%%%%%%%%%%%%%%%
%%%%%%%%%%%%%%%%%%%%%%%%%%%%%%%%%%%%%%%%%%%%%%%%%%%%%%%%%%%%%%%%%%
%%%%%%%%%%%%%%%%%%%%%%%%%%%%%%%%%%%%%%%%%%%%%%%%%%%%%%%%%%%%%%%%%%
%%%%%%%%%%%%%%%%%%%%%%%%%%%%%%%%%%%%%%%%%%%%%%%%%%%%%%%%%%%%%%%%%%

By definition $\boldsymbol{\Upsilon}_{\!4}\!\left(\mathbf{x},\theta,v\right)=\int_{0}^{v}\psi\left(\mathbf{x},\theta,s\right)ds$,
%and $\psi\left(\mathbf{x},\theta,s\right)=\int_{0}^{s}\varphi\left(\mathbf{x},\theta,u\right)du$
where $\psi$ is defined by \eqref{DefPSY} with $\varphi$ given by Theorem \ref{LemThmOub}.  Hence,
\begin{gather}
\label{FormulaInequalityPsiLeft}    %   \eqref{FormulaInequalityPsiLeft}
\begin{aligned}
        \psi\left(\mathbf{x},\theta,s\right)
        &=\int_{0}^{s}\varphi\left(\mathbf{x},\theta,u\right)du
             =\int_{0}^{s}\frac{1}{B\left(\mathcal{G}_{-\eps u}^{1}\left(\mathbf{x},\theta\right),\mathcal{G}_{-\eps u}^{2}\left(\mathbf{x},\theta\right)\right)}du
             \geq \frac{s}{\left\Vert B\right\Vert _{\infty}},
\end{aligned}
\end{gather}
and consequently, for any $v\in\left[a,b\right]$ and for any $\left(\mathbf{x},\theta\right)\in\mathbb{R}^2\times\rit$, we obtain
\begin{gather}
\label{FormulaBoldUpsKLeft}    %   \eqref{FormulaBoldUpsKLeft}   
\begin{aligned}
        \boldsymbol{\Upsilon}_{\!4}\!\left(\mathbf{x},\theta,v\right)
        &=\int_{0}^{v}\psi\left(\mathbf{x},\theta,s\right)ds
         \geq\frac{v^{2}}{2\left\Vert B\right\Vert _{\infty}}
         \geq\frac{a^{2}}{2\left\Vert B\right\Vert _{\infty}}.
\end{aligned}
\end{gather}
On another hand, since $\underset{\mathbf{x}\in\mathbb{R}^{2}}{\inf}B\left(\mathbf{x}\right)\geq1,$
we obtain
$\psi\left(\mathbf{x},\theta,s\right)\leq s,$ and consequently  for any $v\in\left[a,b\right]$
and for any $\left(\mathbf{x},\theta\right),$  we obtain
\begin{align}     
     &\boldsymbol{\Upsilon}_{\! 4}\!\left(\mathbf{x},\theta,v\right)\leq\frac{v^{2}}{2}\leq\frac{b^{2}}{2}.
     \label{EstimationForTheBothLemmas}   %  \eqref{EstimationForTheBothLemmas} 
\end{align}

Since 
for any $\left(\mathbf{x},\theta,v\right)\in\mathbb{R}^{2}\times\rit\times\left(0,+\infty\right)$ and for any $t\in\mathbb{R},$
\begin{align}
    &\frac{\partial\mathcal{V}^{\eps}}{\partial t}(t,\mathbf{x},\theta,v)=0,
\end{align}
we obtain $\mathcal{V}^{\eps}\left(t;\mathbf{x},\theta,v\right)=v,$ and consequently for any $\left(\mathbf{y},\theta,k\right)\in\mathbb{R}^{2}\times\rit\times\left(0,+\infty\right)$ and for any $t\in\mathbb{R}$,
we have:
\begin{gather}
\label{CaractDarVsCaracPol}    %   \eqref{CaractDarVsCaracPol} 
\begin{aligned}
     \boldsymbol{\mathcal{K}}_{\boldsymbol{\mathfrak{Dar}}}^{\eps}\left(t;\mathbf{y},\theta,k\right)&=\boldsymbol{\Upsilon}_{\! 4}\left(\mathbf{X}_{\boldsymbol{\mathfrak{Pol}}}^{\eps}\left(t;\boldsymbol{\kappa}\left(\mathbf{y},\theta,k\right)\right),\Theta^{\eps}\left(t;\boldsymbol{\kappa}\left(\mathbf{y},\theta,k\right)\right),\mathcal{V}^{\eps}\left(t;\boldsymbol{\kappa}\left(\mathbf{y},\theta,k\right)\right)\right)
     \\
     &=\boldsymbol{\Upsilon}_{\!4}\left(\mathbf{X}_{\boldsymbol{\mathfrak{Pol}}}^{\eps}\left(t;\boldsymbol{\kappa}\left(\mathbf{y},\theta,k\right)\right),\Theta^{\eps}\left(t;\boldsymbol{\kappa}\left(\mathbf{y},\theta,k\right)\right),\boldsymbol{\kappa}_{v}\left(\mathbf{y},\theta,k\right)\right).
\end{aligned}
\end{gather}
Now, for any $\left(\mathbf{y},\theta,k\right)\in\boldsymbol{\Upsilon}\left(\mathbb{R}^{2}\times\rit\times\left[a,b\right]\right),$
$\boldsymbol{\kappa}_{v}\left(\mathbf{y},\theta,k\right)\in\left[a,b\right]$ and estimates \eqref{FormulaBoldUpsKLeft}   
and  \eqref{EstimationForTheBothLemmas} yield that $\boldsymbol{\mathcal{K}}_{\boldsymbol{\mathfrak{Dar}}}^{\eps}\left(t;\mathbf{y},\theta,k\right)\in\left[\frac{a^{2}}{2\left\Vert B\right\Vert _{\infty}},\frac{b^{2}}{2}\right]$.
This ends the proof of Lemma \ref{ThmainresultCharDarbKDarb}.
{~\hfill $\square$} 
\\

Concerning Lemma \ref{ThmainresultCharDarbYDarb}, for any  $\left(\mathbf{x},\theta\right)\in\mathbb{R}^2\times\rit$ and for any $v\in\left[a,b\right],$ function $\psi$
satisfies $\left|\psi\left(\mathbf{x},\theta,v\right)\right|\leq b.$ 
Applying formula \eqref{IntermExpy10} yields:
\begin{gather}
\begin{aligned}
   \boldsymbol{\Upsilon}_{\! 1}\left(\mathbf{x},\theta,v\right)=\boldsymbol{\Upsilon}_{\! 1}^{s}\left(\mathbf{x},\theta,v\right)+\boldsymbol{\Upsilon}_{\! 1}^{b}\left(\mathbf{x},\theta,v\right),
\end{aligned}
\end{gather}
where 
\begin{gather}
\begin{aligned}
      &\boldsymbol{\Upsilon}_{\! 1}^{s}\left(\mathbf{x},\theta,v\right)=x_{1}-\eps v\frac{\cos\left(\theta\right)}{B\left(\mathbf{x}\right)},
      \\
      &\boldsymbol{\Upsilon}_{\! 1}^{b}\left(\mathbf{x},\theta,v\right)=-\eps^{2}\cos\left(\theta\right)\int_{0}^{v}\left(v-u\right)\left(\boldsymbol{\Lambda}\cdot\frac{1}{B}\right)\left(\mathcal{G}_{-\eps u}\left(\mathbf{x},\theta\right)\right)du.
\end{aligned}
\end{gather}
For any $\left(\mathbf{x},\theta\right)\in\mathbb{R}^{2}\times\rit$, for any $v\in\left[a,b\right]$ and for any $\eps\in\mathbb{R}$ we have:
\begin{align}
   &\left|\boldsymbol{\Upsilon}_{\! 1}^{b}\left(\mathbf{x},\theta,v\right)\right|\leq\frac{\eps^{2}b^{2}}{2}\left\Vert \boldsymbol{\Lambda}\cdot\frac{1}{B}\right\Vert _{\infty},
   \label{BoundedPartOfBoldUpsPetitY1Est}   %  \eqref{BoundedPartOfBoldUpsPetitY1Est} 
\end{align}
and consequently for any $\left(\mathbf{x},\theta\right)\in\mathbb{R}^{2}\times\rit$, for any $v\in\left[a,b\right],$ for any 
$\eps\in\mathbb{R}^\star$ and for any $t\in\mathbb{R}$
\begin{align}
    &\left|\boldsymbol{\Upsilon}_{\! 1}^{b}\left(\mathbf{X}_{\boldsymbol{\mathfrak{Pol}}}^{\eps}\left(t;\mathbf{x},\theta,v\right),\Theta^{\eps}\left(t,\mathbf{x},\theta,v\right),v\right)\right|\leq\frac{\eps^{2}b^{2}}{2}\left\Vert \boldsymbol{\Lambda}\cdot\frac{1}{B}\right\Vert _{\infty}.
    \label{BoundedPartOfBoldUpsY1Est}   %  \eqref{BoundedPartOfBoldUpsY1Est} 
\end{align}

On another hand, evaluating $\boldsymbol{\Upsilon}_{\! 1}^{s}$ in $\left(\mathbf{X}_{\boldsymbol{\mathfrak{Pol}}}^{\eps}\left(t;\mathbf{x},\theta,v\right),\Theta^{\eps}\left(t;\mathbf{x},\theta,v\right),v\right)$ and differentiating with respect to $t$ yields:
\begin{gather}
\begin{aligned}
      \frac{\partial}{\partial t}\left(\boldsymbol{\Upsilon}^s_{\! 1}\left(\mathbf{X}_{\boldsymbol{\mathfrak{Pol}}}^{\eps},\Theta^{\eps},v\right)\right)&=\eps v^{2}\cos\left(\Theta^{\eps}\right)\frac{\hat{\mathbf{c}}\left(\Theta^{\eps}\right)\cdot\nabla_{\mathbf{x}}B\left(\mathbf{X}_{\boldsymbol{\mathfrak{Pol}}}^{\eps}\right)}{B\left(\mathbf{X}_{\boldsymbol{\mathfrak{Pol}}}^{\eps}\right)^{2}},
\end{aligned}
\end{gather}
where 
\begin{gather}	
	\hat{\mathbf{c}}(\theta) = \begin{pmatrix} -\sin(\theta)\\ -\cos(\theta)\end{pmatrix},
\end{gather}
and consequently 
\begin{align}
    &\left|\frac{\partial}{\partial t}\left(\boldsymbol{\Upsilon}_{\! 1}^{s}\left(\mathbf{X}_{\boldsymbol{\mathfrak{Pol}}}^{\eps},\Theta^{\eps},v\right)\right)\right|
\leq\left|\eps\right| b^{2}\underset{\left(\mathbf{x},\theta\right)\in\mathbb{R}^{3}}{\sup}\left|\frac{\hat{\mathbf{c}}\left(\theta\right)\cdot\nabla_{\mathbf{x}}B\left(\mathbf{x}\right)}{B\left(\mathbf{x}\right)^{2}}\right|.
     \label{SlowPartDerOfBoldUpsY1Est}   %  \eqref{SlowPartDerOfBoldUpsY1Est} 
\end{align}
This ends the proof of Lemma \ref{ThmainresultCharDarbYDarb}.
{~\hfill $\square$} 

The proof of Lemma \ref{ThmainresultCharDarbYDarb2} is obvious.

%%%%%%%%%%%%%%%%%%%%%%%%%%%%%%%%%%%%%%%%%%%%%%%%%%%%%%%%%%%%%%
%%%%%%%%%%%%%%%%%%%%%%%%%%%%%%%%%%%%%%%%%%%%%%%%%%%%%%%%%%%%%%
%%%%%%%%%%%%%%%%%%%%%%%%%%%%%%%%%%%%%%%%%%%%%%%%%%%%%%%%%%%%%%
\subsection{Proof of Theorem \ref{MainThm1} and Remark \ref{DarbouxRangeInCompact}}
\label{Subsection4Februar2014}  %  \ref{Subsection4Februar2014}
%%%%%%%%%%%%%%%%%%%%%%%%%%%%%%%%%%%%%%%%%%%%%%%%%%%%%%%%%%%%%%
%%%%%%%%%%%%%%%%%%%%%%%%%%%%%%%%%%%%%%%%%%%%%%%%%%%%%%%%%%%%%%
%%%%%%%%%%%%%%%%%%%%%%%%%%%%%%%%%%%%%%%%%%%%%%%%%%%%%%%%%%%%%%
Theorem  \ref{MainThm1} and remark \ref{DarbouxRangeInCompact} are a synthesis of 
Theorems \ref{ThmInvUpsIsKappa} and  \ref{ExpPoissMat} and of Lemmas  \ref{ThmainresultCharDarbKDarb},
\ref{ThmainresultCharDarbYDarb} and \ref{ThmainresultCharDarbYDarb2}.
{~\hfill $\square$} 
%%%%%%%%%%%%%%%%%%%%%%%%%%%%%%%%%%%%%%%%%%%%%%%%%%%%%%%%%%%%%%
%%%%%%%%%%%%%%%%%%%%%%%%%%%%%%%%%%%%%%%%%%%%%%%%%%%%%%%%%%%%%%
%%%%%%%%%%%%%%%%%%%%%%%%%%%%%%%%%%%%%%%%%%%%%%%%%%%%%%%%%%%%%%
%%%%%%%%%%%%%%%%%%%%%%%%%%%%%%%%%%%%%%%%%%%%%%%%%%%%%%%%%%%%%%
%%%%%%%%%%%%%%%%%%%%%%%%%%%%%%%%%%%%%%%%%%%%%%%%%%%%%%%%%%%%%%
%%%%%%%%%%%%%%%%%%%%%%%%%%%%%%%%%%%%%%%%%%%%%%%%%%%%%%%%%%%%%%
\section{The Partial Lie Transform Method} 
\label{ThePartialLieTransformMethod}    %  \ref{ThePartialLieTransformMethod} 
%%%%%%%%%%%%%%%%%%%%%%%%%%%%%%%%%%%%%%%%%%%%%%%%%%%%%%%%%%%%%%
%%%%%%%%%%%%%%%%%%%%%%%%%%%%%%%%%%%%%%%%%%%%%%%%%%%%%%%%%%%%%%
%%%%%%%%%%%%%%%%%%%%%%%%%%%%%%%%%%%%%%%%%%%%%%%%%%%%%%%%%%%%%%
%%%%%%%%%%%%%%%%%%%%%%%%%%%%%%%%%%%%%%%%%%%%%%%%%%%%%%%%%%%%%%
%%%%%%%%%%%%%%%%%%%%%%%%%%%%%%%%%%%%%%%%%%%%%%%%%%%%%%%%%%%%%%
%%%%%%%%%%%%%%%%%%%%%%%%%%%%%%%%%%%%%%%%%%%%%%%%%%%%%%%%%%%%%%

The last step on the way to build the Guiding-Center Coordinates of order $N$ is to build a coordinate system
$(\mathbf{z},\gamma,j)$ close to the Historical Guiding-Center coordinate system in which the Poisson Matrix and the Hamiltonian function are given by \eqref{201401230906} and \eqref{Formula6Februar2014}. To this aim we will construct a new algorithm, the so-called Partial Lie Transform Method.

\begin{remark}
\label{RemRevision3}  %  \ref{RemRevision3} 
In {\cite{littlejohn:1979}}, to build the Guiding-Center coordinate system, Littlejohn construct a normal form theory based on formal
Lie series using Hamiltonian vector fields.
The drawback of using such a formal Lie Series method is that its convergence is neither ensured nor controlled.
\end{remark}

%
%%%%%%%%%%%%%%%%%%%%%%%%%%%%%%%%%%%%%%%%%%%%%%%%%%%%%%%%%%%%%%
%%%%%%%%%%%%%%%%%%%%%%%%%%%%%%%%%%%%%%%%%%%%%%%%%%%%%%%%%%%%%%
%%%%%%%%%%%%%%%%%%%%%%%%%%%%%%%%%%%%%%%%%%%%%%%%%%%%%%%%%%%%%%
\subsection{The Partial Lie Change of Coordinates of order $N$} 
\label{ThePartialLieSumsSubSection}   %  \ref{ThePartialLieSumsSubSection}
%%%%%%%%%%%%%%%%%%%%%%%%%%%%%%%%%%%%%%%%%%%%%%%%%%%%%%%%%%%%%%
%%%%%%%%%%%%%%%%%%%%%%%%%%%%%%%%%%%%%%%%%%%%%%%%%%%%%%%%%%%%%%
%%%%%%%%%%%%%%%%%%%%%%%%%%%%%%%%%%%%%%%%%%%%%%%%%%%%%%%%%%%%%%

We start this Section by defining the partial Lie sums. Let $N\in\mathbb{N}^*$. For $i\in\ldbrack1,N\rdbrack$, we define the positive integer $\alpha_{i,N}$ by 
\begin{align}
         &\alpha_{i,N}=\mathbb{E}\left(\frac{N}{i}\right)+1,   % C'est la clé dans Poisson : il faut prendre un entier assez grand pour avoir un reste d'ordre N+1
         \label{DefCoefPLTCC}   %   \eqref{DefCoefPLTCC} 
\end{align}
where $\mathbb{E}$ stands for the integer part.

\begin{definition}
\label{DefBoldVarijN2}  %    \ref{DefBoldVarijN2} 
For any $\bar{g}=\bar{g}\left(\mathbf{y},\theta,k\right)$ in $\mathcal{C}^\infty_\#(\rit^2\times\rit\times(0,+\infty))$ (see Notation \ref{201402161039}), 
let $\boldsymbol{\vartheta}_{\eps,-\bar{g}}^{\alpha_{i,N},i}$ be the differential 
operator acting on functions $\bar{f}=\bar{f}\left(\mathbf{y},\theta,k\right)$ of $\mathcal{C}^\infty_\#(\rit^2\times\rit\times(0,+\infty))$ in the following way:
\begin{align}
     &\boldsymbol{\vartheta}_{\eps,-\bar{g}}^{\alpha_{i,N},i}\cdot\bar{f}=\underset{k=0}{\overset{\alpha_{i,N}}{\sum}}\frac{\eps^{ik}}{k!}\left(\overline{\mathbf{X}}_{-\eps\bar{g}}^{\eps}\right)^{k}\cdot\bar{f},
     \label{DefVarthetaActFuncDifOp}   %   \eqref{DefVarthetaActFuncDifOp}
\end{align}
where $\overline{\mathbf{X}}_{-\eps\bar{g}}^{\eps}$ is the Hamiltonian vector field associated with $-\eps\bar{g}$.
From operator $\boldsymbol{\vartheta}_{\eps,-\bar{g}}^{\alpha_{i,N},i}$ we define, with the same notation, function 
$\boldsymbol{\vartheta}_{\eps,-\bar{g}}^{\alpha_{i,N},i}=\boldsymbol{\vartheta}_{\eps,-\bar{g}}^{\alpha_{i,N},i}\!\left(\mathbf{y},\theta,k\right)$ from $\mathbb{R}^2\times\mathbb{R}\times (0,+\infty)$ to $\rit^4$ by
\begin{align}
     &\boldsymbol{\vartheta}_{\eps,-\bar{g}}^{\alpha_{i,N},i}=\left(\left(\boldsymbol{\vartheta}_{\eps,-\bar{g}}^{\alpha_{i,N},i}\cdot\mathbf{y}_{1}\right),\ldots,\left(\boldsymbol{\vartheta}_{\eps,-\bar{g}}^{\alpha_{i,N},i}\cdot\mathbf{k}\right)\right),
     \label{DefVarThetN2}    %  \eqref{DefVarThetN2} 
\end{align}
where $\mathbf{y}_1$, $\mathbf{y}_2$, $\boldsymbol{\theta}$, $\mathbf{k}$ stand for $\mathbf{y}_{1}:\left(\mathbf{y},\theta,k\right)\mapsto y_{1}$,
$\ldots$, $\mathbf{k}:\left(\mathbf{y},\theta,k\right)\mapsto k$.
\end{definition}

\begin{definition}
\label{1305210335} % \ref{1305210335}
%Viewed as a differential operator, 
$\boldsymbol{\vartheta}_{\eps,-\bar{g}}^{\alpha_{i,N},i}$ is called the Partial Lie Sum of order 
$\left(i,N\right)$ generated by $\bar{g}$.
%, and viewed as a function $\boldsymbol{\vartheta}_{\eps,-\bar{g}}^{\alpha_{i,N},i}$ is called the Partial Lie Sum map 
%of order $(i,N)$
%generated by $\bar{g}$.
\end{definition}

%Subsequently we will denote by 

\begin{theorem}
\label{BoldChiNWellDef}   %   \ref{BoldChiNWellDef}  
Let $\bar{g}_{1},\ldots,\bar{g}_N\in\mathcal{Q}_{T,b}^\infty$  (see Notation \ref{201402161042}) and
$c$ and $d$ be positive real numbers (with $c<d$).
Then there exists $\eta>0$ such that for any $\eps\in[-\eta,\eta]$, $\boldsymbol{\chi}^{N}_\eps$, defined by
\begin{align}
         &\boldsymbol{\chi}^{N}_\eps=\boldsymbol{\vartheta}_{\eps,-\bar{g}_{1}}^{\alpha_{1,N},1}\circ\boldsymbol{\vartheta}_{\eps,-\bar{g}_{2}}^{\alpha_{2,N},2}\circ\ldots\circ\boldsymbol{\vartheta}_{\eps,-\bar{g}_{N}}^{\alpha_{N,N},N},
         \label{DefChiN}    % \eqref{DefChiN}
\end{align}
is well defined on $\mathbb{R}^3\times(c,d)$ and is a diffeomorphism.
Moreover, for any intervals  $(c^\star,d^\star)$ and $(c^\bullet,d^\bullet)$
such that $c^\star>0$ and
\begin{align}
    &[c^{\bullet},d^{\bullet}]\subsetneq(c,d)\subsetneq[c,d]\subsetneq(c^{\star},d^{\star})
\end{align}
 there exists a real number $\eta^{\bullet,\star}>0$ such that for any $\eps\in\left[-\eta^{\bullet,\star},\eta^{\bullet,\star}\right]$:
 \begin{align}
    &\mathbb{R}^3\times (c^{\bullet},d^{\bullet})\subset\boldsymbol{\chi}_{\eps}^{N}\big(\mathbb{R}^3\times(c,d)\big)\subset\mathbb{R}^3\times (c^{\star},d^{\star}).
     \label{DomInvWellDef}   %  \eqref{DomInvWellDef}  
 \end{align}
% La dernière partie du théorème est importante : elle permet d'avoir des estimations en \eps sur le changement de coordonnées réciproque, Hamiltonien, etc....
\end{theorem}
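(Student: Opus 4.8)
The plan is to exploit the fact that every factor $\boldsymbol{\vartheta}_{\eps,-\bar g_i}^{\alpha_{i,N},i}$ is a \emph{near-identity} map that reduces to the identity at $\eps=0$. In \eqref{DefVarthetaActFuncDifOp} the term $k=0$ is the identity, and every term with $k\geq1$ carries a factor $\eps^{ik}$ with $i\geq1$; hence applying the operator to the coordinate functions $\mathbf y_1,\mathbf y_2,\boldsymbol\theta,\mathbf k$ as in \eqref{DefVarThetN2} yields
\[ \boldsymbol{\vartheta}_{\eps,-\bar g_i}^{\alpha_{i,N},i}(\mathbf y,\theta,k)=(\mathbf y,\theta,k)+\eps\,\mathbf r_i(\eps;\mathbf y,\theta,k). \]
Using the explicit Darboux matrix \eqref{ExpressionDarbouxMatrix}, the Hamiltonian vector field $\overline{\mathbf X}^{\eps}_{-\eps\bar g_i}=-\eps\,\bar{\mathcal P}_\eps\nabla\bar g_i$ has components $\big(\tfrac{\eps^2}{B}\partial_{y_2}\bar g_i,\,-\tfrac{\eps^2}{B}\partial_{y_1}\bar g_i,\,-\partial_k\bar g_i,\,\partial_\theta\bar g_i\big)$, so the apparent singular factor $1/\eps$ in the Poisson matrix is exactly cancelled and $\mathbf r_i$ is a genuine polynomial in $\eps$ whose coefficients are products of derivatives of $\bar g_i\in\mathcal Q_{T,b}^\infty$ and of $1/B$.

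First I would establish uniform bounds. Because $B\geq1$ with all derivatives bounded, and because the elements of $\mathcal Q_{T,b}^\infty$ (see Notation \ref{201402161042}) are, together with all their derivatives, bounded on $\mathbb R^3\times[c^\sharp,d^\sharp]$ for every $[c^\sharp,d^\sharp]\subset(0,+\infty)$ and $2\pi$-periodic in $\theta$, the map $\mathbf r_i$ and all its spatial derivatives are bounded, uniformly in $\eps\in[-1,1]$, on any such compact-in-$k$ slab. This is precisely the place where keeping $k$ bounded away from $0$ is indispensable, since the generators carry negative powers of $\sqrt k$ and the bounds degenerate as $k\to0$.

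Next I would fix an interval with $[c,d]\subset(c^\star,d^\star)$ and a constant $M$ bounding $\mathbf r_i$ and its Jacobian for $i=1,\dots,N$ on $\mathbb R^3\times[c^\star,d^\star]$. For $|\eps|$ small, $\mathrm{id}+\eps\mathbf r_i$ restricted to a convex slab has Jacobian $I+\eps D\mathbf r_i$ invertible by a Neumann series (local diffeomorphism) and is a Lipschitz perturbation of the identity with constant $<1$ (global injectivity on the convex slab via the mean-value inequality along segments), hence a diffeomorphism onto an open image whose $k$-coordinate differs from the source by at most $|\eps|M$. I would then chain the $N$ factors: choosing $\eta$ so small that $N\eta M<\min(c-c^\star,d^\star-d)$ forces every intermediate point to remain in $\mathbb R^3\times(c^\star,d^\star)$, so the composition \eqref{DefChiN} is well defined on $\mathbb R^3\times(c,d)$; being a composition of diffeomorphisms it is a diffeomorphism onto its image, and it inherits the near-identity form $\boldsymbol\chi^N_\eps=\mathrm{id}+\eps\mathbf R$ with $\mathbf R$ bounded together with $D\mathbf R$, uniformly in $\eps\in[-\eta,\eta]$.

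Finally, for the inclusions \eqref{DomInvWellDef} on the nested data $[c^\bullet,d^\bullet]\subsetneq(c,d)\subsetneq[c,d]\subsetneq(c^\star,d^\star)$ I would argue directly from $\boldsymbol\chi^N_\eps=\mathrm{id}+\eps\mathbf R$. The $k$-component of $\boldsymbol\chi^N_\eps(\mathbf y,\theta,k)$ equals $k+\eps R_4$, which lies in $(c^\star,d^\star)$ whenever $k\in(c,d)$ and $|\eps|\sup|\mathbf R|<\min(c-c^\star,d^\star-d)$, giving the right inclusion. For the left inclusion, given a target $p\in\mathbb R^3\times(c^\bullet,d^\bullet)$ I would solve $\boldsymbol\chi^N_\eps(q)=p$ by the fixed-point map $q\mapsto p-\eps\mathbf R(q)$, which for $|\eps|$ small is a contraction of $\mathbb R^3\times[c,d]$ into itself (its $k$-coordinate stays within $|\eps|\sup|\mathbf R|$ of $(c^\bullet,d^\bullet)$, hence in $(c,d)$ once $|\eps|\sup|\mathbf R|<\min(c^\bullet-c,d-d^\bullet)$); its fixed point lies in $\mathbb R^3\times(c,d)$ and maps to $p$. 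Taking $\eta^{\bullet,\star}$ to be the smallest of these thresholds yields the claim. The main obstacle is not any single step but the bookkeeping of the $N$-fold composition: one must propagate the uniform near-identity bounds through all $N$ factors while certifying that the $k$-coordinate never leaves the fixed compact range $(c^\star,d^\star)$ on which the generators, and hence the remainders, are controlled.
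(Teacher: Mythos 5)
Your proof is correct, and its first half (the near-identity decomposition $\boldsymbol{\vartheta}_{\eps,-\bar g_i}^{\alpha_{i,N},i}=\mathrm{id}+\eps\,\mathbf r_i$, the cancellation of the $1/\eps$ singularity in $\overline{\mathbf X}^{\eps}_{-\eps\bar g_i}$, uniform bounds on slabs $k\in[c^{\sharp},d^{\sharp}]$ away from $k=0$, injectivity via a Lipschitz-small perturbation of the identity together with invertibility of the Jacobian) coincides with the paper's own first step, Theorem~\ref{FirstStepProofThmBoldChiNWellDef}, which is proved via a global inversion theorem. The genuine difference lies in the left inclusion of \eqref{DomInvWellDef}. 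The paper proves both inclusions factor by factor and then chains them by induction over the $N$ factors with nested intervals; for the ``onto'' part it applies Brouwer's fixed point theorem to $(\mathbf y,\theta,k)\mapsto(\mathbf y',\theta',k')-\eps\boldsymbol\nu_{\eps,-\bar g_i}^{\alpha_{i,N},i}(\mathbf y,\theta,k)$ on compact convex sets, which forces it to cover $\rit^3$ by balls $\ball^2(\boldsymbol m_0,R_0)$ and $\theta$-intervals around $l\pi$, and to verify that the threshold $\eta$ is independent of $\boldsymbol m_0$ and $l$. You instead compose first, write $\boldsymbol\chi^{N}_{\eps}=\mathrm{id}+\eps\mathbf R$, and solve $\boldsymbol\chi^{N}_{\eps}(q)=p$ by the Banach contraction principle on the complete but non-compact slab $\rit^3\times[c,d]$; since the Lipschitz bound on $\mathbf R$ is already in hand from the diffeomorphism step, this buys a shorter argument with no localization bookkeeping and uniqueness of the preimage for free, while the paper's Brouwer route buys robustness (only continuity and the self-mapping property are needed) and per-factor statements that it reuses later (e.g.\ for $\boldsymbol\Xi_{\eps,-\bar g_i}^{\alpha_{i,N},i}$ in the proof of Theorem~\ref{MainPropOfLieCCOrderNThm}). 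One small point to tidy: $\mathbf R$ is constructed on the open slab $\rit^3\times(c,d)$, so to run the contraction on $\rit^3\times[c,d]$ you should either extend $\mathbf R$ to the closure by uniform continuity (the fixed point then lies in the open slab, where the extension agrees with $\mathbf R$), or observe that the composition is in fact well defined on a slightly larger open slab.
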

The proof of Theorem \ref{BoldChiNWellDef} is given in subsection \ref{SubsectionProofOfTheoremBoldChiNWellDef}.
\begin{definition}
With assumptions of Theorem \ref{BoldChiNWellDef} on the $\bar{g}_i$, $\boldsymbol{\chi}_{\eps}^{N}$ is called the \textit{partial Lie change of coordinates of order $N$.}
We denote by $\boldsymbol{\lambda}^N_{\eps}$ the inverse function of $\boldsymbol{\chi}_{\eps}^{N}$. 
\end{definition}

\begin{remark}
An immediate Corollary to Theorem \ref{BoldChiNWellDef} is that for $\eps$ small enough $\boldsymbol{\lambda}^N_{\eps}$ is well defined on $\mathbb{R}^3\times (c^{\bullet},d^{\bullet})$.
\end{remark}

%%%%%%%%%%%%%%%%%%%%%%%%%%%%%%%%%%%%%%%%%%%%%%%%%%%%%%%%%%%%%%
%%%%%%%%%%%%%%%%%%%%%%%%%%%%%%%%%%%%%%%%%%%%%%%%%%%%%%%%%%%%%%
%%%%%%%%%%%%%%%%%%%%%%%%%%%%%%%%%%%%%%%%%%%%%%%%%%%%%%%%%%%%%%
\subsection{Main properties of the partial Lie change of coordinates of order $N$} 
\label{MainPropOfThePartialLieCC}    %  \ref{MainPropOfThePartialLieCC}  
%%%%%%%%%%%%%%%%%%%%%%%%%%%%%%%%%%%%%%%%%%%%%%%%%%%%%%%%%%%%%%
%%%%%%%%%%%%%%%%%%%%%%%%%%%%%%%%%%%%%%%%%%%%%%%%%%%%%%%%%%%%%%
%%%%%%%%%%%%%%%%%%%%%%%%%%%%%%%%%%%%%%%%%%%%%%%%%%%%%%%%%%%%%%

The main properties of the partial Lie change of coordinates of order $N$ are summarized in the following Theorem.

\begin{theorem}
\label{MainPropOfLieCCOrderNThm}    %  \ref{MainPropOfLieCCOrderNThm} 
With the same notations and under the same assumptions as in Theorem \ref{BoldChiNWellDef},  
assuming moreover that $\bar{g}_{1},\ldots,\bar{g}_N\in\mathcal{A}(\mathbb{R}^3\times(0,+\infty))$ (see Notation \ref{201402161044}),
for any compact set $\boldsymbol{K}\subset\mathbb{R}^2$
and for any interval $\left[c^{\diamondsuit},d^{\diamondsuit}\right]$ such that  $\left[c^{\diamondsuit},d^{\diamondsuit}\right]\subset(c^\bullet,d^\bullet)$, there exists a real number $\eta_K>0$ such that for any $\eps\in[0,\eta_K]$ and for any $(\mathbf{z},\gamma,j)\in\boldsymbol{K}\times\rit\times\left[c^{\diamondsuit},d^{\diamondsuit}\right]$,  the inverse function $\boldsymbol{\lambda}^N_{\eps}$ of $\boldsymbol{\chi}_{\eps}^{N}$ 
has the following expression:
\begin{align}
      &\boldsymbol{\lambda}_{\eps}^{N}\!\!\left(\mathbf{z},\gamma,j\right)=\boldsymbol{\vartheta}_{\eps,\bar{g}_{1}}^{\alpha_{1,N},1}\cdot\boldsymbol{\vartheta}_{\eps,\bar{g}_{2}}^{\alpha_{2,N},2}\cdot\ldots\cdot\boldsymbol{\vartheta}_{\eps,\bar{g}_{N}}^{\alpha_{N,N},N}\!\!\left(\mathbf{z},\gamma,j\right)+\eps^{N+1}\Rest_{\boldsymbol{\lambda}}^{N}\!\left(\eps;\mathbf{z},\gamma,j\right).
       \label{DefLamExpPartLieCCOrderN}  % \eqref{DefLamExpPartLieCCOrderN} 
\end{align}
Moreover, on $\left[0,\eta_K\right]\times\boldsymbol{K}\times\rit\times\left[c^{\diamondsuit},d^{\diamondsuit}\right]$,
 we have the following expressions of the Hamiltonian function $\hat{H}_{\eps}$ and the Poisson Matrix  $\hat{\mathcal{P}}_{\eps}$ in the 
$(\mathbf{z},\gamma,j)$-coordinate system:
\begin{align}
       &\hat{H}_{\eps}\!\left(\mathbf{z},\gamma,j\right)=\boldsymbol{\vartheta}_{\eps,\bar{g}_{1}}^{\alpha_{1,N},1}\cdot\boldsymbol{\vartheta}_{\eps,\bar{g}_{2}}^{\alpha_{2,N},2}\cdot\ldots\cdot\boldsymbol{\vartheta}_{\eps,\bar{g}_{N}}^{\alpha_{N,N},N}\cdot\bar{H}_{\eps}\!\left(\mathbf{z},\gamma,j\right)
       +\eps^{N+1}\Rest_{\bar{H}}^{N}\!\left(\eps;\mathbf{z},\gamma,j\right),
       \label{DefHatHExpPartLieCCOrderN}    %  \eqref{DefHatHExpPartLieCCOrderN}   
       \\
       &\eps\hat{\mathcal{P}}_{\eps}\left(\mathbf{z},\gamma,j\right)=\eps\bar{\mathcal{P}}_{\eps}\left(\mathbf{z},\gamma,j\right)+\eps^{N+2}\Rest_{\bar{\mathcal{P}}}^{N}\!\left(\eps;\mathbf{z},\gamma,j\right),
        \label{DefHatMatPoissExpPartLieCCOrderN}    %  \eqref{DefHatMatPoissExpPartLieCCOrderN} 
\end{align}
where $\bar{H}_{\eps}$ is given by  \eqref{HamDarb} or \eqref{DsDim423},
$\bar{\mathcal{P}}_{\eps}$ by \eqref{ExpressionDarbouxMatrix}
and where $\Rest^N_{\boldsymbol{\lambda}}$, $\Rest_{\bar{H}}^{N}$, and $\Rest_{\bar{\mathcal{P}}}^{N}$ are in 
$\mathcal{C}_{\PerND}^{\infty}(\left[0,\eta_K\right]\times\boldsymbol{K}\times\rit\times\left[c^{\diamondsuit},d^{\diamondsuit}\right])$.
\end{theorem}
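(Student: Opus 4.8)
The plan is to recognise each factor $\boldsymbol{\vartheta}_{\eps,\mp\bar{g}_i}^{\alpha_{i,N},i}$ as the Lie series of the \emph{exact} time-$\eps^{i}$ Hamiltonian flow generated by $\mp\bar g_i$, truncated at an order high enough that the discarded tail has order at least $\eps^{N+1}$. Indeed the operator of Definition \ref{DefBoldVarijN2} collects the powers $\eps^{ik}$ for $0\le k\le\alpha_{i,N}$, and the choice $\alpha_{i,N}=\mathbb{E}(N/i)+1$ of \eqref{DefCoefPLTCC} forces $i(\alpha_{i,N}+1)>N$, so the first omitted monomial already carries a factor $\eps^{N+1}$. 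Because we now assume $\bar g_1,\dots,\bar g_N\in\mathcal{A}(\rit^3\times(0,+\infty))$, the underlying Lie series has a nonempty set of convergence $\boldsymbol{\Sigma}_S$; this analyticity is what lets me choose $\eta_K>0$ so that, on the compact slab $\boldsymbol{K}\times\rit\times[c^{\diamondsuit},d^{\diamondsuit}]$ and for $\eps\in[0,\eta_K]$, all the Lie series together with every intermediate composition converge and admit \emph{uniform} bounds on their tails. This is precisely what turns the formal ``$O(\eps^{N+1})$'' into the honest smooth, $2\pi$-periodic-in-$\gamma$ remainders $\Rest^N_{\boldsymbol{\lambda}}$, $\Rest^N_{\bar H}$, $\Rest^N_{\bar{\mathcal{P}}}$ of the statement.

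First I would establish \eqref{DefLamExpPartLieCCOrderN}. Reading each factor as the pull-back operator attached to its exact flow, the contravariance of pull-back shows that the operator composition $\boldsymbol{\vartheta}_{\eps,\bar g_1}^{\alpha_{1,N},1}\cdot\ldots\cdot\boldsymbol{\vartheta}_{\eps,\bar g_N}^{\alpha_{N,N},N}$, evaluated on the coordinate functions, represents (up to truncation tails) the diffeomorphism obtained by composing the inverse flows in the reverse order, which is exactly the inverse of $\boldsymbol{\chi}^N_\eps$. The group law of each exact flow, $\exp(\eps^iX)\circ\exp(-\eps^iX)=\mathrm{id}$, shows that the product of the forward and backward truncated factors of the same index equals the identity up to order $\eps^{N+1}$; propagating these elementary identities through the finite composition, and invoking Theorem \ref{BoldChiNWellDef} to keep every intermediate point inside the domain where the uniform bounds hold, yields \eqref{DefLamExpPartLieCCOrderN} together with the asserted regularity of $\Rest^N_{\boldsymbol{\lambda}}$.

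Then \eqref{DefHatHExpPartLieCCOrderN} and \eqref{DefHatMatPoissExpPartLieCCOrderN} follow by transport. Since $\hat H_\eps=\bar H_\eps\circ\boldsymbol{\lambda}^N_\eps$, substituting \eqref{DefLamExpPartLieCCOrderN} and Taylor-expanding $\bar H_\eps$ in $\eps$ (its expansion being known from Corollary \ref{HamDarbExp} and \eqref{HamDarb}) against the $\eps^{N+1}$ correction reproduces exactly the truncated operator acting on $\bar H_\eps$, the displacement error remaining of order $\eps^{N+1}$ and inheriting the $\mathcal{C}^\infty_{\PerND}$ regularity. For the Poisson matrix I would apply the change-of-coordinate rule of Appendix \ref{AppendixChangeOfCoordRulesForHamAndPM}: the exact Hamiltonian flows are symplectic, hence leave $\bar{\mathcal{P}}_\eps$ of \eqref{ExpressionDarbouxMatrix} invariant, so only the truncation tails contribute through the Jacobian of $\boldsymbol{\chi}^N_\eps$. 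Because the lower block of \eqref{ExpressionDarbouxMatrix} carries a $1/\eps$, the natural object is $\eps\hat{\mathcal{P}}_\eps$, and this extra power of $\eps$ converts the $O(\eps^{N+1})$ discrepancy into the $\eps^{N+2}\Rest^N_{\bar{\mathcal{P}}}$ of \eqref{DefHatMatPoissExpPartLieCCOrderN}.

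The main obstacle is the uniform control of these truncation tails: I must show that repeated application of the Hamiltonian vector fields associated with $\pm\bar g_i$ keeps the iterates inside a fixed function class with summable bounds, so that the three remainders are genuinely smooth, $2\pi$-periodic in $\gamma$, and bounded on $[0,\eta_K]\times\boldsymbol{K}\times\rit\times[c^{\diamondsuit},d^{\diamondsuit}]$. This is exactly where the hypothesis $\bar g_i\in\mathcal{A}$ is indispensable, since it furnishes the positive radius of convergence encoded in $\boldsymbol{\Sigma}_S$ and permits the simultaneous choice of $\eta_K$ making every series and every intermediate composition converge and estimable on the compact slab. The accompanying bookkeeping, namely that $\alpha_{i,N}=\mathbb{E}(N/i)+1$ is sharp enough to push every discarded monomial beyond $\eps^{N}$, is delicate but purely combinatorial.
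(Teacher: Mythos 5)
You take a genuinely different route from the paper. You read each $\boldsymbol{\vartheta}_{\eps,\mp\bar{g}_{i}}^{\alpha_{i,N},i}$ as a truncation of the \emph{convergent} Lie series of an exact Hamiltonian flow and then transfer exact identities (group law of the flow, Poisson-map property). The paper never invokes exact flows: it proves near-morphism properties of the truncated operators with respect to products and Poisson brackets (Property \ref{PropPartLieFuncProd}), converts operator action into composition by expanding the \emph{target} analytic function in local power series and covering the compact slab (Theorem \ref{ThemQuasiComRondPoint}), and assembles \eqref{DefLamExpPartLieCCOrderN}--\eqref{DefHatMatPoissExpPartLieCCOrderN} from these algebraic blocks. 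For \eqref{DefLamExpPartLieCCOrderN} and \eqref{DefHatHExpPartLieCCOrderN} your route is workable, provided the uniform convergence of the Lie series on a neighbourhood of $\boldsymbol{K}\times\rit\times[c^{\diamondsuit},d^{\diamondsuit}]$ (Cauchy estimates for the analytic $\bar{g}_{i}$, plus analyticity of $\bar{H}_{\eps}$ for the transport step) is actually proved; that investment is comparable to the paper's covering argument, so this part is a legitimate alternative.

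The Poisson-matrix claim \eqref{DefHatMatPoissExpPartLieCCOrderN}, however, has a genuine gap as you argue it: your power count is one order of $\eps$ short. You only record that $i(\alpha_{i,N}+1)>N$, i.e.\ that the truncation tails are $O(\eps^{N+1})$. Write $\boldsymbol{\chi}^{N}_{\eps}=\phi_{\eps}+\eps^{N+1}\rho_{\eps}$, where $\phi_{\eps}$ is the composition of the exact flows, hence a Poisson map for $\bar{\mathcal{P}}_{\eps}$. The change-of-coordinates rule gives
\begin{align*}
\nabla\boldsymbol{\chi}^{N}_{\eps}\,\bar{\mathcal{P}}_{\eps}\,\big(\nabla\boldsymbol{\chi}^{N}_{\eps}\big)^{T}
=\bar{\mathcal{P}}_{\eps}\circ\phi_{\eps}
+\eps^{N+1}\Big(\nabla\rho_{\eps}\,\bar{\mathcal{P}}_{\eps}\,\big(\nabla\phi_{\eps}\big)^{T}
+\nabla\phi_{\eps}\,\bar{\mathcal{P}}_{\eps}\,\big(\nabla\rho_{\eps}\big)^{T}\Big)
+\eps^{2N+2}\,\nabla\rho_{\eps}\,\bar{\mathcal{P}}_{\eps}\,\big(\nabla\rho_{\eps}\big)^{T}.
\end{align*}
Since $\big(\bar{\mathcal{P}}_{\eps}\big)_{3,4}=1/\eps$, the middle terms are only $O(\eps^{N})$; and since the singular entries of $\bar{\mathcal{P}}_{\eps}$ are constant while the others are $O(\eps)$, one has $\bar{\mathcal{P}}_{\eps}\circ\phi_{\eps}\circ\boldsymbol{\lambda}^{N}_{\eps}=\bar{\mathcal{P}}_{\eps}+O(\eps^{N+2})$. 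So your bounds yield only $\eps\hat{\mathcal{P}}_{\eps}-\eps\bar{\mathcal{P}}_{\eps}=O(\eps^{N+1})$, not $\eps^{N+2}\Rest_{\bar{\mathcal{P}}}^{N}$. Multiplying by $\eps$ ``converts'' nothing: it helps only if $\hat{\mathcal{P}}_{\eps}-\bar{\mathcal{P}}_{\eps}=O(\eps^{N+1})$ is already established, which does not follow from an $O(\eps^{N+1})$ tail against a matrix with a $1/\eps$ entry. The gap can be repaired in two ways: (i) sharpen your combinatorial bound --- in fact $i(\alpha_{i,N}+1)=i\,\mathbb{E}(N/i)+2i\geq N+i+1\geq N+2$, so the map tails are $O(\eps^{N+2})$, and with this sharper (and true) estimate your conjugation argument closes; or (ii) do as the paper does and compute the entries of $\hat{\mathcal{P}}_{\eps}$ as Poisson brackets of the components of $\boldsymbol{\chi}^{N}_{\eps}$ (formula \eqref{AprioriExpOfLiePoissMat}), where the dangerous entry $\left\{\bar{\boldsymbol{r}}_{3},\bar{\boldsymbol{r}}_{4}\right\}_{\mathcal{D}}=1/\eps$ is a constant, hence reproduced \emph{exactly} by the operators, so that all errors enter additively as $\eps^{N+1}$-smooth terms.
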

The proof of Theorem \ref{MainPropOfLieCCOrderNThm} is given in subsection \ref{SubsectionProofOfTheoremMainPropOfLieCCOrderNThm}.
%
%%%%%%%%%%%%%%%%%%%%%%%%%%%%%%%%%%%%%%%%%%%%%%%%%%%%%%%%%%%%%%
%%%%%%%%%%%%%%%%%%%%%%%%%%%%%%%%%%%%%%%%%%%%%%%%%%%%%%%%%%%%%%
%%%%%%%%%%%%%%%%%%%%%%%%%%%%%%%%%%%%%%%%%%%%%%%%%%%%%%%%%%%%%%
\subsection{The Partial Lie Change of Coordinates Algorithm} 
\label{ThePartialLieCCAlgoSubSection}    %  \ref{ThePartialLieCCAlgoSubSection}
%%%%%%%%%%%%%%%%%%%%%%%%%%%%%%%%%%%%%%%%%%%%%%%%%%%%%%%%%%%%%%
%%%%%%%%%%%%%%%%%%%%%%%%%%%%%%%%%%%%%%%%%%%%%%%%%%%%%%%%%%%%%%
%%%%%%%%%%%%%%%%%%%%%%%%%%%%%%%%%%%%%%%%%%%%%%%%%%%%%%%%%%%%%%
In this Section, we will deduce from Formula \eqref{DefHatHExpPartLieCCOrderN} the Partial Lie Change of Coordinates Algorithm.
\begin{theorem}
\label{ExpForTheAlgoWellPosed}   %    \ref{ExpForTheAlgoWellPosed}  
With the same notations and under the same assumptions as in Theorems \ref{BoldChiNWellDef} and \ref{MainPropOfLieCCOrderNThm},
from formula \eqref{DefHatHExpPartLieCCOrderN} we have
\begin{align}
   &\hat{H}_{\eps}\!\left(\mathbf{z},\gamma,j\right)
   =\hat{H}_{0}\left(\mathbf{z},j\right)+\eps\hat{H}_{1}\left(\mathbf{z},\gamma,j\right)+\ldots+\eps^{N}\hat{H}_{N}\left(\mathbf{z},\gamma,j\right)
   +\eps^{N+1}\Rest_{\bar{H}}^{N}\!\left(\eps;\mathbf{z},\gamma,j\right),
\end{align} 
with $\Rest_{\bar{H}}^{N}$ in 
$\mathcal{C}_{\PerND}^{\infty}(\left[0,\eta_K\right]\times\boldsymbol{K}\times\rit\times\left[c^{\diamondsuit},d^{\diamondsuit}\right])$
and
\begin{align}
     \hat{H}_{0}\left(\mathbf{z},j\right)&=\bar{H}_{0}\left(\mathbf{z},j\right),
     \label{LieTransMethodIntroZerothEq2}    %   \eqref{LieTransMethodIntroZerothEq2}
     \\
     \hat{H}_{1}\left(\mathbf{z},\gamma,j\right)&=-B\left(\mathbf{z}\right)\!\frac{\partial\bar{g}_{1}}{\partial\theta}\!\left(\mathbf{z},\gamma,j\right)-\bar{H}_{1}\left(\mathbf{z},\gamma,j\right),
     \label{LieTransMethodIntroFirstEq}    %   \eqref{LieTransMethodIntroFirstEq}
     \\
     \hat{H}_{2}\left(\mathbf{z},\gamma,j\right)&=-B\left(\mathbf{z}\right)\frac{\partial\bar{g}_{2}}{\partial\theta}\!\left(\mathbf{z},\gamma,j\right)-\mathcal{V}_{2}\!\left(\bar{g}_{1}\right)\left(\mathbf{z},\gamma,j\right),
      \label{LieTransMethodIntroSecondEq2}    %   \eqref{LieTransMethodIntroSecondEq2}
     \\
     &\vdots
     \notag
     \\
     \hat{H}_{N}\left(\mathbf{z},\gamma,j\right)&=-B\left(\mathbf{z}\right)\frac{\partial\bar{g}_{N}}{\partial\theta}\!\left(\mathbf{z},\gamma,j\right)-\mathcal{V}_{N}\!\left(\bar{g}_{1},\ldots,\bar{g}_{N-1}\right)\left(\mathbf{z},\gamma,j\right),
     \label{LieTransMethodIntroLastEq2}    %   \eqref{LieTransMethodIntroLastEq2}
\end{align}
where for each $i\in\ldbrack1,N\rdbrack$, $\mathcal{V}_{i}\!\left(\bar{g}_{1},\ldots,\bar{g}_{i-1}\right)$
only depends on $\bar{g}_{1},\ldots,\bar{g}_{i-1}$, $\bar{H}_{0},\ldots,\bar{H}_{i}$ and their derivatives.
\end{theorem}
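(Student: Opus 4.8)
The plan is to derive the expansion \eqref{LieTransMethodIntroZerothEq2}--\eqref{LieTransMethodIntroLastEq2} by a graded bookkeeping in $\eps$. I would start from the main term of \eqref{DefHatHExpPartLieCCOrderN}, substitute the Darboux expansion of Corollary~\ref{HamDarbExp}, namely $\bar{H}_\eps=\bar{H}_0+\eps\bar{H}_1+\dots+\eps^N\bar{H}_N+\eps^{N+1}\iota_{N+1}$, then expand each factor $\boldsymbol{\vartheta}_{\eps,\bar{g}_i}^{\alpha_{i,N},i}$ as a polynomial in $\eps$, multiply the $N$ factors out, and collect by powers of $\eps$. The argument is triangular in the index $i$, so the first task is to pin down the $\eps$-order carried by each elementary operation.

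First I would record the $\eps$-grading of a single factor. From Definition~\ref{DefBoldVarijN2} one has $\boldsymbol{\vartheta}_{\eps,\bar{g}_i}^{\alpha_{i,N},i}\cdot=\sum_{k=0}^{\alpha_{i,N}}\frac{\eps^{ik}}{k!}(\overline{\mathbf{X}}_{\eps\bar{g}_i}^{\eps})^{k}\cdot$, and by the explicit Darboux matrix \eqref{ExpressionDarbouxMatrix} the Hamiltonian vector field splits as $\overline{\mathbf{X}}_{\eps\bar{g}_i}^{\eps}\cdot\bar{f}=\mathbf{W}_i\cdot\bar{f}+\eps^{2}\mathbf{Z}_i\cdot\bar{f}$, where $\mathbf{W}_i\cdot\bar{f}=\frac{\partial\bar{f}}{\partial\theta}\frac{\partial\bar{g}_i}{\partial k}-\frac{\partial\bar{f}}{\partial k}\frac{\partial\bar{g}_i}{\partial\theta}$ is the $(\theta,k)$-bracket (order $\eps^{0}$, produced by the $\tfrac1\eps$ entry) and $\mathbf{Z}_i$ is the $(y_1,y_2)$-bracket divided by $B$. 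Hence each factor reads $\boldsymbol{\vartheta}_{\eps,\bar{g}_i}^{\alpha_{i,N},i}=\mathrm{Id}+\eps^{i}\mathbf{W}_i+(\text{terms of order}\geq\eps^{i+1})$, and the choice $\alpha_{i,N}=\mathbb{E}(N/i)+1$ of \eqref{DefCoefPLTCC} is exactly what guarantees that truncating at $k=\alpha_{i,N}$ discards only contributions of order $>\eps^{N}$. Collecting the order-$\eps^{0}$ part of the product gives at once $\hat{H}_0=\bar{H}_0$, which is \eqref{LieTransMethodIntroZerothEq2}; and because $\bar{H}_0(\mathbf{y},k)=B(\mathbf{y})k$ is independent of $\theta$ (the structural fact recorded in Corollary~\ref{HamDarbExp}), one gets $\mathbf{W}_i\cdot\bar{H}_0=-\frac{\partial\bar{H}_0}{\partial k}\frac{\partial\bar{g}_i}{\partial\theta}=-B\frac{\partial\bar{g}_i}{\partial\theta}$, producing the leading term of each $\hat{H}_i$.

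The heart of the proof is then the triangularity statement: the generator $\bar{g}_i$ enters the coefficient of $\eps^{i}$ in exactly one way. A generic monomial of the multiplied-out product carries $\eps$-order $\sum_\ell \ell k_\ell+2m+n$, where $k_\ell$ is the power of $\boldsymbol{\vartheta}_{\eps,\bar{g}_\ell}^{\alpha_{\ell,N},\ell}$ selected, $m$ counts the $\mathbf{Z}$-type (order-$\eps^{2}$) factors, and $n$ is the index of the Darboux coefficient $\bar{H}_n$ selected. If this order equals $i$ and $\bar{g}_i$ occurs, then $k_i\geq1$ already contributes $\geq i$, forcing $k_i=1$, all other $k_\ell=0$, $m=0$ and $n=0$; the unique such term is $\eps^{i}\,\mathbf{W}_i\cdot\bar{H}_0=-\eps^{i}B\frac{\partial\bar{g}_i}{\partial\theta}$, and $\bar{g}_{i+1},\dots,\bar{g}_N$ cannot appear at all. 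Every remaining order-$\eps^{i}$ contribution therefore involves only $\bar{g}_1,\dots,\bar{g}_{i-1}$ and the Darboux terms $\bar{H}_0,\dots,\bar{H}_i$ together with their derivatives (repeated application of $\mathbf{W}_\ell,\mathbf{Z}_\ell$ only manufactures further derivatives and powers of $1/B$). Gathering these into $\mathcal{V}_i(\bar{g}_1,\dots,\bar{g}_{i-1})$ yields \eqref{LieTransMethodIntroFirstEq}--\eqref{LieTransMethodIntroLastEq2}, the precise signs being fixed by the conventions of Definition~\ref{DefBoldVarijN2} and the matrix \eqref{ExpressionDarbouxMatrix}.

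Finally, the remainder and its regularity are essentially inherited. All operators act on generators in $\mathcal{Q}_{T,b}^{\infty}$ and on the $\bar{H}_n$, which are smooth, bounded and $2\pi$-periodic in $\theta$; so each $\hat{H}_i$ and each $\mathcal{V}_i$ is a finite sum of products of such functions and lies in the asserted class, while all monomials of order $>N$ produced by the expansion, together with the $\eps^{N+1}\iota_{N+1}$ tail of $\bar{H}_\eps$ and the pre-existing $\eps^{N+1}\Rest_{\bar{H}}^{N}$ of \eqref{DefHatHExpPartLieCCOrderN}, fold into a single $\eps^{N+1}\Rest_{\bar{H}}^{N}\in\mathcal{C}_{\PerND}^{\infty}([0,\eta_K]\times\boldsymbol{K}\times\rit\times[c^{\diamondsuit},d^{\diamondsuit}])$. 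I expect the combinatorial bookkeeping of the grading — proving uniformly in $i$ that no order-$i$ term other than the isolated one can involve $\bar{g}_i$ or any $\bar{g}_\ell$ with $\ell\geq i$, and packaging the leftover so that the recursive dependence of $\mathcal{V}_i$ on the lower-index data is manifest — to be the main obstacle; by contrast, the regularity and remainder steps are routine given Theorem~\ref{MainPropOfLieCCOrderNThm}.
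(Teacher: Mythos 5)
Your proposal is correct and follows essentially the same route as the paper's own proof: you expand the composition $\boldsymbol{\vartheta}_{\eps,\bar{g}_{1}}^{\alpha_{1,N},1}\cdot\ldots\cdot\boldsymbol{\vartheta}_{\eps,\bar{g}_{N}}^{\alpha_{N,N},N}$ applied to the expanded Darboux Hamiltonian, collect powers of $\eps$, and isolate $\bar{g}_i$ at order $\eps^{i}$ by the same triangularity observation the paper makes (that the only order-$i$ monomial containing $\bar{g}_i$ is $\overline{\mathbf{X}}_{\eps\bar{g}_{i}}^{\eps}\cdot\bar{H}_{0}$, whose $(y_1,y_2)$-part is pushed to order $i+2$ by the factor $\eps^{2}/B$, leaving $-B\,\partial\bar{g}_{i}/\partial\theta$). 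Your finer grading making the $\eps^{2}$-split $\overline{\mathbf{X}}_{\eps\bar{g}_i}^{\eps}=\mathbf{W}_i+\eps^{2}\mathbf{Z}_i$ explicit is just a more detailed bookkeeping of what the paper does when it ``gathers terms having the same power of $\eps$,'' and your derived sign $\hat{H}_{1}=\bar{H}_{1}-B\,\partial\bar{g}_{1}/\partial\theta$ agrees with the paper's proof (equation \eqref{201305290934}) rather than with the sign printed in the theorem statement, which is an inconsistency internal to the paper.
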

The proof of Theorem \ref{ExpForTheAlgoWellPosed} is given in subsection \ref{SectionProofOfTheoremExpForTheAlgoWellPosed}. \\

From Theorem \ref{ExpForTheAlgoWellPosed} we construct the following inductive Algorithm to determine 
$\bar{g}_{1},\ldots,\bar{g}_{N}$,  $\hat{H}_{1},\ldots,\hat{H}_{N}$ and consequently the partial Lie change of coordinates of order $N$.

\begin{algorithm}
\label{AlgorithmLieRevisedVersion}   %   \ref{AlgorithmLieRevisedVersion}
~
Set 
\begin{gather}
\hat{H}_{1}\left(\mathbf{z},\gamma,j\right)=-\frac{1}{2\pi}\int_{0}^{2\pi}  \bar{H}_{1}\left(\mathbf{z},\gamma,j\right)\,d\gamma,
\end{gather}
and get $\bar g_1$ by solving
\begin{gather}
\left \{
\begin{aligned}
       -B\left(\mathbf{z}\right)\!\frac{\partial\bar{g}_{1}}{\partial\gamma}\!\left(\mathbf{z},\gamma,j\right)&=\bar{H}_{1}\left(\mathbf{z},\gamma,j\right)-\frac{1}{2\pi}\int_{0}^{2\pi}\bar{H}_{1}\left(\mathbf{z},\gamma,j\right)\, d\gamma.
       \\
       \bar{g}_{1}\left(\mathbf{z},0,j\right)&=0.
\end{aligned}
\right.
\end{gather}
Then for $i\in\ldbrack1,N\rdbrack$, set
\begin{gather}
\begin{aligned}
     &\hat{H}_{i}\left(\mathbf{z},\gamma,j\right)=-\frac{1}{2\pi}\int_{0}^{2\pi}\left[\mathcal{V}_{i}\left(\bar{g}_{1},\ldots,\bar{g}_{i-1}\right)\right]\!\left(\mathbf{z},\gamma,j\right)d\gamma,
\end{aligned}
\end{gather}
and get $\bar g_i$ and by solving:
\begin{gather}
\left \{
\begin{aligned}
     -B\left(\mathbf{z}\right)\frac{\partial\bar{g}_{i}}{\partial\gamma}\left(\mathbf{z},\gamma,j\right)&=\mathcal{V}_{i}\!\left(\bar{g}_{1},\ldots,\bar{g}_{i-1}\right)\left(\mathbf{z},\gamma,j\right)-\frac{1}{2\pi}\int_{0}^{2\pi}\left[\mathcal{V}_{i}\left(\bar{g}_{1},\ldots,\bar{g}_{i-1}\right)\right]\!\left(\mathbf{z},\gamma,j\right)d\gamma.
     \\
     \bar{g}_{i}\left(\mathbf{z},0,j\right)&=0.
\end{aligned}
\right.
\end{gather}
%with the help of formula \eqref{GenSolOfVarEq}.
\end{algorithm}
By construction functions $\bar{g}_{1},\ldots,\bar{g}_{N}$ and  $\hat{H}_{1},\ldots,\hat{H}_{N}$ obtained by applying Algorithm \ref{AlgorithmLieRevisedVersion}
satisfy the following Theorem. 
\begin{theorem}
\label{AlgoLeadsToFuncWellDef}   %  \ref{AlgoLeadsToFuncWellDef}
Let $\bar{g}_{1},\ldots,\bar{g}_{N}$ and  $\hat{H}_{1},\ldots,\hat{H}_{N}$ be constructed by applying Algorithm \ref{AlgorithmLieRevisedVersion}.
Then, $\bar{g}_{1},\ldots,\bar{g}_{N}\in\mathcal{A}(\mathbb{R}^3\times(0,+\infty))\cap\mathcal{Q}_{T,b}^\infty$,
$\hat{H}_{1},\ldots,\hat{H}_{N}\in\mathcal{C}_{\#}^{\infty}\!\left(\mathbb{R}^{3}\times\left(0,+\infty\right)\right)$,
and for each $i\in\ldbrack1,N\rdbrack$, $\hat{H}_i$ does not depend on $\gamma$.
\end{theorem}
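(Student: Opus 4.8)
The plan is to argue by induction on $i\in\ldbrack1,N\rdbrack$, exploiting that the class $\mathcal{Q}_{T,b}^{\infty}\cap\mathcal{A}(\mathbb{R}^3\times(0,+\infty))$ is stable under every operation appearing in Algorithm \ref{AlgorithmLieRevisedVersion}. The input data already lie in this class: by Corollary \ref{HamDarbExp} one has $\bar{H}_{0}=B\,k$ and $\bar{H}_{n}=\sqrt{k}^{\,n+2}\,b_{n}$ with $b_{n}\in\mathcal{O}_{T,b}^{\infty}$, so each $\bar{H}_{n}\in\mathcal{Q}_{T,b}^{\infty}$; moreover, since $B$ is analytic with $\inf B>1$, the $b_{n}$ (assembled from $B$, $1/B$ and the iterated fields $\boldsymbol{\Lambda}^{l}\cdot\frac{1}{B}$, all analytic) are analytic, whence $\bar{H}_{0},\ldots,\bar{H}_{N}\in\mathcal{Q}_{T,b}^{\infty}\cap\mathcal{A}(\mathbb{R}^3\times(0,+\infty))$.

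The technical heart is a closure lemma for this class. Writing a generic element as $\sum_{n}\sqrt{k}^{\,n}c_{n}(\mathbf{y},\theta)$ with $c_{n}\in\mathcal{O}_{T,b}^{\infty}$, and using that $\mathcal{O}_{T,b}^{\infty}$ is exactly the algebra of finite trigonometric polynomials in $\theta$ with $\mathcal{C}_{b}^{\infty}(\mathbb{R}^2)$ coefficients (which in particular contains the $\theta$-independent $\mathcal{C}_{b}^{\infty}$ functions, since $1=\cos^2+\sin^2$), I would check directly stability under: the four derivatives $\partial_{y_1},\partial_{y_2},\partial_{\theta},\partial_{k}$ (note $\partial_{k}$ merely shifts $\sqrt{k}^{\,n}\mapsto\tfrac{n}{2}\sqrt{k}^{\,n-2}$); products; multiplication by $1/B\in\mathcal{C}_{b}^{\infty}\cap\mathcal{A}$; the $\gamma$-average $f\mapsto\frac{1}{2\pi}\int_{0}^{2\pi}f\,d\gamma$ (which retains only the $\theta$-constant terms, hence lands in the $\theta$-independent part of the class); and the $\gamma$-antiderivative $f\mapsto\int_{0}^{\gamma}f\,d\gamma'$ \emph{provided} $f$ has zero $\gamma$-average — this proviso is essential, since antidifferentiating a constant-in-$\theta$ term would produce a nonperiodic $a(\mathbf{y})\gamma$ that leaves the class. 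Combining the first three items, the $\eps$-free parts of the Darboux bracket $\{\cdot,\cdot\}_{\mathcal{D}}$ attached to $\bar{\mathcal{P}}_{\eps}$ of \eqref{ExpressionDarbouxMatrix} also map $(\mathcal{Q}_{T,b}^{\infty}\cap\mathcal{A})^2$ into $\mathcal{Q}_{T,b}^{\infty}\cap\mathcal{A}$. Analyticity is preserved at each step, the only nonalgebraic moves being the two $\gamma$-integrations, which act term by term on trigonometric polynomials and thus keep the coefficients analytic.

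With these tools the induction is immediate. Assume $\bar{g}_{1},\ldots,\bar{g}_{i-1}\in\mathcal{Q}_{T,b}^{\infty}\cap\mathcal{A}$. By Theorem \ref{ExpForTheAlgoWellPosed}, $\mathcal{V}_{i}(\bar{g}_{1},\ldots,\bar{g}_{i-1})$ is a finite combination of Poisson brackets and derivatives of $\bar{g}_{1},\ldots,\bar{g}_{i-1}$ and $\bar{H}_{0},\ldots,\bar{H}_{i}$, so the closure lemma gives $\mathcal{V}_{i}\in\mathcal{Q}_{T,b}^{\infty}\cap\mathcal{A}$ (for $i=1$ one reads $\mathcal{V}_{1}=\bar{H}_{1}$). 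Then $\hat{H}_{i}=-\frac{1}{2\pi}\int_{0}^{2\pi}\mathcal{V}_{i}\,d\gamma$ is by construction independent of $\gamma$ and belongs to the $\theta$-independent part of $\mathcal{Q}_{T,b}^{\infty}$, hence $\hat{H}_{i}\in\mathcal{C}_{\#}^{\infty}(\mathbb{R}^3\times(0,+\infty))$. Finally $\bar{g}_{i}$ solves $-B\,\partial_{\gamma}\bar{g}_{i}=\mathcal{V}_{i}-\langle\mathcal{V}_{i}\rangle$ with $\bar{g}_{i}(\cdot,0,\cdot)=0$; the right-hand side has zero $\gamma$-average, which is precisely the solvability condition yielding a $2\pi$-periodic solution, namely $\bar{g}_{i}=-\frac{1}{B}\int_{0}^{\gamma}(\mathcal{V}_{i}-\langle\mathcal{V}_{i}\rangle)\,d\gamma'$. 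The zero-average antiderivative followed by multiplication by $1/B$ keeps us in the class, so $\bar{g}_{i}\in\mathcal{Q}_{T,b}^{\infty}\cap\mathcal{A}$, closing the induction and simultaneously showing the algorithm is well-posed.

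The main obstacle is not the bookkeeping but two verifications underlying it. First, I would have to pin down from the proof of Theorem \ref{ExpForTheAlgoWellPosed} that $\mathcal{V}_{i}$ is genuinely assembled only from the space-preserving operations above (derivatives, products, $1/B$-multiplication and the $\eps$-free brackets), and not from an operation that could break the $\sqrt{k}$-monomial or trigonometric-polynomial structure; this in turn requires tracking the $\eps$-powers carried by $\overline{\mathbf{X}}_{-\eps\bar{g}}^{\eps}$ in Definition \ref{DefBoldVarijN2} so that each extracted coefficient $\mathcal{V}_{i}$ is truly $\eps$-free. Second, I would have to secure the global real-analyticity of the building blocks $\bar{H}_{n}$ and its propagation through the $\gamma$-integrations; this rests on analyticity of $B$, of $1/B$ (legitimate since $\inf B>1$) and of the flow $\mathcal{G}_{\lambda}$ of the analytic field $\boldsymbol{\Lambda}$, together with analyticity of integrals over the compact intervals $[0,2\pi]$ and $[0,\gamma]$.
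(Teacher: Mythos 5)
Your proof is correct, but it cannot be said to follow the paper's route for the simple reason that the paper has none: Theorem \ref{AlgoLeadsToFuncWellDef} is introduced there only by the sentence that the functions produced by Algorithm \ref{AlgorithmLieRevisedVersion} satisfy it ``by construction,'' and, unlike Theorems \ref{BoldChiNWellDef}, \ref{MainPropOfLieCCOrderNThm} and \ref{ExpForTheAlgoWellPosed}, it receives no proof subsection. Your induction on $i$, resting on a closure lemma for the class $\mathcal{Q}_{T,b}^{\infty}\cap\mathcal{A}(\mathbb{R}^{3}\times(0,+\infty))$, is exactly the verification the authors leave implicit, and it is not optional for the paper's logic: Theorems \ref{BoldChiNWellDef} and \ref{MainPropOfLieCCOrderNThm} assume the $\bar{g}_{i}$ lie in $\mathcal{Q}_{T,b}^{\infty}$ (respectively in $\mathcal{A}$), so the proof of Theorem \ref{MainThm2} silently relies on precisely this statement. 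You also isolate and correctly dispose of the genuinely delicate points: that $\mathcal{O}_{T,b}^{\infty}$, being an algebra, contains the $\theta$-independent $\mathcal{C}_{b}^{\infty}$ functions (via $\cos^{2}+\sin^{2}=1$), so that $\bar{H}_{0}=Bk$ and all $\gamma$-averages remain in the class; that the $\gamma$-antiderivative preserves $2\pi$-periodicity only on zero-average inputs, which is exactly what the algorithm feeds it after subtracting $\langle\mathcal{V}_{i}\rangle$; and that multiplication by $1/B$ is licit because $\inf B>1$ and all derivatives of $B$ are bounded. The two residual verifications you flag do go through on the paper's own formulas: in Subsection \ref{SectionProofOfTheoremExpForTheAlgoWellPosed} the operator $\overline{\mathbf{X}}_{\eps\bar{g}}^{\eps}$ splits into an $\eps$-free $(\theta,k)$-bracket part and $\eps^{2}$ times a $(y_{1},y_{2})$-bracket part weighted by $1/B$, so each $\mathcal{V}_{i}$, extracted at a fixed power of $\eps$, is a finite combination of your class-preserving operations; and the coefficients $b_{n}$ of Corollary \ref{HamDarbExp} are analytic because, by Remark \ref{AlgoToComputeTheExpTermOfBeta} and Lemma \ref{SuccessiveDerivativeOfBeta}, they are polynomial expressions in $\cos\theta$, $\sin\theta$ and the functions $\boldsymbol{\Lambda}^{l}\cdot\frac{1}{B}$, all analytic since $B$ is analytic and bounded away from zero. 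In short, your argument is sound and supplies a proof the paper omits.
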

%The proof of Theorem \ref{AlgoLeadsToFuncWellDef} is given in subsection \ref{SectionProofOfTheoremAlgoLeadsToFuncWellDef}.
%
%%%%%%%%%%%%%%%%%%%%%%%%%%%%%%%%%%%%%%%%%%%%%%%%%%%%%%%%%%%%%%
%%%%%%%%%%%%%%%%%%%%%%%%%%%%%%%%%%%%%%%%%%%%%%%%%%%%%%%%%%%%%%
%%%%%%%%%%%%%%%%%%%%%%%%%%%%%%%%%%%%%%%%%%%%%%%%%%%%%%%%%%%%%%
%%%%%%%%%%%%%%%%%%%%%%%%%%%%%%%%%%%%%%%%%%%%%%%%%%%%%%%%%%%%%%
%%%%%%%%%%%%%%%%%%%%%%%%%%%%%%%%%%%%%%%%%%%%%%%%%%%%%%%%%%%%%%
%%%%%%%%%%%%%%%%%%%%%%%%%%%%%%%%%%%%%%%%%%%%%%%%%%%%%%%%%%%%%%
\subsection{Proof of Theorem \ref{BoldChiNWellDef}}
\label{SubsectionProofOfTheoremBoldChiNWellDef}   %   \ref{SubsectionProofOfTheoremBoldChiNWellDef}   
%%%%%%%%%%%%%%%%%%%%%%%%%%%%%%%%%%%%%%%%%%%%%%%%%%%%%%%%%%%%%%
%%%%%%%%%%%%%%%%%%%%%%%%%%%%%%%%%%%%%%%%%%%%%%%%%%%%%%%%%%%%%%
%%%%%%%%%%%%%%%%%%%%%%%%%%%%%%%%%%%%%%%%%%%%%%%%%%%%%%%%%%%%%%
%%%%%%%%%%%%%%%%%%%%%%%%%%%%%%%%%%%%%%%%%%%%%%%%%%%%%%%%%%%%%%
%%%%%%%%%%%%%%%%%%%%%%%%%%%%%%%%%%%%%%%%%%%%%%%%%%%%%%%%%%%%%%
%%%%%%%%%%%%%%%%%%%%%%%%%%%%%%%%%%%%%%%%%%%%%%%%%%%%%%%%%%%%%%
The first step to prove Theorem \ref{BoldChiNWellDef} consists in proving that the partial Lie sums are diffeomorphisms and to localize
their ranges.
\begin{theorem}
\label{FirstStepProofThmBoldChiNWellDef}  %   \ref{FirstStepProofThmBoldChiNWellDef}
Let $i\in\ldbrack1,N\rdbrack$, $\bar{g}_i\in\mathcal{Q}_{T,b}^\infty$ and $c$ and $d$ be positive real numbers (with $c<d$).
Then there exists $\eta>0$ such that for any $\eps\in[-\eta,\eta]$,
function $\boldsymbol{\vartheta}_{\eps,-\bar{g}_i}^{\alpha_{i,N},i}$, defined by \eqref{DefVarThetN2}, is a diffeomorphism from $\mathbb{R}^3\times(c,d)$ onto its range.
Moreover, for any interval  $(c^\star,d^\star)$ and $(c^\bullet,d^\bullet)$
such that $c^\star>0$ and
\begin{align}
    &[c^{\bullet},d^{\bullet}]\subsetneq(c,d)\subsetneq[c,d]\subsetneq(c^{\star},d^{\star})
\end{align}
 there exists a real number $\eta^{\bullet,\star}>0$ such that for any $\eps\in\left[-\eta^{\bullet,\star},\eta^{\bullet,\star}\right]$:
 \begin{align}
    &\mathbb{R}^3\times (c^{\bullet},d^{\bullet})\subset\boldsymbol{\vartheta}_{\eps,-\bar{g}_i}^{\alpha_{i,N},i}\big(\mathbb{R}^3\times(c,d)\big)\subset\mathbb{R}^3\times (c^{\star},d^{\star}).
      \label{FirstStepProofThmBoldChiNWellDefInclusionRange}   %   \eqref{FirstStepProofThmBoldChiNWellDefInclusionRange} 
 \end{align}
\end{theorem}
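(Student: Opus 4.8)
The plan is to exhibit $\boldsymbol{\vartheta}_{\eps,-\bar{g}_i}^{\alpha_{i,N},i}$ as a small perturbation of the identity and then run the standard toolbox: the inverse function theorem for the local statement, a Lipschitz estimate for global injectivity, and a contraction-mapping argument for the range localization. First I would make the perturbative structure explicit. Writing the Hamiltonian vector field through the Darboux matrix \eqref{ExpressionDarbouxMatrix}, one finds
\begin{align*}
  \overline{\mathbf{X}}_{-\eps\bar{g}_i}^{\eps}
  =\bar{\mathcal{P}}_{\eps}\,\nabla(-\eps\bar{g}_i)
  =\left(\tfrac{\eps^{2}}{B}\partial_{y_{2}}\bar{g}_i,\,-\tfrac{\eps^{2}}{B}\partial_{y_{1}}\bar{g}_i,\,-\partial_{k}\bar{g}_i,\,\partial_{\theta}\bar{g}_i\right),
\end{align*}
so the singular factor $1/\eps$ of $\bar{\mathcal{P}}_{\eps}$ is cancelled by the factor $\eps$ in $-\eps\bar{g}_i$, and the field is smooth in $\eps$ up to $\eps=0$. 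Consequently each output $\boldsymbol{\vartheta}_{\eps,-\bar{g}_i}^{\alpha_{i,N},i}\cdot\mathbf{y}_1,\ldots,\boldsymbol{\vartheta}_{\eps,-\bar{g}_i}^{\alpha_{i,N},i}\cdot\mathbf{k}$ is a finite sum whose $k=0$ term is the identity and whose higher terms carry a factor $\eps^{ik}$; hence on $\rit^3\times(c,d)$ one may write $\boldsymbol{\vartheta}_{\eps,-\bar{g}_i}^{\alpha_{i,N},i}=\mathrm{id}+\eps^{i}R_{\eps}$. The key point, and the place where the hypothesis $\bar{g}_i\in\mathcal{Q}_{T,b}^\infty$ is used, is that since the coefficients $c_n$ are bounded with bounded derivatives and $\sqrt{k}^{\,n}$ together with its $k$-derivatives is bounded on any $[c',d']$ with $0<c'<c$, the remainder $R_{\eps}$ and all its derivatives are bounded on $\rit^3\times[c',d']$ uniformly for $\eps$ in a compact interval.

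With this in hand, the local statement is immediate: the Jacobian equals $\mathrm{Id}+\eps^{i}DR_{\eps}$, which for $\eps$ small is uniformly invertible, so by the inverse function theorem $\boldsymbol{\vartheta}_{\eps,-\bar{g}_i}^{\alpha_{i,N},i}$ is a local diffeomorphism and in particular an open map. Global injectivity follows from the convexity of $\rit^3\times(c,d)$ and the mean value inequality: for two points $p,q$ in the slab,
\begin{align*}
  \left|\boldsymbol{\vartheta}_{\eps,-\bar{g}_i}^{\alpha_{i,N},i}(p)-\boldsymbol{\vartheta}_{\eps,-\bar{g}_i}^{\alpha_{i,N},i}(q)\right|
  \geq\left(1-\eps^{i}\sup\|DR_{\eps}\|\right)|p-q|,
\end{align*}
which is positive for $p\neq q$ once $\eps^{i}\sup\|DR_{\eps}\|<1$. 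Choosing $\eta$ so that this holds for all $|\eps|\le\eta$, the map is an injective local diffeomorphism, hence a diffeomorphism onto its (open) image.

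It remains to localize the range, i.e. to prove \eqref{FirstStepProofThmBoldChiNWellDefInclusionRange}. The right inclusion is the easy one: the first three output components lie in $\rit^3$ by default and impose no constraint, while the fourth obeys $|\boldsymbol{\vartheta}_{\eps,-\bar{g}_i}^{\alpha_{i,N},i}\cdot\mathbf{k}-k|\le\eps^{i}C$, so for $\eps$ small the image stays inside $\rit^3\times(c^\star,d^\star)$ because $[c,d]\subset(c^\star,d^\star)$. For the left inclusion I would fix a target $\mathbf{q}$ whose fourth coordinate $q_4$ lies in $(c^\bullet,d^\bullet)$ and solve $\boldsymbol{\vartheta}_{\eps,-\bar{g}_i}^{\alpha_{i,N},i}(p)=\mathbf{q}$ via the contraction $T(p)=\mathbf{q}-\eps^{i}R_{\eps}(p)$ on the complete space $\rit^3\times[c,d]$. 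This $T$ is a contraction for $\eps$ small, and since $[c^\bullet,d^\bullet]\subsetneq(c,d)$ leaves a positive distance $\rho$ to the boundary, the bound $\eps^{i}\|R_{\eps}\|_\infty<\rho$ guarantees that $T$ maps $\rit^3\times[c,d]$ into itself and that its fixed point has fourth component in the open interval $(c,d)$; thus the preimage lies in $\rit^3\times(c,d)$ and $\mathbf{q}$ is attained. Taking $\eta^{\bullet,\star}$ to be the minimum of the thresholds from all three estimates finishes the proof. I expect this last surjectivity inclusion to be the main obstacle, precisely because it is the only non-local step and because it forces every estimate to be uniform over the non-compact directions $(\mathbf{y},\theta)\in\rit^3$; that uniformity is exactly what the boundedness built into $\mathcal{Q}_{T,b}^\infty$ provides.
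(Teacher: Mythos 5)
Your proof is correct, and while the diffeomorphism part follows essentially the paper's own route (write $\boldsymbol{\vartheta}_{\eps,-\bar{g}_i}^{\alpha_{i,N},i}=\mathrm{id}+\eps^{i}R_{\eps}$, note that the $1/\eps$ singularity of $\bar{\mathcal{P}}_{\eps}$ is cancelled so the perturbation is smooth in $\eps$, then use the uniform Lipschitz bound supplied by $\mathcal{Q}_{T,b}^{\infty}$ on the slab to get injectivity and invertibility of the Jacobian), your treatment of the delicate left inclusion in \eqref{FirstStepProofThmBoldChiNWellDefInclusionRange} is genuinely different. The paper proves surjectivity onto $\rit^3\times(c^\bullet,d^\bullet)$ via the Brouwer fixed point theorem: it cuts the cylinder into compact convex blocks $\boldsymbol{\mathfrak{K}}_{2}^{l},\boldsymbol{\mathfrak{K}}_{3}^{l}$ (bounded in $\mathbf{y}$ and with $\theta$ restricted to intervals indexed by $l\in\mathbb{Z}$), shows that $F^{\eps}_{(\mathbf{y}',\theta',k')}(p)=(\mathbf{y}',\theta',k')-\eps\boldsymbol{\nu}_{\eps,-\bar{g}_i}^{\alpha_{i,N},i}(p)$ maps each block into itself, extracts a fixed point from compactness and convexity alone, and then patches over all $l$ and all ball centers $\boldsymbol{m}_0$, checking that the threshold $\eta$ is independent of these parameters. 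You instead run the Banach fixed point theorem for the same map $T(p)=\mathbf{q}-\eps^{i}R_{\eps}(p)$ directly on the complete (non-compact) set $\rit^{3}\times[c,d]$, using the contraction constant $\eps^{i}\sup\lVert DR_{\eps}\rVert<1$ that you already established for injectivity, plus the sup-norm bound $\eps^{i}\lVert R_{\eps}\rVert_{\infty}<\rho$ to keep the fourth coordinate inside $(c,d)$. Both arguments are sound; yours is shorter because it needs no decomposition into compact pieces and no patching, and uniformity over the unbounded directions is handled once by the global bounds coming from $\mathcal{Q}_{T,b}^{\infty}$, whereas the paper's Brouwer argument has the (here unexploited) advantage of requiring only continuity and a self-mapping property rather than a derivative bound on the perturbation.
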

\hspace{-0.5cm}Subsequently we will denote by 
\begin{gather}
	\label{201402161137} % \eqref{201402161137}
	\boldsymbol{\Xi}_{\eps,-\bar{g}_i}^{\alpha_{i,N},i} \text{ the inverse function of } \boldsymbol{\vartheta}_{\eps,-\bar{g}_i}^{\alpha_{i,N},i}.
\end{gather}
\begin{proof}
In a first place, we will show that $\boldsymbol{\vartheta}_{\eps,-\bar{g}_i}^{\alpha_{i,N},i}$ is a diffeomorphism from $\mathbb{R}^3\times(c,d)$ onto its range.
To this aim, we will check that there exists a real number $\bar{\eta}_1$ such that for any $\eps\in\left[-\bar{\eta}_{1},\bar{\eta}_{1}\right],$
the map $\boldsymbol{\vartheta}_{\eps,-\bar{g}_i}^{\alpha_{i,N},i}$ satisfies the assumptions of the classical
global inversion Theorem. 
\begin{remark}
This theorem claims that if $A$ is a continuous homeomorphism from a Banach space onto a normed vector space and if
$\phi$ is Lipschitz-continuous from the same Banach space onto the same normed vector space with a Lipschitz constant smaller than
$\|A^{-1}\|^{-1}$, then $A+\phi$ is invertible and its inverse map is Lipschitz-continuous.
\end{remark}

Function $\boldsymbol{\nu}_{\eps,-\bar{g}_i}^{\alpha_{i,N},i},$ defined as being such that
\begin{align}
     &\boldsymbol{\vartheta}_{\eps,-\bar{g}_i}^{\alpha_{i,N},i}=id+\eps\boldsymbol{\nu}_{\eps,-\bar{g}_i}^{\alpha_{i,N},i}
     \label{DefNuInTermsOfTheta}    %   \eqref{DefNuInTermsOfTheta} 
\end{align}
and whose expression, because of \eqref{DefVarthetaActFuncDifOp}, is given by
\begin{align}
\label{DefNuV2}    %   \eqref{DefNuV2}
     \boldsymbol{\nu}_{\eps,-\bar{g}_i}^{\alpha_{i,N},i}
     &=\left(\overset{\alpha_{i,N}}{\underset{j=1}{\sum}}\frac{\eps^{ij-1}}{j!}\left(\overline{\Xvec}_{-\eps\bar{g}_i}^{\eps}\right)^{j}\cdot\mathbf{y}_{1},\ldots,\overset{\alpha_{i,N}}{\underset{j=1}{\sum}}\frac{\eps^{ij-1}}{j!}\left(\overline{\Xvec}_{-\eps\bar{g}_i}^{\eps}\right)^{j}\cdot\mathbf{k}\right),
\end{align}
is differentiable and its differential is bounded on $\mathbb{R}^3\times[c,d]$. Moreover, $\eps\mapsto\boldsymbol{\nu}_{\eps,-\bar{g}_i}^{\alpha_{i,N},i}\left(\mathbf{y},\theta,k\right)$ is clearly
in $\mathcal{C}^\infty\!\left(\rit\right)$ for any $\left(\mathbf{y},\theta,k\right)\in\rit^3\times\left(0,+\infty\right)$. Hence, we can define 
\begin{align}
        &\left\Vert \boldsymbol{\nu}_{\eps,-\bar{g}_i}^{\alpha_{i,N},i}\right\Vert _{1,\infty}=\underset{\left(\mathbf{y},\theta,k\right)\in\mathbb{R}^{3}\times[c,d]}{\sup}\left|\left(d\boldsymbol{\nu}_{\eps,-\bar{g}_i}^{\alpha_{i,N},i}\right)_{\!\left(\mathbf{y},\theta,k\right)}\right|\negmedspace,
\end{align}
where function $\eps\mapsto\left\Vert \boldsymbol{\nu}_{\eps,-\bar{g}_i}^{\alpha_{i,N},i}\right\Vert _{1,\infty}$ is clearly in $\mathcal{C}^\infty\!\left(\rit\right)$. 
Now, since $\eps\left\Vert \boldsymbol{\nu}_{\eps,-\bar{g}_i}^{\alpha_{i,N},i}\right\Vert _{1,\infty}\!\!\rightarrow 0$  when ${\eps\rightarrow0}$,  there exists a real number $\eta'>0$ such that 
\begin{align}
     &\forall\eps\in\left[-\eta',\eta'\right],\ \left|\eps\left\Vert \boldsymbol{\nu}_{\eps,-\bar{g}_i}^{\alpha_{i,N},i}\right\Vert _{1,\infty}\right|<1.
\end{align}
Hence, we deduce that for $\eps$ small enough $\eps\boldsymbol{\nu}_{\eps,-\bar{g}_i}^{\alpha_{i,N},i}$ is Lipschitz continuous on $\mathbb{R}^3\times[c,d]$
and that its Lipschitz constant is smaller than $\left\Vert id^{-1}\right\Vert ^{-1}_\infty=1.$ 
Consequently \eqref{DefNuInTermsOfTheta}  and the global inversion Theorem imply that $\boldsymbol{\vartheta}_{\eps,-\bar{g}_i}^{\alpha_{i,N},i}$
is invertible and Lipschitz continuous.

The second step consists in checking that for any $\left(\mathbf{y},\theta,k\right)\in\rit^3\times\left[c,d\right]$ the differential
 \begin{align*}
      &\left(d\boldsymbol{\vartheta}_{\eps,-\bar{g}_i}^{\alpha_{i,N},i}\right)_{\left(\mathbf{y},\theta,k\right)}
\end{align*}
is an isomorphism. As
\begin{align}
       &\left(d\boldsymbol{\vartheta}_{\eps,-\bar{g}_i}^{\alpha_{i,N},i}\right)_{\negmedspace\left(\mathbf{y},\theta,k\right)}=id+\eps\left(d\boldsymbol{\nu}_{\eps,-\bar{g}_i}^{\alpha_{i,N},i}\right)_{\negmedspace\left(\mathbf{y},\theta,k\right)},
\end{align}
the Jacobian Matrix of $\boldsymbol{\vartheta}_{\eps,-\bar{g}_i}^{\alpha_{i,N},i}$ in $\left(\mathbf{y},\theta,k\right)\in\rit^3\times[c,d]$ can be rewritten as 
\begin{align}
    &\text{Jac}({\boldsymbol{\vartheta}_{\eps,-\bar{g}_i}^{\alpha_{i,N},i}})\left(\mathbf{y},\theta,k\right)=1+\eps\chi\!\left(\eps,\mathbf{y},\theta,k\right),
\end{align} 
where $\chi$ is bounded with respect to
$\left(\mathbf{y},\theta,k\right)\in\rit^3\times\left[c,d\right]$ and $\eps\mapsto\chi\!\left(\eps,\mathbf{y},\theta,k\right)$ is in $\mathcal{C}^\infty(\rit)$ for any $\left(\mathbf{y},\theta,k\right)\in\rit^3\times\left[c,d\right]$.
Hence, denoting $\Vert \chi\left(\eps,\cdot\right)\Vert _{\infty,\rit^{3}\times[c,d]} = \sup_{(\mathbf{y},\theta,k)\in\rit^{3}\times[c,d]} |\chi\left(\eps,\cdot\right)|$,
 there exists a real number $\eta''>0$ such that for any 
 $\eps\in[-\eta'',\eta''],\ \left|\eps\left\Vert \chi\left(\eps,\cdot\right)\right\Vert _{\infty,\rit^{3}\times[c,d]}\right|<1.$
Consequently, 
$(\mathbf{y},\theta,k)\mapsto(d\boldsymbol{\vartheta}_{\eps,-\bar{g}_i}^{\alpha_{i,N},i})_{\negmedspace\left(\mathbf{y},\theta,k\right)}$
is invertible and Lipschitz continuous.

Hence for $\left|\eps\right|<\bar{\eta}_1,$ where $\bar{\eta}_1=\min\left(\eta',\eta''\right),$ we can conclude that 
$\boldsymbol{\vartheta}_{\eps,-\bar{g}_i}^{\alpha_{i,N},i}$ is a diffeomorphism on $\rit^3\times[c,d]$. 
%This ends the proof of Theorem \ref{PLSCCV2TH}. 
%{~\hfill $\square$} 
\\

%%%%%%%%%%%%%%%%%%%%%%%%%%%%%%%%%%%%%%%%%%%%%%%%%%%%%%%%%%%%%%%%%%%%%%%%%%%%%

The second part of the proof concerns inclusions \eqref{FirstStepProofThmBoldChiNWellDefInclusionRange}.
Using Formula \eqref{DefNuInTermsOfTheta} we obtain easily the second inclusion. Hence, we will focus on the first one.
Its proof is based on the Brouwer Theorem (see {Brouwer \cite{brouwer1912}} or {Istratescu \cite{Istratescu2001}}).

We fix two positive real numbers $R'_0$ and $R^\bullet_0$ such that $R^\bullet_0<R'_0$. 
Then we will fix $\boldsymbol{m}_0\in\mathbb{R}^2$ and we will show that there exists a positive real number $\eta$, that does not depend on 
$\boldsymbol{m}_0$, such that for any $\eps\in[-\eta,\eta]$
\begin{align}
     &\ball^{2}\left(\boldsymbol{m}_{0},R^{\bullet}_{0}\right)\times\rit\times(c^{\bullet},d^{\bullet})
     \subset\boldsymbol{\vartheta}_{\eps,-\bar{g}_{i}}^{\alpha_{i,N},i}\big(\ball^{2}\left(\boldsymbol{m}_{0},R'{}_{\negmedspace0}\right)\times\rit\times(c,d)\big),
     \label{17012014}  %   \eqref{17012014} 
\end{align}
or according to Notation  \ref{201402160959},
\begin{align}
     &\mathcal{CO}(\boldsymbol{m}_{0},R^{\bullet}_{0};c^{\bullet},d^{\bullet})
     \subset\boldsymbol{\vartheta}_{\eps,-\bar{g}_{i}}^{\alpha_{i,N},i}\big(\mathcal{CO}(\boldsymbol{m}_{0},R'{}_{\negmedspace0};c,d)\big).
     \label{201402161003}  %   \eqref{201402161003} 
\end{align}
Consequently, since $\eta$ does not depend on $\boldsymbol{m}_0$ we will obtain  \eqref{FirstStepProofThmBoldChiNWellDefInclusionRange}.

Let $R_{0}^{\left(2\right)},\ R_{0}^{\left(3\right)},\ \alpha_{0}^{\left(2\right)}$, $\alpha_{0}^{\left(3\right)}$,
$c^{(2)}$, $c^{(3)}$, $d^{(2)}$ and $d^{(3)}$
be real numbers satisfying 
\begin{align}
     &R_{0}^\bullet<R_{0}^{\left(2\right)}<R_{0}^{\left(3\right)}<R'_{0},\hspace{0.5cm}0<\alpha_{0}^{\left(2\right)}<\alpha_{0}^{\left(3\right)},\hspace{0.5cm}\text{and }
     \notag
     \\
     &\left[c^{\bullet},d^{\bullet}\right]\subset\left(c^{(2)},d^{(2)}\right)\subset\left[c^{(2)},d^{(2)}\right]\subset\left(c^{(3)},d^{(3)}\right)\subset\left[c^{(3)},d^{(3)}\right]\subset\left(c,d\right),
     \notag
\end{align}
$l$ be an integer, and let
$\boldsymbol{\mathfrak{K}}^l_2$
and $\boldsymbol{\mathfrak{K}}^l_3$ be the compact 
and  convex subsets of $\mathbb{R}^2\times\rit\times(0,+\infty)$ defined by 
\begin{align}
       &\boldsymbol{\mathfrak{K}}_{2}^{l}=\overline{\ball^{2}\!(\boldsymbol{m}_{0},R_{0}^{(2)})}\times\left[\left(l-1\right)\pi-\alpha_{0}^{\left(2\right)},\left(l+1\right)\pi+\alpha_{0}^{\left(2\right)}\right]\times\left[c^{(2)},d^{(2)}\right]
\end{align}
and 
\begin{align}
     &\boldsymbol{\mathfrak{K}}_{3}^{l}=\overline{\ball^{2}\!(\boldsymbol{m}_{0},R_{0}^{(3)})}\times\left[\left(l-1\right)\pi-\alpha_{0}^{\left(3\right)},\left(l+1\right)\pi+\alpha_{0}^{\left(3\right)}\right]\times\left[c^{(3)},d^{(3)}\right].
\end{align}
Since $\eps\left\Vert \boldsymbol{\nu}_{\eps,-\bar{g}_i}^{\alpha_{i,N},i}\right\Vert _{\infty,\mathbb{R}^{3}\times(c,d)}\!\rightarrow0$ when ${\eps\rightarrow0}$, 
we can define $\eta>0$ (that depends neither on $l$ nor $\boldsymbol{m}_0$) such that for any $\eps\in[-\eta,\eta],$ for any $l\in\mathbb{Z},$ and for any $\left(\mathbf{y}',\theta',k'\right)\in\boldsymbol{\mathfrak{K}}_{2}^{l}$,
\begin{gather}
\begin{aligned}
      &\left|\mathbf{y}'-\boldsymbol{m}_{0}\right|+\left|\eps\right|\left\Vert \boldsymbol{\nu}_{\eps,-\bar{g}_i}^{\alpha_{i,N},i}\right\Vert _{\infty,\rit^{3}\times(c,d)}\leq R_{0}^{\left(3\right)},
      \\
      &\left|\theta'-l\pi\right|+\left|\eps\right|\left\Vert \boldsymbol{\nu}_{\eps,-\bar{g}_i}^{\alpha_{i,N},i}\right\Vert _{\infty,\rit^{3}\times(c,d)}\leq\alpha_{0}^{\left(3\right)},
      \\
      &k'\pm\left|\eps\right|\left\Vert \boldsymbol{\nu}_{\eps,-\bar{g}_i}^{\alpha_{i,N},i}\right\Vert _{\infty,\rit^{3}\times(c,d)}\in\big[c^{(3)},d^{(3)}\big].
\end{aligned}
\end{gather}

Now, for all $\left(\mathbf{y}',\theta',k'\right)\in\boldsymbol{\mathfrak{K}}_{2}^{l}$, we define the function 
$F_{\left(\mathbf{y}',\theta',k'\right)}^{\eps}$   by 
 \begin{align}
         &F_{\left(\mathbf{y}',\theta',k'\right)}^{\eps}:\ \boldsymbol{\mathfrak{K}}_{3}^{l}\rightarrow\mathbb{R}^{4};\ \left(\mathbf{y},\theta,k\right)\mapsto\left(\mathbf{y}',\theta',k'\right)-\eps\boldsymbol{\nu}_{\eps,-\bar{g}_i}^{\alpha_{i,N},i}\left(\mathbf{y},\theta,k\right).
 \end{align}
 By construction and because of the properties of $\boldsymbol{\nu}_{\eps,-\bar{g}_i}^{\alpha_{i,N},i}$, $F_{\left(\mathbf{y}',\theta',k'\right)}^{\eps}$ is continuous on 
 $\boldsymbol{\mathfrak{K}}^l_3$ and for any $\eps\in[-\eta,\eta]$ and any $\left(\mathbf{y},\theta,k\right)\in\boldsymbol{\mathfrak{K}}_{3}^{l}$,
 \begin{gather}
 \begin{aligned}
        &\left|\left(F_{\left(\mathbf{y}',\theta',k'\right)}^{\eps}\negmedspace\left(\mathbf{y},\theta,k\right)\right)_{1,2}-\boldsymbol{m}_{0}\right|\leq\left|\mathbf{y}'-\boldsymbol{m}_{0}\right|+\left|\eps\right|\left|\left(\boldsymbol{\nu}_{\eps,-\bar{g}_i}^{\alpha_{i,N},i}\left(\mathbf{y},\theta,k\right)\right)_{1,2}\right|\leq R_{0}^{\left(3\right)},
        \\
        &\left|\left(F_{\left(\mathbf{y}',\theta',k'\right)}^{\eps}\negmedspace\left(\mathbf{y},\theta,k\right)\right)_{3}-l\pi\right|\leq\left|\theta'-l\pi\right|+\left|\eps\right|\left|\left(\boldsymbol{\nu}_{\eps,-\bar{g}_i}^{\alpha_{i,N},i}\left(\mathbf{y},\theta,k\right)\right)_{3}\right|\leq\alpha_{0}^{\left(3\right)},
 \end{aligned}
 \end{gather}
 and 
 \begin{align}
      &k'-\left(\eps\boldsymbol{\nu}_{\eps,-\bar{g}_i}^{\alpha_{i,N},i}\left(\mathbf{y},\theta,k\right)\right)_{4}\in\left[c^{(3)},d^{(3)}\right],
 \end{align}
meaning $F_{\left(\mathbf{y}',\theta',k'\right)}^{\eps}\negmedspace\left(\boldsymbol{\mathfrak{K}}_{3}^{l}\right)\subset\boldsymbol{\mathfrak{K}}_{3}^{l}$. Hence, invoking the Brouwer Theorem and more precisely its convex compact version, function $F_{\left(\mathbf{y}',\theta',k'\right)}^{\eps}$ has a fixed point in $\boldsymbol{\mathfrak{K}}_{3}^{l}$.
So we have proven that 
\begin{gather}
\exists\eta>0,\ \forall\eps,\left|\eps\right|<\eta,\ \forall l\in\mathbb{Z},\ \forall\left(\mathbf{y}',\theta',k'\right)\in\boldsymbol{\mathfrak{K}}_{2}^{l},\ \exists\left(\mathbf{y},\theta,k\right)\in\boldsymbol{\mathfrak{K}}_{3}^{l},\ \boldsymbol{\vartheta}_{\eps,-\bar{g}_i}^{\alpha_{i,N},i}\left(\mathbf{y},\theta,k\right)=\left(\mathbf{y}',\theta',k'\right),
\notag
\end{gather}
meaning that $\boldsymbol{\vartheta}_{\eps,-\bar{g}_i}^{\alpha_{i,N},i}\!\left(\boldsymbol{\mathfrak{K}}_{3}^{l}\right)\supset\boldsymbol{\mathfrak{K}}_{2}^{l}$ and consequently, since $\eta$ does not depend on $l,$ that
\begin{align}
    &\boldsymbol{\vartheta}_{\eps,-\bar{g}_i}^{\alpha_{i,N},i}\!\left(\underset{l\in\mathbb{Z}}{\cup}\boldsymbol{\mathfrak{K}}_{3}^{l}\right)\supset\underset{l\in\mathbb{Z}}{\cup}\boldsymbol{\mathfrak{K}}_{2}^{l}.
    \label{TwoBallDecPerSubsetCup}    %   \eqref{TwoBallDecPerSubsetCup} 
\end{align}
Since (see Notation  \ref{201402160959}) 
$\overline{\mathcal{CO}(\boldsymbol{m}_{0},R_{0}^{(3)};c^{(3)},d^{(3)})}=\underset{l\in\mathbb{Z}}{\cup}\boldsymbol{\mathfrak{K}}_{3}^{l}$
and $\overline{\mathcal{CO}(\boldsymbol{m}_{0},R_{0}^{(2)};c^{(2)},d^{(2)})}=\underset{l\in\mathbb{Z}}{\cup}\boldsymbol{\mathfrak{K}}_{2}^{l},$
\eqref{TwoBallDecPerSubsetCup} can be rewritten as $\boldsymbol{\vartheta}_{\eps,-\bar{g}_{i}}^{\alpha_{i,N},i}\!\left(\overline{\mathcal{CO}(\boldsymbol{m}_{0},R_{0}^{(3)};c^{(3)},d^{(3)})}\right)\supset\overline{\mathcal{CO}(\boldsymbol{m}_{0},R_{0}^{(2)};c^{(2)},d^{(2)})}.$
Finally, since 
\begin{align}
    &\mathcal{CO}(\boldsymbol{m}_{0},R_{0}^{\bullet};c^{\bullet},d^{\bullet})\subset\overline{\mathcal{CO}(\boldsymbol{m}_{0},R_{0}^{(2)};c^{(2)},d^{(2)})}\subset\overline{\mathcal{CO}(\boldsymbol{m}_{0},R_{0}^{(3)};c^{(3)},d^{(3)})}\subset\mathcal{CO}(\boldsymbol{m}_{0},R_{0}';c,d)
    \notag
\end{align}
we obtain the inclusion in \eqref{17012014}. 

Eventually, since $\eta$ does not depend on $\boldsymbol{m}_{0}$ we obtain the first inclusion of \eqref{FirstStepProofThmBoldChiNWellDefInclusionRange},
ending the proof of the theorem
\end{proof}

\begin{proof}[Proof of Theorem \ref{BoldChiNWellDef}]
Eventually, a straightforward induction using essentially Theorem 
\ref{FirstStepProofThmBoldChiNWellDef}, leads to Theorem \ref{BoldChiNWellDef}.
\end{proof}

\begin{remark}
Notice that $\boldsymbol{\lambda}_{\eps}^{N}$ is given by
\begin{align}
   &\boldsymbol{\lambda}_{\eps}^{N}=\boldsymbol{\Xi}_{\eps,-\bar{g}_{N}}^{\alpha_{N,N},N}\circ\ldots\circ\boldsymbol{\Xi}_{\eps,-\bar{g}_{1}}^{\alpha_{1,N},1}.
\end{align}
\end{remark}

%%%%%%%%%%%%%%%%%%%%%%%%%%%%%%%%%%%%%%%%%%%%%%%%%%%%%%%%%%%%%%
%%%%%%%%%%%%%%%%%%%%%%%%%%%%%%%%%%%%%%%%%%%%%%%%%%%%%%%%%%%%%%
%%%%%%%%%%%%%%%%%%%%%%%%%%%%%%%%%%%%%%%%%%%%%%%%%%%%%%%%%%%%%%
%%%%%%%%%%%%%%%%%%%%%%%%%%%%%%%%%%%%%%%%%%%%%%%%%%%%%%%%%%%%%%
%%%%%%%%%%%%%%%%%%%%%%%%%%%%%%%%%%%%%%%%%%%%%%%%%%%%%%%%%%%%%%
%%%%%%%%%%%%%%%%%%%%%%%%%%%%%%%%%%%%%%%%%%%%%%%%%%%%%%%%%%%%%%
\subsection{Proof of Theorem \ref{MainPropOfLieCCOrderNThm}}
\label{SubsectionProofOfTheoremMainPropOfLieCCOrderNThm}   %   \ref{SubsectionProofOfTheoremMainPropOfLieCCOrderNThm}
%%%%%%%%%%%%%%%%%%%%%%%%%%%%%%%%%%%%%%%%%%%%%%%%%%%%%%%%%%%%%%
%%%%%%%%%%%%%%%%%%%%%%%%%%%%%%%%%%%%%%%%%%%%%%%%%%%%%%%%%%%%%%
%%%%%%%%%%%%%%%%%%%%%%%%%%%%%%%%%%%%%%%%%%%%%%%%%%%%%%%%%%%%%%
%%%%%%%%%%%%%%%%%%%%%%%%%%%%%%%%%%%%%%%%%%%%%%%%%%%%%%%%%%%%%%
%%%%%%%%%%%%%%%%%%%%%%%%%%%%%%%%%%%%%%%%%%%%%%%%%%%%%%%%%%%%%%
%%%%%%%%%%%%%%%%%%%%%%%%%%%%%%%%%%%%%%%%%%%%%%%%%%%%%%%%%%%%%%

We begin by giving and proving preliminary results that are needed for the proof of Theorem 
\ref{MainPropOfLieCCOrderNThm}. Its proof is then led in the last part of this subsection.

\begin{property}
\label{PropPartLieFuncProd}   % \ref{PropPartLieFuncProd} 
Let $i\in\ldbrack1,N\rdbrack$ and $f$, $\bar{g}_i$ and $h$ be three functions in $\mathcal{C}_{\#}^{\infty}\left(\rit^{3}\times\left(0,+\infty\right)\right)$.
Then, the following equalities hold true on $\mathbb{R}^3\times(0,+\infty)$:
\begin{align}
           &\left(\boldsymbol{\vartheta}_{\eps,-\bar{g}_{i}}^{\alpha_{i,N},i}\cdot\left(fh\right)\right)=\left(\boldsymbol{\vartheta}_{\eps,-\bar{g}}^{\alpha_{i,N},i}\cdot f\right)\left(\boldsymbol{\vartheta}_{\eps,-\bar{g}}^{\alpha_{i,N},i}\cdot h\right)+\eps^{N+1}\rest_{FP}^{N,i}\!\left(\eps;\ \cdot\ \right),
           \label{PartLieFuncProd}   % \eqref{PartLieFuncProd} 
           \\
           &\left(\boldsymbol{\vartheta}_{\eps,-\bar{g}_{i}}^{\alpha_{i,N},i}\cdot\left\{ f,h\right\} _{\mathcal{D}}\right)=\left\{ \boldsymbol{\vartheta}_{\eps,-\bar{g}_{i}}^{\alpha_{i,N},i}\cdot f,\boldsymbol{\vartheta}_{\eps,-\bar{g}_{i}}^{\alpha_{i,N},i}\cdot h\right\} _{\mathcal{D}}+\eps^{N+1}\rest_{PC}^{N,i}\!\left(\eps,\ \cdot\ \right),
           \label{PartLiePoissCom}   % \eqref{PartLiePoissCom} 
\end{align}
where $\rest^{N,i}_{FP}$ and $\rest^{N,i}_{PC}$ are in $\mathcal{C}_{\#}^{\infty}\left(\rit\times\rit^{3}\times\left(0,+\infty\right)\right)$.
\end{property}

\begin{proof}
The proofs of Formulas \eqref{PartLieFuncProd} and \eqref{PartLiePoissCom} are very similar.
Consequently, we will only give the proof of Formula \eqref{PartLiePoissCom}.

In a first place, starting from is the following equality
\begin{align}
       &\overline{\mathbf{X}}_{-\eps\bar{g}_{i}}^{\eps}\!\cdot\!\left\{ f,h\right\} _{\mathcal{D}}=\left\{ \overline{\mathbf{X}}_{-\eps\bar{g}_{i}}^{\eps}\cdot f,h\right\} _{\mathcal{D}}+\left\{ f,\overline{\mathbf{X}}_{-\eps\bar{g}_{i}}^{\eps}\cdot h\right\} _{\mathcal{D}},
\end{align}
which is a direct consequence of the Jacobi identity, it is obvious to show by induction that
\begin{align}
       &\left(\overline{\mathbf{X}}_{-\eps\bar{g}_{i}}^{\eps}\right)^{n}\!\!\cdot\!\left\{ f,h\right\} _{\mathcal{D}}=\sum_{k=0}^{n}C_{n}^{k}\left\{ \left(\overline{\mathbf{X}}_{-\eps\bar{g}_{i}}^{\eps}\right)^{k}\cdot f,\left(\overline{\mathbf{X}}_{-\eps\bar{g}_{i}}^{\eps}\right)^{n-k}\cdot h\right\} _{\mathcal{D}}.
  \label{SpecProp1}   %   \eqref{SpecProp1}
\end{align}

Secondly, we will define on $\mathbb{R}^3\times(0,+\infty)$ the function $\left\{ f,h\right\} ^{\!\!\bar{\mathcal{T}}_{\eps}}=\left\{ f,h\right\} ^{\!\!\bar{\mathcal{T}}_{\eps}}\!\left(\mathbf{y},\theta,k\right)$ by 
\begin{align}
       &\left\{ f,h\right\} ^{\!\!\bar{\mathcal{T}}_{\eps}}\!\left(\mathbf{y},\theta,k\right)=\left(\bar{\mathcal{T}}_{\eps}\!\left(\mathbf{y},\theta,k\right)\nabla h\!\left(\mathbf{y},\theta,k\right)\right)\cdot\left(\nabla f\!\left(\mathbf{y},\theta,k\right)\right),
\end{align}
where 
\begin{align}
    &\bar{\mathcal{T}}_{\eps}=\eps\bar{\mathcal{P}}_{\eps},
\end{align}
and notice that $\eps\mapsto \left\{ f,h\right\} ^{\!\!\bar{\mathcal{T}}_\eps}\!\left(\mathbf{y},\theta,k\right)$ is in 
$\mathcal{C}^\infty\!\left(\rit\right)$ for any $\left(\mathbf{y},\theta,k\right)\in\mathbb{R}^3\times(0,+\infty)$.
\\
Hence, expanding $\boldsymbol{\vartheta}_{\eps,-\bar{g}_i}^{\alpha_{i,N},i}\cdot\left\{ f,h\right\} $ using Formula \eqref{SpecProp1}, expanding
$\left\{ \boldsymbol{\vartheta}_{\eps,-\bar{g}_i}^{\alpha_{i,N},i}\cdot f,\boldsymbol{\vartheta}_{\eps,-\bar{g}_i}^{\alpha_{i,N},i}\cdot h\right\} ,$ and making the difference between these two expansions yields \eqref{PartLiePoissCom}
with 
\begin{gather}
 \label{FormulaBigRestInPrpertyPoissBracketPointCirc}   %  \eqref{FormulaBigRestInPrpertyPoissBracketPointCirc} 
\begin{aligned}
      &\rest_{PC}^{N,i}\left(\eps,\ \cdot\ \right)=
      \\
      &-\negmedspace\overset{2\alpha_{i,N}}{\underset{k=\alpha_{i,N}+1}{\sum}}\eps^{ik-\left(N+2\right)}\negmedspace\negmedspace\negmedspace\negmedspace\negmedspace\negmedspace\!\underset{\left(m,p\right)\in\ldbrack1,N\rdbrack^{2}\ s.t.\ m+p=k}{\sum}\frac{1}{m!p!}\left\{ \left(\overline{\mathbf{X}}_{-\eps\bar{g}_{i}}^{\eps}\right)^{m}\cdot f,\left(\overline{\mathbf{X}}_{-\eps\bar{g}_{i}}^{\eps}\right)^{p}\cdot h\right\} ^{\!\!\bar{\mathcal{T}}_{\eps}}.
\end{aligned}
\end{gather}
As $i\alpha_{i,N}\geq N+1$, all $k\geq \alpha_{i,N}+1$ satisfy $ik\geq N+2$. Consequently, $\eps\mapsto \rest_{PC}^{N,i}\!\left(\eps;\mathbf{y},\theta,k\right)$ is in 
$\mathcal{C}^\infty\!\left(\rit\right)$ for any $\left(\mathbf{y},\theta,k\right)\in\rit^3\times(0,+\infty)$.
 In addition,  
$\left(\mathbf{y},\theta,k\right)\mapsto \rest_{PC}^{N,i,j}\!\left(\eps,\mathbf{y},\theta,k\right)$ is clearly in $\mathcal{C}_{\#}^{\infty}\left(\rit^{3}\times\left(0,+\infty\right)\right)$ for any $\eps\in\rit$. 
\end{proof}
\begin{remark}
\label{ExpOfTheRestInPartLieFuncProd}  %  \ref{ExpOfTheRestInPartLieFuncProd} 
The expression of the rest in Formula \eqref{PartLiePoissCom} is given by
\begin{gather}
 \label{FormulaBigRestInPrpertyPartLieFuncProd}  %  \eqref{FormulaBigRestInPrpertyPartLieFuncProd}
\begin{aligned}
      &\rest_{FP}^{N,i}\left(\eps,\ \cdot\ \right)=
      \\
      &-\negmedspace\eps\overset{2\alpha_{i,N}}{\underset{k=\alpha_{i,N}+1}{\sum}}\eps^{ik-\left(N+2\right)}\negmedspace\negmedspace\negmedspace\negmedspace\negmedspace\negmedspace\!\underset{\left(m,p\right)\in\ldbrack1,N\rdbrack^{2}\ s.t.\ m+p=k}{\sum}\frac{1}{m!p!}\left(\left(\overline{\mathbf{X}}_{-\eps\bar{g}_{i}}^{\eps}\right)^{m}\cdot f\right)\left(\left(\overline{\mathbf{X}}_{-\eps\bar{g}_{i}}^{\eps}\right)^{p}\cdot h\right).
\end{aligned}
\end{gather}
\end{remark}
\begin{theorem}
\label{ThemQuasiComRondPoint}          %   \ref{ThemQuasiComRondPoint}   
With the same notations and under the same assumptions as in Theorem \ref{MainPropOfLieCCOrderNThm},
let $i\in\ldbrack1,N\rdbrack$ and $h_\eps$ be in $\mathcal{Q}_{T,b}^{\infty}\cap\mathcal{A}\!\left(\mathbb{R}^{2}\times\rit\times\left(0,+\infty\right)\right)$
for every $\eps$ in some interval $I$ containing $0$ and such that $\eps\mapsto h_\eps( \bar{\boldsymbol{r}})$ is in $C^\infty(I)$ for any 
$ \bar{\boldsymbol{r}}\in\mathbb{R}^3\times(0,+\infty)$. Then,
there exists a real number $\eta_K>0$
such that for any $\eps\in[-\eta_K,\eta_K]\cap I$ and for any $(\mathbf{y},\theta,k)\in\boldsymbol{K}\times\rit\times\left[c^{\diamondsuit},d^{\diamondsuit}\right]$, we have
\begin{align}
       &h_{\eps}\!\left(\boldsymbol{\vartheta}_{\eps,-\bar{g}_{i}}^{\alpha_{i,N},i}\left(\mathbf{y},\theta,k\right)\right)=\boldsymbol{\vartheta}_{\eps,-\bar{g}_{i}}^{\alpha_{i,N},i}\cdot h_{\eps}\!\left(\mathbf{y},\theta,k\right)+\eps^{N+1}\Rest_{h}^{N}\!\left(\eps;\mathbf{y},\theta,k\right),
       \label{QuasiComRondPoint}    %   \eqref{QuasiComRondPoint} 
\end{align}
where $\Rest_{h}^{N}$ is in $\mathcal{C}_{\PerND}^{\infty}\!\left((I\cap\left[-\eta_K,\eta_K\right])\times\boldsymbol{K}\times\rit\times\left[c^{\diamondsuit},d^{\diamondsuit}\right]\right)$.
\end{theorem}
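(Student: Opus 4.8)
The plan is to recognize the operator and the map $\boldsymbol{\vartheta}_{\eps,-\bar{g}_{i}}^{\alpha_{i,N},i}$ as order-$\alpha_{i,N}$ truncations of, respectively, the pullback by and the time-$1$ flow of a single vector field, and then to control both truncation errors by the integral-remainder Lie expansion \eqref{LieExp}. Concretely, set $n=\alpha_{i,N}$ and introduce the scaled field $\overline{\mathbf{Z}}_{\eps}^{i}=\eps^{i}\,\overline{\mathbf{X}}_{-\eps\bar{g}_{i}}^{\eps}$, so that Definition \ref{DefBoldVarijN2} reads $\boldsymbol{\vartheta}_{\eps,-\bar{g}_{i}}^{\alpha_{i,N},i}\cdot\bar{f}=\sum_{k=0}^{n}\frac{1}{k!}(\overline{\mathbf{Z}}_{\eps}^{i})^{k}\cdot\bar f$, i.e. the order-$n$ truncation of $\exp(\overline{\mathbf{Z}}_{\eps}^{i})$. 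A direct computation with $\bar{\mathcal{P}}_{\eps}$ given by \eqref{ExpressionDarbouxMatrix} shows that $\overline{\mathbf{X}}_{-\eps\bar{g}_{i}}^{\eps}=-\eps\bar{\mathcal{P}}_{\eps}\nabla\bar{g}_{i}$ has components $\eps^{2}B^{-1}\partial_{y_{2}}\bar{g}_{i}$, $-\eps^{2}B^{-1}\partial_{y_{1}}\bar{g}_{i}$, $-\partial_{k}\bar{g}_{i}$, $\partial_{\theta}\bar{g}_{i}$; hence the singular entry $1/\eps$ cancels against the explicit factor $\eps$, $\overline{\mathbf{X}}_{-\eps\bar{g}_{i}}^{\eps}$ depends smoothly on $\eps$ up to $\eps=0$, and $\overline{\mathbf{Z}}_{\eps}^{i}=O(\eps^{i})$ with $\theta$-periodic coefficients bounded together with all derivatives on sets $\boldsymbol{K}'\times\rit\times[c',d']$ with $[c',d']\subset(0,+\infty)$.

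Next I would fix the working region and the smallness of $\eps$. Using the range-localization mechanism of Theorem \ref{FirstStepProofThmBoldChiNWellDef}, I would choose an intermediate compact set $\boldsymbol{K}'\supset\boldsymbol{K}$ and an interval $[c',d']$ with $[c^{\diamondsuit},d^{\diamondsuit}]\subset(c',d')$ and $[c',d']\subset(c^{\bullet},d^{\bullet})$, together with $\eta_{K}>0$, such that for $|\eps|\leq\eta_{K}$ the flow $\Phi_{u}$ of $\overline{\mathbf{Z}}_{\eps}^{i}$ issued from any point of $\boldsymbol{K}\times\rit\times[c^{\diamondsuit},d^{\diamondsuit}]$ stays, for every $u\in[0,1]$, inside $\boldsymbol{K}'\times\rit\times[c',d']$. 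This is possible because $\overline{\mathbf{Z}}_{\eps}^{i}=O(\eps^{i})$ displaces points by at most $O(\eps^{i})$ over unit time, so the flow exists on $[0,1]$ and remains in the region for $\eps$ small; there $h_{\eps}$, $\bar{g}_{i}$ and their derivatives are bounded uniformly in $\eps$, which is exactly what \eqref{LieExp} requires along each flow segment.

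With these preparations I would apply the expansion \eqref{LieExp} to the field $\overline{\mathbf{Z}}_{\eps}^{i}$ at time $1$ and order $n$, once with $f=h_{\eps}$ and once with $f=\mathbf{r}_{j}$ the four coordinate functions, obtaining $\boldsymbol{\vartheta}_{\eps,-\bar{g}_{i}}^{\alpha_{i,N},i}\cdot h_{\eps}=h_{\eps}\circ\Phi_{1}-R_{\eps}$ and, componentwise, $\boldsymbol{\vartheta}_{\eps,-\bar{g}_{i}}^{\alpha_{i,N},i}=\Phi_{1}-\widetilde{R}_{\eps}$, with $R_{\eps}$ and $\widetilde{R}_{\eps}$ the explicit integral remainders. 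Since $(\overline{\mathbf{Z}}_{\eps}^{i})^{n+1}$ carries the factor $\eps^{i(n+1)}$ and $i\alpha_{i,N}\geq N+1$ (whence $i(n+1)\geq N+2$), both remainders are $O(\eps^{N+2})$ uniformly on the region. Writing $h_{\eps}\circ\boldsymbol{\vartheta}_{\eps,-\bar{g}_{i}}^{\alpha_{i,N},i}-\boldsymbol{\vartheta}_{\eps,-\bar{g}_{i}}^{\alpha_{i,N},i}\cdot h_{\eps}=\big(h_{\eps}\circ(\Phi_{1}-\widetilde{R}_{\eps})-h_{\eps}\circ\Phi_{1}\big)+R_{\eps}$, the parenthesis is bounded by $\sup|\nabla h_{\eps}|\cdot|\widetilde{R}_{\eps}|$ via a first-order Taylor expansion with integral form, hence also $O(\eps^{N+2})$. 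I would then define $\Rest_{h}^{N}$ as the quotient of this difference by $\eps^{N+1}$; because $R_{\eps}$, $\widetilde{R}_{\eps}$ and the Taylor term are integrals smooth in $\eps$ up to $\eps=0$, carry the explicit factor $\eps^{i(n+1)}$, and are built from $\theta$-periodic data, the quotient extends smoothly to $\eps=0$ and lies in $\mathcal{C}_{\PerND}^{\infty}$ on $(I\cap[-\eta_{K},\eta_{K}])\times\boldsymbol{K}\times\rit\times[c^{\diamondsuit},d^{\diamondsuit}]$, giving \eqref{QuasiComRondPoint}.

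The main obstacle I expect is twofold. First, one must guarantee that the only locally defined flow of $\overline{\mathbf{Z}}_{\eps}^{i}$ stays inside a fixed compact set on which $h_{\eps}$ and its derivatives are controlled; this is what forces the $K$-dependent choice of $\eta_{K}$ and the passage to the enlarged region $\boldsymbol{K}'\times\rit\times[c',d']$, and is handled by the localization already used in Theorem \ref{FirstStepProofThmBoldChiNWellDef}. Second, one must verify that the remainder remains genuinely $C^{\infty}$ in $\eps$ up to $\eps=0$ after dividing by $\eps^{N+1}$, which is precisely why it is essential to keep the integral (rather than merely estimated) forms of the Lie remainders, since these are manifestly smooth in $\eps$ and vanish to the required order.
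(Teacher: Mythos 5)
Your proof is correct, but it takes a genuinely different route from the paper's. The paper's argument is algebraic and leans essentially on the analyticity hypothesis: by linearity it reduces to $h_\eps$ of the form $\cos^l(\theta)\sin^m(\theta)\,d^\eps(\mathbf{y})\sqrt{k}^{\,n}$, expands $h_\eps$ in a local power series around each point of $\boldsymbol{K}\times\rit\times[c^{\diamondsuit},d^{\diamondsuit}]$, commutes the operator $\boldsymbol{\vartheta}_{\eps,-\bar{g}_{i}}^{\alpha_{i,N},i}\cdot$ with each monomial $(\bar{\boldsymbol{r}}-\bar{\boldsymbol{r}}_0)^{\mathbf{l}}$ via the approximate product rule (Property \ref{PropPartLieFuncProd}), sums the normally convergent series of the resulting $\eps^{N+1}$-remainders, and finishes with a finite-cover compactness argument. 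You instead reconnect the partial Lie sums to actual flows: after verifying that the $1/\eps$ entries of $\bar{\mathcal{P}}_\eps$ cancel so that $\overline{\mathbf{Z}}_\eps^i=\eps^i\overline{\mathbf{X}}_{-\eps\bar{g}_i}^{\eps}$ is a smooth, $O(\eps^i)$ vector field, both the operator $\boldsymbol{\vartheta}_{\eps,-\bar{g}_{i}}^{\alpha_{i,N},i}\cdot$ and the map $\boldsymbol{\vartheta}_{\eps,-\bar{g}_{i}}^{\alpha_{i,N},i}$ become order-$\alpha_{i,N}$ Taylor truncations, at time $1$, of the pullback by (respectively, of) the flow $\Phi_u$ of $\overline{\mathbf{Z}}_\eps^i$; the integral Lie remainder \eqref{LieExp} then gives $\boldsymbol{\vartheta}\cdot h_\eps=h_\eps\circ\Phi_1+O\big(\eps^{i(\alpha_{i,N}+1)}\big)$ and $\boldsymbol{\vartheta}=\Phi_1+O\big(\eps^{i(\alpha_{i,N}+1)}\big)$ with $i(\alpha_{i,N}+1)\geq N+2$, and one first-order Taylor step closes the gap between $h_\eps\circ\boldsymbol{\vartheta}$ and $h_\eps\circ\Phi_1$; dividing by $\eps^{N+1}$ leaves a factor $\eps$ times a jointly smooth, $\theta$-periodic function, which is the required $\Rest_{h}^{N}$. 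What your approach buys: it never uses analyticity (so it actually proves the statement under the weaker hypothesis that $h_\eps$ is smooth with bounded derivatives on the working region), the remainders come in explicit integral form, hence are manifestly smooth in $\eps$ up to $\eps=0$, and the whole proof is shorter. What the paper's approach buys: it stays entirely inside the algebra of partial Lie sums (Property \ref{PropPartLieFuncProd}), machinery the paper needs anyway for the Poisson-matrix identity \eqref{DefHatMatPoissExpPartLieCCOrderN}, and it never has to discuss existence or confinement of flows of the Hamiltonian fields, only the globally defined map $\boldsymbol{\vartheta}$. Two points you should make explicit in a write-up: the confinement of $\Phi_u$, $u\in[0,1]$, in the enlarged region does not literally follow from Theorem \ref{FirstStepProofThmBoldChiNWellDef} (which concerns the map, not the flow), although the $O(\eps^i)$-displacement bootstrap you sketch is exactly the right argument; and \eqref{LieExp} is stated in the paper for complete flows of fields with bounded coefficients on $\rit^3$, so you need its local version along flow segments remaining in the domain, which is just Taylor's theorem applied to $u\mapsto h_\eps(\Phi_u(\cdot))$, as you indicate.
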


\begin{proof}
Since $h_\eps\in\mathcal{Q}_{T,b}^{\infty},$ and by linearity, the proof of the theorem reduces to prove formula \eqref{QuasiComRondPoint}  
with function $h_\eps$ of the form 
\begin{align}
\label{201402161112} % \eqref{201402161112}
     &h_{\eps}\!\left(\mathbf{y},\theta,k\right)=\cos^{l}\!\left(\theta\right)\sin^{m}\!\left(\theta\right)d^\eps\!\left(\mathbf{y}\right)\sqrt{k}^{n},
\end{align}
where $d^{\eps}=d^{\eps}\!\left(\mathbf{y}\right)\in\mathcal{A}\!\left(\mathbb{R}^{2}\right)\cap\mathcal{C}_{b}^{\infty}\left(\mathbb{R}^{2}\right)$.

Let $\boldsymbol{\bar{r}}_{0}=\left(\mathbf{y}_{0},\theta_{0},k_{0}\right)\in\boldsymbol{K}\times\rit\times\left[c^{\diamondsuit},d^{\diamondsuit}\right]$.
As $d^{\eps}\in\mathcal{A}\!\left(\mathbb{R}^{2}\right),$ and as $\left(k\mapsto\sqrt{k}^{n}\right)\in
\mathcal{A}\!\left(\left(0,+\infty\right)\right)$, there exists a real number $R_{\boldsymbol{\bar{r}}_{0}}>0$ and
a formal power series $T_{\boldsymbol{\bar{r}}_{0}}$ of three variables whose set of convergence contains 
the closure of $\ball^{3}\!\big(0,R_{\boldsymbol{\bar{r}}_{0}}\big)$, which are such that
 $\ball^{3}\!\left(\left(\mathbf{y}_{0},k_{0}\right),R_{\boldsymbol{\bar{r}}_{0}}\right)\subset\mathbb{R}^{2}\times\left(c^{\bullet},d^{\bullet}\right)$ 
and such that for any $\left(\mathbf{y},k\right) \in\ball^{3}\!\left(\left(\mathbf{y}_{0},k_{0}\right),R_{\boldsymbol{\bar{r}}_{0}}\right)$,
\begin{gather}
    d^\eps\left(\mathbf{y}\right)\sqrt{k}^{n}=T_{\boldsymbol{\bar{r}}_{0}}\!\left(\left(\mathbf{y},\theta\right)-\left(\mathbf{y}_{0},\theta_{0}\right)\right)
   =\underset{\mathbf{l}\in\mathbb{N}^{3}}{\sum}a_{\eps}^{\mathbf{l},\boldsymbol{\bar{r}}_{0}}\left(\left(\mathbf{y},k\right)-\left(\mathbf{y}_{0},\theta_{0}\right)\right)^{\mathbf{l}}.
\end{gather}
In addition, since $\left(\theta\mapsto\cos^{l}\!\left(\theta\right)\sin^{m}\!\left(\theta\right)\right)$
is a power series of convergence radius $+\infty$ with respect to $\theta,$ there exists a formal power series
$S_{\boldsymbol{\bar{r}}_{0}}$ such that $\overline{\ball^{\#}\!\left(0,R_{\boldsymbol{\bar{r}}_{0}}\right)}\subset{\boldsymbol{\Sigma}}_{S_{\boldsymbol{\bar{r}}_{0}}}$
  and such that 
\begin{align}
   &\forall\boldsymbol{\bar{r}}=\left(\mathbf{y},\theta,k\right)\in\ball^{\#}\!\left(\boldsymbol{\bar{r}}_{0},R_{\boldsymbol{\bar{r}}_{0}}\right),\ h_{\eps}\!\left(\boldsymbol{\bar{r}}\right)=S_{\boldsymbol{\bar{r}}_{0}}\!\left(\boldsymbol{\bar{r}}-\boldsymbol{\bar{r}}_{0}\right)=\sum_{\mathbf{l}\in\mathbb{N}^{4}}h_{\eps}^{\mathbf{l},\boldsymbol{\bar{r}}_{0}}\left(\boldsymbol{\bar{r}}-\boldsymbol{\bar{r}}_{0}\right)^{\mathbf{l}}.
\end{align}
Let $R'_{\boldsymbol{\bar{r}}_{0}}\in\left(0,R_{\boldsymbol{\bar{r}}_{0}}\right)$. Then, using similar arguments as in the proof of Theorem \ref{BoldChiNWellDef}
we easily obtain that
there exists a real number 
$\eta_{R_{\boldsymbol{\bar{r}}_{0}},R'_{\boldsymbol{\bar{r}}_{0}}}>0$ such that for any
$\begingroup\small\eps\in\big[-\eta_{R_{\boldsymbol{\bar{r}}_{0}},R'_{\boldsymbol{\bar{r}}_{0}}},\eta_{R_{\boldsymbol{\bar{r}}_{0}},R'_{\boldsymbol{\bar{r}}_{0}}}\big],\endgroup$ 
$\boldsymbol{\vartheta}_{\eps,-\bar{g}_{i}}^{\alpha_{i,N},i}\!\left(\ball^{\#}\!\left(\boldsymbol{\bar{r}}_{0},R'_{\boldsymbol{\bar{r}}_{0}}\right)\right)\subset\ball^{\#}\!\left(\boldsymbol{\bar{r}}_{0},R_{\boldsymbol{\bar{r}}_{0}}\right).$
Hence, for any $\boldsymbol{\bar{r}}=\left(\mathbf{y},\theta,k\right)\in\ball^{\#}\!\left(\boldsymbol{\bar{r}}_{0},R'_{\boldsymbol{\bar{r}}_{0}}\right)$, we have 
\begin{align}
    &h_{\eps}\!\left(\boldsymbol{\vartheta}_{\eps,-\bar{g}_{i}}^{\alpha_{i,N},i}\!\left(\boldsymbol{\bar{r}}\right)\right)=
    \sum_{\mathbf{l}\in\mathbb{N}^{4}}\; h_{\eps}^{\mathbf{l},\boldsymbol{\bar{r}}_{0}}\!\left(\boldsymbol{\vartheta}_{\eps,-\bar{g}_{i}}^{\alpha_{i,N},i}\!\left(\boldsymbol{\bar{r}}\right)-\boldsymbol{\bar{r}}_{0}\right)^{\mathbf{l}}.
    \label{FirstIntFormInProofOfThm424}   %   \eqref{FirstIntFormInProofOfThm424}
\end{align}

On another hand, let $\Theta_{\eps}=\Theta_{\eps}\!\left(\boldsymbol{\bar{r}}\right)=\left(\Theta_{\eps,\mathbf{m}}\!\left(\boldsymbol{\bar{r}}\right)\right)_{\mathbf{m}\in\mathbb{N}^{4}\, s.t.\ \left|\mathbf{m}\right|\leq i}$ be the smooth function that are such that, for all smooth functions $f_\eps,$
\begin{align}
       &\left(\boldsymbol{\vartheta}_{\eps,-\bar{g}_{i}}^{\alpha_{i,N},i}\!\cdot f_{\eps}\right)\!\left(\boldsymbol{\bar{r}}\right)=\underset{\left|\mathbf{m}\right|\leq i}{\sum}\Theta_{\eps,\mathbf{m}}\!\left(\boldsymbol{\bar{r}}\right)\frac{\partial f_{\eps}}{\partial\boldsymbol{\bar{r}}^{\mathbf{m}}}\!\left(\boldsymbol{\bar{r}}\right).
\end{align}
We have, for any $\boldsymbol{\bar{r}}\in\ball^{\#}\!\left(\boldsymbol{\bar{r}}_{0},R_{\boldsymbol{\bar{r}}_{0}}\right)$,
\begin{gather}
      \left(\boldsymbol{\vartheta}_{\eps,-\bar{g}_{i}}^{\alpha_{i,N},i}\!\cdot h_{\eps}\right)\!\left(\boldsymbol{\bar{r}}\right)=\underset{\left|\mathbf{m}\right|\leq i}{\sum}\Theta_{\eps,\mathbf{m}}\left(\boldsymbol{\bar{r}}\right)\!\frac{\partial h_{\eps}}{\partial\boldsymbol{\bar{r}}^{\mathbf{m}}}\!\left(\boldsymbol{\bar{r}}\right)=\left(\underset{\left|\mathbf{m}\right|\leq i}{\sum}\Theta_{\eps,\mathbf{m}}\!\left(\boldsymbol{\bar{r}}\right)\frac{\partial}{\partial\boldsymbol{\bar{r}}^{\mathbf{m}}}\right)\left(\underset{\mathbf{l}\in\mathbb{N}^{4}}{\sum}h_{\eps}^{l,\boldsymbol{\bar{r}}_{0}}\,\FctCoordPuiss{l}{\boldsymbol{\bar{r}}_{0}}\right)\!\left(\boldsymbol{\bar{r}}\right),
\end{gather}
where
$\FctCoordPuiss{l}{\boldsymbol{\bar{r}}_{0}}$ stand for the function $\boldsymbol{\bar{r}}\mapsto\left(\bar{r}_{1}-(\bar{\boldsymbol{r}}_{0})_{1}\right)^{l_{1}}\left(\bar{r}_{2}-(\bar{\boldsymbol{r}}_{0})_{2}\right)^{l_{2}}\left(\bar{r}_{3}-(\bar{\boldsymbol{r}}_{0})_{3}\right)^{l_{3}}\left(\bar{r}_{4}-(\bar{\boldsymbol{r}}_{0})_{4}\right)^{l_{4}}$.
\\
Since $\ball^{\#}\!\left(0,R_{\boldsymbol{\bar{r}}_{0}}\right)\subset{\boldsymbol{\Sigma}}_{S_{\boldsymbol{\bar{r}}_{0}}},$ we can 
permute summation and derivations and we obtain:
\begin{gather}
    \left(\boldsymbol{\vartheta}_{\eps,-\bar{g}_{i}}^{\alpha_{i,N},i}\!\cdot h_{\eps}\right)\!\left(\boldsymbol{\bar{r}}\right)=\underset{\left|\mathbf{m}\right|\leq i}{\sum}\Theta_{\eps,\mathbf{m}}\!\left(\boldsymbol{\bar{r}}\right)\underset{\mathbf{l}\in\mathbb{N}^{4}}{\sum}h_{\eps}^{l,\boldsymbol{\bar{r}}_{0}}\frac{\partial\FctCoordPuiss{l}{\boldsymbol{\bar{r}}_{0}}}{\partial\boldsymbol{\bar{r}}^{m}}\left(\boldsymbol{\mathfrak{r}}\right)=\underset{\mathbf{l}\in\mathbb{N}^{4}}{\sum}h_{\eps}^{\mathbf{l},\boldsymbol{\bar{r}}_{0}}\left(\boldsymbol{\vartheta}_{\eps,-\bar{g}_{i}}^{\alpha_{i,N},i}\!\cdot\FctCoordPuiss{l}{\boldsymbol{\bar{r}}_{0}}\right)\!\left(\boldsymbol{\bar{r}}\right).
\end{gather} 
Besides, using Property  \ref{PropPartLieFuncProd} and the link \eqref{DefVarThetN2} between function 
$\boldsymbol{\vartheta}_{\eps,-\bar{g}_i}^{\alpha_{i,N},i}$ and operator $\boldsymbol{\vartheta}_{\eps,-\bar{g}_i}^{\alpha_{i,N},i}$, we obtain that, for any $\mathbf{l}\in\mathbb{N}^{4}$,
\begin{gather}
 \label{SecondIntFormInProofOfThm424}   %   \eqref{SecondIntFormInProofOfThm424}
\begin{aligned}
     \left(\boldsymbol{\vartheta}_{\eps,-\bar{g}_{i}}^{\alpha_{i,N},i}\!\cdot\FctCoordPuiss{l}{\boldsymbol{\bar{r}}_{0}}\right)\left(\boldsymbol{\bar{r}}\right)
     &=\left(\boldsymbol{\vartheta}_{\eps,-\bar{g}_{i}}^{\alpha_{i,N},i}\!\cdot(\boldsymbol{\bar{r}}_{1}-({\boldsymbol{\bar{r}}_{0}})_{1}),\dots,\boldsymbol{\vartheta}_{\eps,-\bar{g}_{i}}^{\alpha_{i,N},i}\!\cdot(\boldsymbol{\bar{r}}_{4}-({\boldsymbol{\bar{r}}_{0}})_{4})\right)^{\!\mathbf{l}}\left(\boldsymbol{\bar{r}}\right)+\eps^{N+1}\rest_{\mathbf{l},\boldsymbol{\bar{r}}_{0}}^{N,i,j}\!\left(\eps,\boldsymbol{\bar{r}}\right)
     \\
     &=\left(\left(\boldsymbol{\vartheta}_{\eps,-\bar{g}_{i}}^{\alpha_{i,N},i}\!\cdot\boldsymbol{\bar{r}}_{1},\dots,\boldsymbol{\vartheta}_{\eps,-\bar{g}_{i}}^{\alpha_{i,N},i}\!\cdot\boldsymbol{\bar{r}}_{4}\right)\!\left(\boldsymbol{\bar{r}}\right)-{\boldsymbol{\bar{r}}_{0}}\right)^{\!\mathbf{l}}+\eps^{N+1}\rest_{\mathbf{l},\boldsymbol{\bar{r}}_{0}}^{N,i,j}\!\left(\eps,\boldsymbol{\bar{r}}\right)
     \\
     &=\left(\left(\boldsymbol{\vartheta}_{\eps,-\bar{g}_{i}}^{\alpha_{i,N},i}\right)\!\left(\boldsymbol{\bar{r}}\right)-{\boldsymbol{\bar{r}}_{0}}\right)^{\!\mathbf{l}}+\eps^{N+1}\rest_{\mathbf{l},\boldsymbol{\bar{r}}_{0}}^{N,i,j}\!\left(\eps,\boldsymbol{\bar{r}}\right),
\end{aligned}
\end{gather}
with $\boldsymbol{\bar{r}}\mapsto\rest_{\mathbf{l},\boldsymbol{\bar{r}}_{0}}^{N,i,j}\!\left(\ \cdot \ , \boldsymbol{\bar{r}} \right)$
in  $Q_{T,b}^\infty$ and $\eps\mapsto\rest_{\mathbf{l},\boldsymbol{\bar{r}}_{0}}^{N,i,j}\!\left(\eps,\ \cdot \ \right)$ in $\mathcal{C}^\infty(\mathbb{R})$.

As both 
$\underset{\mathbf{l}\in\mathbb{N}^{4}}{\sum}h_{\eps}^{\mathbf{l},\boldsymbol{\bar{r}}_{0}}\left(\boldsymbol{\vartheta}_{\eps,-\bar{g}_{i}}^{\alpha_{i,N},i}\!\left(\boldsymbol{\bar{r}}\right)-\boldsymbol{\bar{r}}_{0}\right)^{\mathbf{l}}$
and 
$\underset{\mathbf{l}\in\mathbb{N}^{4}}{\sum}h_{\eps}^{\mathbf{l},\boldsymbol{\bar{r}}_{0}}\left(\boldsymbol{\vartheta}_{\eps,-\bar{g}_{i}}^{\alpha_{i,N},i}\!\cdot\FctCoordPuiss{l}{\boldsymbol{\bar{r}}_{0}}\right)\!\left(\boldsymbol{\bar{r}}\right)$ 
converge normally on

\hspace{-0.6cm}$\ball^{\#}\!\left(\boldsymbol{\bar{r}}_{0},R'_{\boldsymbol{\bar{r}}_{0}}\right)$,
their difference 
\begin{gather}
\eps^{N+1}\left(-\underset{\mathbf{l}\in\mathbb{N}^{4}}{\sum}h_{\eps}^{\mathbf{l},\boldsymbol{\bar{r}}_{0}}\rest_{\mathbf{l},\boldsymbol{\bar{r}}_{0}}^{N,i,j}\left(\eps,\boldsymbol{\bar{r}}\right)\right)
\end{gather}
also converges normally on this subset and we can deduce that, for any 
$\boldsymbol{\bar{r}}\in\ball^{\#}\!\left(\boldsymbol{\bar{r}}_{0},R'_{\boldsymbol{\bar{r}}_{0}}\right)$,
\begin{align}
       \left(h_{\eps}\circ\boldsymbol{\vartheta}_{\eps,-\bar{g}_{i}}^{\alpha_{i,N},i}\!\right)\left(\boldsymbol{\bar{r}}\right)
       &=\left(\boldsymbol{\vartheta}_{\eps,-\bar{g}_{i}}^{\alpha_{i,N},i}\!\cdot h_{\eps}\right)\left(\boldsymbol{\bar{r}}\right)+\eps^{N+1}\left(-\underset{\mathbf{l}\in\mathbb{N}^{4}}{\sum}h_{\eps}^{\mathbf{l},\boldsymbol{\bar{r}}_{0}}\rest_{\mathbf{l},\boldsymbol{\bar{r}}_{0}}^{N,i,j}\left(\eps,\boldsymbol{\bar{r}}\right)\right).
\end{align}

Finally, as 
\begin{gather}
\boldsymbol{K}\times\left[c^{\diamondsuit},d^{\diamondsuit}\right]\subset\underset{\left(\mathbf{y}_{0},k_{0}\right)\in\boldsymbol{K}\times\left[c^{\diamondsuit},d^{\diamondsuit}\right]}{\cup}\ball^{3}\!\left(\left(\boldsymbol{y}_{0},k_{0}\right),R'_{\boldsymbol{\bar{r}}_{0}}\right)
\end{gather}
and as $\boldsymbol{K}\times\left[c^{\diamondsuit},d^{\diamondsuit}\right]$ is compact, there exists $\left(\boldsymbol{y}_{0}^{1},k_{0}^{1}\right),\ldots,\left(\boldsymbol{y}_{0}^{p},k_{0}^{p}\right)$ such that 
\begin{gather}
\boldsymbol{K}\times[c^{\diamondsuit},d^{\diamondsuit}]\subset\underset{i=1}{\overset{p}{\cup}}\ball^{3}\!\left(\left(\boldsymbol{y}_{0}^{i},k_{0}^{i}\right),R'_{\boldsymbol{\bar{r}}_{0}^{i}}\right).
\end{gather}
Setting $\ds\bar{\eta}_{5}=\min_{i=1,\dots,p}\ \eta_{R_{\boldsymbol{\bar{r}}_{0}^{i}},R'_{\boldsymbol{\bar{r}}_{0}^{i}}}$, we obtain equality \eqref{QuasiComRondPoint} 
for all  $\left(\mathbf{y},\theta,k\right)\in\boldsymbol{K}\times\rit\times\left[c^{\diamondsuit},d^{\diamondsuit}\right]$ and for all $\eps\in I\cap\left[-\bar{\eta}_{5},\bar{\eta}_{5}\right]$, for any function of the form \eqref{201402161112}. 

Consequently, as seen in the beginning of the proof, equality \eqref{QuasiComRondPoint}  is true for  any $h_\eps$ in $\mathcal{Q}_{T,b}^{\infty}\cap\mathcal{A}\!\left(\mathbb{R}^{2}\times\rit\times\left(0,+\infty\right)\right)$, proving the Theorem.
\end{proof}

\begin{remark}
\label{FormulaPropAndRondPointExtension}  %  \ref{FormulaPropAndRondPointExtension}
Formulas \eqref{PartLieFuncProd}, \eqref{PartLiePoissCom} and \eqref{QuasiComRondPoint} are also valid if we replace 
\begingroup\small
$\boldsymbol{\vartheta}_{\eps,-\bar{g}_i}^{\alpha_{i,N},i}$ 
\endgroup
by 
\begingroup\small
$\boldsymbol{\vartheta}_{\eps,\bar{g}_{1}}^{\alpha_{1,N},1}\cdot\boldsymbol{\vartheta}_{\eps,\bar{g}_{2}}^{\alpha_{2,N},2}\cdot\ldots\cdot\boldsymbol{\vartheta}_{\eps,\bar{g}_{N}}^{\alpha_{N,N},N}\!\!$ 
\endgroup
or by 
\begingroup\small
$\boldsymbol{\vartheta}_{\eps,-\bar{g}_{N}}^{\alpha_{N,N},N}\cdot\boldsymbol{\vartheta}_{\eps,-\bar{g}_{N-1}}^{\alpha_{N-1,N},N-1}\cdot\ldots\cdot\boldsymbol{\vartheta}_{\eps,-\bar{g}_{1}}^{\alpha_{1,N},1}$.% \textit{i.e.},
\endgroup
%\begingroup\scriptsize
%\begin{gather}
%\begin{aligned}
%  &h_{\eps}\left(\left(\boldsymbol{\vartheta}_{\eps,\bar{g}_{1}}^{\alpha_{1,N},1}\cdot\boldsymbol{\vartheta}_{\eps,\bar{g}_{2}}^{\alpha_{2,N},2}\cdot\ldots\cdot\boldsymbol{\vartheta}_{\eps,\bar{g}_{N}}^{\alpha_{N,N},N}\right)\left(\mathbf{z},\gamma,j\right)\right)
%   \\
 %  =&\left(\boldsymbol{\vartheta}_{\eps,\bar{g}_{1}}^{\alpha_{1,N},1}\cdot\boldsymbol{\vartheta}_{\eps,\bar{g}_{2}}^{\alpha_{2,N},2}\cdot\ldots\cdot\boldsymbol{\vartheta}_{\eps,\bar{g}_{N}}^{\alpha_{N,N},N}\right)\cdot h_{\eps}\!\!\left(\mathbf{y},\theta,k\right)+\eps^{N+1}\Rest_{h}^{N,N}\!\left(\eps;\mathbf{y},\theta,k\right).
%\end{aligned}
%\end{gather}
%\endgroup
The extensions of \eqref{PartLieFuncProd} and \eqref{PartLiePoissCom} are easily obtained by induction.
The extensions of Formula \eqref{QuasiComRondPoint} are obtained by replacing 
\begingroup\small
$\boldsymbol{\vartheta}_{\eps,-\bar{g}_i}^{\alpha_{i,N},i}$
\endgroup
by 
\begingroup\small
$\boldsymbol{\vartheta}_{\eps,\bar{g}_{1}}^{\alpha_{1,N},1}\cdot\boldsymbol{\vartheta}_{\eps,\bar{g}_{2}}^{\alpha_{2,N},2}\cdot\ldots\cdot\boldsymbol{\vartheta}_{\eps,\bar{g}_{N}}^{\alpha_{N,N},N}\!\!$
\endgroup
or by 
\begingroup\small
$\boldsymbol{\vartheta}_{\eps,-\bar{g}_{N}}^{\alpha_{N,N},N}\cdot\boldsymbol{\vartheta}_{\eps,-\bar{g}_{N-1}}^{\alpha_{N-1,N},N-1}\cdot\ldots\cdot\boldsymbol{\vartheta}_{\eps,-\bar{g}_{1}}^{\alpha_{1,N},1}\!\!$
\endgroup
in the proof of Theorem  \ref{ThemQuasiComRondPoint}.
\end{remark}

Property \ref{PropPartLieFuncProd}, Theorem \ref{ThemQuasiComRondPoint}, and Remark \ref{FormulaPropAndRondPointExtension} are the main tools we need to prove Theorem \ref{MainPropOfLieCCOrderNThm}.
 \begin{proof}[Proof of Theorem \ref{MainPropOfLieCCOrderNThm}]
 Since function $\boldsymbol{\nu}_{\eps,-\bar{g}_i}^{\alpha_{i,N},i}$ of  equality \eqref{DefNuInTermsOfTheta} 
 satisfies the assumptions of Theorem \ref{ThemQuasiComRondPoint}, formula \eqref{QuasiComRondPoint} 
is valid with $\boldsymbol{\vartheta}_{\eps,-\bar{g}_i}^{\alpha_{i,N},i}$ in the role of  $h_\eps$. Hence, 
for any  $i\in\ldbrack1,N\rdbrack$, we deduce
\begin{align}
   &\boldsymbol{\vartheta}_{\eps,-\bar{g}_{i}}^{\alpha_{i,N},i}\circ\boldsymbol{\vartheta}_{\eps,\bar{g}_{i}}^{\alpha_{i,N},i}\left(\mathbf{y},\theta,k\right)=\boldsymbol{\vartheta}_{\eps,\bar{g}_{i}}^{\alpha_{i,N},i}\cdot\boldsymbol{\vartheta}_{\eps,-\bar{g}_{i}}^{\alpha_{i,N},i}\left(\mathbf{y},\theta,k\right)+\eps^{N+1}\boldsymbol{\rho}_{FC}^{N,i}\left(\eps;\mathbf{y},\theta,k\right),
   \label{Form22January}   %   \eqref{Form22January}
\end{align}
with $\boldsymbol{\rho}_{FC}^{N,i}$ in $\mathcal{C}_{\PerND}^{\infty}\!\left((I\cap\left[-\eta_K,\eta_K\right])\times\boldsymbol{K}\times\rit\times\left[c^{\diamondsuit},d^{\diamondsuit}\right]\right)$.
And, an easy computation leads to
\begin{align}
    \boldsymbol{\vartheta}_{\eps,\bar{g}_{i}}^{\alpha_{i,N},i}\cdot\boldsymbol{\vartheta}_{\eps,-\bar{g}_{i}}^{\alpha_{i,N},i}=\left(\overset{\alpha_{i,N}}{\underset{l=0}{\sum}}\frac{\left(\eps^{i}\right)^{l}}{l!}\left(\bar{\mathbf{X}}_{\eps\bar{g}_{i}}^{\eps}\right)^{l}\cdot\right)\left(\overset{\alpha_{i,N}}{\underset{k=0}{\sum}}\frac{\left(-\eps^{i}\right)^{k}}{k!}\left(\bar{\mathbf{X}}_{\eps\bar{g}_{i}}^{\eps}\right)^{k}\cdot\right)
    &=id+\eps^{N+1}\boldsymbol{\rho}_{c}^{N,i}\left(\eps,\cdot\right),
    \label{23Januar1200}  %  \eqref{23Januar1200}
\end{align}
with $(\mathbf{y},\theta,k)\mapsto\boldsymbol{\rho}_{c}^{N,i}\!\left(\ \cdot \ ,\mathbf{y},\theta,k\right)$
in  $Q_{T,b}^\infty$ and $\eps\mapsto\boldsymbol{\rho}_{c}^{N,i}\!\left(\eps,\ \cdot \ \right)$ in $\mathcal{C}^\infty(\mathbb{R})$.
%$\boldsymbol{\rho}_{c}^{N,i}\in Q_{T,b}^\infty$??????????. 
Injecting \eqref{23Januar1200} in \eqref{Form22January},
applying $\boldsymbol{\Xi}_{\eps,-\bar{g}_{i}}^{\alpha_{i,N},i}$ (see \eqref{201402161137}) to both sides, and using a Taylor expansion
to expand 
 $\boldsymbol{\Xi}_{\eps,-\bar{g}_{i}}^{\alpha_{i,N},i}\!\big( id+\eps^{N+1}\boldsymbol{\rho}_{c}^{N,i} 
 +\eps^{N+1}\boldsymbol{\rho}_{FC}^{N,i} \big)$ (which is possible since Theorem  \ref{BoldChiNWellDef} 
 implies that $\boldsymbol{\Xi}_{\eps,-\bar{g}_{i}}^{\alpha_{i,N},i}$ is well defined on $\mathbb{R}^3\times(c,d)$), we obtain that 
\begin{align}
     &\boldsymbol{\Xi}_{\eps,-\bar{g}_{i}}^{\alpha_{i,N},i}\left(\mathbf{y},\theta,k\right)
     =\boldsymbol{\vartheta}_{\eps,\bar{g}_{i}}^{\alpha_{i,N},i}\left(\mathbf{y},\theta,k\right)
     +\eps^{N+1}\boldsymbol{\rho}_{\boldsymbol{\Xi}}^{N,i}\left(\eps;\mathbf{y},\theta,k\right),
\end{align} 
with $\boldsymbol{\rho}_{\boldsymbol{\Xi}}^{N,i}$ in $\mathcal{C}_{\PerND}^{\infty}\!\left((I\cap\left[-\eta_K,\eta_K\right])\times\boldsymbol{K}\times\rit\times\left[c^{\diamondsuit},d^{\diamondsuit}\right]\right)$.
\\
Then, a straightforward induction, using essentially the extension of Formula \eqref{QuasiComRondPoint} given in Remark \ref{FormulaPropAndRondPointExtension},
we obtain Formula \eqref{DefLamExpPartLieCCOrderN}.
\\

By definition, the expression of the Hamiltonian function in the Partial Lie Coordinate System of order $N$ is given by  
\begin{align}
    &\hat{H}_\eps(\mathbf{z},\gamma,j)=\bar{H}_\eps\left(\boldsymbol{\lambda}_{\eps}^{N}(\mathbf{z},\gamma,j)\right).
\end{align}
Hence, using Formula \eqref{DefLamExpPartLieCCOrderN}, making a Taylor expansion, and using the extension of 
Theorem \ref{ThemQuasiComRondPoint} and  formula \eqref{QuasiComRondPoint} given in 
Remark \ref{FormulaPropAndRondPointExtension}, we obtain Formula \eqref{DefHatHExpPartLieCCOrderN}. 
According to the regularity property of $\bar{H}_\eps$ with respect to $\eps$ we can take interval $I$ of 
Theorem \ref{ThemQuasiComRondPoint} as being $[0,+\infty)$.
\\

By definition,  entry $(l,m)$ of the Poisson Matrix, expressed in the Partial Lie Coordinate System of order $N$ induced by $\boldsymbol{\chi}_{\eps}^{N}$, 
is given by 
\begin{align}
   &\left(\hat{\mathcal{P}}_{\eps}\right)_{l,m}\negmedspace\left(\mathbf{z},\gamma,j\right)=\left\{ \left(\boldsymbol{\chi}_{\eps}^{N}\right)_{l},\left(\boldsymbol{\chi}_{\eps}^{N}\right)_{m}\right\} _{\mathcal{D}}\left(\boldsymbol{\lambda}_{\eps}^{N}\negthickspace\left(\mathbf{z},\gamma,j\right)\right).
   \label{AprioriExpOfLiePoissMat}  %   \eqref{AprioriExpOfLiePoissMat}
\end{align}
%where the index $\mathcal{D}$ is here to recall that we work with the Poisson Bracket expressed in the Darboux coordinates.
Using the extension of Formula \eqref{QuasiComRondPoint} given in Remark \ref{FormulaPropAndRondPointExtension}, it is an easy task to show by induction that the $l$-th component of $\boldsymbol{\chi}_{\eps}^{N}$ is given by 
\begin{align}
   &\left(\boldsymbol{\chi}_{\eps}^{N}\right)_{l}=\boldsymbol{\vartheta}_{\eps,-\bar{g}_{N}}^{\alpha_{N,N},N}\cdot\ldots\cdot\boldsymbol{\vartheta}_{\eps,-\bar{g}_{1}}^{\alpha_{1,N},1}\cdot\bar{\boldsymbol{r}}_{l}+\eps^{N+1}\boldsymbol{\rho}_{1}^{N},
   \label{Formula220114}  %  \eqref{Formula220114}  
\end{align}
where $\bar{\boldsymbol{r}}_{1}=\mathbf{y}_1$, $\bar{\boldsymbol{r}}_{2}=\mathbf{y}_2$, $\bar{\boldsymbol{r}}_{3}=\boldsymbol{\theta}$, and
$\bar{\boldsymbol{r}}_{4}=\mathbf{k}$ (see Definition \ref{DefBoldVarijN2}), and 
with $\boldsymbol{\rho}_{1}^{N}$ in $\mathcal{C}_{\PerND}^{\infty}\!\left((I\cap\left[-\eta_K,\eta_K\right])\times\boldsymbol{K}\times\rit\times\left[c^{\diamondsuit},d^{\diamondsuit}\right]\right)$.
\\
Injecting Formula \eqref{Formula220114}  in the right hand side of \eqref{AprioriExpOfLiePoissMat}, using the bi-linearity of the Poisson Bracket, and the extension 
of Formula  \eqref{PartLiePoissCom} given in Remark \ref{FormulaPropAndRondPointExtension} we obtain 
\begin{align}
     &\left\{ \left(\boldsymbol{\chi}_{\eps}^{N}\right)_{l},\left(\boldsymbol{\chi}_{\eps}^{N}\right)_{m}\right\}_{\!\mathcal{D}}\negthickspace\left(\mathbf{y},\theta,k\right)=\boldsymbol{\vartheta}_{\eps,-\bar{g}_{N}}^{\alpha_{N,N},N}\cdot\ldots\cdot\boldsymbol{\vartheta}_{\eps,-\bar{g}_{1}}^{\alpha_{1,N},1}\cdot\left\{ \bar{\boldsymbol{r}}_{l},\bar{\boldsymbol{r}}_{m}\right\}_{\!\mathcal{D}}\negthickspace\left(\mathbf{y},\theta,k\right)+\eps^{N+1}\boldsymbol{\rho}_{2}^{N}\left(\eps;\mathbf{y},\theta,k\right),
\end{align}
with $\boldsymbol{\rho}_{2}^{N}$ in $\mathcal{C}_{\PerND}^{\infty}\!\left((I\cap\left[-\eta_K,\eta_K\right])\times\boldsymbol{K}\times\rit\times\left[c^{\diamondsuit},d^{\diamondsuit}\right]\right)$.
\\
On another hand, using formula \eqref{Formula220114} and
the extension 
of Formula  \eqref{QuasiComRondPoint} given in Remark \ref{FormulaPropAndRondPointExtension} we obtain 
\begin{align}
    &\left\{ \bar{\boldsymbol{r}}_{l},\bar{\boldsymbol{r}}_{m}\right\} _{\mathcal{D}}\left(\boldsymbol{\chi}_{\eps}^{N}\left(\mathbf{y},\theta,k\right)\right)=\boldsymbol{\vartheta}_{\eps,-\bar{g}_{N}}^{\alpha_{NN},N}\cdot\ldots\cdot\boldsymbol{\vartheta}_{\eps,-\bar{g}_{1}}^{\alpha_{1,N},1}\cdot\left\{ \bar{\boldsymbol{r}}_{l},\bar{\boldsymbol{r}}_{m}\right\} _{\mathcal{D}}\left(\mathbf{y},\theta,k\right)+\eps^{N+1}\boldsymbol{\rho}_{3}^{N}\left(\eps;\mathbf{y},\theta,k\right),
\end{align}
with $\boldsymbol{\rho}_{3}^{N}$ in $\mathcal{C}_{\PerND}^{\infty}\!\left((I\cap\left[-\eta_K,\eta_K\right])\times\boldsymbol{K}\times\rit\times\left[c^{\diamondsuit},d^{\diamondsuit}\right]\right)$.
\\
Eventually, combining the two previous Formulas yields Formula \eqref{DefHatMatPoissExpPartLieCCOrderN}.
This ends the proof of Theorem \ref{MainPropOfLieCCOrderNThm}
\end{proof}

%%%%%%%%%%%%%%%%%%%%%%%%%%%%%%%%%%%%%%%%%%%%%%%%%%%%%%%%%%%%%%
%%%%%%%%%%%%%%%%%%%%%%%%%%%%%%%%%%%%%%%%%%%%%%%%%%%%%%%%%%%%%%
%%%%%%%%%%%%%%%%%%%%%%%%%%%%%%%%%%%%%%%%%%%%%%%%%%%%%%%%%%%%%%
%%%%%%%%%%%%%%%%%%%%%%%%%%%%%%%%%%%%%%%%%%%%%%%%%%%%%%%%%%%%%%
%%%%%%%%%%%%%%%%%%%%%%%%%%%%%%%%%%%%%%%%%%%%%%%%%%%%%%%%%%%%%%
%%%%%%%%%%%%%%%%%%%%%%%%%%%%%%%%%%%%%%%%%%%%%%%%%%%%%%%%%%%%%%
\subsection{Proof of Theorem \ref{ExpForTheAlgoWellPosed} }
\label{SectionProofOfTheoremExpForTheAlgoWellPosed}   %   \ref{SectionProofOfTheoremExpForTheAlgoWellPosed} 
%%%%%%%%%%%%%%%%%%%%%%%%%%%%%%%%%%%%%%%%%%%%%%%%%%%%%%%%%%%%%%
%%%%%%%%%%%%%%%%%%%%%%%%%%%%%%%%%%%%%%%%%%%%%%%%%%%%%%%%%%%%%%
%%%%%%%%%%%%%%%%%%%%%%%%%%%%%%%%%%%%%%%%%%%%%%%%%%%%%%%%%%%%%%
%%%%%%%%%%%%%%%%%%%%%%%%%%%%%%%%%%%%%%%%%%%%%%%%%%%%%%%%%%%%%%
%%%%%%%%%%%%%%%%%%%%%%%%%%%%%%%%%%%%%%%%%%%%%%%%%%%%%%%%%%%%%%
%%%%%%%%%%%%%%%%%%%%%%%%%%%%%%%%%%%%%%%%%%%%%%%%%%%%%%%%%%%%%%

Having expansion \eqref{DsDim423} in mind, the proof of Theorem \ref{ExpForTheAlgoWellPosed} consists essentially in ordering the terms in
Formula \eqref{DefHatHExpPartLieCCOrderN} with respect to their power of $\eps$. More precisely, we will focus on 
expanding 
\begin{align}
     &\boldsymbol{\vartheta}_{\eps,\bar{g}_{1}}^{\alpha_{1,N},1}\cdot\boldsymbol{\vartheta}_{\eps,\bar{g}_{2}}^{\alpha_{2,N},2}\cdot\ldots\cdot\boldsymbol{\vartheta}_{\eps,\bar{g}_{N}}^{\alpha_{N,N},N}\cdot\bar{H}^N_{\eps}\!\!\left(\mathbf{z},\gamma,j\right),
     \label{23January20141626}  %  \eqref{23January20141626}
\end{align}
where 
\begin{align}
    &\bar{H}_{\eps}^{N}\left(\mathbf{y},\theta,k\right)=\bar{H}_{0}\left(\mathbf{y},k\right)+\eps\bar{H}_{1}\left(\mathbf{y},\theta,k\right)+\ldots+\eps^{N}\bar{H}_{N}\left(\mathbf{y},\theta,k\right).
\end{align}
We easily obtain  that Formula \eqref{23January20141626} can be rewritten as 
\begin{align}
   &\overset{N}{\underset{n=0}{\sum}}\eps^{n}\left(\overset{n}{\underset{k=0}{\sum}}\mathbf{\overline{V}}_{n-k}^{\eps}\cdot\bar{H}_{k}\left(\mathbf{z},\gamma,j\right)\right)+\eps^{N+1}\iota_{\bar{H}}^{N,\bullet}\!\left(\eps;\mathbf{z},\gamma,j\right),
\end{align}
with $\iota_{\bar{H}}^{N,\bullet}$ in $\mathcal{C}_{\PerND}^{\infty}\!\left((I\cap\left[-\eta_K,\eta_K\right])\times\boldsymbol{K}\times\rit\times\left[c^{\diamondsuit},d^{\diamondsuit}\right]\right)$,
where 
\begin{align}
     &\overline{\mathbf{V}}_{l}^{\eps}=\underset{\left(m_{1},\ldots,m_{l}\right)\in\mathcal{U}_{l}}{\sum}\frac{\left(\overline{\mathbf{X}}_{\eps\bar{g}_{1}}^{\eps}\right)^{m_{1}}\cdot\ldots\cdot\left(\overline{\mathbf{X}}_{\eps\bar{g}_{l}}^{\eps}\right)^{m_{l}}\cdot}{m_{1}!\ldots m_{l}!},
     \label{DefVl}    %  \eqref{DefVl}
\end{align}
with 
\begin{align}
     &\mathcal{U}_{p}=\big\{ \left(m_{1},\ldots,m_{p}\right)\in\mathbb{N}^{p} \text{ s.t. }\underset{k=1}{\overset{p}{\sum}}km_{k}=p\big\}.
\end{align} 
The only possible values that $m_l$ can have in formula \eqref{DefVl} 
are $0$ and $1$. If $m_l=1,$ then $m_1=m_2=\ldots=m_{l-1}=0$. Hence, the only
term in the sum of the right hand side of \eqref{DefVl} that involves function $\bar{g}_l$ is $\overline{\mathbf{X}}_{\eps\bar{g}_{l}}^{\eps}.$ 
Consequently the only term in
\begin{align}
     &\overset{n}{\underset{k=0}{\sum}}\overline{\mathbf{V}}_{n-k}^{\eps}\cdot\bar{H}_{k},
\end{align}
that involves function $\bar{g}_n$ is 
\begin{gather}
\begin{aligned}
    \overline{\mathbf{X}}_{\eps\bar{g}_{n}}^{\eps}\cdot\bar{H}_{0}&=\frac{\partial\bar{g}_{n}}{\partial k}\frac{\partial\bar{H}_{0}}{\partial\theta}-\frac{\partial\bar{g}_{n}}{\partial\theta}\frac{\partial\bar{H}_{0}}{\partial k}-\frac{\eps^{2}}{B\left(\mathbf{y}\right)}\left(\frac{\partial\bar{g}_{n}}{\partial y_{2}}\frac{\partial\bar{H}_{0}}{\partial y_{1}}-\frac{\partial\bar{g}_{n}}{\partial y_{1}}\frac{\partial\bar{H}_{0}}{\partial y_{2}}\right)
    \\
    &=-\frac{\partial\bar{g}_{n}}{\partial\theta}\frac{\partial\bar{H}_{0}}{\partial k}-\frac{\eps^{2}}{B\left(\mathbf{y}\right)}\left(\frac{\partial\bar{g}_{n}}{\partial y_{2}}\frac{\partial\bar{H}_{0}}{\partial y_{1}}-\frac{\partial\bar{g}_{n}}{\partial y_{1}}\frac{\partial\bar{H}_{0}}{\partial y_{2}}\right).
\end{aligned}
\end{gather}
Consequently, gathering terms having the same power of $\eps$ we obtain
\begin{align}
     &\hat{H}_{\eps}^{N}\!\left(\mathbf{z},\gamma,j\right)=\bar{H}_{0}\!\left(\mathbf{z},\gamma,j\right)+\eps\hat{H}_{1}\!\left(\mathbf{z},\gamma,j\right)+\ldots+\eps^{N}\hat{H}_{N}\!\left(\mathbf{z},\gamma,j\right),
\end{align}
where 
\begin{gather}
\label{201305290934}  %  \eqref{201305290934}
\hat{H}_{1}=\bar{H}_{1}-\frac{\partial\bar{g}_{1}}{\partial\theta}\frac{\partial\bar{H}_{0}}{\partial k},
\end{gather}
and, for any $i\in\left\{ 2,\ldots,N\right\}$
\begin{align}
      &\hat{H}_{i}=-\frac{\partial\bar{g}_{i}}{\partial\theta}\frac{\partial\bar{H}_{0}}{\partial k}-\mathcal{V}\left(\bar{g}_{1},\ldots,\bar{g}_{i-1}\right),
      \label{SystOfPDELie}   %  \eqref{SystOfPDELie}
\end{align}
with $\mathcal{V}\left(\bar{g}_{1},\ldots,\bar{g}_{i-1}\right)$ depending only on $\bar{g}_{1},\ldots,\bar{g}_{i-1}$ and their derivatives.
Since 
\begin{align}
    &\frac{\partial\bar{H}_{0}}{\partial k}\left(\mathbf{z},\gamma,j\right)=B\left(\mathbf{z}\right),
\end{align}
we obtain Formulas \eqref{LieTransMethodIntroZerothEq2}-\eqref{LieTransMethodIntroLastEq2}.

%Now, let $\overline{\mathbf{M}}_{i}$ and $\overline{\mathbf{N}}_{i+2}$ be the differential operators defined by 
%\begin{align}
%     &\overline{\mathbf{M}}_{i}=\frac{\partial\bar{g}_{i}}{\partial k}\frac{\partial}{\partial\theta}-\frac{\partial\bar{g}_{i}}{\partial\theta}\frac{\partial}{\partial k}\text{ and }\overline{\mathbf{N}}_{i+2}=-\frac{1}{B\left(\mathbf{y}\right)}\left(\frac{\partial\bar{g}_{i}}{\partial y_{2}}\frac{\partial}{\partial y_{1}}-\frac{\partial\bar{g}_{i}}{\partial y_{1}}\frac{\partial}{\partial y_{2}}\right).
%\end{align}
%Then, by definition 
%\begin{align}
 %  &\overline{\mathbf{X}}_{\eps\bar{g}_{i}}^{\eps}=\bar{\mathcal{P}}_{\eps}\nabla\left(\eps\bar{g}_{i}\right)=\overline{\mathbf{M}}_{i}+\eps^{2}\overline{\mathbf{N}}_{i+2}.
%\end{align}

%%%%%%%%%%%%%%%%%%%%%%%%%%%%%%%%%%%%%%%%%%%%%%%%%%%%%%%%%%%%%%
%%%%%%%%%%%%%%%%%%%%%%%%%%%%%%%%%%%%%%%%%%%%%%%%%%%%%%%%%%%%%%
%%%%%%%%%%%%%%%%%%%%%%%%%%%%%%%%%%%%%%%%%%%%%%%%%%%%%%%%%%%%%%
\subsection{Proof of Theorem \ref{MainThm2}}
\label{SectionProofOfTheoremMainThm2}   %  \ref{SectionProofOfTheoremMainThm2}  
%%%%%%%%%%%%%%%%%%%%%%%%%%%%%%%%%%%%%%%%%%%%%%%%%%%%%%%%%%%%%%
%%%%%%%%%%%%%%%%%%%%%%%%%%%%%%%%%%%%%%%%%%%%%%%%%%%%%%%%%%%%%%
%%%%%%%%%%%%%%%%%%%%%%%%%%%%%%%%%%%%%%%%%%%%%%%%%%%%%%%%%%%%%%
%
Theorem \ref{MainThm2} is a direct consequence of Theorems \ref{BoldChiNWellDef}, \ref{MainPropOfLieCCOrderNThm},  
\ref{ExpForTheAlgoWellPosed}, Algorithm \ref{AlgorithmLieRevisedVersion}, and Theorem \ref{AlgoLeadsToFuncWellDef}.
{~\hfill $\square$} 
%
%%%%%%%%%%%%%%%%%%%%%%%%%%%%%%%%%%%%%%%%%%%%%%%%%%%%%%%%%%%%%%
%%%%%%%%%%%%%%%%%%%%%%%%%%%%%%%%%%%%%%%%%%%%%%%%%%%%%%%%%%%%%%
%%%%%%%%%%%%%%%%%%%%%%%%%%%%%%%%%%%%%%%%%%%%%%%%%%%%%%%%%%%%%%
\subsection{Proof of Theorem \ref{MainThm3}}
\label{LastSection}     %   \ref{LastSection}.
%%%%%%%%%%%%%%%%%%%%%%%%%%%%%%%%%%%%%%%%%%%%%%%%%%%%%%%%%%%%%%
%%%%%%%%%%%%%%%%%%%%%%%%%%%%%%%%%%%%%%%%%%%%%%%%%%%%%%%%%%%%%%
%%%%%%%%%%%%%%%%%%%%%%%%%%%%%%%%%%%%%%%%%%%%%%%%%%%%%%%%%%%%%%

Applying Lemmas \ref{ThmainresultCharDarbKDarb}, \ref{ThmainresultCharDarbYDarb} and \ref{ThmainresultCharDarbYDarb2}
and because the Lie change of coordinates is close to the identity (see formula \eqref{DefNuInTermsOfTheta}),
it is clear that there exists a compact set $\boldsymbol{K}_{\!\mathcal{L}}$, positive real numbers
$c_{\mathcal{L}}$ and $d_{\mathcal{L}}$, and a positive real number $\eta_{K_{\!\mathcal{L}}}$ such that for any $\eps\in[0,\eta_{K_{\mathcal{L}}}]$, for any 
$t\in[0,T]$, and for any $(\mathbf{x}_0,\mathbf{v}_0)\in\boldsymbol{K}_{\!\!\mathcal{C}}\times \boldsymbol{\mathfrak{C}}(c_{\mathcal{C}},d_{\mathcal{C}})$,
 the characteristic $(\mathbf{Z},\boldsymbol{\Gamma},\mathcal{J})$ associated with the Hamiltonian system \eqref{systeqs1}-\eqref{systeqs2} and expressed in the 
 $(\mathbf{z},\gamma,j)$ coordinate system stays in $\boldsymbol{K}_{\!\mathcal{L}}\times\rit\times(c_{\mathcal{L}},d_{\mathcal{L}})$.
 Consequently, we can apply Theorem \ref{MainPropOfLieCCOrderNThm}.
 % On pourra alors appliquer le théorème précédent.
\\

To end this proof we will prove estimate \eqref{MainEstimationOfOurPaper}.
Setting 
\begin{gather}
\Rest_{\mathcal{P}}^{N}(\eps;\mathbf{z},\gamma,j)=\left(\begin{array}{cc}
\left(\Rest_{\mathcal{P}}^{N}(\eps;\mathbf{z},\gamma,j)\right)^{\text{\tiny\! TL}} & \left(\Rest_{\mathcal{P}}^{N}(\eps;\mathbf{z},\gamma,j)\right)^{\text{\tiny\! TR}}\\
\left(\Rest_{\mathcal{P}}^{N}(\eps;\mathbf{z},\gamma,j)\right)^{\text{\tiny\! BL}} & \left(\Rest_{\mathcal{P}}^{N}(\eps;\mathbf{z},\gamma,j)\right)^{\text{\tiny\! BR}}\end{array}\right)
=\left(\left(\Rest_{\mathcal{P}}^{N}(\eps;\mathbf{z},\gamma,j)\right)^{i,j}\right)_{i,j=1,\dots4},
\end{gather}
and using the skew-symmetry of $\hat{\Pcal}_{\eps}$ in \eqref{DefHatMatPoissExpPartLieCCOrderN} yields:
\begin{multline}
\label{1212311003} % \ref{1212311003}
\hat{\Pcal}_{\eps}(\mathbf{z},\gamma,j)=\\
\left(\begin{array}{cccc}
0 & -\frac{\eps}{B(\mathbf{z})}+\eps^{N+1}\!\left(\Rest_{\mathcal{P}}^{N}\right)^{\!1,2} & \eps^{N+1}\!\left(\Rest_{\mathcal{P}}^{N}\right)^{\!1,3} & \eps^{N+1}\!\left(\Rest_{\mathcal{P}}^{N}\right)^{\!1,4}\\
-\frac{\eps}{B(\mathbf{z})}-\eps^{N+1}\!\left(\Rest_{\mathcal{P}}^{N}\right)^{\!1,2} & 0 & \eps^{N+1}\!\left(\Rest_{\mathcal{P}}^{N}\right)^{\!2,3} & \eps^{N+1}\!\left(\Rest_{\mathcal{P}}^{N}\right)^{\!2,4}\\
-\eps^{N+1}\!\left(\Rest_{\mathcal{P}}^{N}\right)^{\!1,3} & -\eps^{N+1}\!\left(\Rest_{\mathcal{P}}^{N}\right)^{\!2,3} & 0 & \frac{1}{\eps}+\eps^{N+1}\!\left(\Rest_{\mathcal{P}}^{N}\right)^{\!3,4}\\
-\eps^{N+1}\!\left(\Rest_{\mathcal{P}}^{N}\right)^{\!1,4} & -\eps^{N+1}\!\left(\Rest_{\mathcal{P}}^{N}\right)^{\!2,4} & -\frac{1}{\eps}-\eps^{N+1}\!\left(\Rest_{\mathcal{P}}^{N}\right)^{\!3,4} & 0\end{array}\right).
\end{multline}

Now, we will check that $(\mathbf{Z},\mathcal{J})$ is in
$ \mathcal{C}^{N-1}\!\left(\left[0,\eta_{K_{\mathcal{L}}}\right]\right)$.
%essentially to show formula \eqref{EstDynSysVSTroncDynSysF12}. Notice that showing this formula 
%consists in checking that $\eps=0$ is not a singularity for $\mathbf{L}^\eps.$
In order to check this,
we define
for any $\eps\in\left(0,\eta_{K_{\mathcal{L}}}\right],$ for any $t\in[0,T],$ and for any $(\mathbf{z},\gamma,j)\in\mathcal{U}_{\mathcal{C}}$, 
 $(\widetilde{\mathbf{Z}}, \widetilde{\Gamma}, \widetilde{\mathcal{J}})$ by
\begin{gather}
\begin{aligned}
      &\left(\begin{array}{c}
\widetilde{\mathbf{Z}}(t;\mathbf{z},\gamma,j)\\
\widetilde{\Gamma}(t;\mathbf{z},\gamma,j)\\
\widetilde{\mathcal{J}}(t;\mathbf{z},\gamma,j)\end{array}\right)=\left(\begin{array}{c}
\mathbf{Z}(\eps t;\mathbf{z},\gamma,j)\\
\Gamma(\eps t;\mathbf{z},\gamma,j)\\
\mathcal{J}(\eps t;\mathbf{z},\gamma,j)\end{array}\right).
      \label{DefOfTildeRInProof}    %   \eqref{DefOfTildeRInProof}
\end{aligned}
\end{gather}
It satisfies
\begin{align}
      &\frac{\partial}{\partial t\!}\left(\begin{array}{c}
\widetilde{\mathbf{Z}}\\
\widetilde{\Gamma}\\
\widetilde{\mathcal{J}}\end{array}\right)\left(t\right)=\eps\hat{\Pcal}_{\eps}\left(\widetilde{\mathbf{Z}}\left(t\right),\widetilde{\Gamma}(t),\widetilde{\mathcal{J}}(t)\right)\nabla\hat{H}_{\eps}\!\left(\widetilde{\mathbf{Z}}\left(t\right),\widetilde{\Gamma}(t),\widetilde{\mathcal{J}}(t)\right),
        \label{FirstEqProofTroncReg}    %   \eqref{FirstEqProofTroncReg} 
\end{align}
Since $\eps\mapsto\eps\hat{\Pcal}_{\eps}$ is in $\mathcal{C}^{\infty}\left(\left[0,\eta_{K_{\mathcal{L}}}\right]\right)$, the solution of \eqref{FirstEqProofTroncReg}  
depends smoothly on the parameter $\eps.$ In particular function $(\widetilde{\mathbf{Z}}, \widetilde{\Gamma}, \widetilde{\mathcal{J}})$, defined by \eqref{DefOfTildeRInProof}, is smoothly 
extensible at $\eps=0$.
 On another hand, for any $\eps\in\left(0,\eta_{K_{\mathcal{L}}}\right],$
and for any $t\in\left[0,T\right],$
$(\mathbf{Z}, \mathcal{J})$ is solution to
\begin{gather}
\label{EqSatByThreeComp}    %   \eqref{EqSatByThreeComp} 
\begin{aligned}
         &\fracp{\mathbf{Z}}{t}=\mathtt{M}_{\eps}\negmedspace\left(\mathbf{Z}\right)\begin{pmatrix}\frac{\partial\hat{H}_{\eps}^{N}}{\partial z_{1}}\\
\frac{\partial\hat{H}_{\eps}^{N}}{\partial z_{2}}\end{pmatrix}\!\left(\mathbf{Z}\left(t\right),\mathcal{J}\left(t\right)\right)+
         \\
         &\eps^{N+1}\left[\mathtt{M}_{\eps}\!\begin{pmatrix}\frac{\partial\rest_{H}^{N}}{\partial z_{1}}\\
\frac{\partial\rest_{H}^{N}}{\partial z_{2}}\end{pmatrix}\!+\left(\Rest_{\mathcal{P}}^{N}(\eps,.)\right)^{\text{\tiny\!\ TL}}\!\!\begin{pmatrix}\frac{\partial\hat{H}_{\eps}}{\partial z_{1}}\\
\frac{\partial\hat{H}_{\eps}}{\partial z_{2}}\end{pmatrix}\!\!+\left(\Rest_{\mathcal{P}}^{N}(\eps,.)\right)^{\text{\tiny\!\ TR}}\!\!\begin{pmatrix}\frac{\partial\hat{H}_{\eps}}{\partial\gamma}\\
\frac{\partial\hat{H}_{\eps}}{\partial j}\end{pmatrix}\!\!\!\right]\!\left(\mathbf{Z},\widetilde{\Gamma}\!\left(\frac{t}{\eps}\right),\mathcal{J}\right),
         \\
         &\frac{\partial\mathcal{J}}{\partial t}=-\eps^{N}\left[\eps\left(\Rest_{\mathcal{P}}^{N}(\eps,.)\right)^{\!1,4}\frac{\partial\hat{H}_{\eps}}{\partial z_{1}}+\eps\left(\Rest_{\mathcal{P}}^{N}(\eps,.)\right)^{\!2,4}\frac{\partial\hat{H}_{\eps}}{\partial z_{2}}+\eps\left(\Rest_{\mathcal{P}}^{N}(\eps,.)\right)^{\!3,4}\frac{\partial\hat{H}_{\eps}}{\partial\gamma}+\frac{\partial\rest_{H}^{N}}{\partial\gamma}(\eps,.)\right]
         \\ 
         &~\hspace{10.3cm}
         \left(\mathbf{Z},\widetilde{\Gamma}\!\left(\frac{t}{\eps}\right),\mathcal{J}\right),
\end{aligned}
\end{gather}
where 
\begin{align}
      \label{DefMathttMEps}    %   \eqref{DefMathttMEps} 
     &\mathtt{M}_{\eps}\!\left(\mathbf{z}\right)=\left(\begin{array}{cc}
0 & \ds -\frac{\eps}{B\left(\mathbf{z}\right)}\\
\ds \frac{\eps}{B\left(\mathbf{z}\right)} & 0\end{array}\right).
\end{align}
%%%%%%%%%%%%%%%%%%%%%%%
%%%%%%%%%%%%%%%%%%%%%%%    J'ai Stoppé ICI
%%%%%%%%%%%%%%%%%%%%%%%
Notice that, in this system, $\widetilde{\Gamma}$ is known and then considered as given. Besides, 
\begin{align}
      &\mathtt{M}_{\eps}\!\begin{pmatrix}\frac{\partial\rest_{H}^{N}}{\partial z_{1}}\\
\frac{\partial\rest_{H}^{N}}{\partial z_{2}}\end{pmatrix}\!+\left(\Rest_{\mathcal{P}}^{N}(\eps,.)\right)^{\text{\tiny\!\ TL}}\!\!\begin{pmatrix}\frac{\partial\hat{H}_{\eps}}{\partial z_{1}}\\
\frac{\partial\hat{H}_{\eps}}{\partial z_{2}}\end{pmatrix}\!\!+\left(\Rest_{\mathcal{P}}^{N}(\eps,.)\right)^{\text{\tiny\!\ TR}}\!\!\begin{pmatrix}\frac{\partial\hat{H}_{\eps}}{\partial z_{1}}\\
\frac{\partial\hat{H}_{\eps}}{\partial z_{2}}\end{pmatrix}\!\!\!
\end{align}
and 
\begin{gather}
 \eps\left(\Rest_{\mathcal{P}}^{N}(\eps,.)\right)^{\!1,4}\frac{\partial\hat{H}_{\eps}}{\partial z_{1}}+\eps\left(\Rest_{\mathcal{P}}^{N}(\eps,.)\right)^{\!2,4}\frac{\partial\hat{H}_{\eps}}{\partial z_{2}}+\eps\left(\Rest_{\mathcal{P}}^{N}(\eps,.)\right)^{\!3,4}\frac{\partial\hat{H}_{\eps}}{\partial\gamma}+\frac{\partial\rest_{H}^{N}}{\partial\gamma}(\eps,.)
 \label{PrincipalObstruction} %  \eqref{PrincipalObstruction} 
\end{gather}
are $2\pi$-periodic and smooth, and consequently $\mathcal{C}_{b}^{\infty}\!\left(\mathbb{R}\right)$ with respect to the third variable $\gamma$. Hence, computing the successive derivatives of \eqref{EqSatByThreeComp}  with respect to $\eps$, we obtain that  $\eps\mapsto\left(\mathbf{Z}(t),\mathcal{J}(t)\right)$ is $\mathcal{C}^{N-1}$
in the neighborhood of $\eps=0.$
\begin{remark}
The only obstruction to show that $\eps\mapsto\left(\mathbf{Z}(t),\mathcal{J}(t)\right)$ is $\mathcal{C}^{N}$ is the last term of Formula \eqref{PrincipalObstruction}.
\end{remark}

Moreover, as 
$\left(\mathbf{Z}^{T},\mathcal{J}^{T}\right)$ is solution to
\begin{gather}
\label{FormForTruncDynSys12Juin}    %   \eqref{FormForTruncDynSys12Juin}
\begin{aligned}
          &\frac{\partial\mathbf{Z}^{T}}{\partial t}=\frac{\eps}{B\left(\mathbf{Z}^{T}\right)}\left(\begin{array}{c}
-\frac{\partial\hat{H}_{\eps}^{N}}{\partial z_{2}}\\
\frac{\partial\hat{H}_{\eps}^{N}}{\partial z_{1}}\end{array}\right)\negmedspace\left(\mathbf{Z}^{T},\mathcal{J}^{T}\right)
         \\
         &\frac{\partial\mathcal{J}^{T}}{\partial t}=0,
\end{aligned}
\end{gather}
$\left(\mathbf{Z}^{T},\mathcal{J}^{T}\right)$ is smooth with respect to $\eps$,
for any $t\in[0,T]$.
\\

Now, we will show that $\mathbf{L}^{\eps}$ defined for $\eps\in(0,\eta_{K_{\mathcal{L}}}]$ by 
\begin{align}
    &\mathbf{L}^{\eps}=\left(\begin{array}{c}
\mathbf{L}_{1}^{\eps}\\
\mathbf{L}_{2}^{\eps}\\
\mathbf{L}_{3}^{\eps}\end{array}\right)=\frac{1}{\eps^{N-1}}\left(\left(\begin{array}{c}
\mathbf{Z}\\
\mathcal{J}\end{array}\right)-\left(\begin{array}{c}
\mathbf{Z}^{T}\\
\mathcal{J}^{T}\end{array}\right)\right)
    \label{DefOfLeps}   %   \eqref{DefOfLeps}
\end{align}
is extensible 
to $[0,\eta_{K_{\mathcal{L}}}]$ and that the yielding extension is continuous with respect to $\eps.$ By definition for any $\eps\in(0,\eta_{K_{\mathcal{L}}}]$, for any $t\in[0,T],$
 $\eps\mapsto\mathbf{L}^{\eps}$ is $\mathcal{C}^{N-1}\!((0,\eta_{K_{\mathcal{L}}}])$.
So, we just have to show that $\eps\mapsto\mathbf{L}^{\eps}$ is extensible as a continuous function 
on $[0,\eta_{K_{\mathcal{L}}}]$, {\it i.e.} that $\eps=0$ is not a singularity.
\\
In a first place, for any $\eps\in(0,\eta_{K_{\mathcal{L}}}],$
we will explicit the dynamical system $\mathbf{L}^{\eps}$ satisfies.
Injecting  
\begin{align}
        &\left(\begin{array}{c}
\mathbf{Z}\\
\mathcal{J}\end{array}\right)=\left(\begin{array}{c}
\mathbf{Z}^{T}\\
\mathcal{J}^{T}\end{array}\right)+\eps^{N-1}\mathbf{L}^{\eps},
        \label{EstDynSysVSTroncDynSysF12}      %   \eqref{EstDynSysVSTroncDynSysF12}
\end{align}
in \eqref{EqSatByThreeComp} gives
\begin{gather}
\label{EqSatByThreeComp2}    %   \eqref{EqSatByThreeComp2} 
\begin{aligned}
         &\fracp{\left(\begin{array}{c}
\mathbf{Z}_{1}^{T}+\eps^{N-1}\mathbf{L}_{1}^{\eps}\\
\mathbf{Z}_{2}^{T}+\eps^{N-1}\mathbf{L}_{2}^{\eps}\end{array}\right)}{t}
\\
         &~=\mathtt{M}_{\eps}\!\left(\mathbf{Z}_{1}^{T}+\eps^{N-1}\mathbf{L}_{1}^{\eps},\mathbf{Z}_{2}^{T}+\eps^{N-1}\mathbf{L}_{2}^{\eps}\right)\begin{pmatrix}\frac{\partial\hat{H}_{\eps}^{N}}{\partial z_{1}}\\
\frac{\partial\hat{H}_{\eps}^{N}}{\partial z_{2}}\end{pmatrix}\!\!\left(\left(\mathbf{Z}^{T},\mathcal{J}^{T}\right)+\eps^{N-1}\mathbf{L}^{\eps}\right)
         \\
         &~~~~+\eps^{N+1}\left[\mathtt{M}_{\eps}\!\begin{pmatrix}\frac{\partial\rest_{H}^{N}}{\partial z_{1}}\\
\frac{\partial\rest_{H}^{N}}{\partial z_{2}}\end{pmatrix}\!\!+\left(\Rest_{\mathcal{P}}^{N}(\eps,.)\right)^{\text{\tiny\! TL}}\!\!\begin{pmatrix}\frac{\partial\hat{H}_{\eps}}{\partial z_{1}}\\
\frac{\partial\hat{H}_{\eps}}{\partial z_{2}}\end{pmatrix}\!\!+\left(\Rest_{\mathcal{P}}^{N}(\eps,.)\right)^{\text{\tiny\! TR}}\!\!\begin{pmatrix}\frac{\partial\hat{H}_{\eps}}{\partial\gamma}\\
\frac{\partial\hat{H}_{\eps}}{\partial j}\end{pmatrix}\!\!\right]
          \\ 
         &~\hspace{3cm}
        \left(\mathbf{Z}_{1}^{T}+\eps^{N-1}\mathbf{L}_{1}^{\eps},\mathbf{Z}_{2}^{T}+\eps^{N-1}\mathbf{L}_{2}^{\eps},\widetilde{\Gamma}\!\left(\frac{t}{\eps}\right),\mathcal{J}+\eps^{N-1}\mathbf{L}_{3}^{\eps}\right),
         \\
         &\frac{\partial\mathcal{J}^{T}}{\partial t}+\eps^{N-1}\frac{\partial\mathbf{L}_{3}^{\eps}}{\partial t}
         \\
         &~~=-\eps^{N}\left[\eps\left(\Rest_{\mathcal{P}}^{N}(\eps,.)\right)^{\!1,4}\frac{\partial H_{\eps}}{\partial z_{1}}+\eps\left(\Rest_{\mathcal{P}}^{N}(\eps,.)\right)^{\!2,4}\frac{\partial H_{\eps}}{\partial z_{2}}+\eps\left(\Rest_{\mathcal{P}}^{N}(\eps,.)\right)^{\!3,4}\frac{\partial H_{\eps}}{\partial\gamma}+\frac{\partial\rest_{H}}{\partial\gamma}(\eps,.)\right]
         \\ 
         &~\hspace{3cm}
         \left(\mathbf{Z}_{1}^{T}+\eps^{N-1}\mathbf{L}_{1}^{\eps},\mathbf{Z}_{2}^{T}+\eps^{N-1}\mathbf{L}_{2}^{\eps},\widetilde{\Gamma}\!\left(\frac{t}{\eps}\right),\mathcal{J}+\eps^{N-1}\mathbf{L}_{3}^{\eps}\right).
\end{aligned}
\end{gather}
Making a Taylor expansion in
\begin{align*}
   &\mathtt{M}_{\eps}\!\left(\mathbf{Z}_{1}^{T}+\eps^{N-1}\mathbf{L}_{1}^{\eps},\mathbf{Z}_{2}^{T}+\eps^{N-1}\mathbf{L}_{2}^{\eps}\right)\begin{pmatrix}\frac{\partial\hat{H}_{\eps}^{N}}{\partial z_{1}}\\
\frac{\partial\hat{H}_{\eps}^{N}}{\partial z_{2}}\end{pmatrix}\!\!\left(\left(\mathbf{Z}^{T},\mathcal{J}^{T}\right)+\eps^{N-1}\mathbf{L}^{\eps}\right)
\end{align*}
we obtain
\begin{gather}
\label{BigFormula12Juin}    %  \eqref{BigFormula12Juin} 
\begin{aligned}
   &\mathtt{M}_{\eps}\!\left(\mathbf{Z}_{1}^{T}+\eps^{N-1}\mathbf{L}_{1}^{\eps},\mathbf{Z}_{2}^{T}+\eps^{N-1}\mathbf{L}_{2}^{\eps}\right)\begin{pmatrix}\frac{\partial\hat{H}_{\eps}^{N}}{\partial z_{1}}\\
\frac{\partial\hat{H}_{\eps}^{N}}{\partial z_{2}}\end{pmatrix}\!\!\left(\left(\mathbf{Z}^{T},\mathcal{J}^{T}\right)+\eps^{N-1}\mathbf{L}^{\eps}\right)
   \\
&~~~~~~=\mathtt{M}_{\eps}\!\left(\mathbf{Z}^{T}\right)\left(\begin{array}{c}
\ds\frac{\partial H_{\eps}^{N}}{\partial z_{1}}\\
\ds\frac{\partial H_{\eps}^{N}}{\partial z_{2}}\end{array}\right)\negmedspace\left(\mathbf{Z}^{T},\mathcal{J}^{T}\right)+\eps^{N-1}\beta_{1}\left(\eps,\mathbf{Z}^{T},\mathcal{J}^{T},\mathbf{L}^{\eps}\right),
\end{aligned}
\end{gather}
where $\beta_{1}$ is smooth and periodic of period 2$\pi$ with respect to $\gamma$.
Injecting \eqref{BigFormula12Juin} in \eqref{EqSatByThreeComp2} and using \eqref{FormForTruncDynSys12Juin} yields 
\begin{gather}
\label{DefCaractLSolDynSysV12}    %   \eqref{DefCaractLSolDynSysV12}
       \frac{\partial\left(\begin{array}{c}
\mathbf{L}_{1}^{\eps}\\
\mathbf{L}_{2}^{\eps}\end{array}\right)}{\partial t}=\beta_{1}\!\left(\eps,\mathbf{L}^{\eps},\mathbf{Z}^{T},\mathcal{J}^{T}\right)+\eps\beta_{2}\!\left(\eps,\mathbf{L}^{\eps},\mathbf{Z}^{T},\mathcal{J}^{T},\widetilde{\Gamma}\!\left(\frac{t}{\eps}\right)\right),
\end{gather}
and 
\begin{gather}
         \frac{\partial\mathbf{L}_{3}^{\eps}}{\partial t}=\eps\beta_{3}\!\left(\eps,\mathbf{L}^{\eps},\mathbf{Z}^{T},\mathcal{J}^{T},\widetilde{\Gamma}\!\left(\frac{t}{\eps}\right)\right),
\end{gather}
where $\beta_2$ and $\beta_3$ are smooth and $2\pi$-periodic with respect to $\gamma.$ 
Besides, the solutions of this dynamical system are continuous with respect to $\eps.$ 
Clearly the initial data for $\mathbf{L}^{\eps}$ is $\mathbf{L}^{\eps}(0)= 0.$
Hence, $\mathbf{L}^{\eps}$ is continuous with respect to $\eps.$
Since $\left(\mathbf{Z},\mathcal{J}\right)-\left(\mathbf{Z}^{T},\mathcal{J}^{T}\right)=\eps^{N-1}\mathbf{L}^{\eps}$, estimate \eqref{MainEstimationOfOurPaper} follows.
This ends the proof of Theorem \ref{MainThm3}.
{~\hfill $\square$}

%
%%%%%%%
%%%%%%%%%
%%%%%%%%%%%
%%%%%%%%%%%%%
\begin{appendix}
\section{Appendix : Change of coordinates rules for the Poisson Matrix and the Hamiltonian Function}
 \label{AppendixChangeOfCoordRulesForHamAndPM} %  \ref{AppendixChangeOfCoordRulesForHamAndPM}
 %%%%%%%%%%%%%
%%%%%%%%%%%
%%%%%%%%%
%%%%%%%
%

$\underline{A}$ Poisson Matrix $\mathcal{P}$ on an open subset of $\mathbb{R}^4$ is a skew-symmetric matrix satisfying:
\begin{align}
       &\forall i,j,k\in\left\{ 1,\ldots,4\right\} ,\ \left\{ \left\{ \mathbf{r}_{i},\mathbf{r}_{j}\right\} ,\mathbf{r}_{k}\right\} +\left\{ \left\{ \mathbf{r}_{k},\mathbf{r}_{i}\right\} ,\mathbf{r}_{j}\right\} +\left\{ \left\{ \mathbf{r}_{j},\mathbf{r}_{k}\right\} ,\mathbf{r}_{i}\right\} =0,
       \label{JacobiJacoby}     %    \eqref{JacobiJacoby}
\end{align}  
where $\mathbf{r}_i$ is the i-th coordinate function $\mathbf{r}\mapsto r_i$
and the Poisson Bracket $\left\{ f,g\right\} $ between smooth functions $f$ and $g$ is defined by \eqref{DsDim111011}.
\\
In the case of
a symplectic manifold, $\underline{the}$ Poisson Matrix in a given coordinate system is defined as follow:
it is the inverse of the transpose of the matrix of the expression of the Symplectic Two-Form in this coordinate system.
Notice that the Jacoby identities \eqref{JacobiJacoby} are direct consequences of the closure of the Symplectic Two-Form.
\\

We now turn to the change-of-coordinates rule for the Poisson Matrix.
Firstly, if in a given coordinate chart $\boldsymbol{m}$, the matrix associated with the Symplectic Two-Form reads $\mathcal{K}$, then, according to the previous definition, the Poisson Matrix is given by 
\begin{align}
\mathcal{P}\left(\boldsymbol{m}\right)=\left(\mathcal{K}\left(\boldsymbol{m}\right)\right)^{\!-T}.
%\label{DsDim11109} % \eqref{DsDim11109}
\end{align}
If we make the change of coordinates $\sigma:\ \boldsymbol{m}\mapsto\boldsymbol{r}$, then the usual change-of-coordinates rule for the expression of the Symplectic Two-Form leads to the following change of coordinates rule for the Poisson Matrix
\begin{align}
\mathcal{P}'\left(\boldsymbol{r}\right)=\nabla_{\!\boldsymbol{m}}\sigma\left(\sigma^{-1}\left(\boldsymbol{r}\right)\right)\mathcal{P}\left(\sigma^{-1}\left(\boldsymbol{r}\right)\right)\left[\nabla_{\!\boldsymbol{m}}\sigma\left(\sigma^{-1}\left(\boldsymbol{r}\right)\right)\right]^{T}.
%\label{DsDim111010} % \eqref{DsDim111010}
\end{align}
Using the Poisson Bracket defined in formula \eqref{DsDim111011},
the change-of-coordinates rule for the Poisson Matrix reads
\begin{align}
       &\mathcal{P}'{}_{\! i,j}\!\left(\boldsymbol{r}\right)=\left\{ \sigma_{i},\sigma_{j}\right\} _{\!\boldsymbol{m}}\left(\sigma^{-1}\left(\boldsymbol{r}\right)\right).
%\label{DsDim111012} % \eqref{DsDim111012}
\end{align}

A Hamiltonian function on  a symplectic manifold $(\mathcal{M},\boldsymbol{\Omega})$ is a smooth function on $\mathcal{M}$ and
the Hamiltonian vector field associated with Hamiltonian function $\mathcal{G}$ is the unique vector field $\boldsymbol{\mathcal{X}}_{\mathcal{G}}$ satisfying
\begin{align}
      &i_{\boldsymbol{\mathcal{X}}_{\mathcal{G}}}d\boldsymbol{\Omega}=d\mathcal{G},
\end{align}
where $i_{\boldsymbol{\mathcal{X}}_{\mathcal{G}}}d\boldsymbol{\Omega}$ is the interior product of 
differential two-form $d\boldsymbol{\Omega}$ by vector field ${\boldsymbol{\mathcal{X}}_{\mathcal{G}}}$.
\\
The expression of the Hamiltonian vector field associated with the Hamiltonian function $\mathcal{G},$ 
in the coordinate system $\boldsymbol{m},$ is the vector field which reads:
\begin{align}
\mathbf{X}_{G}\left(\boldsymbol{m}\right)=\mathcal{P}\left(\boldsymbol{m}\right)\nabla_{\!\boldsymbol{m}}G\left(\boldsymbol{m}\right),
%\label{DsDim111013} % \eqref{DsDim111013}
\end{align}
where $G$ is the representative of $\mathcal{G}$ in this coordinate system.
In fact, we can consider Hamiltonian vector fields on $\mathcal{M}$, which requires that the Hamiltonian functions are smooth functions on $\mathcal{M}$, or just Hamiltonian vector fields on an open subset of $\mathcal{M}$, which requires that the Hamiltonian functions are defined on this open subset. 

The Hamiltonian dynamical system  associated with Hamiltonian function $\mathcal{G}$ on $\mathcal{M}$ is the dynamical system which reads 
\begin{align}
          &\frac{\partial\boldsymbol{\mathcal{R}}}{\partial t}\left(t\right)=\boldsymbol{\mathcal{X}}_{\mathcal{G}}\left(\boldsymbol{\mathcal{R}}\left(t\right)\right),
\end{align}
or equivalently as said in the introduction, the dynamical system whose expression in every coordinate system $\boldsymbol{r}$ is given by 
\begin{align}
\frac{\partial\mathbf{R}}{\partial t}=\mathcal{P}'\!\left(\mathbf{R}\right)\nabla_{\!\boldsymbol{r}}G'\!\left(\mathbf{R}\right).
\label{DsDim111015} % \eqref{DsDim111015}
\end{align}
where $G'$ is the representative of $\mathcal{G}$ in this coordinate system, and $\mathcal{P}'$ the expression of the Poisson Matrix in this coordinate system.
In particular, if we check that on a global coordinate chart, a dynamical system is Hamiltonian, then the dynamical system is Hamiltonian on $\mathcal{M}$ and 
its expression in every coordinate chart $\boldsymbol{r}$ is given by \eqref{DsDim111015}.

\end{appendix}

\bibliographystyle{plain}
\bibliography{biblio}

\end{document}